\newif\ifDRAFT 
\DRAFTfalse
\DRAFTtrue 

\documentclass[11pt]{article}
\usepackage[fleqn]{amsmath}
\usepackage[margin=1in]{geometry}
\usepackage{graphicx} 

\usepackage{times}
\usepackage{enumitem}
\setlist[itemize]{itemsep=0pt, topsep=2pt}
\setlist[enumerate]{itemsep=0pt, topsep=2pt}
\setlist[description]{itemsep=0pt, topsep=2pt}

\usepackage{amsfonts, amsmath, amssymb, amsthm}
\usepackage{mathrsfs}  
\usepackage{algorithm}
\usepackage[noend]{algpseudocode}
\usepackage{hyperref}
\usepackage{color}
\usepackage{array}
\usepackage{cleveref}
\usepackage{multirow}
\usepackage[table,xcdraw]{xcolor}
\definecolor{DarkRed}{rgb}{0.5,0.1,0.1}
\definecolor{DarkBlue}{rgb}{0.1,0.1,0.5}
\definecolor{ForestGreen}{rgb}{0.1333,0.5451,0.1333}
\hypersetup{colorlinks=true, linkcolor=DarkRed,citecolor=ForestGreen}
\usepackage[inkscapeformat=pdf]{svg}
\usepackage{multirow}
\usepackage{mathtools}
\usepackage{tabulary} 
\usepackage{booktabs}
\usepackage{arydshln}
\usepackage{pdfpages}
\usepackage{booktabs}
\usepackage[table,xcdraw]{xcolor}
\usepackage{arydshln}
\usepackage[x11names, table]{xcolor}
\usepackage{enumitem}
\usepackage{comment}
\usepackage{thm-restate}

\theoremstyle{plain}
\newtheorem{theorem}{Theorem}[section]
\newtheorem{lemma}[theorem]{Lemma}
\newtheorem{corollary}[theorem]{Corollary}
\newtheorem{observation}[theorem]{Observation}
\newtheorem{claim}[theorem]{Claim}

\theoremstyle{definition}
\newtheorem{definition}[theorem]{Definition}
\newtheorem*{remark}{Remark}

\DeclareSymbolFont{bbold}{U}{bbold}{m}{n}
\DeclareSymbolFontAlphabet{\mathbbold}{bbold}

\newcommand{\cD}{\mathcal{D}}

\newcommand{\cT}{\mathcal{T}}

\newcommand{\cH}{\mathcal{H}}

\newcommand{\cS}{\mathcal{S}}

\newcommand{\Near}{\mathsf{Near}}
\newcommand{\Far}{\mathsf{Far}}

\newcommand{\LCA}{\mathrm{LCA}}

\newcommand{\RepInf}{\textsc{rep}}

\newcommand{\cDdown}{\cD_{\downarrow}}
\newcommand{\cDup}{\cD^{\uparrow}}
\newcommand{\cTdown}{\cT_{\downarrow}}
\newcommand{\cTup}{\cT^{\uparrow}}
\newcommand{\cSdown}{\cS_{\downarrow}}
\newcommand{\cSup}{\cS^{\uparrow}}

\newcommand{\apnote}[1]{{\color{magenta}[AP: #1]}}

\title{Space-Optimal Sensitivity Oracles for Single-Source Mincuts}

\ifDRAFT
    \author{Koustav Bhanja \thanks{Weizmann Institute of Science, Israel. Email: \texttt{koustav.bhanja@weizmann.ac.il}. Supported by the Koshland Prize fellowship, and Merav Parter’s European Research Council (ERC) grant under the European Union’s Horizon 2020 research and innovation programme, grant agreement
No. 949083.}
    \and Merav Parter \thanks{Weizmann Institute of Science, Israel. Email: \texttt{merav.parter@weizmann.ac.il}. Supported partially by the European
Research Council (ERC) under the European Union’s Horizon 2020 research and innovation programme, grant agreement
No. 949083.}
    \and Asaf Petruschka \thanks{Weizmann Institute of Science, Israel. Email: \texttt{asaf.petruschka@weizmann.ac.il}. Supported by an Azrieli Foundation fellowship,
and by Merav Parter’s European Research Council (ERC) grant under the European Union’s Horizon 2020 research
and innovation programme, grant agreement No. 949083.}
    }
\else
    \author{\apnote{TODO}}
\fi

\date{}

\begin{document}

\maketitle

\pagenumbering{roman}
\begin{abstract}

Let $G$ be an undirected multi-graph on $n$ vertices, with a designated source vertex $s$.
We study \emph{Single-Source Mincut Sensitivity Oracles}: compact data structures that, when queried with an edge $e$, report those affected vertices whose mincut value to $s$ changes upon the insertion or failure of $e$.

Insertion queries were treated by Baswana, Gupta, and Knollmann [Algorithmica '22], who showed an extremely compact oracle with only $O(n)$ space.
In this work, we consider edge failure queries, which are of even greater interest, but far more challenging.
The current-best approaches give $O(n^2)$ space: either using $n-1$ \emph{fixed-pair} oracles of $O(n)$ space each, based on the Picard-Queyranne representation [MPS '80],
or using the $O(n^2)$ space \emph{all-pairs} oracle by Baswana and Pandey [SODA '22].
\begin{itemize}
    \item 
Our key result is an optimal $O(n)$ space single-source mincut sensitivity oracle for edge failure queries. 
It reports the set of affected vertices in $O(n)$ time, thus matching the state-of-the-art bounds for the insertion case.

\item Additionally, we provide oracles with near-optimal query times at the cost of increasing the space to $O(n^{1.5})$. They can determine if any given vertex is affected by an insertion/failure of an edge in $O(\log n)$ time, or reports all affected vertices in amortized $O(\log^3 n)$ time per vertex.
Such oracles of subquadratic space were previously unknown, even for insertion.
\end{itemize}

Our main technical contribution is in establishing novel and intricate connections between two seemingly distant objects,
representing two different families of mincuts.
The first is the DAG representation of farthest mincuts to the source, which was the central tool introduced by Baswana, Gupta, and Knollmann.
The second is the \emph{Connectivity Carcass} for Steiner mincuts of Dinitz and Vainshtein \mbox{[STOC '94]}, which generalizes well-known cactus representations of global mincuts.
Our work demonstrates the relatively unexplored potential of the carcass beyond its ``obvious'' Steiner mincuts scope.
\end{abstract}
\newpage
{\small \setlength{\parskip}{0em}
\tableofcontents{}
}
\newpage
\pagenumbering{arabic}

\newpage
\section{Introduction}

Minimum cuts, or \emph{mincuts} for short, are among the most ubiquitous concepts in graph theory and algorithms.
Let $G=(V,E)$ be an undirected 
multi-graph on $n$ vertices and $m$ edges.
A minimum cut between a pair of vertices $u,v \in V$, denoted \emph{$u,v$-mincut}, is a set of edges of least cardinality whose removal disconnects $u$ from $v$.
Compact representations and data structures for mincuts have been a major field of research.
A prime example is the celebrated \emph{Gomory-Hu Tree}~\cite{GomoryH61}, a compact tree structure of only $O(n)$ space, that represents at least one $u,v$-mincut for every pair of vertices $u,v\in V$. 
However, in many real-world settings, a basic static network undergoes small and commonly transient changes, such as temporary link failures.
This motivates the vast practical and theoretical study of \emph{sensitivity oracles}: compact data structures that can effectively report the new solution for a given graph problem once such a change occurs.

Sensitivity oracles for mincuts were first introduced by Picard and Queyranne in 1980 \cite{DBLP:journals/mp/PicardQ80}, and substantial advancement on them has been made in recent years~\cite{BaswanaGK22, DBLP:conf/soda/BaswanaP22, DBLP:journals/talg/BaswanaBP23, DBLP:conf/icalp/BaswanaB24, DBLP:conf/isaac/Bhanja24, DBLP:conf/esa/BaswanaBR25, DBLP:conf/icalp/Bhanja25, DBLP:conf/innovations/AhiCPPS26}; here we only focus on results pertaining to undirected multi-graphs, which are most relevant to our context.
The original setting considered by Picard and Queyranne was of a \emph{fixed pair} of vertices $s,t$, and showed an $O(n)$ space structure that can determine whether the $s,t$-mincut value changes upon the failure (i.e., the deletion) or the insertion of any given edge.
Recently, Bhanja~\cite{DBLP:conf/icalp/Bhanja25} has shown that the oracle of~\cite{DBLP:journals/mp/PicardQ80} is tight in a very strong sense: an oracle that can support queries of two (or more) failing edges already requires (information-theoretically) $\Omega(n^2)$ space.
However, there remains a different intriguing avenue for improvement: going beyond a fixed ``source-destination'' pair of vertices.

The first result in this direction was by Baswana, Gupta, and Knollmann~\cite{BaswanaGK22}.
They showed an $O(n)$ space oracle for edge insertions in the \emph{single-source} setting:
Given any inserted edge, it can determine \emph{all of the vertices} whose mincut value to the distinguished source $s$ is affected by the insertion.
Their central structural insight is an $O(n)$ space representation for the family of \emph{farthest} mincuts to $s$, called the \emph{Farthest Mincut DAG} (directed acyclic graph); we briefly overview it in~\Cref{sec:technical-overview}.
However, their work did not address the arguably more important edge failure queries, which are considerably harder.
Essentially, this is because supporting insertion queries reduces to efficient representations of only at most $2(n-1)$ mincuts to the source (the nearest one and the farthest one from each vertex), while failure queries deal with \emph{all} the mincuts to the source, whose number can be exponential. 
The ``brute-force'' approach to handle failure queries in the single-source setting is to store $n-1$ oracles of~\cite{DBLP:journals/mp/PicardQ80}, one for every pair
in $\{s\} \times V\setminus\{s\}$.
A more sophisticated oracle of $O(n^2)$ space, which in fact works for the most general \emph{all-pairs} setting, was provided by~\cite{DBLP:conf/soda/BaswanaP22}, but this does not improve over the brute-force for the single-source case.

\paragraph{Space Optimal Oracle.}
Our main result shows that compact single-source mincut sensitivity oracles of just $O(n)$ space can also be achieved for edge failure queries:

\begin{theorem}\label{thm:main}
    For any undirected multi-graph $G = (V,E)$ on $n$ vertices and $m$ edges, with a designated source vertex $s \in V$, there is an $O(n)$ space data structure that, given a query edge $e \in E$, can report all those vertices whose mincut value to $s$ decreases upon failure of $e$ within $O(n)$ time.
\end{theorem}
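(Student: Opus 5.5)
Here is the plan. First, a vertex $v$ is affected by the failure of $e=(x,y)$ exactly when $e$ lies in some $s,v$-mincut. Equivalently, some $s,v$-mincut separates $x$ from $y$.

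Let $\lambda_v$ denote the $s,v$-mincut value. Partition $V\setminus\{s\}$ into classes of equal $\lambda$. For a fixed value $\lambda$, consider the set $S_\lambda=\{s\}\cup\{v:\lambda_v=\lambda\}$.

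\textbf{Reduction to Steiner mincuts.} Every $s,v$-mincut with $\lambda_v=\lambda$ has value exactly $\lambda$. Vertices $u$ with $\lambda_u>\lambda$ can never be separated from $s$ by a cut of value $\lambda$. The plan is therefore as follows.

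\begin{enumerate}
\item Handle the vertices of each value class $\lambda$ through the family of cuts of value $\lambda$ separating $s$ from some vertex of value $\lambda$. I will organize these using the Farthest Mincut DAG of Baswana, Gupta, and Knollmann. It gives, for each $v$, the nearest and farthest $s,v$-mincuts $N(v)\subseteq F(v)$, where $N(v)$ and $F(v)$ denote the sides containing $v$. All $s,v$-mincuts lie sandwiched between these two.
\item Observe that $e$ lies in some $s,v$-mincut iff $e$ crosses the Picard--Queyranne region between $N(v)$ and $F(v)$ in the appropriate sense. That is, either one endpoint lies in $N(v)$ and the other in $V\setminus F(v)$, or $e$ is an edge of the residual DAG restricted to $F(v)\setminus N(v)$ that is cut by some closed set.
\item Store a single compact object for all $v$ simultaneously instead of $n-1$ PQ structures. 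For this I will use the connectivity carcass of Dinitz--Vainshtein for the Steiner set $S_\lambda$, where $\lambda$ is the Steiner mincut value when we contract higher-connectivity vertices appropriately. The carcass has $O(|S_\lambda|)$ size after compression, so summing over classes gives $O(n)$.
\end{enumerate}

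\textbf{Query.} Map $x,y$ to their locations (flesh nodes, or cactus/strip positions) in each relevant carcass. Then the affected vertices of class $\lambda$ are those $v$ for which the cactus structure and the strip order allow a $\lambda$-cut separating $s$ from $v$ and also $x$ from $y$. On a cactus this reduces to the following: $v$ is affected iff the cycle-cuts separating $x$ from $y$ include one separating $s$ from $v$. For each class I find these $v$ by walking paths in the cactus. The cost is linear in the class size, giving $O(n)$ total. The location of $x,y$ is found by storing, for each vertex, a pointer per class. To keep this at $O(n)$, I only store it on the tree of nested classes, since $\lambda$-classes are nested along the DAG structure of farthest mincuts.

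\textbf{Main obstacle.} The hardest step is showing that the carcass for $S_\lambda$ correctly captures all $s,v$-mincuts, which are not Steiner cuts for $S_\lambda$ in general since they need not separate $S_\lambda$ arbitrarily. It also requires showing that the locations of non-terminal endpoints $x,y$ can be encoded in $O(n)$ total space across classes. I would first attempt to prove a structural lemma relating farthest-mincut DAG nodes to strips of the carcass. The target is that each $s,v$-mincut corresponds to a $\lambda$-cut that is an $s$-rooted ideal in the strip order. The structure of the Farthest Mincut DAG would then bound the shared storage by charging each vertex to $O(1)$ classes.
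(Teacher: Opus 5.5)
\textbf{Gap: locating $x,y$ in every class's carcass within $O(n)$ total space.}

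Your per-class reduction is sound, and cleaner than you expect. In undirected graphs the $v,w$-mincut value is at least $\min(\lambda_v,\lambda_w)$. Hence the Steiner mincut value of $S_\lambda$ is exactly $\lambda$, every $s,v$-mincut with $\lambda_v=\lambda$ is an $S_\lambda$-mincut, and every $S_\lambda$-mincut is a $w,s$-mincut for some $w$ in the class. So your first ``main obstacle'' is not one, and no contraction is needed. It follows that $v$ is affected iff some canonical cut of $\cH_{S_\lambda}$ containing an edge of $\pi_{S_\lambda}(e)$ separates the nodes of $s$ and $v$. The condition must be stated via $\pi_{S_\lambda}(e)$, i.e., via cuts distinguishing $e$, not via cuts ``separating $x$ from $y$'', which is ill-defined for stretched endpoints.

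The genuine gap is the step you defer: producing $\pi_{S_\lambda}(e)$ for every class at query time from $O(n)$ total storage. Only the skeleton has size $O(|S_\lambda|)$. Locating a non-terminal such as $x$ needs the projection mapping, which holds data for all $n$ vertices per terminal set. The direct solution therefore costs $\Theta(n\cdot\#\text{classes})$, i.e., up to $\Theta(n^2)$.

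Your suggested fix does not hold. $\lambda$-classes are not nested: they partition $V$, and $\cD$ only makes $\lambda$ non-increasing from ancestors to descendants. Moreover, one endpoint can be relevant to $\Omega(n)$ classes. Take the cycle $s=v_0,v_1,\dots,v_k,x,s$ where $\{v_{i-1},v_i\}$ has multiplicity $2(k-i)+10$, $\{v_k,x\}$ has multiplicity $10$, and $\{x,s\}$ is a single edge. The unique $v_i,s$-mincut is $\{v_i,\dots,v_k,x\}$, of value $2(k-i)+11$. So failing $\{x,s\}$ affects $k$ vertices in $k$ distinct classes, and $x$ must be located in all $k$ carcasses.

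\textbf{How the paper avoids per-(vertex, class) data.} None of the following has a counterpart in your plan.
\begin{itemize}
\item It uses $\cD$ to discard non-ancestors of $\cD(x),\cD(y)$ and to report exclusive ancestors directly. This alone handles the example above with no carcass. What remains is $\cDup(\mu)$ for at most two LCAs $\mu$ (\Cref{lem:at-most-two-incomparable-ancestors}).
\item It builds one skeleton per DAG node $\mu$, with terminal set $\mu$ itself rather than a $\lambda$-class. The bridge \Cref{lem:mu-mincuts} shows that any $u,s$-mincut with $u\in\cDup(\mu)$ that splits $\mu$ induces a $\mu$-mincut.
\item It replaces $x,y$ by anchors, i.e., pivots of children of $\mu$, which are equivalent to $x,y$ for all relevant mincuts (\Cref{lem:translation-DAG}, \Cref{cor:projection-of-edge-skeleton-translation}). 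Thus only $O(|\mu|+\mathrm{outdeg}(\mu))$ projections are stored per node, $O(n)$ overall.
\item The representatives framework (\Cref{thm:blackbox}) transfers the answer from the single skeleton $\cH_\mu$ to all $u\in\cDup(\mu)$, whatever their classes, using $O(1)$ stored words per skeleton edge and per anchor candidate.
\item Stretched anchors need the extra $A_\mu$ sets, and mincuts keeping $\mu$ on one side are handled separately via the $\tau_\mu$ numbering.
\end{itemize}
Without some such mechanism, your plan yields $\Theta(n^2)$ rather than $O(n)$ space.
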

Both the space and query time in~\Cref{thm:main} precisely match those of~\cite{BaswanaGK22} for edge insertion queries.
The query time is ``worst-case optimal'', as there exist graphs (say a cycle) where failure of an edge can affect the mincuts to $s$ of $\Omega(n)$ vertices.
However, one can hope for a better \emph{output-sensitive} query time, as we will discuss shortly. 
The space is optimal even for the fixed-pair setting, and even in simple graphs: this easy information-theoretic space lower bound appears in~\Cref{sec:lowerbound}.



Our oracle is obtained via intricate usage of the \emph{Connectivity Carcass}~\cite{DBLP:conf/stoc/DinitzV94,DBLP:conf/soda/DinitzV95,DBLP:journals/siamcomp/DinitzV00,BaswanaP25}.
The carcass, first announced by Dinitz and Vainshtein in 1994, is a data structure that compactly represents \emph{all Steiner mincuts} of a given terminal set $S \subseteq V$ ---
a Steiner mincut is one with minimum value among those cuts that separate at least one pair of terminals.
However, this description barely conveys its complex interface, which unfortunately cannot be compressed into one short ``blackbox'' result.
We do our best to provide a clean and succinct interface in~\Cref{sec:carcass-prelim}, but it is nevertheless quite dense.
Moreover, the inner workings of the carcass are also highly complicated.
Only very recently, more than two decades after its original announcement, Baswana and Pandey~\cite{BaswanaP25} provided the first complete and formal treatment of this structure.
Perhaps for these reasons, the full potential of the carcass for algorithmic applications seems relatively unexplored.
The few existing applications~\cite{DBLP:conf/soda/DinitzV95,DBLP:conf/soda/BaswanaP22,DBLP:conf/icalp/Bhanja25} employ it in contexts where Steiner mincuts are either the main focus themselves, or of rather immediate relevance.

In our view, one of the central \emph{conceptual} contributions of this paper is in demonstrating how the connectivity carcass can be effectively and neatly utilized for a problem which, on the face of it, seems to have little to do with terminal sets and Steiner mincuts.
Surprisingly, the key to our solution is treating the vertices inside each node in the farthest mincut DAG as a terminal set.
In each such node, we embed carefully chosen components of the corresponding carcass.
These are utilized through a machinery that ``translates'' between the global view of mincuts to the source $s$, through the DAG, and into the local views of the carcasses inside the nodes.
This high-level intuition is explained in detail in the Technical Overview (\Cref{sec:technical-overview}).

\paragraph{Near-Optimal Query Time.}
We complement our  optimal space-oracle of~\Cref{thm:main} with a larger but still subquadratic $O(n^{1.5})$ space oracle, whose query time guarantees are optimal up to logarithmic factors:

\begin{theorem}\label{thm:faster-queries}
    For any undirected multi-graph $G = (V,E)$ of $n$ vertices and $m$ edges with a designated source vertex $s \in V$, there is an $O(n^{1.5})$ space data structure supporting the following queries:
    \begin{enumerate}
     \item         \label{item:faster-queries-one-vertex}
        \textbf{(One-Destination)}
        Given a query of edge $e \in E$ (resp., $e \in V\times V$) and destination vertex $u \in V \setminus \{s\}$, it can report whether the $u,s$-mincut value changes upon failure (resp., insertion) of $e$, within $O(\log{n})$  time (resp., $O(1)$ time).
     
        \item         \label{item:faster-queries-reporting-all}
        \textbf{(Output-Sensitive)}
        Given a query of edge $e \in E$ (resp., $e \in V \times V$), it can report all those vertices whose mincut value to $s$ changes upon failure (resp., insertion) of $e$, in amortized $O(\log^3 n)$ time (resp., $O(1)$ time) per reported vertex.
        
    \end{enumerate}
\end{theorem}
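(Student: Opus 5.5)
The plan is a square-root decomposition over destination vertices. It combines the farthest mincut DAG of \cite{BaswanaGK22} with the per-node carcass machinery behind \Cref{thm:main}.

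\textbf{Characterizing affected vertices.} Let $\lambda(u)$ denote the $u,s$-mincut value. Failure of $e=(x,y)$ decreases $\lambda(u)$ if and only if some $u,s$-mincut contains $e$. By submodularity, the $u,s$-mincuts form a lattice with a nearest mincut $N(u)$ and a farthest mincut $F(u)$ (viewed as the side containing $u$). Every mincut $C$ with $N(u)\subseteq C\subseteq F(u)$ is a $u,s$-mincut. It follows that $e$ is in some $u,s$-mincut if and only if $e$ separates $x$ from $y$ in the Picard--Queyranne DAG of the pair $(u,s)$. Concretely, one endpoint must lie in $F(u)$ and the other outside $N(u)$, with no reachability in the PQ order forcing them together.

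\textbf{Insertion queries.} Insertion of $(x,y)$ increases $\lambda(u)$ if and only if $e$ crosses every $u,s$-mincut. This happens exactly when one endpoint lies in $N(u)$ and the other outside $F(u)$.

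\textbf{Heavy/light split.} Call a vertex \emph{heavy} if its DAG node, or its mincut structure, is large in a sense made precise below, and \emph{light} otherwise.

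\begin{itemize}
\item For insertion, we store for each $u$ membership of $\sqrt n$-sized blocks. More directly, we store $F(u)$ and $N(u)$ implicitly via the DAG and the nearest-mincut tree from \cite{BaswanaGK22}, together with $O(n^{1.5})$ precomputed labels. Each endpoint gets a label recording membership in $N(u)$ and in $F(u)$ for $u$ ranging over each group of $\sqrt n$ vertices, so each check takes $O(1)$.
\item For failure, we use the structure of \Cref{thm:main}. That structure locates the DAG node containing $u$ and reduces the question to a Steiner mincut question in the carcass of that node's terminal set. We augment each carcass with an LCA/ancestry oracle, so the locally relevant test (does $e$ lie in a Steiner mincut separating $u$ from the ``$s$-side'' representative) runs in $O(\log n)$ time, using binary search over cactus or flesh orderings.
\end{itemize}

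The $n^{1.5}$ space arises because we cannot afford the $O(n)$ PQ-DAG per vertex. Instead, we store full PQ information only for $\sqrt n$ representative vertices per chain. Any other vertex shares its mincut lattice with a representative up to an interval of at most $\sqrt n$ DAG levels, which are stored explicitly.

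\textbf{Output-sensitive reporting.} Given $e$, we enumerate affected vertices by a search over the DAG from the nodes of $x$ and $y$. The set of affected $u$ is closed in a structured way: it forms a union of paths/antichains in the DAG and, inside a node, a union of carcass subtrees. We traverse only nodes that contain at least one affected vertex, charging each visited node to a reported vertex. Within a node we report via the carcass's tree decomposition, using heavy-path decomposition. That contributes one $\log$ factor for the heavy-path decomposition, one for the binary searches, and one for amortizing nodes with no new output against their parents. Insertion is easier: affected vertices are exactly those $u$ with $x\in N(u)$ and $y\notin F(u)$. Precomputed lists per $x$ restricted to blocks give $O(1)$ per output.

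\textbf{Main obstacle.} I expect the hardest step to be the output-sensitive failure reporting: showing that the traversal never explores too many unaffected nodes. I would first try to prove a monotonicity lemma. If $u$ is unaffected and $u'$ lies below $u$ in the relevant DAG/carcass order, then $u'$ is unaffected. Pruning by this lemma would give the amortized bound.
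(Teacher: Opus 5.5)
Your proposal has a genuine gap. It never produces an $O(n^{1.5})$-space mechanism: the heavy/light split is never defined, and the justifications offered for the $n^{1.5}$ term are unsupported or false.

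\textbf{The missing ancestry/LCA oracle for $\cD$.} All four query types hinge on questions about the farthest mincut DAG $\cD$:
\begin{itemize}
\item whether $\cD(u)$ is an ancestor of $\cD(y)$, i.e., whether $y\in\Far(u,s)$;
\item what the (at most two) LCAs $\mu$ of $\cD(x),\cD(y)$ are, and which children of $\mu$ lead to $x$ and $y$ (this is what gives the anchors $z_x,z_y$);
\item whether a given representative node is an ancestor of $\cD(u)$.
\end{itemize}
For general DAGs, fast reachability is conjectured to need $\tilde\Omega(n^2)$ space. So one must exploit \Cref{lem:at-most-two-incomparable-ancestors}: every ancestor set in $\cD$ has width at most $2$. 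The paper decomposes $\cD$ into $O(\sqrt n)$ chains and $O(\sqrt n)$ antichains. Each node stores its last ancestor on every chain and its at most two ancestors in every antichain, which gives $O(1)$ ancestry queries. LCAs are found in $O(\log n)$ time by covering each stored ancestor set with two chains (Dilworth) and binary searching. This is the only $n^{1.5}$ component.

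Your substitutes do not work. Per-endpoint labels recording membership in $F(u)$ for all $u$ amount to $\Theta(n^2)$ bits, and you give no compression argument. The claim that each vertex ``shares its mincut lattice with a representative up to $\sqrt n$ DAG levels'' has no basis. Your preliminary claim that every $C$ with $N(u)\subseteq C\subseteq F(u)$ is a $u,s$-mincut is also false: on the path $u,a,b,s$ with multiplicities $1,5,1$, the set $\{u,a\}$ is not a mincut.

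\textbf{Failure queries.} The carcass that matters is not the one of $\cD(u)$. It is the carcass of the LCA $\mu$, where $\pi_\mu(e)$ lives, together with the skeletons of the representative nodes. For $O(\log n)$ one-destination time you need two things that ``binary search over cactus or flesh orderings'' does not supply:
\begin{itemize}
\item an $O(1)$ representative test, replacing the BFS of \Cref{thm:blackbox} by ancestry queries plus a cut-side query in $\cH_\nu$;
\item an $O(|\mu|\log|\mu|)$-space structure that splits any proper path of $\cH_\mu$ into $O(\log n)$ subpaths with precomputed $\RepInf(\cdot)$, valid by \Cref{lem:separating-subpaths}.
\end{itemize}

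\textbf{Output-sensitive reporting.} The monotonicity lemma you flag is indeed the crux, but it must state that an irrelevant ancestor of $\mu$ has no relevant strict ancestors. Read literally in $\cD$, where descendants lie below, your version points the wrong way and is false. A vertex whose unique mincut to $s$ contains both $x$ and $y$ can sit directly above a node of affected vertices. The paper proves the correct version via representative nodes and \Cref{lem:rep-nodes-converse}. It bounds the number of visited nodes by twice the number of relevant ones, since each relevant node enqueues at most two parents. Inside a node it reports by combining marked and unmarked sides of representative cuts in the skeleton, in $O(|Y|+k^2)$ time. With $O(\log n)$ such calls on $O(\log n)$ pieces each, this is where the $\log^3 n$ factor comes from, not from heavy-path decomposition.

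\textbf{Insertion reporting.} You also need the fact from \cite{BaswanaGK22} that the affected $\cT$-nodes form at most two contiguous upward paths starting at $\cT(x)$ and $\cT(y)$. Each node on these paths is checked by $O(1)$ ancestry queries in $\cT$ and $\cD$, which again requires the $\cD$-oracle above.
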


Near-optimal query complexities as above were not previously achieved, even for just edge insertion queries, by any oracle with $o(n^2)$ space.
It turns out that the key for these lies in supporting fast \emph{reachability queries} in the farthest mincut DAG of~\cite{BaswanaGK22}.
In general, answering reachability queries in a DAG is believed to be inherently hard: the \emph{Reachability Conjecture}~\cite{DBLP:journals/siamcomp/Patrascu11, DBLP:conf/wads/GoldsteinKLP17} states that reachability oracles for DAGs on $n$ vertices and $m$ edges require $\tilde{\Omega}(n^2)$ space to achieve $o(m)$ time queries, in the worst case.

Interestingly, the nice structural properties of farthest mincuts are ``inherited'' by their representing DAG, endowing it with a special structure.
We show that this can be exploited to bypass the conjectured worst-case hardness, allowing us to design an 
$O(n^{1.5})$ space oracle for constant time reachability queries.
Combining this main ingredient with some more technical embellishments to the oracle of~\Cref{thm:main} results in the above~\Cref{thm:faster-queries}.
Moreover, the reachability oracle is the only ``heavy'' component: any improvement to its space would immediately yield the same improvement%
\footnote{
More accurately, the improvement would be to~\Cref{item:faster-queries-reporting-all} of~\Cref{thm:faster-queries}; for~\Cref{item:faster-queries-one-vertex} we also need fast \emph{LCA queries} on the DAG, which are more general than just reachability.
}
to~\Cref{thm:faster-queries}.
However, our approach seems hard to improve beyond $O(n^{1.5})$ space; it would be very interesting if one could find a more compact oracle for reachability in the farthest mincut DAG, or provide evidence for hardness.

As an almost-immediate corollary of~\Cref{thm:faster-queries}(\ref{item:faster-queries-reporting-all}), we get the following result in \emph{all-pairs} setting\footnote{A trivial extension of \Cref{thm:faster-queries}(2) incurs an additional $O(n)$ time to report all affected pairs from $V\times V$. }:

\begin{theorem}\label{thm:all-pairs-reporting-all}
    For any undirected multi-graph $G = (V,E)$ of $n$ vertices and $m$ edges, there is an $O(n^{2.5})$ space data structure that, given a query edge $e \in E$, can report all those pairs of vertices between which the mincut value decreases upon failure of $e$, in amortized $O(\log^3 n)$ time per reported pair.
\end{theorem}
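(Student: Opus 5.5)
The plan is to build one copy of the oracle of \Cref{thm:faster-queries} for every vertex of $G$, each time treating that vertex as the source. For each $s \in V$ we construct the data structure of \Cref{thm:faster-queries} with designated source $s$. Each copy uses $O(n^{1.5})$ space, so the total space is $n \cdot O(n^{1.5}) = O(n^{2.5})$, matching the claimed bound.

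Given a failed edge $e \in E$, we query the output-sensitive part (\Cref{item:faster-queries-reporting-all}) of each of the $n$ oracles with $e$. The oracle with source $s$ returns exactly the set $A_s(e)$ of vertices $u$ whose $u,s$-mincut value decreases. We output every pair $\{s,u\}$ with $u \in A_s(e)$. Correctness is immediate, because a pair $\{x,y\}$ is affected if and only if $y \in A_x(e)$. Each unordered pair may be reported twice, once from each endpoint. To avoid this, we output a pair only from its smaller endpoint, or we simply accept a factor-$2$ duplication. The amortized cost per reported pair is unchanged either way.

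The main obstacle is the footnote's issue. A naive scan pays $O(1)$ overhead for each of the $n$ sources even when $A_s(e)=\emptyset$. The total time is then $O(n + k\log^3 n)$, where $k$ is the number of affected pairs, and this is not amortized $O(\log^3 n)$ per pair when $k$ is small. To remove the additive $O(n)$ term, we must touch only those sources $s$ with $A_s(e)\neq\emptyset$.

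We observe that $e=(x,y)$ affects some pair $\{s,u\}$ only if $e$ lies in some $s,u$-mincut. In that case the $x,y$-mincut value decreases, since any $s,u$-mincut containing $e$ separates $x$ from $y$ and has value at most $\lambda(x,y)$ by the triangle-type inequality on the Gomory--Hu tree. Moreover, $s$ itself is then affected with respect to $x$ or $y$: the edge lies in an $s,x$- or $s,y$-mincut. We therefore query the oracle with source $x$ and the oracle with source $y$ first, which costs amortized time charged to their output plus $O(1)$. Their outputs give the set $B=A_x(e)\cup A_y(e)$. Every source $s$ with nonempty $A_s(e)$ should lie in $B\cup\{x,y\}$. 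We then query only the oracles with sources in $B\cup\{x,y\}$. Each such query is nonempty, since $s\in A_x(e)$ implies $x\in A_s(e)$, so its $O(\log^3 n)$ amortized overhead is charged to a reported pair.

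The step to verify carefully is the claim that every affected source lies in $B\cup\{x,y\}$. The argument is as follows. An affected pair $\{s,u\}$ means $e$ lies in an $s,u$-mincut $C$. The cut $C$ separates $x$ from $y$, so it separates $s$ from one of $x$ or $y$, say from $x$. We then need to argue via submodularity and uncrossing with an $s,x$-mincut that $\lambda(s,x)$ drops upon failure of $e$. If this uncrossing fails, the fallback is to precompute, for each edge, a list of sources with nonempty answer. That list can be too large, so in that case we would accept the footnote's $O(n)$ additive term.
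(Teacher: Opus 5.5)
Your overall scheme matches the paper's. You keep one copy of the \Cref{thm:faster-queries} oracle per source, query the oracles of $x$ and $y$ first, and then query only the sources found there, charging each query's overhead to its (necessarily nonempty) output. The gap is the one nontrivial ingredient: that every $s$ with $A_s(e)\neq\emptyset$ lies in $A_x(e)\cup A_y(e)$ (the paper's \Cref{lem:reporting-all-pairs}). You leave this as ``we then need to argue.'' Your fallback, accepting the additive $O(n)$ term, would not give the claimed amortized $O(\log^3 n)$ bound per pair. So as written, the proposal is incomplete exactly at its crux.

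The uncrossing you sketch does go through in one line:
\begin{itemize}
\item Let $C$ be an $s,u$-mincut to which $e$ contributes, oriented so that $s\in C$ and $u\notin C$.
\item Since $e$ crosses $C$, exactly one endpoint lies outside $C$; call it $x$, so $y\in C$.
\item Take any $(s,x)$-mincut $B$. Then $C\cap B$ is an $(s,u)$-cut, and $C\cup B$ is an $(s,x)$-cut because $x\notin C\cup B$.
\item By \Cref{lem:sub-posi-general}, $C\cup B$ is an $(s,x)$-mincut. It contains $y$ and not $x$, so $e$ contributes to it, $\lambda(s,x)$ drops, and $s\in A_x(e)$.
\end{itemize}
With this in place, the rest of your accounting is fine. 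Adding $\{x,y\}$ is redundant, since $x\in A_y(e)$ and $y\in A_x(e)$ always hold.

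Separately, your remark that an $s,u$-mincut separating $x,y$ ``has value at most $\lambda(x,y)$'' is backwards: any cut separating $x$ from $y$ has value at least $\lambda(x,y)$. Nothing depends on it, because $\lambda(x,y)$ trivially drops when the edge $\{x,y\}$ fails.
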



We note that~\cite[Theorem 4.1(2)]{DBLP:conf/soda/BaswanaP22}
stated an $O(n^2)$ space oracle that reports all affected pairs by an edge failure in amortized $O(1)$ time per pair.
However, this specific statement seems inaccurate 
as their query time incurs an additive $O(n)$ term (private communication with S. Baswana, 2026).
Thus, our \Cref{thm:faster-queries}(\ref{item:faster-queries-reporting-all}) is the first to achieve near-optimal output-sensitive query time \emph{for edge failure queries} in the all-pairs setting; for edge insertion queries,~\cite{BaswanaGK22} gave $O(n^2)$ space with optimal output-sensitive time.

\paragraph{Related Works.} 
We give a broader view of sensitivity oracles through the lens of closely related and central topics of dynamic algorithms and sparsifiers for mincuts. The dynamic algorithms aim to maintain information of mincuts where the graph changes over time due to edge updates (insertion/deletion).  There exist many efficient partial/fully dynamic algorithms for fixed-pair mincuts \cite{DBLP:conf/icalp/GoranciHRS25, DBLP:conf/icalp/GoranciH23, HenzingerK99} and global mincuts \cite{Thorup07, DBLP:conf/soda/GoranciHNSTW23, DBLP:conf/soda/JinST24}. However,
the first and only nontrivial algorithm for dynamically maintaining a Gomory-Hu tree (or all-pairs mincuts) was established very recently by Kenneth-Mordoch and Krauthgamer \cite{DBLP:conf/stoc/YKKrauthgamer26}. They design an algorithm that uses $O(n^{1.5})$ space and achieves $O(n^{1.5})$ update time. 
Their techniques are mostly based on the \emph{friendly cut sparsifier} of \cite{DBLP:conf/soda/AbboudKT22}, which is inherently restricted to unweighted \emph{simple} graphs (without parallel edges). Also, there exist graph instances for which any such friendly cut sparsifier must occupy $\Omega(n^{1.5})$ space. It is now interesting to explore whether this bottleneck can be broken to achieve better bounds, where our approach might turn out to be useful.

Sparsification, in general, is a powerful tool that helps us preserve properties of a graph in a relatively much \textit{smaller} graph. Several classical sparsifiers for mincuts (or cuts in general) have been developed that sparsify the graph while the relevant cuts can be preserved approximately \cite{BenczurK96, DBLP:journals/jacm/KawarabayashiT19, FungHHP11} or exactly \cite{NagamochiI92, DBLP:conf/soda/AbboudKT22, DinicKL76, GomoryH61}. The space occupied by the exact mincut sparsifiers preserving {\em all} mincuts between {\em every} vertex pair \cite{NagamochiI92, DBLP:conf/stoc/YKKrauthgamer26} 
is parameterized on the value $\omega$ 
which is the maximum value of any $s,t$-mincut over all $s,t \in V$,
and they can achieve $O(n)$ space only when $\omega$ is constant.

\paragraph{Future Directions.}
A natural next step is to extend our results to the \emph{all-pairs setting}. An intriguing open problem is whether one can obtain an $O(n)$ space sensitivity oracle for all-pairs mincuts.
Our results for the single-source setting seem to provide evidence in this direction, especially in light of recent efficient algorithmic reductions from all-pairs to single-source mincuts, e.g., in~\cite{DBLP:conf/focs/AbboudK0PST22,DBLP:conf/focs/Abboud0PS23,DBLP:conf/focs/AbboudKLPGSYY25}. 

Another direction is \emph{weighted edge failures}. 
In a sense, our oracles do work for weighted graphs:
\Cref{thm:main} gives an $O(n)$ space data structure that, given an edge $e \in E$, can determine all the vertices $u$ such that $e$ contributes to some $u,s$-mincut (and \Cref{thm:faster-queries} works similarly).
So, in the weighted setting, our ``edge failure'' query translates to determining whether the $u,s$-mincut value changes when the capacity (i.e., weight) of the edge is reduced by some $\Delta \in (0,1]$ (assuming, by scaling, that weights are integral).
Extending our oracles to handle a weighted edge failure which reduces its capacity to zero (or more generally, by an arbitrary amount) remains an interesting challenge. 
This is particularly intriguing as Baswana and Bhanja \cite{DBLP:conf/icalp/BaswanaB24} recently extended the fixed-pair oracle of~\cite{DBLP:journals/mp/PicardQ80} to this weighted setting.


\section{Preliminaries} \label{sec: peliminaries}

Throughout, we consider an unweighted, undirected, connected multi-graph $G = (V,E)$ with $n = |V|$ vertices and $m = |E|$ edges, which has a designated source vertex $s \in V$. 

For $A \subseteq V$, the \emph{capacity} $c(A)$ 
is the number of edges crossing from $A$ to $\overline{A} := V\setminus A$.
Those edges are called the \emph{edge-set of the cut} $A$, or the \emph{contributing edges} of $A$.
If $\emptyset \neq X \subseteq A$ and $\emptyset \neq Y \subseteq \overline{A}$, we say that $A$ is an \emph{$(X,Y)$-cut.}
An $(X,Y)$-cut of smallest capacity is called \emph{$(X,Y)$-mincut}.
The term \emph{$X,Y$-(min)cut} refers to either an $(X,Y)$-(min)cut or a $(Y,X)$-(min)cut.
When $X,Y$ are singletons $X=\{x\}$ and $Y=\{y\}$, we may write $x,y$-(min)cut, and such shorthands for singletons apply to other notations as well.
For $S \subseteq V$, the (cut defined by) $A$ is said to be an \emph{$S$-cut} or a \emph{Steiner cut for $S$} if $S \cap A \neq \emptyset$ and $S \cap \overline{A} \neq \emptyset$. 
An $S$-cut of smallest capacity is called \emph{$S$-mincut} or a \emph{Steiner mincut for $S$}, and its capacity is denoted by $\lambda_S$.

It is well-known that $c(\cdot)$ satisfies \emph{submodularity}:
$c(A) + c(B) \geq c(A \cap B) + c(A \cup B)$ for all $A,B \subseteq V$.
As $G$ is undirected, this  also implies \emph{posimodularity}: $c(A) + c(B) \geq c(A \setminus B) + c(B \setminus A)$ for all $A,B \subseteq V$.
Rather than using these inequalities directly, we will use their consequence on mincuts:
\begin{lemma}\label{lem:sub-posi-general}
    Let $S,T \subseteq V$, and suppose $A$ is an $S$-mincut and $B$ is a $T$-mincut.
    \begin{itemize}
        \item (``Submodularity'') If one of $A \cap B$ and $A \cup B$ is an $S$-cut and the other is a $T$-cut, then the former is an $S$-mincut, the latter is a $T$-mincut, and 
        no edge of $G$ goes between $A \setminus B$ and $B \setminus A$.
        
        \item (``Posimodularity'') If one of $A \setminus B$ and $B \setminus A$ is an $S$-cut and the other is a $T$-cut, then the former is an $S$-mincut, the latter is a $T$-mincut, and 
        no edge of $G$ goes between $A \cap B$ and $\overline{A \cup B}$.
    \end{itemize}
    Furthermore, the 
    above also holds if we replace $S$-(min)cut by $S_1,S_2$-(min)cut, and/or replace $T$-(min)cut by $T_1,T_2$-(min)cut, for  $S_1, S_2, T_1, T_2 \subseteq V$. 
\end{lemma}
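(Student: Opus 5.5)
We derive both items from the exact forms of submodularity and posimodularity, keeping track of the edge terms that the inequalities discard. For any $A,B \subseteq V$, counting edges by which of the four ``quadrants'' $A\cap B$, $A\setminus B$, $B\setminus A$, $\overline{A\cup B}$ their endpoints lie in gives the identities
\[
c(A)+c(B) = c(A\cap B)+c(A\cup B) + 2\,d(A\setminus B, B\setminus A),
\]
\[
c(A)+c(B) = c(A\setminus B)+c(B\setminus A) + 2\,d(A\cap B, \overline{A\cup B}),
\]
where $d(X,Y)$ is the number of edges with one endpoint in $X$ and the other in $Y$. We verify each identity by checking, for every pair of distinct quadrants, how many times an edge between them is counted on each side. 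For example, an edge between $A\setminus B$ and $B\setminus A$ is counted twice on the left (once in $c(A)$, once in $c(B)$) and zero times in $c(A\cap B)+c(A\cup B)$. The other five quadrant pairs balance exactly. The second identity follows symmetrically, or by applying the first to $A$ and $\overline{B}$ and using $c(\overline{B})=c(B)$.

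For the submodularity item, suppose without loss of generality that $A\cap B$ is an $S$-cut and $A\cup B$ is a $T$-cut (the other assignment is symmetric). Since $A$ is an $S$-mincut, $c(A\cap B)\ge c(A)=\lambda_S$. Similarly $c(A\cup B)\ge c(B)=\lambda_T$. Plugging into the first identity gives
\[
\lambda_S+\lambda_T \ge \lambda_S+\lambda_T+2\,d(A\setminus B,B\setminus A).
\]
Hence all inequalities are tight: $c(A\cap B)=\lambda_S$, $c(A\cup B)=\lambda_T$, and $d(A\setminus B,B\setminus A)=0$. This is exactly the claim. The posimodularity item is identical, using the second identity with $A\setminus B$ and $B\setminus A$ in the roles of the two cuts, and it yields $d(A\cap B,\overline{A\cup B})=0$.

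For the ``Furthermore'' part, the same argument applies verbatim. The only property of $S$-(min)cuts used above is that every $S$-cut has capacity at least that of the $S$-mincut $A$. The same property holds for $S_1,S_2$-cuts, where the minimum is taken over cuts separating $S_1$ from $S_2$ in either orientation. So we may replace $\lambda_S$ by the $S_1,S_2$-mincut value, and likewise for $T$.

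The only step needing care is the edge-counting identity. Everything else is a one-line tightness argument, so no real obstacle arises.
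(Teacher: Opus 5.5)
Your proof is correct, but it takes a different route from the paper's.

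You use the exact edge-counting identities $c(A)+c(B)=c(A\cap B)+c(A\cup B)+2\,d(A\setminus B,B\setminus A)$ and its posimodular analogue. A single tightness argument then gives all three conclusions at once: both mincut equalities and the vanishing of the crossing-edge term.

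The paper uses submodularity only as the inequality $c(A)+c(B)\ge c(C)+c(D)$, where $C,D$ are the $S$-cut and $T$-cut among $A\cap B, A\cup B$. This inequality immediately gives the two capacity equalities. For the ``no edge'' conclusion it argues by contradiction with a perturbation:
\begin{itemize}
\item Delete a hypothetical edge $e$ between $A\setminus B$ and $B\setminus A$.
\item Observe that $A$ and $B$ stay $S$- and $T$-mincuts in $G-e$, since $e$ contributes to both.
\item Reapply the equality part in $G-e$ to conclude that $C$ is still an $S$-mincut there.
\item Derive $0=c(C)-c'(C)=c(A)-c'(A)=1$.
\end{itemize}
The paper then obtains posimodularity by applying submodularity to $A$ and $\overline{B}$, not from a second identity.

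What each route buys:
\begin{itemize}
\item Yours is self-contained and quantitative: it shows the slack is exactly twice the number of crossing edges. The cost is the six-case quadrant check.
\item The paper's avoids that bookkeeping by treating submodularity as a black box. It pays with the edge-deletion trick, whose step ``$A$ remains a mincut after deletion'' needs its own short justification. It also gets posimodularity for free by complementation.
\end{itemize}

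Your treatment of the ``Furthermore'' part is sound. You note that only the minimality of $A$ (resp.\ $B$) among cuts of its family is used, and that the $S_1,S_2$-mincut value does not depend on orientation because $c(X)=c(\overline{X})$. This is exactly what the paper's ``proved in exactly the same way'' relies on.
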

\def\SUBANDPOSI{
\begin{proof}[Proof of~\Cref{lem:sub-posi-general}]
    We prove the statement with $S$ and $T$, and the ``furthermore'' part is proved in exactly the same way.
    
    (Submodularity) 
    Let $C$ denote the $S$-cut among $A \cap B$ or $A \cup B$, and $D$ denote the other one of them, which is a $T$-cut. 
    As $A$ is an $S$-mincut and $B$ is a $T$-mincut, we get the inequalities $c(A) \leq c(C)$ and $c(B) \leq c(D)$.
    But by submodularity $c(A) + c(B) \geq c(C) + c(D)$, so these inequalities must in fact hold with equality.
    Now, suppose towards contradiction that $e$ is an edge between $A \setminus B$ and $B \setminus A$.
    Consider the graph $G'$ obtained by removing $e$ from $G$, and let $c'(\cdot)$ denote capacities in $G'$.
    As $e$ contributed to $A$, it remains an $S$-mincut in $G'$.
    Similarly, $B$ remains a $T$-mincut in $G'$.
    Thus, by the first part of the proof, $C$ is an $S$-mincut also in $G'$.
    However, $e$ does not contribute to $C$, so we get
    $
    0 = c(C) - c'(C) = c(A) - c'(A) = 1,
    $
    a contradiction.

    (Posimodularity)
    Without loss of generality, say $A \setminus B$ is the $S$-cut and $B \setminus A$ is the $T$-cut.    
    Note that $A \setminus B = A \cap \overline{B}$ and $B \setminus A =\overline{A \cup \overline{B}}$.
    So, by submodularity on $A$ and $\overline{B}$, we get that $A \setminus B$ and $\overline{B \setminus A}$ are $S$-mincut and $T$-mincut respectively, and there are no edges between $A \setminus \overline{B} = A \cap B$ and $\overline{B} \setminus A = \overline{A \cup B}$.
\end{proof}
}

Results of very similar flavor to~\Cref{lem:sub-posi-general} have been widely observed, e.g.\ in~\cite{DinicKL76,DBLP:journals/siamcomp/DinitzV00,DBLP:journals/mp/PicardQ80,BaswanaGK22,DBLP:journals/talg/BaswanaBP23}.
We give the short proof in~\Cref{sec:missing-proofs}.
By submodularity, one sees that if $A$ and $B$ are both $(X,Y)$-mincuts, then so are $A \cap B$ and $A \cup B$.
This allows us to define \emph{farthest} and \emph{nearest} mincuts:
\begin{definition}
    The \emph{farthest (resp., nearest) $(X,Y)$-mincut}, denoted $\Far(X,Y)$ (resp., $\Near(X,Y)$), is the union (resp., intersection) of all $(X,Y)$-mincuts.
    In other words, $\Far(X,Y)$ (resp., $\Near(X,Y)$) is the unique $(X,Y)$-mincut which contains (resp., is contained in) every other $(X,Y)$-mincut.
\end{definition}

Another lemma along these lines is concerned with \emph{three} mincuts to the source $s$.
Again, this lemma has been essentially observed before, e.g.\ in~\cite{DBLP:journals/siamcomp/DinitzV00,BaswanaGK22,DBLP:conf/soda/HeHS24,BaswanaP25}.
See~\Cref{sec:missing-proofs} for the proof.

\begin{lemma}\label{lem:three-mincuts-to-source}
    Let $a,b,c \in V \setminus \{s\}$.
    Let $A,B,C$ be $(a,s)$-mincut, $(b,s)$-mincut, $(c,s)$-mincut respectively.
    If $a \notin B \cup C$, $b \notin A \cup C$ and $c \notin A \cup B$,
    then the intersection $A \cap B \cap C$ is empty.
\end{lemma}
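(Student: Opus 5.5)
The plan is a direct counting argument. I will compare the sum $c(A)+c(B)+c(C)$ with the capacities of the three ``private parts'' $A\setminus(B\cup C)$, $B\setminus(A\cup C)$, $C\setminus(A\cup B)$ and of the common part $X := A\cap B\cap C$. Suppose for contradiction that $X\neq\emptyset$.

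\textbf{Step 1 (the private parts are valid cuts).} By the hypotheses $a\notin B\cup C$ and $a\in A$, we have $a\in A\setminus(B\cup C)$. Since $A$ is an $(a,s)$-cut, $s\notin A$, so $A\setminus(B\cup C)$ is an $(a,s)$-cut. As $A$ is an $(a,s)$-mincut, this gives $c(A\setminus(B\cup C))\ge c(A)$. Symmetrically,
\[
c(B\setminus(A\cup C))\ge c(B), \qquad c(C\setminus(A\cup B))\ge c(C).
\]
Also $s\notin X$ and $X\neq\emptyset$, so $X$ is a nonempty proper subset of $V$. Since $G$ is connected, $c(X)\ge 1$.

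\textbf{Step 2 (a three-set posimodularity inequality).} I will show
\[
c(A)+c(B)+c(C)\;\ge\; c(A\setminus(B\cup C))+c(B\setminus(A\cup C))+c(C\setminus(A\cup B))+c(A\cap B\cap C).
\]
The proof counts each edge $uv$ separately. Label each vertex by its membership vector in $\{0,1\}^3$ with respect to $(A,B,C)$.
\begin{itemize}
\item The edge's contribution to the left-hand side is the Hamming distance between the labels of $u$ and $v$.
\item Call a label \emph{special} if it has weight exactly one (a private part) or equals $111$ (the set $X$). The four right-hand sets are exactly the four special label classes.
\item An endpoint lies in at most one of the four right-hand sets, and only if its label is special. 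The edge contributes to that set's cut only when the two labels differ. Hence the edge's right-hand contribution is the number of special endpoints, provided the labels differ, and $0$ otherwise.
\end{itemize}
Now compare the two contributions case by case.
\begin{itemize}
\item If the labels are equal, both sides are $0$.
\item If the labels differ and at most one endpoint is special, the right-hand side is at most $1$ and the Hamming distance is at least $1$.
\item If both endpoints are special with different labels, any two distinct special labels (two distinct weight-one vectors, or a weight-one vector versus $111$) are at Hamming distance exactly $2$, matching the right-hand contribution of $2$.
\end{itemize}
Summing over all edges yields the inequality.

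\textbf{Step 3 (conclusion).} Substituting the three bounds from Step 1 into the inequality of Step 2 gives $c(X)\le 0$. This contradicts $c(X)\ge 1$, so $A\cap B\cap C=\emptyset$.

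The only step with real content is the edge-by-edge verification in Step 2. The key observation there is that distinct special labels are pairwise at Hamming distance exactly $2$, which is what makes the count tight. As an alternative, one could first apply the posimodularity part of \Cref{lem:sub-posi-general} twice, to show that $A\setminus(B\cup C)$ is itself an $(a,s)$-mincut. However, the direct counting argument above already suffices and avoids this.
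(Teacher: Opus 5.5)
Your proof is correct and is essentially the paper's argument. You show that the three private parts are $(a,s)$-, $(b,s)$- and $(c,s)$-cuts, combine this with the inequality $c(A)+c(B)+c(C)\ge c(A\setminus(B\cup C))+c(B\setminus(A\cup C))+c(C\setminus(A\cup B))+c(A\cap B\cap C)$, and conclude from connectivity that $A\cap B\cap C=\emptyset$. The only difference is that you verify this three-set inequality yourself with the edge-by-edge Hamming-distance count, where the paper cites it from Lov\'asz.
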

\def\THREEMINCUTSTOSOURCE{
\begin{proof}[Proof of~\Cref{lem:three-mincuts-to-source}.]
    The assumption implies that $A \setminus (B \cup C)$, $B \setminus (A \cup C)$ and $C \setminus (A \cup B)$ are $(a,s)$-cut, $(b,s)$-cut and $(c,s)$-cut, respectively.
    Thus,
    \[
    c(A \setminus (B \cup C)) + c(B \setminus (A \cup C)) +c(C \setminus (A \cup B)) \geq c(A) + c(B) + c(C).
    \]
    However, in undirected graphs,  we have the following ``submodularity-like'' inequality which can be shown by edge counting arguments, cf.~\cite[Problem 6.48(iii)]{lovasz1979combinatorial}:
    \[
    c(A) + c(B) + c(B) \geq  c(A \setminus (B \cup C)) + c(B \setminus (A \cup C)) +c(C \setminus (A \cup B)) + c(A \cap B \cap C).
    \]
    Combining these two inequalities 
    and rearranging, we get $c(A \cap B \cap C) \leq 0$, i.e., $A \cap B \cap C$ has $0$ capacity.
    As $G$ is connected, the only two vertex subsets with $0$ capacity are $\emptyset, V$, but the latter option is impossible since $s \notin A,B,C$.
\end{proof}
}

\subsection{Essentials of the Connectivity Carcass}\label{sec:carcass-prelim}

Here, we succinctly present the ``must-knows'' of the seminal \emph{Connectivity Carcass}~\cite{DBLP:conf/stoc/DinitzV94, DBLP:conf/soda/DinitzV95, DBLP:journals/siamcomp/DinitzV00}.
The carcass is a rich and complex structure; we strongly recommend referring to~\cite{BaswanaP25} for a simplified and comprehensive presentation.
Our terminology differs slightly from theirs, mainly since we bypass the \emph{flesh} component of the carcass.
See~\Cref{sec:carcass-example} for an example demonstrating the carcass.

Let $S \subseteq V$ be a distinguished set of vertices called \emph{terminals}.
Two vertices $u,v \in V$ are \emph{$S$-equivalent} if no $S$-mincut keeps $u$ and $v$ on different sides.
We give unique identifiers to the $S$-equivalence classes, and denote by $\phi_S (u)$ the identifier of $u$'s class, so $\phi_S (u) = \phi_S (v)$ iff $u,v$ are $S$-equivalent.

A partition of $S$ into $S_1,S_2$ is called a \emph{valid partition} if there is some $S$-mincut $C$ s.t.\ $C \cap S = S_1$ and $\overline{C} \cap S = S_2$, or equivalently if any $S_1,S_2$-mincut is an $S$-mincut.
For a valid partition $S_1,S_2$, we say that: 
\begin{itemize}
    \item $S_1,S_2$ \emph{distinguishes edge $e \in E$} if there exists an $S_1,S_2$-mincut to which $e$ is contributing. 
    \item $S_1,S_2$ \emph{distinguishes vertex $v \in V$} if there exists two $S_1,S_2$-mincuts, one keeping $v$ on the side of $S_1$, and the other keeping $v$ on the side of $S_2$.
\end{itemize}

Vertex $v \in V$ is called \emph{stretched} (w.r.t.\ $S$) if $v$ is distinguished by some valid partition of $S$, and \emph{non-stretched} otherwise. 
All terminals are non-stretched, but a non-stretched vertex need not be a terminal. 

The carcass is based on a \emph{cactus} graph and related notions defined as follows:
\begin{definition}
    A \emph{cactus} is an undirected connected graph where no two cycles share a node.\footnote{Usually, a cactus only means that no \emph{edge} is shared by two cycles, but we only deal with cacti having node-disjoint cycles.}    
    Edges/nodes in cycles are called \emph{cycle-edges/nodes}; the rest are called~\emph{tree-edges/nodes}.
    
    A \emph{canonical cut} of the cactus is either one tree-edge or a pair of non-adjacent edges from the same cycle.
    Deleting the edges of such a cut leaves exactly two connected components, called its \emph{sides}.
    
    A \emph{proper path} in the cactus is a simple path which shares at most one edge with any cycle.
    Given any two nodes in a cactus, either there is a \emph{unique} proper path between them, or there is no such path.
    Hence, a proper path can be specified by its two endpoints.
\end{definition}

The \emph{skeleton} is a cactus that compactly stores all valid partitions (see~\cite[Section 4.2]{BaswanaP25}).
\begin{theorem}[\protect{Skeleton}]
\label{thm:skeleton}
    There is an $O(|S|)$ size cactus $\cH_S$ and a mapping\footnote{The mapping $\pi_S (\cdot)$ need not be surjective, i.e., there could be nodes of $\cH_S$ without mapped terminals.} $\pi_S (\cdot)$ from $S$ to nodes of $\cH_S$, 
    called the \emph{skeleton of $S$},
    such that valid partitions of $S$ bijectively correspond to canonical cuts of $\cH_S$: 
    Canonical cut $C$ corresponds to the valid partition $S_1,S_2$ such that $S_1$ are the terminals mapped to one side of $C$, and $S_2$ are those mapped to its other side.
    
    Further, $\cH_S$ has the following structural properties: Each empty node (i.e., without mapped terminals) has degree at least $3$.
    Each cycle consists only of empty nodes of degree exactly $3$, and has length at least $4$.
    In particular, this implies that each edge of $\cH_S$ is contained in at least one canonical cut.
\end{theorem}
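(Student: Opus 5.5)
The plan is to reconstruct the skeleton from the laminar-plus-crossing structure of $S$-mincuts, in the spirit of the classical cactus representation of global mincuts.

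\textbf{Step 1: reduce to a cut function on $S$.} Consider the family $\mathcal{F}$ of subsets $S_1\subseteq S$ arising as $C\cap S$ for an $S$-mincut $C$. This is exactly the family of sides of valid partitions, and it is closed under complement within $S$.

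\textbf{Step 2: the crossing property.} Let $S_1=A\cap S$ and $T_1=B\cap S$ be members of $\mathcal{F}$ that cross, meaning all four of $S_1\cap T_1$, $S_1\setminus T_1$, $T_1\setminus S_1$ and $S\setminus(S_1\cup T_1)$ are nonempty. Then $A\cap B$ and $A\cup B$ are both $S$-cuts, so \Cref{lem:sub-posi-general} (submodularity) makes them $S$-mincuts. Likewise $A\setminus B$ and $B\setminus A$ are $S$-cuts, so posimodularity makes them $S$-mincuts. Hence all four corners lie in $\mathcal{F}$. In addition, the edge-counting consequences (no edges between $A\setminus B$ and $B\setminus A$, and none between $A\cap B$ and $\overline{A\cup B}$) give the standard ``$\lambda_S/2$ between adjacent corners'' structure. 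This makes $\mathcal{F}$ a crossing family satisfying the same axioms as global mincuts in the Dinitz--Karzanov--Lomonosov setting.

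\textbf{Step 3: build the cactus.} I would invoke the DKL cactus construction for such symmetric crossing families. It yields a cactus whose canonical cuts (tree edges, or pairs of edges from one cycle) biject with members of $\mathcal{F}$ up to complement, with terminals mapped to nodes.

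\textbf{Step 4: normalize the cactus.} I would enforce the structural properties as follows:
\begin{enumerate}
\item Contract any cycle of length $3$ into a single node. Its canonical cuts coincide with tree-edge-like cuts, so contraction removes duplicate cuts without losing any.
\item Suppress empty nodes of degree $2$ on trees by merging their two incident edges, which represent the same cut.
\item Eliminate empty degree-$1$ nodes, which would yield cuts with an empty side.
\item Detach nonempty cycle nodes and degree-${>}3$ cycle nodes: replace such a node $x$ by a new empty cycle node joined by a tree edge to $x$. The cut given by this new tree edge already exists as the pair of cycle edges at $x$, so the identification must be checked to keep the correspondence bijective.
\end{enumerate}
After normalization, every empty node has degree $\ge 3$, and cycles have length $\ge 4$ and consist of empty degree-$3$ nodes.

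\textbf{Step 5: size bound.} Count leaves, which are nonempty, against the remaining nodes using the degree $\ge3$ condition. This gives $O(|S|)$ nodes and edges, because the number of nodes of degree $\ge 3$ in a tree-like structure is at most the number of leaves. Each cycle can be treated as a high-degree hub for this count.

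\textbf{Step 6: every edge lies in a canonical cut.} Tree edges are canonical cuts by definition. A cycle of length $\ge 4$ always contains a non-adjacent pair of edges through any given edge.

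The main obstacle is Step 3 together with the bijectivity in Step 4: verifying that the crossing-family axioms from \Cref{lem:sub-posi-general} suffice for the cactus theorem, and that the normalization neither loses nor duplicates valid partitions. I would expect to cite \cite[Section 4.2]{BaswanaP25} for these details rather than redo them.
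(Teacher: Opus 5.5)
The paper does not prove this theorem; it imports it from \cite[Section 4.2]{BaswanaP25}, so deferring there is legitimate. The route you sketch, however, has a genuine gap at Step~3.

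Closure under crossing corners, which is all that \Cref{lem:sub-posi-general} gives you in Step~2, does not imply a cactus representation. The seven bipartitions of a four-element set $\{1,2,3,4\}$ form a family closed under crossing corners, yet no cactus represents it. In a cactus, two crossing canonical cuts must lie on a common cycle, and any cyclic order of four blocks realizes only two of the three pairings $\{1,2\}|\{3,4\}$, $\{1,3\}|\{2,4\}$, $\{1,4\}|\{2,3\}$. The abstract DKL-type theorem therefore needs a further axiom: $X\triangle Y\notin\mathcal{F}$ whenever $X,Y\in\mathcal{F}$ cross.

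Your $\lambda_S/2$ remark does not supply this axiom. It concerns the vertex sets $A,B$ and only shows $c(A\triangle B)=2\lambda_S$. But $\mathcal{F}$ is a family of traces on $S$, so you must rule out \emph{every} $S$-mincut $D$ with $D\cap S=(A\triangle B)\cap S$. This is exactly where Steiner mincuts differ from global ones, since non-terminals can sit anywhere.

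The missing argument goes as follows. Given such a $D$, repeated applications of \Cref{lem:sub-posi-general} show that the four terminal-carrying atoms $(A\cap B)\setminus D$, $(A\cap D)\setminus B$, $(B\cap D)\setminus A$ and $\overline{A\cup B\cup D}$ are all $S$-mincuts. Every edge of $G$ lies in at least as many of the cuts $A,B,D$ as of the cuts of these four atoms. This gives $3\lambda_S=c(A)+c(B)+c(D)\geq 4\lambda_S$, a contradiction. This is a real missing step, not a routine verification.

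Step~4 also needs repair.
\begin{itemize}
\item Contracting a $3$-cycle into a single node loses cuts. The three valid partitions isolating one triangle node together with everything hanging from it are in general not given by any tree edge (e.g., when that node carries terminals or has two hanging subtrees). Instead, replace the triangle by a star around a new empty center, and likewise turn $2$-cycles, if any arise, into tree edges.
\item For the detaching step, recall that in this paper two \emph{adjacent} cycle edges do not form a canonical cut. Before detaching, the partition isolating a nonempty or high-degree cycle node has no canonical cut at all. After detaching, it is represented exactly once, by the new tree edge. So the non-adjacency convention is what makes the bijection correct for the normalized cactus; there is no duplication to check away.
\end{itemize}
Steps~5 and~6 are fine, though your count should also note that degree-$2$ nodes are nonempty and hence number at most $|S|$.
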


A canonical cut in $\cH_S$ is said to distinguish a vertex/edge if its corresponding valid partition distinguishes it.
%
The projection mapping extends $\pi_S (\cdot)$ to \emph{non-terminals} and to \emph{edges}, and provides a characterization of those canonical cuts distinguishing them (see~\cite[Sections 5.1, 5.2, 6 and 8.2]{BaswanaP25}). 

\begin{theorem}[Vertex projections]
\label{thm:proj-vertex}
    Each vertex $v \in V $ has a \emph{projection} $\pi_S (v)$ in $\cH_S$ such that:
    \begin{itemize}
        \item If $v$ is stretched, $\pi_S (v)$ is the unique proper path in $\cH_S$ such that the canonical cuts distinguishing $v$ are precisely those canonical cuts that contain an edge from $\pi_S (v)$ (i.e., those that keep $\pi_S (v)$'s endpoints on different sides).

        \item If $v$ is non-stretched, $\pi_S (v)$ is a tree-node of $\cH_S$. 
        If $v \in S$, this node coincides with the previous definition of $\pi_S (v)$ from~\Cref{thm:skeleton}.

        \item The following holds:
        Let $C$ be a canonical cut corresponding to valid partition $S_1,S_2$.
        If $C$ keeps $\pi_S (v)$ entirely on the side of $S_1$,
        then every $S_1,S_2$-mincut keeps $v$ with $S_1$.
    \end{itemize}
\end{theorem}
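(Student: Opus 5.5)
The plan is to define $\pi_S(v)$ from how each canonical cut treats $v$, and then prove that these local choices fit together into a single node or a single proper path of $\cH_S$.

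\textbf{Step 1: local sides.} Fix a canonical cut $C$ of $\cH_S$ with valid partition $S_1,S_2$, and look at the two extreme mincuts $\Near(S_1,S_2)$ and $\Far(S_1,S_2)$. Since every $S_1,S_2$-mincut lies between them, there are three cases:
\begin{itemize}
\item $v\in\Near(S_1,S_2)$: every mincut keeps $v$ with $S_1$;
\item $v\notin\Far(S_1,S_2)$: every mincut keeps $v$ with $S_2$;
\item $v\in\Far(S_1,S_2)\setminus\Near(S_1,S_2)$: $C$ distinguishes $v$.
\end{itemize}
In the first two cases I record the side of $C$ that carries $v$'s forced side of the partition. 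Let $R_v$ be the intersection, over all non-distinguishing canonical cuts, of these recorded sides. By construction this gives the third bullet of the theorem: if some $C$ keeps $\pi_S(v)\subseteq R_v$ wholly on the $S_1$ side, then $C$ is non-distinguishing with recorded side $S_1$.

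\textbf{Step 2: consistency.} I would show that the recorded sides are pairwise consistent, meaning no two are disjoint. Take two valid partitions that force $v$ to sides $X$ and $Y$ with $X\cap Y=\emptyset$ in the cactus. Apply \Cref{lem:sub-posi-general}, in its $S_1,S_2$-mincut form, to a mincut $A$ of the first partition and a mincut $B$ of the second. Intersections and unions (or the differences, via posimodularity) of $A$ and $B$ are again Steiner mincuts of valid partitions. One of them would then be a valid-partition mincut that keeps $v$ on a side contradicting the forcing. Since the skeleton is a cactus, and consistent families of canonical-cut sides have the Helly property on trees and cycles, $R_v$ is nonempty and connected.

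\textbf{Step 3: convexity.} This is the heart of the proof and the step I expect to be the main obstacle. I would show that the set $D_v$ of distinguishing cuts is exactly the set of canonical cuts meeting some proper path, whose endpoints are the extreme nodes of $R_v$. Two facts are needed.
\begin{itemize}
\item \emph{Betweenness.} If canonical cuts $C_1$ and $C_2$ both distinguish $v$, then every canonical cut ``between'' them also distinguishes $v$. The plan is to take the $S_1,S_2$-mincuts of $C_1$ and $C_2$ that place $v$ on opposite sides, and cross them by submodularity. The resulting mincuts realize the intermediate valid partition with $v$ on either side.
\item \emph{At most one edge per cycle.} A distinguishing path cannot use two edges of a cycle. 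Here I would invoke \Cref{lem:three-mincuts-to-source}-style three-set counting, or posimodularity on three cycle cuts, to get a contradiction.
\end{itemize}
Together with $\pi_S(v)$ being the unique proper path between the endpoints of $R_v$, these facts give the first bullet.

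\textbf{Step 4: non-stretched vertices.} If $D_v=\emptyset$, then $R_v$ is a single node. I still need to show it is a tree-node and not a cycle-node; this is the delicate sub-case. Suppose $R_v$ were a degree-$3$ empty cycle node $x$. Then the cut on $x$'s tree-edge and the cycle cuts through $x$'s two cycle edges all force $v$ toward $x$. Crossing the corresponding mincuts via \Cref{lem:sub-posi-general} would produce a mincut splitting $v$ for one of these partitions, contradicting non-stretchedness. So $R_v$ is a tree-node, and I set $\pi_S(v)$ to be it. For a terminal $v\in S$, every cut trivially forces $v$ to the side containing the skeleton node $\pi_S(v)$ of \Cref{thm:skeleton}, so the two definitions agree.
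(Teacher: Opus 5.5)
The paper does not prove this theorem. It imports it as a black box from the carcass literature (\cite{BaswanaP25}, building on Dinitz--Vainshtein), where the proof goes through the flesh graph. Your self-contained uncrossing route is therefore a different approach, and some local pieces are sound. Pairwise consistency of forced sides follows from posimodularity. Step 4's exclusion of cycle nodes also works once $R_v$ is known to be a single node: crossing the two forced $2$-arc mincuts at $x$ by submodularity gives a mincut of the tree-edge partition with $v$ on the subtree side.

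However, Step 2 rests on a false claim, because sides of cycle cuts do not have the Helly property. On a $6$-cycle $c_1\cdots c_6$, take the sides $\{c_1,c_2,c_3,c_4\}$, $\{c_3,c_4,c_5,c_6\}$ and $\{c_5,c_6,c_1,c_2\}$ of the canonical cuts $\{c_4c_5,\,c_6c_1\}$, $\{c_2c_3,\,c_6c_1\}$ and $\{c_2c_3,\,c_4c_5\}$. They pairwise intersect but have empty common intersection. Likewise, the sides $\{c_6,c_1,c_2,c_3\}$ and $\{c_3,c_4,c_5,c_6\}$ meet in the disconnected set $\{c_3,c_6\}$. So pairwise consistency gives neither $R_v\neq\emptyset$ nor connectivity of $R_v$. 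Excluding such configurations needs a genuinely three-way uncrossing argument that uses the fact that every cut of the cycle either forces or distinguishes $v$ (for instance, posimodularity of a forced $3$-arc mincut against an excluded $2$-arc mincut). You do not supply this.

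Step 3, which you rightly call the heart, does not go through as stated.
\begin{itemize}
\item Betweenness together with ``at most one edge per cycle'' does not force a path: three distinguishing tree-edges at one tree node satisfy both properties. Excluding this star needs a separate three-set argument. Three mincuts keeping $v$ on the far sides have pairwise disjoint terminal sets, so the inequality in the proof of \Cref{lem:three-mincuts-to-source} gives $c(A_1\cap A_2\cap A_3)\le 0$, although this set contains $v$ and no terminal.
\item On a cycle, the required statement is stronger than ``one edge''. The distinguishing cuts of that cycle must be \emph{exactly} all pairs $\{g,h\}$ through a single fixed edge $g$. When the path ends at a cycle node, this says every $\{g,h\}$ distinguishes $v$. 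Betweenness cannot yield that, since no distinguishing cut lies beyond the smallest arc.
\item The betweenness mechanism you describe does not work. Crossing a $C_1$-mincut with a $C_2$-mincut produces cuts whose terminal sets are Boolean combinations of those of $C_1$ and $C_2$, not the intermediate partition. The correct argument crosses each of them with a mincut of $C_3$'s partition: if $X_i\subseteq Y_i$, submodularity of a $Y_i$-mincut with an $X_i$-mincut containing $v$ gives a $Y_i$-mincut containing $v$.
\item The third bullet of the theorem is not ``by construction''. It requires $\pi_S(v)\subseteq R_v$: every non-distinguishing cut must force $v$ to the side containing the path, and every distinguishing cut must contain a path edge. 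This is precisely what Step 3 has to establish, and it is only asserted.
\end{itemize}
These missing pieces are the actual substance of the theorem, so the proposal as written has a genuine gap.
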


\begin{theorem}[Edge projections]
\label{thm:proj-edge}
    Let $e = \{x,y\}$ be an edge such that $\phi_S (x) \neq \phi_S (y)$. 
    The \emph{projection} $\pi_S (e)$ of $e$ is the unique proper path in $\cH_S$ with the following properties:
    \begin{itemize}
        \item The canonical cuts of $\cH_S$ that distinguish $e$ are precisely those that contain an edge from $\pi_S (e)$
        (i.e., those canonical cuts that keep $\pi_S (e)$'s endpoints on different sides).

        \item The endpoint nodes of $\pi_S (e)$ are labeled as the \emph{$x$-endpoint} and the \emph{$y$-endpoint}; $\pi_S (x)$ and $\pi_S (y)$ respectively form a prefix and a suffix of $\pi_S (e)$ in the direction from the $x$-endpoint to the $y$-endpoint.%
        \footnote{
        If $x$ is non-stretched, this means that the node $\pi_S (x)$ is the $x$-endpoint of $\pi_S (e)$, and similarly with $y$.
        }

        \item The following holds: Let $C$ be a canonical cut corresponding to valid partition $S_1,S_2$.
        If $C$ keeps the $x$-endpoint of $\pi_S (e)$ 
        with $S_1$ and its $y$-endpoint with $S_2$,
        then every $S_1,S_2$-mincut to which $e$ is contributing keeps $x$ with $S_1$ and $y$ with $S_2$.
        This is called the \emph{unidirectionality} property.
    \end{itemize}
\end{theorem}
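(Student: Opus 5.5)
This result is imported from the carcass literature (\cite[Sections 5, 6, 8.2]{BaswanaP25}). If I had to reprove it, I would derive it from \Cref{thm:skeleton}, \Cref{thm:proj-vertex} and \Cref{lem:sub-posi-general}, in the following order.

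\textbf{Unidirectionality per partition.} Fix a valid partition $S_1,S_2$ distinguishing $e=\{x,y\}$. I would first show that all $(S_1,S_2)$-mincuts to which $e$ contributes orient $e$ the same way. Suppose $C,D$ are $(S_1,S_2)$-mincuts with $x\in C\setminus D$ and $y\in D\setminus C$. Then $C\cap D$ and $C\cup D$ are again $(S_1,S_2)$-cuts. The submodularity part of \Cref{lem:sub-posi-general} then forbids any edge between $C\setminus D$ and $D\setminus C$, which contradicts $e$ being such an edge. This gives each distinguishing canonical cut a well-defined ``$x$-side'' and ``$y$-side'' in $\cH_S$.

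\textbf{Convexity.} Next I would show that the family $\mathcal{F}_e$ of canonical cuts distinguishing $e$ is exactly the family of canonical cuts separating some two nodes $p,q$ of $\cH_S$ joined by a proper path. The plan is to prove two closure statements.
\begin{itemize}
\item \emph{Nesting.} If $C_1,C_2\in\mathcal{F}_e$, then the $x$-side of one is contained in the $x$-side of the other, or the two cuts cross inside a single cycle in a compatible way. This should follow from uncrossing $(S_1,S_2)$-mincuts and $(S_1',S_2')$-mincuts containing $e$, via the submodularity and posimodularity parts of \Cref{lem:sub-posi-general} (in its ``furthermore'' form with $S_1,S_2$ and $T_1,T_2$). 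The cases to rule out are $x$-sides that are disjoint, or $y$-sides that are disjoint. In those cases posimodularity would force $e$ to run between $A\cap B$ and $\overline{A\cup B}$, which is impossible.
\item \emph{Betweenness.} If a canonical cut $C$ lies ``between'' two members of $\mathcal{F}_e$ in the cactus order, then $C\in\mathcal{F}_e$. I would prove this by uncrossing a mincut of $C$'s partition with the mincuts through $e$: the intersection/union obtained contains $e$ and is a mincut for $C$'s partition.
\end{itemize}
Together, these mean that in the cactus the $x$-sides of the cuts in $\mathcal{F}_e$ form a chain. Its minimal element determines the $x$-endpoint $p$ and its maximal element determines the $y$-endpoint $q$. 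The cycle structure in \Cref{thm:skeleton} (degree-$3$ empty cycle nodes, each cycle of length at least $4$) is what guarantees that the separated pairs form a \emph{proper} path: on a cycle, the cuts in $\mathcal{F}_e$ must all use one common cycle edge. Uniqueness of the path then follows from the uniqueness of proper paths. The nonemptiness of $\mathcal{F}_e$ uses $\phi_S(x)\neq\phi_S(y)$.

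\textbf{Prefix property and the final claim.} Every canonical cut distinguishing $x$ and lying in $\mathcal{F}_e$ must keep $y$ fixed relative to $x$'s movement. Comparing with the characterization of $\pi_S(x)$ in \Cref{thm:proj-vertex}, the cuts containing an edge of $\pi_S(x)$ should form exactly the cuts of $\mathcal{F}_e$ nearest $p$, so $\pi_S(x)$ is a prefix of the path; symmetrically, $\pi_S(y)$ is a suffix. The unidirectionality property stated in the theorem is then the per-partition claim above, rephrased through the chosen endpoint labels.

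The main obstacle I expect is the convexity step: showing that $\mathcal{F}_e$ is exactly path-shaped, and in particular handling cuts on a cycle of $\cH_S$, where two canonical cuts may cross. There I would first try posimodularity on the two crossing mincuts, to show that $e$ must lie on the common cycle edge.
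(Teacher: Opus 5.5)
\paragraph{Comparison with the paper.} The paper gives no proof of this statement. It imports it from \cite[Sections 5.1, 5.2, 6, 8.2]{BaswanaP25}, so deriving it from \Cref{thm:skeleton}, \Cref{thm:proj-vertex} and \Cref{lem:sub-posi-general} is a genuinely different, self-contained route.

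\paragraph{What already works.} Several of your pieces are sound:
\begin{itemize}
\item Per-partition unidirectionality via submodularity is correct.
\item Betweenness is correct, e.g.\ via the cut $(W\cup A_1)\cap A_2$.
\item Nesting holds, and is stronger than you state: two members of $\mathcal{F}_e$ can \emph{never} cross. Crossing canonical cuts put terminals in all four quadrants of $A,B$, since every cycle node carries terminals by \Cref{thm:skeleton}. Posimodularity then forbids $e$ from joining $A\cap B$ to $\overline{A\cup B}$. So drop the ``compatible crossing on a cycle'' alternative; it would also contradict your chain conclusion.
\end{itemize}

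\paragraph{The gap.} The genuine gap is the passage from ``$\mathcal{F}_e$ is a chain closed under betweenness'' to ``$\mathcal{F}_e$ is the cut set of a proper path.'' This does not follow, even together with your common-cycle-edge remark.

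Take a cycle $(a_0,\dots,a_{k-1})$ with $k\ge 5$, and let $g=\{a_0,a_1\}$ and $g'=\{a_i,a_{i+1}\}$. The family consisting only of $\{g,g'\}$ is a betweenness-closed chain, yet it is the cut set of no proper path. Such a path must use $g$ or $g'$, and is then cut by all $k-3\ge 2$ canonical cuts through that edge.

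What is missing is a \emph{saturation lemma}: if $\{g,g'\}\in\mathcal{F}_e$, then every canonical cut through $g$, or every one through $g'$, lies in $\mathcal{F}_e$. Convexity cannot give this, because the proof must use mincuts of partitions not known to be in $\mathcal{F}_e$. It is, however, provable with \Cref{lem:sub-posi-general}.

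\paragraph{How to prove saturation.} Write $[l,r]$ for the side containing $a_l,\dots,a_r$. Let $W$ be an $S$-mincut for $[1,i]$ with $x\in W$ and $y\notin W$. Let $A$ be an $S$-mincut for $[i,i+1]$. All four quadrants of $W,A$ contain terminals, so one of two things happens:
\begin{itemize}
\item $x\in W\cap A$ and $y\in A\setminus W$. Then $e$ is distinguished by the tree-edge cuts at $a_i$ and at $a_{i+1}$, and betweenness gives every cut through $g'$.
\item $x\in W\setminus A$ and $y\notin W\cup A$. Then $e$ is distinguished by $[1,i-1]$ and $[1,i+1]$, with $x$ inside.
\end{itemize}
The mirror argument, using an $S$-mincut for $[0,1]$, gives either every cut through $g$, or $[2,i]$ and $[0,i]$ with $x$ inside. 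Both second alternatives together would put $[1,i-1]$ and $[2,i]$ into $\mathcal{F}_e$ with $x$ inside both. These cross, or have disjoint $x$-sides when $i=2$, which is impossible by nesting.

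Only with this lemma, plus some cactus case analysis, do the extreme members of the chain determine endpoints $p,q$ joined by a proper path.

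\paragraph{A smaller hole in the prefix step.} You need that every partition distinguishing $x$ also distinguishes $e$. When that partition also distinguishes $y$, this is exactly the distinctness fact the paper cites in its proof of \Cref{lem:skeleton-translation}. Namely, since $\phi_S(x)\neq\phi_S(y)$, some $S_1,S_2$-mincut separates $x$ from $y$. This is provable by uncrossing, but it is not automatic and must be stated.

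Once the first and third bullets are established, that proof of \Cref{lem:skeleton-translation} yields the prefix/suffix bullet. Run it with $x'=x$ and $y'=y$, so that (P1) holds trivially.
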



\section{Technical Overview}\label{sec:technical-overview}

We give a technical overview focusing on our main result,~\Cref{thm:main}, structured as follows.
We first explain the edge insertion oracle of~\cite{BaswanaGK22}, who introduced the farthest mincut DAG, and provide some basic insights towards supporting failure queries.
Next, we introduce the key connecting link between the farthest mincut DAG and the connectivity carcass.
Based on this link, we first discuss how the simpler skeleton component alone can yield a non-trivial $O(m)$ space oracle with $O(n)$ query time.
Finally, we let carcass projections into the picture, and explain how these are used to obtain the $O(n)$ space oracle.

\subsection{From Insertion to Failure}\label{sec:DAG-overview}

Our starting point is the single-source mincuts oracle for \emph{edge insertion} by~\cite{BaswanaGK22}.
Observe that insertion of edge $e$ changes the $u,s$-mincut value if and only if $e$ contributes to $\Near(u,s)$ and to $\Far(u,s)$.
Thus, the problem reduces to efficiently representing $\mathcal{N} = \{\Near(u,s)\}_{u \in V \setminus \{s\}}$ and $\mathcal{F} = \{\Far(u,s)\}_{u \in V \setminus \{s\}}$.

Using sub/posimodularity, it is not hard to show that $\mathcal{N}$ is a laminar family, and thus admits an $O(n)$ size representation called the \emph{nearest mincut tree $\cT$}: its nodes form a partition of $V$ (where the root contains only $s$), such that, for each $u \in V \setminus \{s\}$, $\Near(u,s)$ has precisely those vertices found in descendants of the node containing $u$.
While $\mathcal{F}$ may not be laminar, the central contribution of~\cite{BaswanaGK22} was showing that it admits an analogous representation: the \emph{farthest mincut DAG} $\cD$, stated in~\Cref{thm: baswana et al DAG} below.
Throughout the paper, we use `tree terminology' for DAGs:
the in-neighbors and out-neighbors of a node are its \emph{parents} and \emph{children}, respectively; 
its \emph{descendants} are those nodes reachable from it, and its \emph{ancestors} are those that can reach it; a \emph{strict} ancestor/descendant is one that is not the node itself. 

\begin{restatable}[Theorem 4 in \cite{BaswanaGK22}]{theorem}{farthest}
\label{thm: baswana et al DAG}    
There is a DAG $\cD$ whose nodes form a partition of $V$, such that for every $u \in V \setminus \{s\}$, $\Far(u,s)$ consists precisely of those vertices found in descendants of the node containing $u$ in $\cD$.
Each node in $\cD$ has at most $2$ parents, and thus $\cD$ has only $O(n)$ edges.
The source $s$ is found alone in a source node $\{s\}$ with no parents, that is an ancestor of all other nodes.
\end{restatable}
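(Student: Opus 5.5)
We prove the statement by building $\cD$ directly from the family $\mathcal{F} = \{\Far(u,s)\}_{u \in V\setminus\{s\}}$, which throughout we write as $F_u := \Far(u,s)$.

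\textbf{Step 1 (partition and DAG).} Define $u \sim v$ iff $F_u = F_v$. Let the nodes of $\cD$ be the resulting equivalence classes, together with the node $\{s\}$; note $s \notin F_u$ for every $u$. Next we show that $v \in F_u$ implies $F_v \subseteq F_u$. Consider $F_u \cup F_v$. It contains $v$ and avoids $s$. Since $v \in F_u$, the set $F_u \cap F_v$ contains $v$ and is a $(v,s)$-cut, while $F_u \cup F_v$ is a $(u,s)$-cut. By the submodularity part of \Cref{lem:sub-posi-general}, $F_u \cup F_v$ is a $(u,s)$-mincut. Maximality of $F_u$ then gives $F_v \subseteq F_u$.

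Consequently, the relation ``$X \preceq Y$ iff the vertices of $Y$ lie in $F_u$ for $u \in X$'' is a partial order on classes. Antisymmetry holds by the definition of $\sim$. For $X$ a class with representative $u$, we have $F_u = \bigcup\{Y : Y \preceq X\}$, and $u \in F_u$ gives reflexivity. We take $\cD$ to be the Hasse diagram of this order, oriented from the larger set to the smaller. Add edges from $\{s\}$ to every class that is maximal in $\preceq$. Then $\{s\}$ reaches everything, and descendants of the class of $u$ are exactly the classes inside $F_u$.

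\textbf{Step 2 (at most two parents).} This is the main obstacle. Suppose a class $Z$ with representative $z$ has three distinct immediate predecessors $X_1,X_2,X_3$, with representatives $a,b,c$ and far cuts $A=F_a$, $B=F_b$, $C=F_c$. Then $z \in A \cap B \cap C$. Since the $X_i$ are pairwise incomparable (all are immediate predecessors of $Z$), we get $a \notin B \cup C$, and symmetrically $b \notin A \cup C$ and $c \notin A \cup B$. Indeed, if $a \in B$ then $F_a \subseteq F_b$, so $X_1 \preceq X_2$, contradicting incomparability. \Cref{lem:three-mincuts-to-source} now gives $A \cap B \cap C = \emptyset$, which contradicts $z \in A\cap B\cap C$.

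The parent $\{s\}$ is added only for maximal classes, and a maximal class has no other parents, so it has in-degree exactly $1$. Hence every node has at most $2$ parents, and the number of edges is $O(n)$.

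One subtlety needs care: ``parents'' must mean covering relations (immediate predecessors) and not arbitrary predecessors. This is exactly why we use the Hasse diagram. The transitivity-based reachability argument of Step 1 guarantees that using only covering edges does not lose descendants, because the order is finite.
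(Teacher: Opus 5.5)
Your proof is correct. The paper gives no proof of this statement: it is imported as Theorem~4 of \cite{BaswanaGK22}, so there is no in-paper argument to match.

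Your construction is self-contained given only \Cref{lem:sub-posi-general} and \Cref{lem:three-mincuts-to-source}. It takes classes of vertices with equal farthest cuts, orders them by containment in farthest cuts, uses the Hasse diagram as $\cD$, and adds $\{s\}$ on top. This turns an external citation into a consequence of the paper's own preliminaries.

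Your Step~2 is exactly the reasoning the paper uses for \Cref{lem:at-most-two-incomparable-ancestors}, applied to the covers of a fixed class instead of three arbitrary nodes with a common descendant. The one extra ingredient is that distinct covers of the same class are pairwise incomparable. You correctly supply it, and it is why taking the transitive reduction, rather than an arbitrary DAG with the right reachability, matters.

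Two small fixes:
\begin{itemize}
\item As written, ``$X \preceq Y$ iff the vertices of $Y$ lie in $F_u$ for $u \in X$'' is the reverse of how you use $\preceq$ afterwards. You later write $F_u = \bigcup\{Y : Y \preceq X\}$, attach $\{s\}$ to the maximal classes, and orient edges from larger to smaller. The definition you intend is $Y \preceq X$ iff $Y \subseteq F_u$ for $u \in X$.
\item For transitivity and antisymmetry, state explicitly the consequence of your Step~1 claim: if a class $Y$ meets $F_u$, then $Y \subseteq F_v \subseteq F_u$ for any $v \in Y$. This is what makes containment of classes in far cuts behave like a partial order.
\end{itemize}
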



For each structure $\mathcal{S} \in \{\cT, \cD\}$, let $\mathcal{S}(u)$ be the node of $\mathcal{S}$ that contains $u$, $\cSup(u)$ be the set of vertices in ancestors of $\mathcal{S}(u)$, and $\cSdown(u)$ be the set of vertices in descendants of $\mathcal{S}(u)$.
So, for every $u,w \in V \setminus \{s\}$,
\[
 w \in \Near(u,s) \Leftrightarrow w \in \cTdown(u) \Leftrightarrow u \in \cTup(w) \quad \text{and} \quad
    w \in \Far(u,s)  \Leftrightarrow w \in \cDdown(u) \Leftrightarrow  u \in \cDup(w).
\]
See~\Cref{sec:tree-DAG-example} for an example demonstrating $\cT$ and $\cD$.
Thus, when edge $e = \{x,y\}$ is inserted, the set of vertices $u$ such that $e$ contributes to $\Near(u,s)$ and to $\Far(u,s)$ is precisely $\cTup(x) \setminus \cDup(y) \cup \cTup(y) \setminus \cDup (x)$,
i.e., each $u$ having $x$ as a descendant in $\cT$ (so, $x\in \Near(u,s)$) and $y$ not a descendant in $\cD$ (so, $y\notin \Far(u,s)$), or vice versa.
Hence, the data structure of~\cite{BaswanaGK22} simply stores $\cT$ and $\cD$, and answers an insertion query in $O(n)$ time by $O(1)$ BFS searches in (the reverses of) $\cT$ and $\cD$.


Given this background, we move on to consider the problem of~\Cref{thm:main}: supporting a \emph{failure query} of edge $e = \{x,y\} \in E$.
Observe that the failure of $e$  changes the $u,s$-mincut value if and only if $e$ contributes to \emph{some} $u,s$-mincut.
This makes the failure variant a lot more challenging than the insertion one, which just requires examining $\Near(u,s)$ and $\Far(u,s)$, only two very special $u,s$-mincuts.
In light of this, it might seem that the nearest mincut tree $\cT$ and farthest mincut DAG $\cD$ have little to do with our task.
While $\cT$ is indeed less relevant, it turns out that $\cD$ is useful also for failure queries, as explained next.

\paragraph{Utilizing Ancestry Structure.}
A vertex $u$ is said to \emph{separate} two non-empty vertex subsets $P,Q \subseteq V$ if there exists a $u,s$-mincut that keeps $P$ and $Q$ on different sides.
Hence, the goal in a failure query of an edge $e = \{x,y\} \in E$ is to report all vertices that separate $x,y$.\footnote{Here and througout, ``to separate $x,y$'' is shorthand for ``to separate $\{x\},\{y\}$''.}
We now give a few basic observations on where these vertices lie in $\cT$ and in $\cD$, which help in answering our query, and also highlight what makes the latter more useful. 
For each $\mathcal{S} \in \{\cT, \cD\}$, let us classify the nodes into three types based on their ancestry relations to the nodes of $x$ and $y$:
\begin{description}
    \item[Common ancestor:] an ancestor of both $\mathcal{S}(x)$ and $\mathcal{S}(y)$.
    \item[Exclusive ancestor:] an ancestor of exactly one of $\mathcal{S}(x)$ or $\mathcal{S}(y)$.
    \item[Non-ancestor:] neither an ancestor $\mathcal{S}(x)$, nor of $\mathcal{S}(y)$.
\end{description}

Exclusive ancestors are trivial to handle both in $\cT$ and in $\cD$: all their vertices separate $x,y$.
Indeed, if $\cT(u)$ (resp., $\cD(u)$) is an exclusive ancestor, then the $u,s$-mincut $\Near(u,s)$ (resp., $\Far(u,s)$) keeps $x$ and $y$ on different sides.
The other two types play dual roles in $\cT$ and in $\cD$, as follows:
\begin{itemize}
    \item If $\cT(u)$ is a common ancestor, then $\Near(u,s)$ contains both $x$ and $y$, and hence $e$ cannot contribute to any $u,s$-mincut.
    Namely, common ancestors in $\cT$ cannot contain vertices that separate $x,y$.

    \item If $\cD(u)$ is a non-ancestor, then $\Far(u,s)$ keeps both $x$ and $y$ outside, and hence $e$ cannot contribute to any $u,s$-mincut.
    Namely, non-ancestors in $\cD$  cannot contain vertices that separate $x,y$.
\end{itemize}
To conclude, each of $\cT$ and $\cD$ has one non-trivial type: in $\cT$ these are the non-ancestors, while in $\cD$ these are the common ancestors.
Intuitively, this is what makes $\cD$ preferable: its non-trivial nodes are well-structured.
Indeed, if $\cD$ were a tree, the common ancestors would form a path from the root to the LCA (lowest/least common ancestor), so any two of them would be related by ancestry.
It turns out that even though $\cD$ is a DAG, it still admits a very similar property: among any three common ancestors, at least two are related by ancestry.
This is by the following lemma:
\begin{lemma}\label{lem:at-most-two-incomparable-ancestors}
    If $\mu_1, \mu_2, \mu_3$ are three nodes of $\cD$ such that\footnote{For a node $\mu$, we naturally denote $\cDup(\mu)$ and $\cDdown(\mu)$ for the sets of vertices in ancestors and in descendants of $\mu$, respectively.} $\cDdown(\mu_1) \cap \cDdown(\mu_2) \cap \cDdown(\mu_3) \neq \emptyset$, then at least two of them must be related by ancestry.
\end{lemma}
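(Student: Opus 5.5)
We argue by contradiction, reducing to \Cref{lem:three-mincuts-to-source}. Suppose $\mu_1,\mu_2,\mu_3$ are pairwise unrelated by ancestry in $\cD$, yet some vertex $w$ lies in $\cDdown(\mu_1)\cap\cDdown(\mu_2)\cap\cDdown(\mu_3)$. Pick arbitrary vertices $a\in\mu_1$, $b\in\mu_2$, $c\in\mu_3$. None of them is $s$: the node $\{s\}$ is an ancestor of every node, so if it were one of the $\mu_i$ it would be related to the others. We then take $A=\Far(a,s)$, $B=\Far(b,s)$, $C=\Far(c,s)$, which are an $(a,s)$-mincut, a $(b,s)$-mincut and a $(c,s)$-mincut respectively.

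By \Cref{thm: baswana et al DAG}, $A=\cDdown(a)=\cDdown(\mu_1)$, and likewise $B=\cDdown(\mu_2)$ and $C=\cDdown(\mu_3)$. We next check the hypotheses of \Cref{lem:three-mincuts-to-source}. Suppose $a\in B$. Then the node $\mu_1=\cD(a)$ is a descendant of $\mu_2$, because the nodes partition $V$ and $B$ is exactly the union of the descendant nodes of $\mu_2$. This contradicts the assumption that $\mu_1,\mu_2$ are unrelated, so $a\notin B$. The same argument gives $a\notin C$, $b\notin A\cup C$ and $c\notin A\cup B$.

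\Cref{lem:three-mincuts-to-source} then yields $A\cap B\cap C=\emptyset$. But $A\cap B\cap C=\cDdown(\mu_1)\cap\cDdown(\mu_2)\cap\cDdown(\mu_3)$ contains $w$, which is a contradiction. Hence at least two of $\mu_1,\mu_2,\mu_3$ are related by ancestry.

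The only delicate point is translating "$a\in\cDdown(\mu_2)$" into "$\mu_1$ is a descendant of $\mu_2$". This holds because $\cD$'s nodes partition $V$, so the vertex $a$ lies in a unique node. The remaining steps are direct applications of the earlier lemmas.
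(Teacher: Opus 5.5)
Your proof is correct and follows essentially the same route as the paper's: handle the source node $\{s\}$ separately, pick one vertex in each $\mu_i$, identify $\cDdown(\mu_i)$ with its farthest mincut to $s$, and apply \Cref{lem:three-mincuts-to-source}. The only difference is presentation: you argue by contradiction and verify the lemma's hypotheses explicitly, whereas the paper uses its contrapositive directly to get some $u_i \in \cDdown(\mu_j)$.
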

\begin{proof}
    If some $\mu_i$ is the source node $\{s\}$ this is immediate, as it is an ancestor of all other nodes.
    Otherwise, for each $i = 1,2,3$, take a vertex $u_i \in \mu_i$, so $\cDdown(\mu_i) = \Far(u_i, s)$. 
    Because 
    the intersection of these three mincuts to $s$ is non-empty,
    by \Cref{lem:three-mincuts-to-source} there must be some $i \neq j$ with $u_i \in \Far(u_j, s) =  \cDdown(\mu_j)$, and thus $\mu_i = \cD(u_i)$ is a descendant of $\mu_j$.
\end{proof}

In particular, this lemma implies that $\cD(x)$ and $\cD(y)$ can have at most two LCAs in $\cD$.\footnote{Similarly to trees, an LCA of two nodes in a DAG is a common ancestor of them, which does not have any child that is also a common ancestor. Unlike trees, there could be several LCAs, but they must be unrelated by ancestry.}
These two LCAs can be easily found in $O(n)$ time, and as discussed, all the non-trivial nodes are ancestors of (at least) one of them.
Therefore, our goal becomes to give a data structure for the following query:
\begin{enumerate}[label=\textbf{(Q)}]
    \item Given $e = \{x,y\} \in E$ and LCA $\mu$ of $\cD(x),\cD(y)$, report all vertices $u \in \cDup(\mu)$ that separate $x,y$. 
    \label{query-Q}
\end{enumerate}

\subsection{Key Insight: The Bridge to the Carcass}

We now present the key insight that underlies our approach.
The idea is to analyze how arbitrary mincuts to the source vertex $s$ ``interact'' with the farthest mincut DAG $\cD$.
Consider an arbitrary $(u,s)$-mincut $U$.
Then of course, $U \subseteq \Far(u,s) =  \cDdown(u)$, i.e., $U$ only contains vertices from descendants of $\cD(u)$.
Other than that, $U$ need not ``respect'' the structure of $\cD$: for a descendant node $\mu$ of $\cD(u)$, it could be that $U$ keeps $\mu$ entirely inside, entirely outside, or---``worst of all''---it could even splits $\mu$ between its sides, which indeed turns out most challenging for us.
Our key insight to mitigate this challenge is that if $\mu$ is split by $U$, then the partition $\mu \cap U, \mu \cap \overline{U}$ cannot be arbitrary: it must be a \emph{valid partition} of $\mu$ (see~\Cref{sec:carcass-prelim}).
This is because $U$ ``induces'' a $\mu$-mincut, as formalized in the following simple-yet-crucial lemma:

\begin{lemma}\label{lem:mu-mincuts}
    Let $U$ be a $(u,s)$-mincut for some $u \in V \setminus \{s\}$, and let $\mu$ be a node of $\cD$.
    Suppose that $U$ splits $\mu$ (i.e., $\mu \cap U \neq \emptyset$ and $\mu \cap \overline{U} \neq \emptyset$).
    Then $U \cap \cDdown(\mu)$ is a $\mu$-mincut, and its capacity $\lambda_\mu$ is equal to the capacity of $\cDdown(\mu)$. 
    In particular, $\mu \cap U, \mu \cap \overline{U}$ is a valid partition of $\mu$.
\end{lemma}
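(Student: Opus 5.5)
Write $D := \cDdown(\mu)$, and let $w \in \mu$ be any vertex. Then $\mu \neq \{s\}$: if $\mu = \{s\}$ it could not be split, since it is a singleton. Hence $D = \Far(w,s)$ is a $(w,s)$-mincut, with $c(D)$ equal to the $w,s$-mincut value. The plan is to apply the submodularity part of \Cref{lem:sub-posi-general} to the pair $U$ and $D$, choosing the terminal sets so that $U \cap D$ becomes a $\mu$-cut.

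\textbf{Step 1 (choose terminals).} Pick $a \in \mu \cap U$ and $b \in \mu \cap \overline{U}$. Then:
\begin{itemize}
\item $U$ is a $(u,s)$-mincut, so we take $S_1 = \{u\}$ and $S_2 = \{s\}$ for it.
\item $D$ is a $(b,s)$-mincut, since $b \in \mu$ and so $D = \Far(b,s)$. Take $T_1 = \{b\}$ and $T_2 = \{s\}$ for it.
\end{itemize}
Now look at $U \cap D$ and $U \cup D$. The set $U \cup D$ contains $u$ and excludes $s$, so it is a $(u,s)$-cut. The set $U \cap D$ excludes $b$ (because $b \notin U$) and excludes $s$, so it is not a $(b,s)$-cut. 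The pairing must therefore be arranged differently.

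\textbf{Step 1, revised.} Use $D$ as an $(a,s)$-mincut instead, since $a \in \mu$ gives $D = \Far(a,s)$. Then $U \cap D$ contains $a$ and excludes $s$, so it is an $(a,s)$-cut. Also $U \cup D$ contains $u$ and excludes $s$, so it is a $(u,s)$-cut. By \Cref{lem:sub-posi-general} (in its ``furthermore'' form with singleton pairs), $U \cap D$ is an $(a,s)$-mincut. Hence $c(U \cap D) = c(D)$, which is the common $(w,s)$-mincut value for every $w \in \mu$.

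\textbf{Step 2 (it is a $\mu$-mincut of value $c(D)$).} The set $U \cap D$ contains $a \in \mu$ and misses $b \in \mu$, so it is a $\mu$-cut of capacity $c(D)$. For the lower bound, let $C$ be any $\mu$-cut, separating some $a',b' \in \mu$. Assume without loss of generality (by complementing) that $s \notin C$ and that one of $a',b'$ lies in $C$; call it $w$. Then $C$ is a $(w,s)$-cut, so $c(C) \ge c(\Far(w,s)) = c(D)$. This gives $\lambda_\mu = c(D)$, and $U \cap D$ attains it, so it is a $\mu$-mincut.

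\textbf{Step 3 (valid partition).} Since $\mu \subseteq D$, we have $(U \cap D) \cap \mu = U \cap \mu$, and the complement side meets $\mu$ in $\mu \cap \overline{U}$. By definition, the $\mu$-mincut $U \cap D$ witnesses that $\mu \cap U, \mu \cap \overline{U}$ is a valid partition of $\mu$.

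\textbf{Expected obstacle.} The delicate point is choosing the terminal pairing in \Cref{lem:sub-posi-general}. It works only when $D$ is viewed as a mincut for a vertex of $\mu$ lying inside $U$. Beyond that, one must check that $s \notin U \cup D$ and that every vertex of $\mu$ has the same farthest mincut $D$, which follows from \Cref{thm: baswana et al DAG}.
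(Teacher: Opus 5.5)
Your proof is correct and follows the paper's argument essentially step for step. You apply submodularity to $U$ and $\cDdown(\mu)$, viewed as an $(a,s)$-mincut for some $a \in \mu \cap U$, to get $c(U \cap \cDdown(\mu)) = c(\cDdown(\mu))$; you then lower-bound every $\mu$-cut (complemented to exclude $s$) by the farthest mincut of a vertex of $\mu$ inside it. You can delete the abandoned first attempt with $b \in \mu \cap \overline{U}$. Your explicit remark that $\mu \neq \{s\}$, so $s \notin \cDdown(\mu)$, states a point the paper leaves implicit.
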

    
    

\begin{proof}
    We need to prove that $\lambda_\mu = c(U \cap \cDdown(\mu)) = c(\cDdown(\mu))$.
    
    First, we note that $\lambda_\mu \leq c(U \cap \cDdown(\mu))$,
    as $U \cap \cDdown(\mu)$ is a $\mu$-cut and $\lambda_\mu$ is the capacity of $\mu$-mincut.

    Next, we argue that $c(U \cap \cDdown(\mu)) = c(\cDdown(\mu))$.
    Take some $w \in \mu \cap U$.
    As $w \in \mu$, 
    $\cDdown(\mu)$ is (the farthest) $w,s$-mincut.
    Now, $U \cap \cDdown(\mu)$ is a $(w,s)$-cut and $U \cup \cDdown(\mu)$ is a $(u,s)$-cut.
    So, by submodularity (\Cref{lem:sub-posi-general}), 
    $c(U \cap \cDdown(\mu)) = c(\cDdown(\mu))$.

    Finally, we show that $c(\cDdown(\mu)) \leq \lambda_\mu$.
    Take any $\mu$-mincut $C$ such that $s \notin C$.
    Choose any $v \in \mu \cap C$, hence $C$ is a $(v,s)$-cut.
    Because $v \in \mu$,
    $\cDdown(\mu)$ is (the farthest) $v,s$-mincut, so $c(\cDdown(\mu)) \leq c(C) = \lambda_\mu$.
\end{proof}

Thus, \Cref{lem:mu-mincuts} provides a ``bridge'' between single-source mincuts and Steiner mincuts:
It serves as a key that unlocks the powers of the connectivity carcass, and allows us to harness them for our purposes.

\subsection{Utilizing Skeletons for $O(m)$ Space and $O(n)$ Query Time}\label{sec:rep-framework-overview}

As detailed in~\Cref{sec: peliminaries}, the connectivity carcass has several components;
the simplest and most well-known is the \emph{skeleton}, a cactus representation for all valid partitions of the terminals (see~\Cref{thm:skeleton}).
It turns out that by only utilizing skeletons, together with our previous insights on $\cD$, we can already provide a non-trivial data structure for Query~\ref{query-Q} with $O(m)$ space and $O(n)$ query time, 
through a tool we term \emph{the representatives framework}.

On a high-level, the representative framework provides a succinct characterization for the set of vertices $u$ that separate two vertices $x,y$, which is precisely what we need for Query~\ref{query-Q}.
The framework even applies more generally, if we replace the two vertices $x,y$ with two non-empty subsets of vertices $P,Q$.
This generalization would be critical for us to obtain an optimal $O(n)$ space bound, as will be explained later on in this overview.
The framework associates $O(1)$ \emph{representative information} $\RepInf(P,Q)$ with any such pair $P,Q$.
This information, along with the $O(n)$ global information consisting of the DAG $\cD$ and the skeletons $\cH_\nu$ associated with each node $\nu$ of $\cD$, essentially encodes all the vertices $u$ that separate $P,Q$ (namely, such that some $u,s$-mincut keeps $P$ and $Q$ on different sides).
The following lemma provides the interface of this framework.
\begin{lemma}\label{lem:rep-special-case}
    There is an $O(n)$ space data structure such that, given $\RepInf(P,Q)$ for a pair of non-empty $P,Q \subseteq V$, and given a node $\mu$ of $\cD$ such that $P \cup Q \subseteq \cDdown(\mu)$, reports in $O(n)$ time the set of vertices from ancestors of $\mu$ that separate $P,Q$, namely $\{u \in \cDup(\mu) \mid \text{exists $u,s$-mincut with $P$ and $Q$ on different sides}\}$.
\end{lemma}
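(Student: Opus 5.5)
Throughout, for each node $\nu$ of $\cD$ we regard $\nu$ as a terminal set and store its skeleton $\cH_\nu$ from \Cref{thm:skeleton}, together with $\pi_\nu(\cdot)$ on the terminals of $\nu$. Since the nodes partition $V$, this is $O(n)$ space in total, and we also store $\cD$ itself. The representative information will be defined as follows. For each node $\nu$ with $P\cup Q \subseteq \cDdown(\nu)$, set $U_P=\Far(\nu,\cdot)$ restricted suitably. More concretely, I would define $\RepInf(P,Q)$ via the $O(1)$ ``extreme'' nodes where $P$ and $Q$ diverge: the (at most two, by \Cref{lem:at-most-two-incomparable-ancestors}) LCA-type nodes of $P$ and of $Q$ in $\cD$, plus, for the relevant node $\mu$, two nodes of $\cH_\mu$ that play the role of projections of $P$ and $Q$.

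The characterization I aim to prove is the following. Let $u\in\cDup(\mu)$ and let $U$ be any $(u,s)$-mincut. Then $U\cap\cDdown(\nu)$, for each node $\nu$ split by $U$, induces a valid partition of $\nu$ by \Cref{lem:mu-mincuts}. I would first show a \emph{propagation} statement. If $\nu\in\cDdown(\mu)$ is not split, then $U$ contains either all of $\cDdown(\nu)$ or none of $\nu$. The first case follows by submodularity with $\Far(w,s)$ for $w \in \nu\cap U$, using the same argument as in \Cref{lem:mu-mincuts}. Hence $U\cap\cDdown(\mu)$ is determined, up to the internal choices inside split nodes, by which descendants of $\mu$ lie ``fully inside''. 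This set is an up-closed/down-closed set of nodes of $\cD$. Combined with \Cref{lem:three-mincuts-to-source}, the split nodes form a chain-like structure.

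From this I would derive that $u$ separates $P,Q$ iff one of the following holds. Either (a) some exclusive-ancestor-type condition holds in $\cD$ relative to the representative nodes of $P,Q$, checkable by reachability. Or (b) $u$ is in the node $\mu' $ at which $P,Q$ first become separable, and the canonical cut of $\cH_{\mu'}$ that is induced can be chosen to separate the representative projections of $P$ and $Q$. Here I use the correspondence between valid partitions and canonical cuts (\Cref{thm:skeleton}), and the third item of \Cref{thm:proj-vertex} to transfer ``side'' information from the skeleton back to actual vertices.

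The query algorithm is then one reverse BFS from $\mu$ in $\cD$ to enumerate $\cDup(\mu)$, plus $O(1)$ reachability computations from the representative nodes. These are done by BFS, giving $O(n)$ total time. Inside each relevant ancestor node, we test in $O(|\cH_\nu|)$ time, summing to $O(n)$, whether the terminals of $\nu$ (hence $u$'s projection) lie on a canonical cut separating the representatives. The main obstacle is step (b): showing that when $U$ splits a node, whether some choice of $U$ separates $P$ from $Q$ depends only on $O(1)$ skeleton nodes. I would attack it by taking $U$ and modifying it via submodularity with $\Far(\cdot,s)$ and $\Near(\cdot,s)$ cuts, to push it to a canonical form whose trace on each split node is a single canonical cut.
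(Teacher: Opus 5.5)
There is a genuine gap. The lemma's content is the existence of an $O(1)$-size $\RepInf(P,Q)$ that determines the answer, and your proposal neither defines a correct one nor proves the characterization; you flag (b) yourself as unresolved.

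\textbf{The encoding carries the wrong information.} You store ``LCA-type'' nodes of $P$ and $Q$ plus two nodes of $\cH_\mu$. First, $\RepInf(P,Q)$ is attached to the pair alone, not to the query node $\mu$. More importantly, the ancestry of the nodes holding $P\cup Q$ says nothing about how far up $\cD$ the separating vertices reach. Already for $P=\{x\}$, $Q=\{y\}$ inside one node $\mu$, the LCA is $\mu$, yet which strict ancestors contain separating vertices depends on the graph. The node needing an internal test is in general a strict ancestor of $\mu$, where nodes of $\cH_\mu$ are useless.

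The right objects are the highest nodes containing a vertex that separates the \emph{ordered} pair $(P,Q)$, and likewise $(Q,P)$. There are at most two of each: a fixed $p\in P$ lies in $U\subseteq\Far(u,s)$ for every such $u$, so \Cref{lem:at-most-two-incomparable-ancestors} applies.

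\textbf{The fact missing for (a) is an almost-converse.} Suppose $u$ lies in a \emph{strict} descendant of such a node $\nu$ and $P\cup Q\subseteq\cDdown(u)$. Take $w\in\nu$ and a $(w,s)$-mincut $W$ with $P\subseteq W$, $Q\cap W=\emptyset$. Since $w\notin\Far(u,s)$:
\begin{itemize}
\item if $u\in W$, submodularity makes $\Far(u,s)\cap W$ a $(u,s)$-mincut separating $P,Q$;
\item if $u\notin W$, posimodularity makes $\Far(u,s)\setminus W$ such a mincut.
\end{itemize}
So reachability classifies every vertex outside the at most four representative nodes.

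Your ``propagation'' claim does not supply this, and it is false as stated. Take two parallel $s$--$w$ edges and single edges $w$--$a$ and $a$--$s$. The nodes of $\cD$ are $\{s\},\{w\},\{a\}$ with $\cDdown(\{w\})=\{w,a\}$. Then $U=\{w\}$ is a $(w,s)$-mincut that contains $\nu=\{w\}$ but not $\cDdown(\nu)$.

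\textbf{For (b), separating ``projections'' of $P$ and $Q$ is the wrong invariant.} Projections are not defined for arbitrary sets, and they would have to live in $\cH_\nu$, not $\cH_\mu$. A single valid partition can be induced by $\nu$-mincuts that put a stretched vertex on either side, which is the false-positive phenomenon from the overview. Whether $u\in\nu$ is reported also depends on $u$ lying on the non-$s$ side, which ``a cut separating two points'' does not capture.

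What is needed instead:
\begin{itemize}
\item For $u$ separating $(P,Q)$, let $\Far(u,s\mid P,Q)$ be the union of all $(u,s)$-mincuts with $P$ inside and $Q$ outside.
\item Pick a separating $u_1\in\nu$, then a separating $u_2\in\nu\setminus\Far(u_1,s\mid P,Q)$ if one exists, and set $U_i:=\Far(u_i,s\mid P,Q)$.
\item Every separating $u_3\in\nu$ lies in $U_1\cup U_2$. Otherwise, since $P\subseteq U_1\cap U_2\cap U_3$, \Cref{lem:three-mincuts-to-source} gives $u_i\in U_j$ for some $i\neq j$. Submodularity plus the equal mincut values inside $\nu$ then make $U_i\cup U_j$ a $(u_i,s)$-mincut with $P$ inside and $Q$ outside. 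This forces $u_j\in U_i$, contradicting the choices.
\item Conversely, all of $U_i\cap\nu$ separates, since $U_i$ is a mincut for each vertex of $\nu$ it contains.
\end{itemize}
By \Cref{lem:mu-mincuts}, each $\nu\cap U_i$ is the marked side of a canonical cut of $\cH_\nu$, or all of $\nu$. So $\RepInf$ needs only $O(1)$ skeleton edges. The query is then a BFS from the representative nodes, a reverse BFS from $\mu$, and a union of marked sides in the representative skeletons, all in $O(n)$ time.
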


Given~\Cref{lem:rep-special-case}, we immediately get an $O(m)$ space data structure that answers Query~\ref{query-Q} within $O(n)$ time: we just store $\RepInf(x,y)$ for every edge $e = \{x,y\} \in E$ and the $O(n)$ space global structure from~\Cref{lem:rep-special-case}.
We now outline what is going on behind the hood of~\Cref{lem:rep-special-case}, and explain the essential role of skeletons in obtaining it.
\begin{itemize}
    \item First, by exploiting~\Cref{lem:at-most-two-incomparable-ancestors}, we show that there exists $O(1)$ \emph{representative nodes} $\nu_1, \nu_2, \dots$ for $P,Q$, such that every $u$ which separates $P,Q$ must lie in a descendant node of some $\nu_i$.
    Conversely, by using sub/posimodularity and the structure of $\cD$, we prove that if vertex $u$ lies in a \emph{strict} descendant of some $\nu_i$ and satisfies $P \cup Q \subseteq \cDdown(u)$, then $u$ must separate $P,Q$.
    We let $\RepInf(P,Q)$ store the names of $\nu_1, \nu_2, \dots$.
    So, using this information and the ``global'' DAG $\cD$,
    we can classify every vertex $u \in \cDup(\mu)$ as separating $P,Q$ or not, except for those vertices \emph{inside} the representative nodes.
    
    \item 
    Let us now zoom-in on one representative node $\nu_i$.
    By a structural analysis based on sub/posimodularity, we show that there exists $O(1)$ mincuts-to-$s$ $U_1, U_2, \dots$ such that those vertices of $\nu_i$ that separate $P,Q$ are precisely $\nu_i \cap (U_1 \cup U_2 \cup \cdots)$.
    By the ``bridge''~\Cref{lem:mu-mincuts}, each $\nu_i \cap U_j$ defines a valid partition of $\nu_i$, and thus can be determined as the terminals in one side of a corresponding canonical cut $C_{ij}$ in the skeleton $\cH_{\nu_i}$.
    Crucially, such a canonical cut $C_{ij}$ with a ``marked'' side can be compactly represented by storing the one or two skeleton edges of $C_{ij}$, with marking on their endpoints from the marked side.
    So, we let $\RepInf(P,Q)$ store $C_{i1},C_{i2},\dots$ (with the markings), and together with the ``global'' skeleton $\cH_{\nu_i}$, this lets us determine which vertices in $\nu_i$ that separate $P,Q$.
\end{itemize}


To keep the overview focused, from now on we will only care about the space of our data structure for Query~\ref{query-Q}, and neglect query time; the latter can be made $O(n)$ by using some technical extensions of the representative framework.

\subsection{Utilizing Projections Towards $O(n)$ Space}

Up until now, we have only used the skeleton component of the carcass, which enabled us to get down to $O(m)$ space.
To get optimal $O(n)$ space, we need to use the more complicated and crucial component of the carcass: the projection mapping (see~\Cref{thm:proj-vertex} and~\Cref{thm:proj-edge}).

\paragraph{Warm-Up: Internal Edges.}\label{sec:internal-overview}
As a warm-up to demonstrate our use of projections, we first discuss a special case of Query~\ref{query-Q}: the query edge $e = \{x,y\}$ is \emph{internal}, i.e., $x$ and $y$ are from the same node $\mu$ of $\cD$, which is trivially the unique LCA of $\cD(x),\cD(y)$.
The main insight for this special case is the following.
Consider any vertex $u \in \cDup(\mu)$ which we should classify as separating $x,y$ or not.
Let $U$ be some arbitrary $(u,s)$-mincut.
Using our ``bridge to the carcass'', we can characterize precisely when $U$ separates $x,y$ in terms of the projection $\pi_\mu (e)$, which is a proper path in the skeleton $\cH_\mu$.
\begin{claim}\label{claim:internal-helper}
    $U$ separates $x,y$ if and only if $\mu \cap U, \mu \cap \overline{U}$ is a valid partition of $\mu$ whose corresponding canonical cut $C$ contains an edge $g$ from $\pi_\mu (e)$.
\end{claim}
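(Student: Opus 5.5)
The plan is to prove the two directions separately, using \Cref{lem:mu-mincuts} as the bridge and then reading off the characterization from \Cref{thm:proj-edge}. Throughout, $e=\{x,y\}$ with $x,y\in\mu$, $u\in\cDup(\mu)$, and $U$ is a $(u,s)$-mincut. "$U$ separates $x,y$" means that $U$ keeps $x$ and $y$ on different sides, i.e., $e$ contributes to $U$.

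\emph{($\Rightarrow$)} Suppose $U$ separates $x,y$.
\begin{itemize}
\item Since $x,y\in\mu$, the cut $U$ splits $\mu$. By \Cref{lem:mu-mincuts}, $U':=U\cap\cDdown(\mu)$ is a $\mu$-mincut. Hence $\mu\cap U,\mu\cap\overline U$ is a valid partition $S_1,S_2$; let $C$ be its canonical cut in $\cH_\mu$ (\Cref{thm:skeleton}).
\item The edge $e$ contributes to $U'$, because both endpoints lie in $\cDdown(\mu)$ and exactly one lies in $U$.
\item $U'$ is an $S_1,S_2$-mincut. Indeed, $U'\cap\mu=S_1$, and any $S_1,S_2$-cut is a $\mu$-cut; since the partition is valid, every $S_1,S_2$-mincut is a $\mu$-mincut, so a $\mu$-mincut inducing $S_1,S_2$ is an $S_1,S_2$-mincut.
\item So $S_1,S_2$ distinguishes $e$. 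We also need $\phi_\mu(x)\neq\phi_\mu(y)$ for $\pi_\mu(e)$ to be defined, and this holds because $U'$ itself is a $\mu$-mincut separating them.
\item By the first bullet of \Cref{thm:proj-edge}, $C$ contains an edge of $\pi_\mu(e)$.
\end{itemize}

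\emph{($\Leftarrow$)} Suppose $\mu\cap U,\mu\cap\overline U$ is valid and its canonical cut $C$ contains an edge $g$ of $\pi_\mu(e)$. This direction is the main obstacle. Distinguishing $e$ only says that \emph{some} $S_1,S_2$-mincut cuts $e$, not that our particular $U$ does.

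\textbf{Step 1: reduce to $U'$.} By \Cref{lem:mu-mincuts}, $U'=U\cap\cDdown(\mu)$ is an $S_1,S_2$-mincut as above. Since $x,y\in\cDdown(\mu)$, it suffices to show that $U'$ separates $x,y$.

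\textbf{Step 2: use unidirectionality.} Because $C$ contains an edge of the proper path $\pi_\mu(e)$, it keeps the $x$-endpoint and the $y$-endpoint of $\pi_\mu(e)$ on different sides; say the $x$-endpoint is with $S_1$. There are two cases.
\begin{itemize}
\item \emph{$C$ keeps $\pi_\mu(x)$ entirely on the $S_1$ side and $\pi_\mu(y)$ entirely on the $S_2$ side.} Then the third bullet of \Cref{thm:proj-vertex} forces every $S_1,S_2$-mincut to place $x$ with $S_1$ and $y$ with $S_2$. In particular $U'$ separates them.
\item \emph{$g$ lies inside the prefix $\pi_\mu(x)$ (or the suffix $\pi_\mu(y)$).} Then $x$ is stretched and distinguished by $C$, so vertex projections alone do not fix $x$'s side. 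Since the prefix $\pi_\mu(x)$ and suffix $\pi_\mu(y)$ are disjoint parts of a proper path, $C$ can meet at most one of them.
\end{itemize}

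In the second case I would first try an exchange argument. Let $W$ be an $S_1,S_2$-mincut cutting $e$; by unidirectionality (\Cref{thm:proj-edge}), $W$ has $x\in W$ and $y\notin W$. Apply submodularity (\Cref{lem:sub-posi-general}) to $W$ and $U'$, both $S_1,S_2$-mincuts. This gives that $W\cap U'$ and $W\cup U'$ are also $S_1,S_2$-mincuts, and that no edge goes between $W\setminus U'$ and $U'\setminus W$. Now check the configurations of $x,y$ relative to $U'$:
\begin{itemize}
\item $x\notin U'$, $y\in U'$: then $e$ runs between $W\setminus U'$ and $U'\setminus W$, which is forbidden.
\item $x,y$ on the same side of $U'$: I expect a contradiction with $C$ keeping $\pi_\mu(y)$ (respectively $\pi_\mu(x)$) wholly on one side. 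For example, if $x,y\notin U'$, then $y$ stays on the $S_2$ side in every $S_1,S_2$-mincut. Combining $U'$ with $W$ via posimodularity should then yield a mincut with the wrong orientation for $e$, violating unidirectionality.
\end{itemize}

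If this case analysis does not close, the fallback is to invoke a stronger property from the carcass literature~\cite{BaswanaP25}: every $S_1,S_2$-mincut separates the endpoints of $e$ whenever the canonical cut meets $\pi_\mu(e)$.
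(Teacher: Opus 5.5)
\emph{Verdict: genuine gap in your proof of ($\Leftarrow$).} Your ($\Rightarrow$) direction is correct and follows the paper's argument.

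The ($\Leftarrow$) direction is not finished as written. The missing idea is short. In this claim $x,y\in\mu$, so $x$ and $y$ are \emph{terminals} of the terminal set $\mu$, and terminals are always non-stretched. Hence $\pi_\mu(x)$ and $\pi_\mu(y)$ are single tree-nodes, namely the nodes to which the skeleton maps $x$ and $y$. By \Cref{thm:proj-edge}, these nodes are exactly the $x$-endpoint and the $y$-endpoint of $\pi_\mu(e)$.

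Your second case ("$g$ lies inside the prefix $\pi_\mu(x)$, so $x$ is stretched") therefore never occurs. Any canonical cut $C$ containing an edge of $\pi_\mu(e)$ separates the nodes $\pi_\mu(x)$ and $\pi_\mu(y)$, so you are always in your first case. You handle that case correctly, via the last item of \Cref{thm:proj-vertex} applied to the $\mu$-mincut $U\cap\cDdown(\mu)$; this is exactly the paper's proof. A more direct version also works: by the bijection of \Cref{thm:skeleton}, terminals mapped to opposite sides of $C$ lie in different parts of $\mu\cap U,\mu\cap\overline{U}$, so $U$ separates $x,y$ immediately.

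Your treatment of the second case cannot be completed on its own terms, because it aims at something false in general.
\begin{itemize}
\item You left the configuration where $x,y$ lie on the same side of $U'$ with ``I expect a contradiction''. No such contradiction exists when an endpoint is stretched and $C$ cuts an edge of its projection: an $S_1,S_2$-mincut really can keep $x$ and $y$ together.
\item This is the ``false positive'' phenomenon the paper describes for the general, non-internal case. The appendix example exhibits it: there are two red mincuts with the same valid partition, $\{u,v\}$ crosses one of them, and it does not cross the other.
\item For the same reason, your fallback ``stronger property'' from the carcass literature is false in general. It holds here only because $x,y$ are terminals.
\item Your remark that the prefix $\pi_\mu(x)$ and suffix $\pi_\mu(y)$ are disjoint is also false in general, since they can overlap. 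This is moot in the internal case.
\end{itemize}

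To fix the proof, replace the case split in Step 2 by the observation that terminals are non-stretched, so their projections are the endpoints of $\pi_\mu(e)$.
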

\begin{proof}
    ($\Rightarrow$)
    By~\Cref{lem:mu-mincuts}, $U \cap \cDdown(\mu)$ is a $\mu$-mincut with $e$ contributing, so its valid partition $\mu \cap U, \mu \cap \overline{U}$ distinguishes $e$, hence by~\Cref{thm:proj-edge} the latter corresponds to a canonical cut with an edge from $\pi_\mu (e)$.

    ($\Leftarrow$)
    As $x,y \in \mu$, $\pi_\mu (x)$ and $\pi_\mu (y)$ are nodes, and they form $\pi_\mu (e)$'s endpoints by~\Cref{thm:proj-edge}, so $C$ keeps them on different sides.
    Hence, by~\Cref{thm:proj-vertex}, any $\mu$-mincut whose valid partition is $\mu \cap U, \mu \cap \overline{U}$ separates $x,y$.
    But $U \cap \cDdown(\mu)$ is such a $\mu$-mincut by~\Cref{lem:mu-mincuts}, and $x,y \in \cDdown(\mu)$, so $U$ separates $x,y$.
\end{proof}

To utilize this insight through the representative framework, we show a rather simple property of the skeleton:
Each skeleton edge $g$ in $\cH_\mu$ has two sets of \emph{associated terminals} $\mu_0 (g), \mu_1 (g) \subseteq \mu$, such that the canonical cuts containing $g$ correspond exactly to the valid partitions keeping $\mu_0 (g)$ and $\mu_1 (g)$ on different parts.
We say that $u$ separates $g$ if it separates $\mu_0 (g), \mu_1 (g)$ (i.e., some $u,s$-mincut keeps these sets on different sides), and define $\RepInf(g) = \RepInf(\mu_0 (g), \mu_1 (g))$.
Thus, by the previous claim, we obtain that
\begin{claim}\label{claim:internal}
    $u$ separates $x,y$ if and only if $u$ separates some skeleton edge $g$ from $\pi_\mu (e)$.
\end{claim}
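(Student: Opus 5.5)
We prove the two directions separately, each time going through \Cref{claim:internal-helper} and the defining property of the associated terminals $\mu_0(g),\mu_1(g)$. Throughout, $u \in \cDup(\mu)$ and $e=\{x,y\}$ is internal to $\mu$. Hence $\cDdown(\mu) \subseteq \cDdown(u) = \Far(u,s)$, and every $u,s$-mincut we consider is a $(u,s)$-mincut $U$ with $s \notin U$.

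($\Rightarrow$) Suppose some $(u,s)$-mincut $U$ keeps $x$ and $y$ on different sides. By \Cref{claim:internal-helper}, $U$ splits $\mu$, and the valid partition $\mu\cap U,\mu\cap\overline{U}$ corresponds to a canonical cut $C$ containing an edge $g$ of $\pi_\mu(e)$. Since $C$ contains $g$, the defining property of the associated terminals says this valid partition keeps $\mu_0(g)$ and $\mu_1(g)$ in different parts. These sets are subsets of $\mu$, so $U$ itself keeps $\mu_0(g)$ and $\mu_1(g)$ on different sides. Hence $u$ separates $g$.

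($\Leftarrow$) Suppose some $(u,s)$-mincut $U$ keeps $\mu_0(g)$ and $\mu_1(g)$ on different sides, for an edge $g$ of $\pi_\mu(e)$. Both sets lie in $\mu$ and are nonempty, so $U$ splits $\mu$. By \Cref{lem:mu-mincuts}, $\mu\cap U,\mu\cap\overline{U}$ is then a valid partition of $\mu$; let $C$ be its canonical cut. This partition separates $\mu_0(g)$ from $\mu_1(g)$, so by the defining property $C$ contains $g$. Applying the ($\Leftarrow$) direction of \Cref{claim:internal-helper}, $U$ keeps $x$ and $y$ on different sides, i.e., $u$ separates $x,y$.

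The only point needing care is the exact form of the associated-terminals property. We need the statement ``a canonical cut contains $g$ iff its valid partition keeps $\mu_0(g)$ and $\mu_1(g)$ in different parts'', with both sets nonempty. I would justify it, if needed, by letting $\mu_0(g)$ and $\mu_1(g)$ be the terminals mapped to the two components of $\cH_\mu$ after deleting $g$: if $g$ is a tree-edge, or after deleting the whole cycle containing $g$ while keeping the attached subtrees on the two arcs. Nonemptiness then follows from the structural properties in \Cref{thm:skeleton}: empty nodes have degree at least $3$, and cycles have length at least $4$. With that property in hand, the argument above is essentially bookkeeping.
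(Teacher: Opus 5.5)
Your argument is correct and is essentially the paper's: there the claim is obtained simply by combining \Cref{claim:internal-helper} with the associated-terminals property (\Cref{lem:associated-terminals} applied to a one-edge path). You spell out both directions, and in the converse you use \Cref{lem:mu-mincuts} to obtain validity of the partition.

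One caution concerns your optional side sketch. For a cycle-edge $g=\{a,b\}$, the two sets must consist of the terminals hanging off exactly the two consecutive cycle nodes $a,a'$ on one side and $b,b'$ on the other (\Cref{def:associated-terminals}). Your phrase ``the two arcs'' leaves this unspecified, and both natural readings fail. Taking only $a$ and $b$ breaks the converse, because the tree-edge cut at $a$ separates them without containing $g$. Taking longer arcs breaks the forward direction, because some canonical cut $\{g,h\}$ then splits one of the arcs.
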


In light of this discussion, our $O(n)$ space data structure for ``internal'' Query~\ref{query-Q} consists of:
\begin{itemize}
    \item The $O(n)$ space global structure of the representatives framework from~\Cref{lem:rep-special-case}.
    \item For each node $\mu$ in $\cD$, the skeleton $\cH_\mu$ is stored along with $\RepInf(g)$ for each skeleton edge $g$ of $\cH_\mu$.
\end{itemize}
As $\cH_\mu$ has $O(|\mu|)$ edges, and $\RepInf(g)$ takes up $O(1)$ space, this amounts only to $O(n)$ storage overall.
Now, to answer a query of $e = \{x,y\} \in E$ with $x,y \in \mu$, we find the unique proper path between the nodes $\pi_\mu (x)$ and $\pi_\mu (y)$ in the skeleton $\cH_\mu$, which is $\pi_\mu (e)$ by~\Cref{thm:proj-edge}.
Then, for each edge $g$ of $\pi_\mu (e)$, we use $\RepInf(g)$ in~\Cref{lem:rep-special-case} to find all vertices $\cDup(\mu)$ that separate $g$, and return the union of all found vertices, which is all those $u \in \cDup(\mu)$ that separate $x,y$ by~\Cref{claim:internal}.

\paragraph{The General Case.}\label{sec:general-case-overview}
From now on, we discuss the strategy for generalizing the internal case and handling Query~\ref{query-Q} with any edge $e = \{x,y\}$.
Unlike the internal case, now we need to consider two types of mincuts to $s$ which separate $x,y$:
    1) those that split $\mu$,
    and 2) those placing all of $\mu$ entirely on one side
    (the latter were impossible when $x,y \in \mu$).
We thus fork query~\ref{query-Q} into two queries:
\begin{quote}
    Given $e = \{x,y\} \in E$ and an LCA $\mu$ of $\cD(x),\cD(y)$,
    \begin{enumerate}[label=\textbf{(Q\arabic*)}]
        \item report all $u \in \cDup(\mu)$ such that some $u,s$-mincut separates $x,y$ and splits $\mu$.
        \label{query-Q1}

        \item report all $u \in \cDup(\mu)$ such that some $u,s$-mincut separates $x,y$ and 
        keeps $\mu$ on one side.
        \label{query-Q2}
    \end{enumerate}
\end{quote}
Query~\ref{query-Q1} turns out to be the more challenging one, so we focus only on it in this overview.

\paragraph{Anchors and Query Translation.}\label{sec:translation-overview}
As in the internal case, the projection path $\pi_\mu(e)$ in the skeleton $\cH_\mu$ will still play a crucial role for us, though its use will be more involved. But, before we can use $\pi_\mu (e)$, we are first faced with a more basic challenge: how can we even \emph{find} $\pi_\mu (e)$?
Recall that $\pi_\mu (e)$ is the unique proper path that has $\pi_\mu (x)$ and $\pi_\mu (y)$ as a prefix and suffix.
When $e$ was internal, $x,y$ were \emph{terminals}, so their projections are given directly from the skeleton $\cH_\mu$, which we store.
But in general, $x$ and/or $y$ are non-terminals, and so we do not have the budget to explicitly store their projections in $\cH_\mu$.
Our strategy to tackle this challenge goes through the notion of \emph{anchors}, introduced in the following definition:

\begin{restatable}[Pivots and Anchors]{definition}{anchors}\label{def: anchors}
    For each node $\nu$ of $\cD$, choose an arbitrary vertex $z_\nu \in \nu$ called the \emph{pivot} of $\nu$.
    For a vertex $x \in \cDdown(\mu)$, where $\mu$ is a node in $\cD$, define the \emph{anchor} of $x$ w.r.t.\ $\mu$ as follows:
    \begin{itemize}
        \item If $x \in \mu$, the anchor is just $x$ itself.
        \item If $x \notin \mu$, the anchor is a pivot $z_\nu$ where $\nu$ is an arbitrary child of $\mu$ such that $x \in \cDdown(\nu)$.
    \end{itemize}
\end{restatable}

Relying on sub/posimodularity and the properties of the farthest mincut DAG $\cD$, we are able to prove that the anchors $z_x$ and $z_y$ are ``equivalent'' to the endpoints of the query edge $e = \{x,y\}$, in every meaningful sense for us. 
So, our strategy is to ``translate'' the original query endpoints $x,y$ to their anchors $z_x, z_y$.
In particular, it turns out that $\pi_\mu (z_x)$ and $\pi_\mu (z_y)$ form a prefix and suffix of the proper path $\pi_\mu(e)$, in a similar fashion to $\pi_\mu (x)$ and $\pi_\mu (y)$ (although $\pi_\mu (z_x)$ might not be equal to $\pi_\mu (x)$, and similarly with $z_y$ and $y$).
Thus, we can store the projections $\pi_\mu (z_\nu)$ for each edge $(\mu, \nu)$ of $\cD$, and still be able to recover the projection path $\pi_\mu (e)$ at query time as desired; this only requires $O(n)$ space since $\cD$ has $O(n)$ edges.

As mentioned before, finding $\pi_\mu (e)$ is merely one basic challenge that we need to overcome.
However, the equivalence of the anchors $z_x,z_y$ to the endpoints $x,y$ is key to reducing the space needed to tackle other challenges as well: in essence, whenever we would like to store some $O(1)$ information for a vertex as a ``potential endpoint'' of the yet-unknown query edge whose LCA is $\mu$, we can just store it for the anchor instead, and this will result in $O(n)$ storage overall.
So, exploiting this equivalence, in the remaining discussion we simply assume that $x = z_x$ and $y = z_y$.

\paragraph{Generalizing the Structural Analysis.}
Finally, we give an overview of the more complicated structural insights to treat the general case of Query~\ref{query-Q1}, which play an analogous role to~\Cref{claim:internal} for the internal case.
One direction of~\Cref{claim:internal} remains true: if $u$ separates $x,y$ (by a cut that splits $\mu$), then $u$ must separate some edge $g$ of $\pi_\mu (e)$.
However, the converse direction is no longer true. 
This is because now $x$ or $y$ \emph{might be stretched vertices}.
We illustrate the issue through a hypothetical example:
\begin{itemize}
    \item Say $\pi_\mu (x)$ is a prefix of $\pi_\mu (e)$ with at least one edge $g$, which is a tree-edge;
    then $\mu_0 (g), \mu_1 (g)$ form the valid partition corresponding to $g$, which distinguishes $x$ by~\Cref{thm:proj-vertex}.
    So, there are $(\mu_0 (g),\mu_1 (g))$-mincuts $A$ and $B$ s.t.\ $x \in A$ and $x \notin B$; say $A,B$ are the only $(\mu_0 (g) ,\mu_1 (g))$-mincuts.

    \item Say there is a unique $(u,s)$-mincut $U$ splitting $\mu$, where  $\mu_0 (g) \subseteq U$, $\mu_1 (g) \subseteq \overline{U}$.
    Thus, $u$ separates $g$.

    \item Finally, say $y \in \overline{U}$.
    Thus, $u$ should be included in the output for Query~\ref{query-Q1} iff $x \in U$.
    By~\Cref{lem:mu-mincuts}, $W := U \cap \cDdown(\mu)$ is a $(\mu_0 (g),\mu_1 (g))$-mincut.
    Note that $x \in U$ iff $x \in W$.
    However, we have no way of telling if $W$ is equal to $A$ or to $B$, and accordingly, both $x \in U$ and $x \notin U$ are possible.
\end{itemize}
So, we need new tools to help us characterize precisely those ``false positive'' cases where $u$ separates $\pi_\mu (e)$ but does not separate $x,y$.
These come from storing some additional information with the vertex projections $\pi_\mu (x)$ and $\pi_\mu (y)$, as follows.

First, we show that one can associate two terminal sets $\mu_0 (x)$ and $\mu_1 (x)$ with the proper path $\pi_\mu (x)$, in a similar fashion to what we did with single skeleton edges:
the canonical cuts having an edge from $\pi_\mu (x)$ correspond precisely to the valid partitions keeping $\mu_0 (x)$ and $\mu_1 (x)$ on different parts.
Further, each of the sets $\mu_0 (x), \mu_1 (x)$ has a ``matching endpoint'' of $\pi_\mu (x)$: the endpoint which matches $\mu_i (x)$ lies in the same side as $\mu_i (x)$ of any canonical cut that splits $\mu_0 (x)$ and $\mu_1 (x)$.
We store the representative information $\RepInf (\mu_0 (x) \cup \{x\}, \mu_1 (x))$ and $\RepInf(\mu_0(x), \mu_1 (x) \cup \{x\})$.
This is done during preprocessing, hence we also store symmetric information also for $\pi_\mu (y)$. 

At query time, using some additional technical tools, we determine which of $\mu_0 (x)$ or $\mu_1 (x)$ matches the $x$-endpoint of $\pi_\mu (e)$; say $\mu_0 (x)$.
Then, define $A_\mu (x)$ as the set of all vertices from $\cDup(\mu)$ that have some mincut to $s$ which keeps $x$ with $\mu_0 (x)$ on one side, and $\mu_1 (x)$ on the other.
This set can be found by using $\RepInf(\mu_0 (x) \cup \{x\}, \mu_1 (x))$ in the representative framework (\Cref{lem:rep-special-case}).
We define $A_\mu (y)$ and find it in a symmetric fashion.
Next, similarly as in the internal case, we find $\pi_\mu (e)$ as the unique proper path in $\cH_\mu$ that starts with $\pi_\mu (x)$ and ends with $\pi_\mu(y)$ (by~\Cref{thm:proj-edge}), and for each edge $g$ in $\pi_\mu (e)$, we use $\RepInf(g)$ in the representatives framework to find the set of all vertices from $\cDup(\mu)$ that separate $g$.

The analog of~\Cref{claim:internal} from the internal case is the following characterization, showing that the information we have obtained can be used to exactly find those vertices $u \in \cDup(\mu)$ that have some $u,s$-mincut which separates $x,y$ and splits $\mu$, which should be reported for Query~\ref{query-Q1}.

\begin{itemize}
    \item If $u$ separates a ``middle'' edge from $\pi_\mu (e) \setminus (\pi_\mu (x) \cup \pi_\mu (y))$: Report $u$.
    
    \item Else, if $u$ separates an ``overlap'' edge from $\pi_\mu (x) \cap \pi_\mu (y)$: Report $u$ iff $u \in A_\mu (x) \cap A_\mu (y)$.

    \item Else, if $u$ separates a ``wing'' edge from $\pi_\mu (e) \setminus (\pi_\mu (x) \cap \pi_\mu (y))$: Report $u$ iff $u \in A_\mu (x) \cup A_\mu (y)$.

    \item Else, when $u$ does not separate any edge from $\pi_\mu (e)$: Do not report $u$.

\end{itemize}
The correctness proof of the above characterization is based on delicate structural arguments, relying on sub/posimodularity and carcass properties (using the bridge \Cref{lem:mu-mincuts}).

\subsection{Organization}
The rest of the paper is organized as follows.

\begin{itemize}
\item
We start by introducing the tools required for our oracles. \Cref{sec:carcass-tools} gives some new carcass tools needed for our results, which essentially add some capabilities to its basic interface from~\Cref{sec:carcass-prelim} (the proofs for this section appear in~\Cref{sec:missing-proofs-carcass}).
The next~\Cref{sec:rep-framework} gives the details of the representatives framework discussed in~\Cref{sec:rep-framework-overview}.
The following~\Cref{sec:translation} gives the formal ``query translation'' using the anchors discussed in~\Cref{sec:translation-overview}.

\item
After all the needed tools are given, we provide the data structures for Query~\ref{query-Q1} and Query~\ref{query-Q2} in~\Cref{sec:splitting-mu} and~\Cref{sec:not-splitting-mu} respectively, which together give the proof of our main~\Cref{thm:main}.

\item \Cref{sec:faster-queries} then introduces our data structure for queries on the farthest mincut DAG, and combines it with the previous tools to the fast-query oracles of~\Cref{thm:faster-queries} and~\Cref{thm:all-pairs-reporting-all}.
The insertion variant of~\Cref{thm:faster-queries} appears in~\Cref{sec:edge-insertion}.

\item Some straightforward lower bounds are given in~\Cref{sec:lowerbound}: (1) $\Omega(n)$ space is needed even for the fixed-pair setting, and (2) the \emph{directed} single-source setting needs $\Omega(n^2)$ space.

\end{itemize}

\section{Carcass Tools}\label{sec:carcass-tools}

This section introduces several new tools regarding the connectivity carcass, which was presented in the Preliminaries (\Cref{sec:carcass-prelim}).
These are used in a black-box manner by our data structures; we therefore provide only their interfaces here, with the proofs deferred to~\Cref{sec:missing-proofs-carcass}.
As in~\Cref{sec:carcass-prelim}, the terminal set is denoted by $S$.

\paragraph{Associated Terminals.}

As mentioned in~\Cref{sec:internal-overview,sec:general-case-overview}, we use a mechanism of \emph{associated terminal sets} for skeleton edges and proper paths.
These are defined as follows:

\begin{definition}[Associated terminals]\label{def:associated-terminals}
    We first define the associated terminal sets for skeleton edges.
    Let $g = \{a,b\}$ be an edge in $\cH_S$.
    Then, the associated terminal sets are denoted $S(a,g)$ and $S(b,g)$, where $S(a,g)$ is defined as follows, and $S(b,g)$ is defined symmetrically:
    \begin{itemize}
        \item If $g$ is a tree-edge: 
        Let $C$ be the canonical cut defined by $g$.
        Then, $S(a,g)$ are the terminals mapped to the side of $C$ where $a$ lies.

        \item If $g$ is a cycle-edge:
        Let $a'$ be the other neighboring node to $a$ (which is not $b$) on the cycle.
        Let $C$ be the canonical cut consisting of $g$ and the cycle edge-adjacent to $a'$ from the opposite side to $a$. (Note that $C$ is indeed a canonical cut because the cycle is of length at least 4.)
        That is, $C$ is the unique canonical cut that keeps $a$ and $a'$ on one side, and all other nodes from the cycle on the opposite side.
        Then, $S(a,g)$ are the terminals mapped to the side of $C$ where $a$ and $a'$ lie.
    \end{itemize}

    We extend this definition to proper paths in $\cH_S$ as follows.
    Let $P$ be a proper path between nodes $a$ and $b$ in $\cH_S$, and let $g_a$ and $g_b$ be its edges to $a$ and to $b$ respectively (it might be that $g_a = g_b$).
    The associated terminal sets of $P$ are defined by $S(a,P) := S(a,g_a)$ and $S(b,P) := S(b,g_b)$.
\end{definition}

We have the following property of associated terminal sets on proper paths:
\begin{lemma}\label{lem:containment-of-associated-terminals}
    Let $P = (a_0, a_1, \dots, a_k)$ be a proper path in $\cH_S$, and denote its edges by $g_i = \{a_{i-1},a_i\}$.
    Then, for every $1 \leq i \leq k$, it holds that $S(a_0, g_1) \subseteq S(a_{i-1}, g_i)$ and $S(a_i, g_i) \supseteq S(a_k, g_k)$.
\end{lemma}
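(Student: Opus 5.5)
It suffices to prove the first containment, $S(a_0,g_1)\subseteq S(a_{i-1},g_i)$; the second follows by applying the first to the reversed path $(a_k,\dots,a_0)$, which is also proper. I would prove it by induction on $i$, showing the one-step containment $S(a_{i-2},g_{i-1})\subseteq S(a_{i-1},g_i)$ for $2\le i\le k$. The base case $i=1$ is trivial.

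The key reformulation is geometric. For a skeleton edge $g=\{a,b\}$, let $R(a,g)$ be the set of nodes of $\cH_S$ on the side containing $a$ of the canonical cut $C$ used in Definition~\ref{def:associated-terminals}, so that $S(a,g)=\pi_S^{-1}(R(a,g))$. It then suffices to show that $R(a_{i-2},g_{i-1})\subseteq R(a_{i-1},g_i)$ as node sets, and I would do this by case analysis on the types of $g_{i-1}$ and $g_i$. Throughout, write $b=a_{i-2}$, $c=a_{i-1}$, $d=a_i$.

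\textbf{Case 1: both are tree-edges.} $R(b,g_{i-1})$ is the component of $\cH_S-g_{i-1}$ containing $b$. Since $c$ lies on the other side and $g_i\neq g_{i-1}$, that component stays connected to $c$ in $\cH_S-g_i$. Hence it lies in the component of $c$.

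\textbf{Case 2: $g_{i-1}$ is a tree-edge and $g_i$ is a cycle-edge on cycle $Z$.} $R(b,g_{i-1})$ is the side of the tree-edge away from $c$, and it hangs off $c$. The side $R(c,g_i)$ consists of $c$, its other cycle neighbor $c'$, and everything hanging off $c$ and $c'$ apart from the rest of the cycle. This side contains $c$ together with all subtrees attached at $c$ through non-cycle edges, in particular the part containing $b$.

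\textbf{Case 3: $g_{i-1}$ is a cycle-edge on $Z$ and $g_i$ is a tree-edge.} Here $R(b,g_{i-1})$ consists of $b$, $b'$ and the parts hanging off them. All of this lies in the component of $\cH_S-g_i$ containing $c$, because $g_i$ leaves the cycle at $c$ and everything else on $Z$ remains attached to $c$.

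\textbf{Case 4: both are cycle-edges.} Properness forbids them from sharing a cycle, and cycles are node-disjoint, so they cannot both contain $c$. Hence this case does not occur.

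The only place needing care is the cycle-edge cases. There I must check that the ``other neighbor'' $b'$ of $b$ is not $c$ and that the parts hanging off $b'$ stay on $c$'s side. This relies on the cycle having length at least $4$ (Theorem~\ref{thm:skeleton}) and on cycles being node-disjoint, so that the subgraph hanging off each cycle node is separated from the rest only through that node. I expect Case 3 to be the main bookkeeping obstacle, but it reduces to observing that removing the tree-edge $g_i$ does not disconnect $Z$ from $c$.

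Chaining the one-step containments over $i$ gives $S(a_0,g_1)\subseteq S(a_{i-1},g_i)$, and symmetry gives the second statement.
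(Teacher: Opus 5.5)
Your proof is correct and follows the paper's approach: the paper likewise proves the one-step containments $S(a_{j-1},g_j)\subseteq S(a_j,g_{j+1})$ (and the symmetric ones) by the same case split into tree/tree, tree/cycle and cycle/tree. It also rules out two consecutive cycle-edges by properness, and then chains the steps. Your only deviations are cosmetic: you argue via node sets of cut sides, and you obtain the second containment by reversing the path instead of stating the symmetric one-step containment directly.
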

\def\CONTAINMENTPROPERPATH{
\begin{proof}[Proof of~\Cref{lem:containment-of-associated-terminals}]
    We observe that for every $1 \leq j < k$, $S(a_{j-1}, g_j) \subseteq S(a_j, g_{j+1})$ and $S(a_j, g_j) \supseteq S(a_{j+1},g_{j+1})$.
    This is easy to see from~\Cref{def:associated-terminals}, by considering the three possible cases: (i) both $g_j,g_{j+1}$ are tree-edges, (ii) $g_j$ is a tree-edge and $g_{j+1}$ is a cycle-edge, or (iii) $g_j$ is a cycle-edge and $g_j$ is a tree-edge.
    These are the only possible cases because $P$ is proper.
    We thus get that $S(a_0, g_1) \subseteq S(a_1,g_2) \subseteq \cdots \subseteq S(a_{i-1},g_i)$, and $S(a_i,g_i) \supseteq S(a_{i+1}, g_{i+1}) \supseteq \cdots S(a_k, g_k)$.
\end{proof}
}

The main point of~\Cref{def:associated-terminals} is to ensure the correctness of the following~\Cref{lem:associated-terminals}.
The somewhat tedious proof is based on the skeleton properties of~\Cref{thm:skeleton}, the main one being the bijection between canonical cuts and valid partitions.

\begin{lemma}\label{lem:associated-terminals}
    Let $P$ be a proper path in $\cH_S$.
    The valid partitions of $S$ that keep the associated terminal sets of $P$ in different parts correspond precisely to the canonical cuts of $\cH_S$ that contain an edge of $P$.
\end{lemma}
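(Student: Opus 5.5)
Let $P=(a_0,\dots,a_k)$ with edges $g_i=\{a_{i-1},a_i\}$, and write $L:=S(a_0,g_1)$ and $R:=S(a_k,g_k)$ for the associated terminal sets of $P$. Using the bijection of \Cref{thm:skeleton}, I would prove the statement in its canonical-cut form: a canonical cut $C$ of $\cH_S$ contains an edge of $P$ if and only if its valid partition keeps $L$ and $R$ in different parts. Since a canonical cut contains an edge of the proper path $P$ exactly when it separates $a_0$ from $a_k$ (this is how proper paths are used in \Cref{thm:proj-vertex,thm:proj-edge}, and I would verify it quickly from the cactus structure), the claim becomes: $C$ separates $a_0,a_k$ iff $C$ separates the terminals of $L$ from those of $R$. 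First I would check that $L,R\neq\emptyset$. Every side of a canonical cut contains a terminal, since otherwise the partition would be trivial and the bijection would fail; the sides used in \Cref{def:associated-terminals} are sides of canonical cuts, so both sets are non-empty.

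\emph{($\Rightarrow$)} Suppose $C$ contains an edge $g_i$ of $P$. By \Cref{lem:containment-of-associated-terminals}, $L\subseteq S(a_{i-1},g_i)$ and $R\subseteq S(a_i,g_i)$. It therefore suffices to show the single-edge case: every canonical cut $C$ containing an edge $g=\{a,b\}$ puts $S(a,g)$ and $S(b,g)$ on opposite sides. If $g$ is a tree-edge, then $C=\{g\}$ and this is the definition. If $g$ is a cycle-edge, $C$ consists of $g$ together with another, non-adjacent cycle edge $h$. Then the side of $C$ containing $a$ is an arc of the cycle that starts at $a$ and goes away from $b$. Since $h$ is not adjacent to $g$, this arc contains $a'$, so it contains all nodes hanging off $a$ and $a'$, which is where the terminals of $S(a,g)$ are mapped. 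Symmetrically, the side of $b$ contains $b$ and $b'$ and everything hanging off them. So the two sets are separated.

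\emph{($\Leftarrow$)} Suppose $C$ contains no edge of $P$. Then $P$, and hence $a_0$ and $a_k$, lies entirely on one side of $C$. I would show that $L$ and $R$ each have a terminal on that same side, so the partition of $C$ does not split $L$ from $R$. The key point is that $L$ is the set of terminals of a component "hanging off" $a_0$ away from $P$: the part of $\cH_S$ beyond $g_1$ on $a_0$'s side (for a tree-edge), or the subtrees attached at $a_0$ and $a'_0$ outside the cycle (for a cycle-edge). I would then split into cases according to whether $C$ is one tree-edge inside that region, one tree-edge elsewhere, or a pair of edges of some cycle, possibly the cycle containing $g_1$.

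The main obstacle is this last direction, in particular when $C$ cuts inside the region of $L$. Then $C$ can split $L$ itself, but I need only that some terminal of $L$ lies on the side of $P$, i.e.\ that the partition does not put all of $L$ opposite to all of $R$. I would argue this through the node $a_0$ or $a'_0$: the remaining subtree contains, as it must, a terminal or a degree-$\ge 3$ branching that eventually reaches a terminal on $P$'s side, by the structural properties of \Cref{thm:skeleton}. One point needs care: if $C$ splits $L$ or $R$, then the valid partition does not "keep $L$ and $R$ in different parts", so such cuts are correctly excluded by the statement anyway. This observation actually simplifies the argument, since it reduces the work to showing that $C$ cannot place $L$ entirely on one side and $R$ entirely on the other while missing $P$, and that follows because $a_0$ and $a_k$ lie together on one side.
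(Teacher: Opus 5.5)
Your forward direction is correct and is the paper's argument: reduce via \Cref{lem:containment-of-associated-terminals} to a single edge $g_i$ of $P$, then read it off \Cref{def:associated-terminals}. The gap is in the converse.

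Your plan there is to show that the side $X$ of $C$ containing $P$ also contains a terminal of $L$ and one of $R$. That plan is sound. If completed, it would be more direct than the paper's route, which inducts on the length of $P$ by peeling off intermediate tree-edges and settles the base cases with the bijectivity of \Cref{thm:skeleton}. But the step that carries all the content is never proved.

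Your closing claim that it ``follows because $a_0$ and $a_k$ lie together on one side'' is a non sequitur. Knowing $a_0\in X$ says nothing about where the terminals of $L$ are. The node $a_0$ (and $a_0'$) may be empty, and nothing you wrote prevents $C$ from cutting off everything hanging off them. The ``simplification'' before that claim is only the contrapositive restated. Likewise, a ``degree-$\ge 3$ branching that eventually reaches a terminal'' does not put that terminal on $P$'s side of $C$.

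What is missing is the locality of $C$, combined with the structural properties in \Cref{thm:skeleton}. Call the component beyond a tree-edge incident to a node a \emph{branch} at that node. A branch is a side of a canonical cut, so by your non-emptiness remark it contains a terminal. Since $C$ is a single tree-edge or two edges of one cycle, and a cycle cannot cross a tree-edge (a bridge), $C$ meets at most one of any family of disjoint branches together with their attaching edges. You then need a case analysis at $a_0$:
\begin{itemize}
\item If $a_0$ carries a terminal, you are done.
\item If $a_0$ is an empty tree-node, it has at least two branches besides the one across $g_1$, and $C$ can meet only one of them.
\item If $a_0$ is a cycle node and $g_1$ is its tree-edge, there are two sub-cases. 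A $C$ inside that cycle leaves $a_0$ on an arc with at least one other cycle node, because its two edges are non-adjacent; that node's branch survives. Any other $C$ hits at most one of the $\ge 3$ branches at the remaining cycle nodes.
\item If $g_1$ is a cycle-edge, either $C$ meets the branch at $a_0$ or it does not. If it does, it avoids $\{a_0,a_0'\}$, so the branch at $a_0'$ survives. If it does not, the branch at $a_0$ survives.
\end{itemize}
In each case some branch inside $L$'s region remains attached to $a_0$ in $\cH_S - C$, which gives a terminal of $L$ in $X$. The symmetric argument at $a_k$ gives a terminal of $R$ in $X$. With this case analysis your argument goes through; without it, the converse direction is unproven.
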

\def\ASSOCIATEDTERMINALS{
\begin{proof}[Proof of~\Cref{lem:associated-terminals}]
    %
    Consider the first direction of the lemma: suppose $C$ is a canonical cut containing some $g_i$.
    Then by~\Cref{def:associated-terminals}, we see that $C$ must keep the associated terminal sets of $g_i$ on different sides.
    Hence, 
    by~\Cref{lem:containment-of-associated-terminals},
    $C$ keeps $S(a_0, g_1)$ and $S(a_k, g_k)$ on different sides, and so its corresponding valid partition keeps them on different parts, as required.

    Let us now consider the other direction: suppose $C$ is a canonical cut corresponding to valid partition $S_1,S_2$ such that $S(a_0, g_1) \subseteq S_1$ and $S(a_k, g_k) \subseteq S_2$, and we should show that $C$ contains some edge $g_i$.
    The proof is by complete induction on the length $k$ of the proper path $P$.
    
    The inductive step is when $P$ has some \emph{intermediate tree-edge} $g_i$ with $1 < i < k$.
    In this case $S(a_{i-1}, g_i), S(a_i, g_i)$ is the valid partition corresponding to the canonical cut $\{g_i\}$ (by~\Cref{def:associated-terminals}).
    If $C = \{g_i\}$ we are done.
    Otherwise, because $g_i$ is a tree-edge $C$ has all its edges from one side of the cut defined by $g_i$; let $A_{i-1}$ be the side of $a_{i-1}$, and $A_i$ be the side of $a_i$.
    \begin{itemize}
        \item If $C$ has both edges in $A_{i-1}$: 
        Then $C$ has one side contained in $A_{i-1}$ and the other containing $A_i$.
        Further, the side containing $A_i$ must be the side $S_2$; Indeed, otherwise $S_1 \supseteq S(a_i, g_i) \supseteq S(a_k, g_k)$ 
        (the last containment is 
        by~\Cref{lem:containment-of-associated-terminals})
        but this means that $\emptyset \neq S(a_k, g_k) \subseteq S_1 \cap S_2$, a contradiction.
        Therefore, we have that $S_2 \supseteq S(a_i, g_i)$.
        As $S_1 \supseteq S(a_0, g_0)$
        and $i < k$, we can apply the induction hypothesis on the subpath path $(a_0, a_1, \dots, a_i)$ and obtain that $C$ has an edge from it.

        \item If $C$ has both edges in $A_i$:
        This is symmetrical to the previous case, now applying the induction hypothesis on $(a_{i-1}, a_i, \dots, a_k)$ which is strictly shorter since $i > 1$.
    \end{itemize}

    It remains to deal with the base cases of the induction, when $P$ has no intermediate tree-edge.    
    These are somewhat tedious.

    First, suppose $P$ is one tree-edge $g_1 = \{a_0,a_1\}$, then $S(a_0,g_1), S(a_1,g_1)$ is the valid partition corresponding to the canonical cut $\{g_1\}$ (by~\Cref{def:associated-terminals}), hence $S_1,S_2$ must be equal to this partition, so by the bijectivety in~\Cref{thm:skeleton} we must have $C = \{g_1\}$.

    Next, suppose $P$ is two adjacent tree-edges.
    Denote by $x$ the intermediate node of $P$, which must be a tree-node by~\Cref{thm:skeleton}.
    Let $f_j = \{x, y_j\}$ for $1\leq i\leq r$ be the incident edges to $x$.
    Note that $S(y_j, f_j), S(x,f_j)$ is the valid partition corresponding to the canonical cut $\{f_j\}$ (by~\Cref{def:associated-terminals}), and that $\{S(y_j,f_j)\}_{1 \leq j \leq r}$ are mutually disjoint.
    Suppose that $P = (y_1, x, y_2)$, so $S(y_1,f_1) \subseteq S_1$, $S(y_2, f_2) \subseteq S_2$, and we should show that $C$ is $\{f_1\}$ or $\{f_2\}$.
    Note that $f_1$ and $f_2$ are the only edges $f_j$ that define cuts placing $S(y_1,f_1)$ and $S(y_2,f_2)$ on different sides, hence it suffices to show that $C$ must be an edge incident to $x$.
    Towards contradiction, suppose not: then $C$ has all its edges from the side of $y_j$ in the cut defined by $f_j$, for some $1 \leq j \leq r$.
    Therefore, one of $S_1,S_2$ is contained in $S(y_j,f_j)$.
    Further, this containment is strict: otherwise the partitions $S_1,S_2$ and $S(y_j,f_j),S(x,f_j)$ are identical, so the bijectivity in~\Cref{thm:skeleton} gives $C = \{f_j\}$ in contradiction.
    Hence, the other part among $S_1,S_2$ must intersect $S(y_j, f_j)$, and it also contains $S(y_\ell, f_\ell)$ for each $\ell \neq j$,
    but this is also contradiction as $S(y_1,f_1) \subseteq S_1$ and $S(y_2,f_2) \subseteq S_2$.
        
    Finally, it remains to treat the base cases with $P$ having a cycle-edge;
    say this edge is $g= \{x_1,x_2\}$ on a cycle $\sigma = (x_0, x_1, \dots, x_r = x_0)$.
    By~\Cref{thm:skeleton}, $r \geq 4$, and each node $x_j$ on $\sigma$ has a unique incident tree-edge $f_j = \{x_j, y_j\}$.
    Note that $S(y_j,f_j),S(x_j,f_j)$ is the valid partition corresponding to the canonical cut $\{f_j\}$ (by~\Cref{def:associated-terminals}), and that $\{S(y_j,f_j)\}_{1\leq j <r}$ are mutually disjoint.
    As $P$ is a proper path without an intermediate tree-edge, exactly one of the following cases holds:
    \begin{itemize}
        \item[1)] $P = (x_1,x_2)$, and then $S(y_1, f_1) \cup S(y_0, f_0) \subseteq S_1$ and $S(y_2, f_2) \cup S(y_3, f_3) \subseteq S_2$.
        \item[2a)] $P = (y_1,x_1,x_2)$, and then $S(y_1,f_1) \subseteq S_1$ and $S(y_2,f_2) \cup S(y_3,f_3) \subseteq S_2$.
        \item[2b)] $P = (x_1,x_2,y_2)$, and then $S(y_1,f_1) \cup S(y_0,f_0) \subseteq S_1$ and $S(y_2,f_2) \subseteq S_2$.
        \item[3)] $P = (y_1,x_1,y_2,x_2)$, and then $S(y_1,f_1) \subseteq S_1$ and $S(y_2,f_2) \subseteq S_2$.
    \end{itemize}
    The implications in each case are due to~\Cref{def:associated-terminals}.
    Note that in any case, $S(y_1,f_1) \subseteq S_1$ and $S(y_2,f_2) \subseteq S_2$.
    Thus, by repeating the argument from the case where $P$ was two tree-edges, we can eliminate the possibility that $C$ has all its edges in the side of $y_j$ in the cut defined by $\{f_j\}$ for some $1 \leq j < r$.
    Hence, $C$ can either consist of two edges from $\sigma$, or $C = \{f_j\}$ for some $j$.
    But as $C$ keeps $S(y_1,f_1)$ and $S(y_2,f_2)$ on different sides, the possible options are (i) $C = \{f_1\}$, (ii) $C = \{f_2\}$, or (iii) $C$ contains $g$.
    We can eliminate some options according to the different cases:
    \begin{itemize}
        \item[1)] Options (i) and (ii) are eliminated as they keep $S(y_0,f_0)$ and $S(y_3,f_3)$ on the same side.
        \item[2a)] Option (ii) is eliminated as it keeps $S(y_1,f_1)$ and $S(y_3,f_3)$ on the same side.
        \item[2b)] Option (i) is eliminated as it keeps $S(y_0,f_0)$ and $S(y_2,f_2)$ on the same side.
        \item[3)] No option is eliminated.
    \end{itemize}
    So, in any case, the remaining options for $C$ correspond to canonical cuts having an edge from $P$.
\end{proof}
} 

\paragraph{Translation Lemma.}

Our next tool concerns the process of query translation to the anchors, overviewed in~\Cref{sec:general-case-overview}.
Specifically, it is used to prove that the projections of the anchors $z_x$ and $z_y$ form a prefix and suffix of the projection of $e = \{x,y\}$ in \Cref{cor:projection-of-edge-skeleton-translation}.
(Here, $x',y'$ play the roles of $z_x,z_y$.)

\begin{lemma}[Translation]\label{lem:skeleton-translation}
    Let $e = \{x,y\} \in E$ and $x',y' \in V$ such that:
    \begin{enumerate}[label=(P\arabic*)]
        \item For every valid partition $S_1,S_2$ of $S$, there exists some $S_1,S_2$-mincut keeping $x$ with $S_1$ and $y$ with $S_2$ if and only if there exist some $S_1,S_2$-mincut keeping $x'$ with $S_1$ and $y'$ with $S_2$. \label{prop:P1}

        \item $\phi_S(x') \neq \phi_S (y')$. Note that given (P1), this is equivalent to $\phi_S (x) \neq \phi_S (y)$.
        \label{prop:P2}
    \end{enumerate}
    Then $\pi_S (e)$ is defined, and has $\pi_S (x')$ as a prefix and $\pi_S (y')$ as a suffix, where $\pi_S (e)$ in the direction from the $x$-endpoint to the $y$-endpoint.
\end{lemma}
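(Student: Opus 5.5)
The plan is to compare, for every valid partition, the \emph{oriented} canonical cuts that can separate $x$ from $y$ with those that can separate $x'$ from $y'$, and to read off $\pi_S(e)$ from each description. For a valid partition $S_1,S_2$, write $R(x,y)$ for the set of ordered valid partitions $(S_1,S_2)$ such that some $S_1,S_2$-mincut keeps $x$ with $S_1$ and $y$ with $S_2$, and define $R(x',y')$ analogously. Property~\ref{prop:P1} says exactly that $R(x,y)=R(x',y')$. Property~\ref{prop:P2} gives $\phi_S(x)\neq\phi_S(y)$, so $\pi_S(e)$ is defined by~\Cref{thm:proj-edge}.

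\textbf{Step 1: describe $R(x,y)$ via $\pi_S(e)$.} I claim that $(S_1,S_2)\in R(x,y)$ iff the corresponding canonical cut $C$ contains an edge of $\pi_S(e)$ and keeps the $x$-endpoint on the $S_1$ side.
\begin{itemize}
\item ($\Leftarrow$) If $C$ has an edge of $\pi_S(e)$, then $C$ distinguishes $e$, so some $S_1,S_2$-mincut has $e$ contributing. By unidirectionality (\Cref{thm:proj-edge}) with the stated orientation, that mincut keeps $x$ with $S_1$ and $y$ with $S_2$.
\item ($\Rightarrow$) A mincut keeping $x$ with $S_1$ and $y$ with $S_2$ has $e$ contributing, so $C$ distinguishes $e$ and hence contains an edge of $\pi_S(e)$. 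If the $x$-endpoint were on the $S_2$ side, unidirectionality applied to the reversed partition would put $x$ with $S_2$, a contradiction.
\end{itemize}
So $R(x,y)$ determines both the edge set of $\pi_S(e)$ and its orientation.

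\textbf{Step 2: describe which cuts cut $\pi_S(x')$ or $\pi_S(y')$, in terms of $R(x',y')$.} The goal is to show that every canonical cut $C$ containing an edge of $\pi_S(x')$ yields a partition lying in $R(x',y')$ in some orientation. Moreover, the orientation should put $S_1$ on the side of the endpoint of $\pi_S(x')$ that lies away from $\pi_S(y')$; the statement for $\pi_S(y')$ is symmetric.

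To get this, I will use the lattice of $S_1,S_2$-mincuts, namely $\Near(S_1,S_2)\subseteq W\subseteq\Far(S_1,S_2)$ for every such mincut $W$. Since $C$ distinguishes $x'$, we have $x'\in\Far\setminus\Near$. There are three cases for $y'$.
\begin{itemize}
\item If $y'\notin\Far$, then $\Far$ witnesses $x'$ with $S_1$ and $y'$ with $S_2$.
\item If $y'\in\Near$, then $\Near$ witnesses the reversed orientation.
\item The delicate case is $y'\in\Far\setminus\Near$, where $C$ also distinguishes $y'$. Here I would use the third bullet of~\Cref{thm:proj-vertex}, together with the proper-path structure of $\pi_S(x')$ and $\pi_S(y')$, to exhibit a mincut strictly between $\Near$ and $\Far$ that separates them. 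If needed, I would also use~\ref{prop:P2}: the two vertices are not $S$-equivalent, so some valid partition separates them, and submodularity (\Cref{lem:sub-posi-general}) lets me intersect or union that mincut with $\Near$ or $\Far$.
\end{itemize}
The third bullet of~\Cref{thm:proj-vertex} also settles the orientation: a cut keeping all of $\pi_S(y')$ on one side forces $y'$ to that side in every mincut.

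\textbf{Step 3: conclude.} Combining Steps~1 and~2 with $R(x,y)=R(x',y')$, every edge of $\pi_S(x')$ and of $\pi_S(y')$ lies on $\pi_S(e)$, with consistent orientation. Conversely, a canonical cut that contains no edge of $\pi_S(x')$ or $\pi_S(y')$ keeps each projection entirely on one side. By~\Cref{thm:proj-vertex} the positions of $x'$ and $y'$ are then forced, so membership in $R$ depends only on which sides these projections lie on. It follows that the edges of $\pi_S(e)$ are exactly the edges of the proper path running from the far end of $\pi_S(x')$ to the far end of $\pi_S(y')$. Uniqueness of proper paths between two nodes then makes $\pi_S(x')$ a prefix and $\pi_S(y')$ a suffix, oriented from the $x$-endpoint.

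I expect the main obstacle to be the case in Step~2 where one canonical cut distinguishes both $x'$ and $y'$: one must produce a single mincut placing them on prescribed sides, and must also rule out that $\pi_S(x')$ and $\pi_S(y')$ overlap with incompatible orientations. My first attempt there would be to check the cactus cases (tree-edge versus cycle pair) directly, using the third bullet of~\Cref{thm:proj-vertex} together with submodularity.
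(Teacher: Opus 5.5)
Your architecture is the paper's: Step~1 is its use of unidirectionality, Step~2 is its case~(i) (every valid partition distinguishing $x'$ also distinguishes $e$), and Step~3 corresponds to its case~(ii) and final prefix argument.

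\textbf{The gap in Step~2.} The gap is the case you flag yourself, $y'\in\Far(S_1,S_2)\setminus\Near(S_1,S_2)$. This is exactly the distinctness property (\Cref{thm:distinctness-property-of-flesh}), which the paper does not prove but imports from \cite{BaswanaP25,DBLP:journals/siamcomp/DinitzV00}. Your primary tool cannot settle it. In this case $C$ also distinguishes $y'$, so $C$ cuts both $\pi_S(x')$ and $\pi_S(y')$, and the third bullet of \Cref{thm:proj-vertex} says nothing.

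Your fallback idea does work, but it needs two submodularity steps, not one. A single union or intersection with $\Near$ or $\Far$ only yields an $S$-mincut of some other valid partition. Here is how to close it:
\begin{itemize}
\item Write $N=\Near(S_1,S_2)$ and $F=\Far(S_1,S_2)$. Take an $S$-mincut $W$ separating $x',y'$, which exists by~\ref{prop:P2}.
\item Replace $W$ by $\overline{W}$ if necessary so that $W\cap S_1\neq\emptyset$ and $S_2\not\subseteq W$. One of $W,\overline W$ always qualifies, since $W$ is an $S$-cut and $S_1,S_2\neq\emptyset$.
\item Then $W\cap N$ and $W\cup N$ are both $S$-cuts, so $W\cup N$ is an $S$-mincut by \Cref{lem:sub-posi-general}.
\item Next, $(W\cup N)\cap F$ and $(W\cup N)\cup F$ are both $S$-cuts; the latter misses $S_2\setminus W$. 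So $M:=(W\cup N)\cap F$ is an $S$-mincut with $M\cap S=S_1$, i.e., an $S_1,S_2$-mincut.
\item Whichever of $x',y'$ lies in $W$ also lies in $M$, because both are in $F$. The other lies outside $W\cup N\supseteq M$, because both are outside $N$.
\end{itemize}
With this filled in, your proof is actually more self-contained than the paper's.

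\textbf{Orientation in Steps~2 and~3.} Drop the orientation claims you attach to cuts through both projections (``the endpoint away from $\pi_S(y')$'', ``far end'', ``consistent orientation''). When the projections overlap, and in particular when $\pi_S(x')=\pi_S(y')$, these are undefined and cannot be read off the projections. That is precisely why the paper later needs the numbers of \Cref{lem:sigma-numbers} to identify the $x$-endpoint. Your Step~3 sketch also does not visibly exclude $\pi_S(y')$ sitting strictly inside $\pi_S(x')$.

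The paper's final step handles both issues uniformly. Once containment is known, suppose $\pi_S(x')$ were not a prefix from the $x$-endpoint. Take a canonical cut keeping the $x$-endpoint on one side and the $y$-endpoint together with all of $\pi_S(x')$ on the other. Step~1 and~\ref{prop:P1} then give a mincut of that partition with $x'$ on the $x$-endpoint's side. This contradicts the third bullet of \Cref{thm:proj-vertex}. The same argument with the roles of $x$ and $y$ swapped gives the suffix.
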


\def\SKELETONTRANSLATION{
\begin{proof}[Proof of~\Cref{lem:skeleton-translation}]

    For the proof, we need the following result; it is usually stated in terms of the flesh (see~\cite[Theorem 5]{BaswanaP25} and~\cite[Lemma 5.2]{DBLP:journals/siamcomp/DinitzV00}), but the formulation here is equivalent.
    \begin{lemma}\label{thm:distinctness-property-of-flesh}
        Let $u,v \in V$ such that $\phi_S (u) \neq \phi_S (v)$.
        If $S_1,S_2$ is a valid partition of $S$ that distinguishes $u$ or $v$ (or both), then there exists an $S_1,S_2$-minuct where $u$ and $v$ lie on different sides.
    \end{lemma}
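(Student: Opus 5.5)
We would argue directly from the definitions, uncrossing the relevant cuts with~\Cref{lem:sub-posi-general}. By symmetry, say $S_1,S_2$ distinguishes $u$. Then there are $S_1,S_2$-mincuts $A$ and $B$, both oriented so that $S_1 \subseteq A, B$, with $u \in A$ and $u \notin B$. If $v \notin A$, then $A$ already separates $u,v$. If $v \in B$, then $B$ does. So the only remaining case is $u,v \in A \setminus B$.

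By submodularity on $A$ and $B$ (both $S_1,S_2$-mincuts), the cuts $A \cap B$ and $A \cup B$ are again $S_1,S_2$-mincuts. We may therefore assume $B \subseteq A$, and both $u$ and $v$ lie in the ``strip'' $D := A \setminus B$. Since $\phi_S(u) \neq \phi_S(v)$, there is an $S$-mincut $C$ separating $u$ from $v$; say $u \in C$ and $v \notin C$. Let $T_1 = C \cap S$ and $T_2 = \overline{C} \cap S$ be its valid partition.

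The goal is to produce an $S_1,S_2$-mincut of the form $B \cup (A \cap C)$ or $B \cup (A \setminus C)$, possibly after complementing $C$. Either of these lies between $B$ and $A$, contains $S_1$ and avoids $S_2$, and separates $u$ from $v$. To show it is a mincut, we would apply~\Cref{lem:sub-posi-general} twice: first to $A$ and $C$, to get that $A \cap C$ (or $A \setminus C$) is an $S$-mincut, and then to the result together with $B$, to take the union. The hypothesis of the lemma must be checked at each step, namely that the relevant intersection, union or difference is an $S$-cut, respectively an $S_1,S_2$-cut.

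This is where we would split into cases according to how $T_1,T_2$ meets $S_1,S_2$, that is, which of the four quadrants $S_i \cap T_j$ are nonempty. Replacing $C$ by $\overline{C}$ swaps $u$ and $v$ but keeps the separation, so we may fix a convenient orientation.
\begin{itemize}
\item If $S_1 \cap T_1 \neq \emptyset$ and $S_2 \cap T_2 \neq \emptyset$: then $A \cap C$ is an $S_1,S_2$-cut and $A \cup C$ is an $S$-cut. Submodularity gives that $A \cap C$ is an $S_1,S_2$-mincut. Uncrossing it with $B$ via submodularity once more gives that $B \cup (A \cap C)$ is an $S_1,S_2$-mincut. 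This cut contains $u$ and not $v$.
\item The crossing configurations, where all four quadrants are nonempty, are handled in the same way.
\item If some quadrant is empty, so that $T_1 \subseteq S_1$ or a similar containment holds, we would instead use posimodularity: $A \setminus C$ or $C \setminus A$ plays the role above.
\end{itemize}

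The main obstacle is the degenerate case where $C$'s partition coincides with $S_1,S_2$ or nests badly, so that neither $A \cap C$ nor $A \setminus C$ is an $S$-cut; for instance, $C \cap S \subseteq B \cap S$. The first thing to try there is to replace $C$ by $C \cap A$ or by $C \cup B$ before uncrossing, using that $u,v \in D$, so that these operations keep $u$ and $v$ separated. If that fails, we would fall back on the skeleton bijection of~\Cref{thm:skeleton} to rule out the bad configuration.
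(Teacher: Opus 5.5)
Your uncrossing strategy is the right one, and it closes with a single two-way split. As written, though, the proof is unfinished: the first bullet is misjustified, the remaining bullets are only asserted, and you end on an ``obstacle'' with only speculative fallbacks.

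\textbf{The first bullet.} $A\cap C$ is not an $S_1,S_2$-cut in general; that would need $S_1\subseteq C$, which fails in the crossing case. What holds is weaker but sufficient. Suppose $C\cap S_1\neq\emptyset$ and $S_2\not\subseteq C$.
\begin{itemize}
\item $A\cap C$ meets $S_1\cap C$ and misses $S_2$, so it is an $S$-cut.
\item $A\cup C$ contains $S_1$ and misses $S_2\setminus C\neq\emptyset$, so it is an $S$-cut.
\item Since $A$ and $C$ are both $S$-mincuts, submodularity makes $A\cap C$ an $S$-mincut.
\item $B\cap(A\cap C)$ meets $S_1\cap C$ and misses $S_2$, so it is an $S$-cut, while $B\cup(A\cap C)$ is an $(S_1,S_2)$-cut.
\item Uncrossing with $B$ then shows that $B\cup(A\cap C)$ is an $(S_1,S_2)$-mincut. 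It contains $u$, and it misses $v$ because $v\notin B\cup C$.
\end{itemize}

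\textbf{The missing observation.} This condition always holds for $C$ or for $\overline{C}$.
\begin{itemize}
\item If $C\cap S_1=\emptyset$, then $\overline{C}\cap S_1=S_1\neq\emptyset$, and $C\cap S_2=C\cap S\neq\emptyset$, so $S_2\not\subseteq\overline{C}$.
\item If $S_2\subseteq C$, then $S_2\not\subseteq\overline{C}$, and $\overline{C}\cap S_1=\overline{C}\cap S\neq\emptyset$.
\end{itemize}
With $\overline{C}$ you get $B\cup(A\setminus C)$. It contains $v$ because $v\in A\setminus C$, and it misses $u$ because $u\in C\setminus B$.

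So the quadrant case split, posimodularity, and the skeleton fallback are all unnecessary. The degenerate configuration you worry about does not occur: $A\cap C$ and $A\setminus C$ can both fail to be $S$-cuts only if $S_1=\emptyset$. Your example $C\cap S\subseteq B\cap S=S_1$ falls directly under the first bullet.

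\textbf{Comparison with the paper.} The paper does not prove this lemma. It imports it from the flesh theory (Dinitz--Vainshtein, Lemma~5.2; Baswana--Pandey, Theorem~5) and asserts that its reformulation is equivalent. Your route, completed as above, is a self-contained two-step submodularity argument that never uses the flesh or the skeleton. That fits the paper's stated aim of bypassing the flesh. The citation buys only brevity and a link to strips and units in the carcass literature.
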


    We now give the proof.
    Because (P1) and (P2) imply that $\phi_S (x) \neq \phi_S (y)$, $\pi_S (e)$ is indeed defined.
    We prove that $\pi_S (x')$ is a prefix of $\pi_S (e)$ from the $x$-endpoint); the proof for $\pi_S (y')$ being a suffix is symmetric.
    
    We start by showing that $\pi_S (x')$ is \emph{contained} in $\pi_S (e)$, in two cases:

    \begin{itemize}
        \item[(i)] If $x'$ is stretched:
        Then $\pi_S (x')$  is a proper path with at least one edge.
        By the properties of the skeleton in~\Cref{thm:skeleton} and the projection mapping in~\Cref{thm:proj-vertex,thm:proj-edge}, to show that $\pi_S (x') \subseteq \pi_S (e)$, it is enough to prove the following statement: If a valid partition $S_1,S_2$ of $S$ distinguishes $x'$, it also distinguishes $e$.
        To see this, recall that $\phi_S (x') \neq \phi_S (x')$ by P2, so~\Cref{thm:distinctness-property-of-flesh} guarantees an $S_1,S_2$-mincut separating $x'$ from $y'$.
        By (P1), this implies an $S_1,S_2$-mincut where $e$ is contributing. 

        \item[(ii)]
        If $x'$ is non-stretched: Then $\pi_S (x')$  is a tree-node, hence incident only to tree-edges;
        seeking contradiction, suppose none of these edges is from $\pi_S (e)$.
        Note that if $y'$ is stretched then $\pi_S (y') \subseteq \pi_S (e)$ by case (i), and if $y'$ is non-stretched then the node $\pi_S (y')$ is different than $\pi_S (x')$ by (P2) (as otherwise, $x',y'$ would be $S$-equivalent).
        In any case, this means we can find a canonical cut $C$ consisting of a tree-edge $g$ incident to $\pi_S (x')$, that keeps $\pi_S (x')$ and $\pi_S (y')$ on different sides.
        Let $S_1,S_2$ be the valid partition corresponding to $C$, where $S_1$ are the terminals mapped to the side of $\pi_S (x')$.
        Then by (the last item in)~\Cref{thm:proj-vertex}, any $S_1,S_2$-mincut separates $x'$ from $y'$.
        But by (P1), this means that $S_1,S_2$ distinguishes $e$, so the tree-edge $g$ must in fact be in $\pi_S (e)$ by~\Cref{thm:proj-edge} --- contradiction.
    \end{itemize}

    Finally, we prove that $\pi_S (x')$ is a prefix of $\pi_S (e)$ from the side of the $x$-endpoint.
    Seeking contradiction, suppose otherwise.
    Then as $\pi_S (e)$ contains $\pi_S (x')$, we can find a canonical cut $C$ of $\cH_S$ that keeps the $x$-endpoint of $\pi_S (e)$ on one side, but keeps the $y$-endpoint along with $\pi_S (x')$ on the other side.
    Let $S_1$ and $S_2$ be the terminal sets mapped to the sides of the $x$-endpoint and of the $y$-endpoint, respectively.
    Then by~\Cref{thm:proj-edge} there exists some $S_1,S_2$-mincut keeping $x$ with $S_1$ and $y$ with $S_2$.
    By (P1), this means that there is some $S_1,S_2$-mincut keeping $x'$ with $S_1$ (and $y'$ with $S_2$).
    But $\pi_S (x')$ lies on the side of $C$ containing $S_2$, so this is a contradiction to the last item in~\Cref{thm:proj-vertex}.
\end{proof}
} 

\paragraph{Skeleton Tools.}
We now present some tools pertaining to the skeleton $\cH_S$.
The first is an algorithm for computing the union of (marked sides) of canonical cuts, which will be useful for a generalization of the representatives framework (\Cref{thm:blackbox}) that allows us to achieve the (worst-case optimal) $O(n)$ query time in~\Cref{thm:main}.
\begin{lemma}\label{lem:union-of-cuts-in-skeleton}
    Let $C_1, \dots, C_\ell$ be canonical cuts in $\cH_S$,
    where each $C_j$ has a marked side specified markings on the endpoints of its edges that belong to the marked side.
    Then one can compute the union of all marked sides within $O(|S| + \ell)$ time.
\end{lemma}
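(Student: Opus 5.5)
The idea is to turn the union of marked sides into a marking problem on the cactus $\cH_S$: each marked side is a connected component of $\cH_S$ minus the edges of $C_j$. We mark each such component by a constant number of ``seeds'' and then run a single linear-time propagation over $\cH_S$. The output is the set of terminals mapped to nodes that end up covered; since $\cH_S$ has $O(|S|)$ nodes and edges, propagation plus reading off terminals costs $O(|S|)$. Producing the seeds costs $O(\ell)$.

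Root $\cH_S$ at an arbitrary node $r$ and compute, in $O(|S|)$ time, a DFS structure in which every cycle has a unique top node (the one closest to $r$). Then each tree-edge $g=\{a,b\}$ with $b$ the child separates a subtree $T_b$ from the rest. A cut consisting of two edges of one cycle $\sigma$, with top node $t$, splits $\sigma$ into an arc containing $t$ and an arc avoiding $t$. Together with everything hanging below them, the arc avoiding $t$ is a union of consecutive ``cycle-child subtrees'' $T_{x_i},\dots,T_{x_j}$. In either case the marked side is either a \emph{downward} set or the complement of a downward set. We can tell which from the marked endpoint in $O(1)$ using the stored parent/cycle-order information.

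Downward marked sides are handled by difference markers. For a tree-edge we put a $+1$ on $b$. For a cycle arc $x_i,\dots,x_j$ (in cyclic order, not containing $t$) we put $+1$ at position $i$ and $-1$ after position $j$ in the cycle's order. A prefix sum along each cycle's order then determines which cycle children start a covered subtree, and a top-down pass pushes coverage into those subtrees.

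Upward marked sides (complements of downward sets) need more care, because a union of complements is a complement of an intersection. I would treat them separately: the union of all upward sides equals $V(\cH_S)$ minus the intersection of the corresponding downward sets. The intersection of downward sets is computed in the same way. Each downward set is a subtree or a contiguous block of sibling subtrees within a cycle, so the sets are either nested or disjoint in this tree-like order, and their common intersection is therefore either empty or equal to the deepest one among them. That deepest set can be identified by comparing DFS pre/post intervals in $O(1)$ per cut, i.e.\ $O(\ell)$ total. The final marked node set is the union of the downward-union and the complement of the upward-intersection, computed in one $O(|S|)$ traversal, after which we collect the terminals mapped by $\pi_S$ to marked nodes.

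The main obstacle is cycle cuts whose two edges both avoid the top edge region, and the corresponding wrap-around arcs. I must make sure each marked side is correctly classified as a contiguous block in the DFS order, so that both the prefix-sum propagation and the nested-or-disjoint claim remain valid. I would first verify this classification using the structural facts of \Cref{thm:skeleton}: cycles are node-disjoint, and every cycle node has degree exactly $3$, hence exactly one hanging subtree. I would then fall back, if needed, to splitting any problematic arc into $O(1)$ downward pieces.
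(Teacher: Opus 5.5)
Your downward-side machinery (difference markers along each cycle, then a top-down pass) is fine. The upward-side step, however, fails as written.

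\textbf{Where it breaks.} The claim that the downward sets are ``either nested or disjoint'' is false, because two-edge canonical cuts of the same cycle can cross. Take a cycle $(t,x_1,\dots,x_5)$ with top $t$, and the cuts $\{\{t,x_1\},\{x_3,x_4\}\}$ and $\{\{x_1,x_2\},\{x_4,x_5\}\}$. Both edge pairs are non-adjacent, so both cuts are canonical. Mark the side containing $t$ in each. Their downward complements are $D_1=T_{x_1}\cup T_{x_2}\cup T_{x_3}$ and $D_2=T_{x_2}\cup T_{x_3}\cup T_{x_4}$, which overlap without either containing the other. The true union of the marked sides is the complement of $D_1\cap D_2=T_{x_2}\cup T_{x_3}$. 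Your rule (``the intersection is empty or the deepest one'') instead returns the complement of $D_1$ or of $D_2$; they have equal depth. It therefore misses $T_{x_1}$ or $T_{x_4}$. Laminarity does hold among tree-edge cuts, and between a tree-edge cut and a cycle cut, which is why the argument looks right. It fails exactly for two cuts on the same cycle, and encoding such crossings is what cycles are for. Your fallback of splitting arcs into $O(1)$ downward pieces does not help either. An upward side is the complement of a union of pieces, i.e.\ an \emph{intersection} of upward pieces, so it cannot be replaced by those pieces inside a union.

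\textbf{How to repair it.} Choose the DFS so that it visits the non-top nodes of each cycle in cyclic order, finishing each node's hanging subtree before moving to the next cycle node. Then every downward set (tree-edge subtree or arc block) is a contiguous interval $[l_j,r_j]$ of the preorder. Intervals are closed under intersection, so
\[
\bigcap_j D_j=\Bigl[\max_j l_j,\ \min_j r_j\Bigr]
\]
is computable in $O(\ell)$ time, and its complement can be marked in $O(|S|)$ time. No laminarity is needed, and this gives $O(|S|+\ell)$ overall.

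\textbf{How the paper avoids the issue.} The paper removes the crossing up front. For each cycle it computes the union of its marked cyclic intervals using counting sort. It then replaces the cycle by a star and replaces those cuts by single-edge cuts $\{a_\sigma,a_i\}$, one for each covered cycle node $a_i$. After this the instance is a tree, where the nested-or-disjoint structure you rely on genuinely holds. The paper then grows marked sides one at a time by BFS with deletions. It declares the union to be everything as soon as a BFS crosses some other cut's edge from its marked to its unmarked endpoint.
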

\def\UNIONOFCUTSINSKELETON{
\begin{proof}[Proof of~\Cref{lem:union-of-cuts-in-skeleton}]
    We first consider the simpler case where $\cH_S$ is a tree.
    Then, each $C_i$ consists of one tree-edge with a marked endpoint.
    If two cuts $C_i, C_j$ are the same edge with different marked endpoints, then the union is all of $\cH_\nu$ and we are done, so suppose the cut edges are distinct.
    Let $C_1$ be the edge $\{a_1,b_1\}$ with marked endpoint $a_1$.
    We identify the marked side of $C_1$ by a BFS from $a_1$ that ignores the edge to $b_1$.
    Suppose that during this BFS we scan an edge $\{a_i, b_i\}$ from some cut $C_i$, in the direction from the marked endpoint $a_i$ and into the unmarked endpoint $b_i$.
    This means that the marked side of $C_1$ contains all of the unmarked side of $C_i$, and vice versa.
    Thus, the union is all of $\cH_S$ and we are done.
    If this does not happen, we add the marked side of $C_1$ to the union, delete it from the tree $\cH_\nu$ and continue the process on the remaining subtree with the surviving cuts among $C_2, \dots, C_\ell$.
    The deletions ensure that the total time we spend to compute the union is only $O(|S|)$.
    (Note that, when $\cH_S$ is a tree there can be only $O(|S|)$ canonical cuts, so $\ell = O(|S|)$.)

    Now we consider the general case where $\cH_S$ is a cactus that can have cycles.
    Let us first focus on some cycle $\sigma = (a_0, a_1, \dots, a_{k-1}, a_k = a_0)$ of length $k$.
    Suppose that among the given cuts $C_1, \dots, C_\ell$, there are $r$ cuts having two cycle-edges from $\sigma$.
    We identify each such cut with a marked interval on the cycle $\sigma$: this is the interval between the marked endpoints of the two cycle-edges that define the cut, which leaves their unmarked endpoints out.
    Within $O(k+r)$ time, we can create a list $M_\sigma$ of all nodes of $\sigma$ that lie in some marked interval; we defer the argument to the end of the proof.
    Next, we add to $\cH_S$ a dummy node $a_\sigma$, and replace the cycle $\sigma$ by a star of edges between $a_\sigma$ and each node $a_i \in \sigma$.
    Finally, we replace the $r$ cuts in $\sigma$ by $|M_\sigma|$ one-edge cuts, consisting of the edges $\{a_\sigma, a_i\}$ with $a_i \in M_\sigma$ with marked endpoint $a_i$.
    The union of marked sides of these new cuts is the same as for the original $r$ cuts in the cycle $\sigma$.
    Hence, executing this transformation maintains the same overall union of marked sides (ignoring the new dummy node $a_\sigma$).
    Because the cycles are edge-disjoint, and edge of the given $\ell$ cuts corresponds comes from at most one cycle, the total time to eliminate them in this manner is $O(|S| + \ell)$.
    At this point, we are left with a tree with $O(|S|)$ nodes, and at most $O(|S|)$ single tree-edge cuts marked sides, so we can compute the union in $O(|S|)$ time as explained before.
    Thus, the total running time is $O(|S| + \ell)$.
    
    Finally, let us address the procedure for computing $M_\sigma$, namely, uniting $r$ intervals in a length-$k$ cycle on $\{0, 1 \dots, k-1\}$.
    By splitting each interval that wraps over the $0$ into two, we get $\leq 2r$ intervals of the form $[a,b]$, $0 \leq a \leq b \leq k-1$, that have the same union.
    Sort these intervals by non-decreasing order of their starting points;
    this can be done in $O(k+r)$ with Counting Sort.
    Now, in one pass over the sorted list, we merge overlapping intervals to obtain a new list of disjoint intervals with the same union.
    This takes $O(k)$ time.
    We can then explicitly list all of the integers in this union within $O(r)$ time.
\end{proof}
}

Next, we present some data structures for the skeleton.
These are not required for our main~\Cref{thm:main}, but needed for the data structures of~\Cref{thm:faster-queries} that support faster queries.

\begin{lemma}\label{lem:skeleton-data-structure}
    There is an $O(|S|)$ space data structure for the skeleton $\cH_S$, supporting the following queries:
    \begin{itemize}
        \item \textbf{(Cut side)}
        Given a canonical cut $C$ and a node $a$, return which endpoints of the edges of $C$ are on the side of $C$ that contains $a$.
        The query takes $O(1)$ time.

        \item \textbf{(Order on proper path)} 
        Given $k$ nodes $a_1, \dots, a_k$ which are promised to all lie on a proper path, return the order of appearance of $a_1, \dots, a_k$ on the unique minimal proper path containing them (in an arbitrary direction).
        The query takes $O(k \log k)$ time.

        \item \textbf{(Terminals in intersection)}
        Given $k$ canonical cuts $C_1, \dots, C_k$, where each $C_i$ has a distinguished marked side specified by marking which endpoints of the edges in $C_i$ belong to it, return the subset $S'$ of the terminals $S$ containing every $x \in S$ which is mapped to the marked side of every cut from $C_1, \dots, C_k$.
        The query takes $O(|S'| + k^2)$ time.
    \end{itemize}
\end{lemma}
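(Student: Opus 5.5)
The plan is to reduce everything to a rooted tree: compress each cycle of $\cH_S$ into a star around a dummy center, exactly as in the proof of \Cref{lem:union-of-cuts-in-skeleton}. Concretely, we replace each cycle $\sigma=(a_0,\dots,a_{k-1})$ by a star centered at a new node $c_\sigma$, keep for every cycle node its cyclic index, and root the resulting tree $\cH'_S$ (still of size $O(|S|)$) at an arbitrary node. We precompute DFS entry/exit times (Euler tour), depths, and a standard $O(|S|)$-space $O(1)$-time LCA structure. For each cycle we also store its length and the orientation of its indices. Every terminal $x$ is stored in an array ordered by the DFS entry time of $\pi_S(x)$, so that terminals mapped into any subtree form a contiguous interval.

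\textbf{Cut side.} For a tree-edge $\{a,b\}$ with $b$ the child, node $c$ is on $b$'s side iff $c$ lies in the subtree of $b$; this is an $O(1)$ interval test on DFS times. For a cycle cut given by two edges of $\sigma$, the two sides are determined by two arcs of cycle indices. For a node $c$ we find the cycle node through which $c$ attaches: if $c$ is not in the subtree of $c_\sigma$, it attaches through the parent of $c_\sigma$. Otherwise we need the child of $c_\sigma$ that is an ancestor of $c$, which is a level-ancestor query; this can be done in $O(1)$ with a ladder/level-ancestor structure of linear size. We then test in $O(1)$ whether that index lies in the marked arc.

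\textbf{Order on proper path.} All $a_i$ lie on one proper path, which maps to a path in $\cH'_S$ (cycle detours become length-2 hops through centers). Computing pairwise LCAs and depths identifies the two extreme nodes: take $u$ as the deepest node, and $v$ as the one maximizing $\mathrm{dist}(u,\cdot)$. We then order the nodes by distance from $u$ and sort in $O(k\log k)$. The one subtlety is cycle nodes, where a proper path uses exactly one cycle edge between adjacent indices; those nodes are still at tree distance 2 apart through the center, so their relative order is consistent.

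\textbf{Terminals in intersection.} Each marked side is either a subtree or a complement of a subtree in $\cH'_S$. For cycle cuts, a side is the union of subtrees hanging off a contiguous arc of cycle nodes; because the children of $c_\sigma$ are ordered by cycle index, this union is contiguous in DFS order, or is a complement of such a range. Each side is therefore one DFS interval or the complement of one, that is, at most two intervals on the terminal array. The intersection of $k$ such sets consists of at most $O(k)$ intervals. We compute it by sorting the $O(k)$ endpoints and sweeping, at cost $O(k\log k)\subseteq O(k^2)$, or by pairwise intersection in $O(k^2)$. We then list the terminals in the resulting intervals in $O(|S'|)$ time, skipping empty intervals.

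The main obstacle is making cycle sides contiguous in DFS order. When the root lies outside the cycle, the arc containing the attachment to the parent gives a complement of a range, and this must be handled carefully including wraparound. I would fix the ordering of $c_\sigma$'s children starting right after the parent's attachment index to guarantee this.
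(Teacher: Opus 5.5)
Your proof is correct. For the cut-side query it matches the paper: replace each cycle by a star, then use a subtree test and a level-ancestor query to find the cycle node through which $a$ attaches.

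For the order query the difference is mild. The paper folds LCAs to find the topmost node of the path, splits the $a_i$'s into the two branches below it with level-ancestor queries, and sorts each branch by depth. You instead note that the deepest $a_i$ must be an endpoint and sort by tree distance from it. This needs only the $k$ LCAs with that one node; computing all pairwise LCAs, as your wording suggests, would cost $O(k^2)$.

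The genuine difference is in the intersection query. The paper's route:
\begin{itemize}
\item With $O(k^2)$ cut-side queries, it finds which endpoints of the at most $2k$ cut edges $F$ lie on every marked side.
\item It runs a multi-source BFS from those endpoints in $\cH_S - F$.
\item It bounds the number of visited nodes by $O(|S'|+k)$ via a counting argument. That argument relies on the skeleton properties of \Cref{thm:skeleton}: empty nodes have degree at least $3$, and cycles consist of degree-3 empty nodes and have length at least $4$.
\end{itemize}

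Your route: order the children of each cycle center cyclically, starting right after the parent's attachment index. Then the terminal set of every side of a canonical cut is an interval or the complement of an interval in the DFS-ordered terminal array. The query reduces to interval arithmetic, giving one interval minus at most $k$ holes, hence at most $k+1$ pieces. This never touches terminal-free nodes and runs in $O(|S'|+k\log k)$. It also uses no degree property of the skeleton, so it works for an arbitrary cactus and terminal mapping.

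One detail you leave implicit is converting a DFS range of nodes into a range of array indices in $O(1)$ time. One way is to precompute, for each node, the number of terminals mapped to nodes of smaller entry time, which takes $O(|S|)$ space. In exchange for its heavier counting argument, the paper's BFS also recovers the set of skeleton nodes in the intersection, which the statement does not require.
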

\def\SKELETONDATASTRUCTURE{
\begin{proof}[Proof of~\Cref{lem:skeleton-data-structure}]
    Let $\mathcal{T}_S$ be the tree obtained from $\cH_S$ by replacing cycles with stars as follows:
    for each cycle $\sigma = (a_0, a_1, \dots, a_k = a_0)$ in $\cH_S$, add a new auxiliary node $a_\sigma$ with edges into every node $a_i$ in $\sigma$, and delete the original edges of $\sigma$.
    The auxiliary node $a_\sigma$ also stores the cyclic order of its neighbors according to $\sigma$, so the skeleton $\cH_S$ can be completely recovered from $\mathcal{T}_S$.
    Note that $\mathcal{T}_S$ also has $O(|S|)$ nodes and edges.
    We root $\mathcal{T}_S$ arbitrarily, and let each node in $\mathcal{T}_S$ store its depth (i.e., its distance from the root).
    We store $\cH_S$ and $\mathcal{T}_S$, along with an data structure for $\mathcal{T}_S$ that has linear size in the number of nodes in the tree, and answers LCA or level-ancestor queries\footnote{A level-ancestor query gets node $a$ and non-negative integer $d$, and returns the ancestor of $a$ that has depth $d$ (or a null value if $a$ has depth smaller than $d$.)} in constant time~\cite{HT84,BenderF00,BenderF04}.
    So, the total size of the data structure is $O(|S|)$.
    We now describe how the different queries are answered.

    (Cut side) There are two cases:
    \begin{itemize}
        \item $C$ consists of one tree-edge of $\cH_S$:
        Denote this edge by $g = \{z_0,z_1\}$, where $z_1$ is deeper than $z_0$ in $\mathcal{T}_S$.
        We make an LCA query between $a$ and $z_1$ in $\mathcal{T}_S$: if the LCA is $z_1$ than we return $z_1$ as the endpoint on the side of $a$, and otherwise we return $z_0$.

        \item $C$ consists of two cycle-edges from the same cycle of $\cH_S$:
        Let $\sigma = (z_0, z_1, \dots, z_k = z_0)$ be the cycle, denote the edges of $C$ by $g_1 = \{z_i, z_{i+1}\}$ and $g_2 = \{z_j, z_{j+1}\}$, and suppose without loss of generality that $0 \leq i < j < k$.
        Now, consider the path in $\mathcal{T}_S$ from $a$ to $a_\sigma$.
        The penultimate node of this path is some $z_\ell$ on $\sigma$.
        If $i+1 \leq \ell \leq j$, then we should report $z_{i+1}$ and $z_j$ as the endpoints on the side of $a$.
        Otherwise, we should report $z_i$ and $z_{j+1}$.
        To find $z_\ell$, we first make an LCA query between $a_\sigma$ and $a$.
        If the LCA is not $a_\sigma$, then $z_\ell$ must be the parent of $a_\sigma$ in $\mathcal{T}_S$.
        Otherwise, $z_\ell$ must be the ancestor of $a$ whose depth is smaller by 1 than the depth of $a_\sigma$, so we find it with a level-ancestor query.
    \end{itemize}

    (Order on proper path) 
    Let $z_1 = a_1$, and $z_i = \LCA_{\mathcal{T}_S} (z_{i-1}, a_i)$ for $i = 2, \dots, k$, and denote $z = z_k$.
    Thus, $z$ is found within $k$ ancestry queries.
    Note that the unique minimal proper path which contains $a_1,\dots,a_k$ corresponds to a simple path in $\mathcal{T}_S$ which preserves the same ordering, goes up the tree until it hits $z$, and then either stops or continues down the tree.
    Thus, we just need to classify which $a_i \neq z$ belongs to the ``upward'' part, and which to the ``downward'' part.
    At this point, we sort the $a_i$'s in those two parts according to increasing or decreasing order of depth, respectively.
    Finally, we concatenate the two sorted lists (possibly adding $z$ in the middle, in case it is one of the $a_i$'s) and return this list as the ordering.
    For the classification process, we use $k$ level-ancestor queries.
    For each $a_i \neq z$, we find the ancestor of $a_i$ whose depth is smaller by 1 than the depth of $z$;
    there can be at most two children of $z$ returned by these queries, and we can classify $a_i$ according to the child found in its query.

    (Terminals in intersection)
    The query algorithm is simple.
    Let $F$ be the set of all edges contained in at least one of the cuts $C_1, \dots, C_k$, so $|F| \leq 2k$.
    For each endpoint $a$ of an edge from $F$, we check if $a$ lies on the marked side of every $C_i$ using cut side queries; let $I$ be the set of all such $a$'s.
    Finally, we run multi-origin BFS from $I$ in $\cH_S-F$, and let $M$ be the set of nodes reached by this BFS.
    Namely, $M$ are those nodes which are connected to some node in $I$ by a path that does not have any edge from $F$.
    Observe that $M$ are precisely those nodes that lie on the marked sides of every cut in $C_1, \dots, C_k$.
    Thus, $S'$ consists of the terminals mapped into $M$, so we report these.

    Analyzing the running time of the BFS is a bit more involved.
    Note that it takes $O(|M|+k)$ time (where the $O(k)$ term is added since edges from $F$ might still be encountered even though they are ``ignored'').
    So, we show that $|M| = O(|S'| + k)$, which yields the desired $O(|S'| + k)$ running time.

    Let $H_1, \dots, H_r$ be the connected components of $\cH_S - F$ scanned by the BFS, so $M = V(H_1) \cup \dots \cup V(H_r)$.
    Let $S_i$ denote the set of terminals mapped into $H_i$, and let $f_i = \sum_{a \in V(H_i)} \deg_F (a)$.
    The crux of the argument is showing that $|V(H_i)| \leq 3(|S_i| + f_i)$, which we will shortly do.
    Given that this holds, we sum over $i$ and get that
    \[
    |M| = 3\sum_i |S_i| + 3\sum_i f_i = 3|S'| + 3\sum_i \sum_{a \in V(H_i)} \deg_F (a) \leq 3|S'| + 3\cdot 2|F| \leq 3|S'| + 12k    \]

    We now prove that $|V(H_i)| \leq 3(|S_i| + f_i)$.
    Let us color the nodes of $H_i$ black or white, where the black nodes are those that have mapped terminals.
    Let $H'_i$ be obtained from $H_i$ by adding $f_i$ new edges, going into $f_i$ new black nodes, such that each $a \in V(H_i)$ gets $\deg_F (a)$ new edges.
    Thus, for every node $a \in V(H_i)$, we have $\deg_{H'_i} (a) = \deg_{\cH_S - F} (a) + \deg_F (a) = \deg_{\cH_S} (a)$.
    Therefore, by the properties of the skeleton $\cH_S$ (see~\Cref{thm:skeleton}):
    \begin{itemize}
        \item the white nodes in $H'_i$ all have degree at least $3$,
        \item the cycles in $H'_i$ contain only white nodes of degree exactly 3, and
        \item each such cycle has length at least 4, and does not share nodes with any other cycle.
    \end{itemize}
    Let $X_i$ be the nodes found in cycles of $H'_i$, let $Y_i$ be its remaining white nodes outside the cycles, and let $Z_i$ be its black nodes.
    Consider the tree $T_i$ obtained by contracting each cycle $\sigma$ of $H'_i$ into a (white) super-node $a_\sigma$.
    Observe that $\deg_{T_i} (a_\sigma)$ is the number of edges that leave the cycle $\sigma$ in $H'_i$.
    This number is equal to the length of the cycle $\sigma$, because each node in $\sigma$ cycle has two edges in $\sigma$ and one additional edge going out of $\sigma$.
    Therefore, the sum of degrees of super-nodes in $T_i$ is precisely $|X_i|$.
    Nodes in $Y_i$ and $Z_i$ have the same degree in $T_i$ as in $H'_i$, so the sum of degrees in $Y_i$ is at least $3|Y_i|$, and in $Z_i$ it is at least $|Z_i|$.
    We get that
    \[
    |X_i| + 3|Y_i| + |Z_i| \leq 2|E(T_i)| \leq 2|V(T_i)| \leq 2(|X_i|/4 + |Y_i| + |Z_i|) = |X_i|/2 + 2|Y_i| + 2|Z_i|.
    \]
    (where the last inequality is because there can be at most $|X_i|/4$ super-nodes in $T_i$, since each cycle in $H'_i$ has at least 4 nodes).
    Rearranging the above inequality yields that $|X_i| + 2|Y_i| \leq 2|Z_i|$.
    Using this, we finally get that
    \[
    |V(H_i)| \leq |V(H'_i)| = |X_i| + |Y_i| + |Z_i| \leq 3|Z_i| \leq 3(|S_i| + f_i),
    \]
    where the last inequality is because every node in $Z_i$ except the $f_i$ new nodes has a mapped terminal. 
\end{proof}
}


\begin{lemma}\label{lem:proper-path-information-data-structure}
    Suppose each proper path $P$ in $\cH_S$ has some $O(1)$ information $I(P)$ associated with it.
    Then, there an $O(|S| \log |S|)$ space data structure that supports the following queries:
    Given the two endpoints of a proper path $P$, return a partition of $P$ into $\ell = O(\log |S|)$ subpaths $P_1, \dots, P_\ell$ (where each subpath is specified by its endpoints), along with the information $I(P_1), \dots ,I(P_\ell)$. The query takes $O(\log |S|)$ time.
\end{lemma}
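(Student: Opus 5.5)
We reduce to a tree problem by reusing the star-replacement tree $\mathcal{T}_S$ from the proof of \Cref{lem:skeleton-data-structure}. Each cycle $\sigma$ of $\cH_S$ is replaced by an auxiliary node $a_\sigma$ adjacent to all nodes of $\sigma$. The tree $\mathcal{T}_S$ has $O(|S|)$ nodes, and we root it arbitrarily. A proper path $P$ in $\cH_S$ between $a$ and $b$ corresponds to the simple tree path between $a$ and $b$ in $\mathcal{T}_S$. Within this tree path, every occurrence of a two-edge segment $(z_i, a_\sigma, z_j)$ stands for a single cycle edge $\{z_i,z_j\}$ of $\sigma$; properness forces $z_i,z_j$ to be adjacent on $\sigma$. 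So it suffices to decompose tree paths into $O(\log |S|)$ canonical pieces that correspond to proper subpaths of $\cH_S$, and to precompute $I(\cdot)$ for each canonical piece.

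We use a heavy-path decomposition of $\mathcal{T}_S$. Any root-to-node path meets $O(\log|S|)$ heavy paths. A general path from $a$ to $b$ splits at $\LCA(a,b)$ into two vertical paths, so it is the concatenation of $O(\log|S|)$ contiguous segments of heavy paths. On each heavy path $h$ of length $k_h$ we build a segment-tree-like family of dyadic intervals, in $O(k_h\log k_h)$ total. A sparse-table variant would also work: store, for each node $v$ on $h$ and each $j$, the piece from $v$ going $2^j$ steps up $h$. That gives $O(|S|\log|S|)$ pieces overall, each storing its endpoints and its $O(1)$ information, which matches the space bound.

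At query time we compute $\LCA(a,b)$ in $O(1)$ time. We then walk up from $a$ and from $b$ heavy path by heavy path. On each heavy path, a contiguous segment is covered by $O(1)$ stored dyadic pieces in the sparse-table version, where two overlapping pieces are replaced by a greedy prefix. We connect consecutive heavy paths by the single light edge. To make the total $O(\log|S|)$ rather than $O(\log^2|S|)$, we store for every node the full-prefix piece from it to the top of its heavy path. Then only the final, partial heavy-path segment near the LCA needs dyadic decomposition, into $O(1)$ or $O(\log|S|)$ pieces. Light edges contribute single-edge pieces whose information is stored with the child. The result is $\ell=O(\log|S|)$ pieces computed in $O(\log|S|)$ time.

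The main obstacle is the auxiliary nodes $a_\sigma$. A piece boundary may fall at $a_\sigma$, which is not a node of $\cH_S$. Also, a tree-path segment through $a_\sigma$ must be turned back into the cycle edge $\{z_i,z_j\}$. We handle this by never cutting at an auxiliary node: whenever a boundary lands on $a_\sigma$, we shift it by one step so that the cycle edge $\{z_i,z_j\}$ is emitted as its own piece. The two path edges at $a_\sigma$ are parent$\to a_\sigma$ and $a_\sigma\to$child, or child$\to a_\sigma\to$child at the LCA. We store $I$ of each single cycle edge with the cycle edge itself, using $O(|S|)$ space. This adds at most $O(1)$ pieces per boundary, so $\ell=O(\log|S|)$ still holds. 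Each stored piece is chosen to begin and end at non-auxiliary nodes, so it is a genuine proper path in $\cH_S$ whose information $I(P_i)$ was precomputed.
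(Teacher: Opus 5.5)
Your proof is correct in substance, but it takes a different route from the paper's. The paper uses no heavy-path decomposition; it works by binary lifting:
\begin{itemize}
\item It stores $I(Q)$ for every single cycle-edge, and for every proper path $Q$ from a node $z_0$ to its ancestor in $\mathcal{T}_S$ at distance exactly $2^j$ or $2^j+1$.
\item For an ancestor--descendant query, it repeatedly jumps, via a level-ancestor query, to the ancestor at distance $2^j$ for the largest $j$ fitting the remaining distance. When that ancestor is an auxiliary node $a_\sigma$, it steps one level further to the parent, which gives the stored distance-$(2^j+1)$ piece. Each jump at least halves the remaining distance.
\item A general query is split at the LCA. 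If the LCA is some $a_\sigma$, the two halves are joined by the stored cycle edge between the LCA's two children on the path.
\end{itemize}
This is shorter and handles auxiliary nodes by one uniform rule.

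Your route needs more bookkeeping at auxiliary nodes: heavy-path tops, light edges, the LCA, and dyadic boundaries. In exchange it gives something stronger. The per-node full-prefix pieces already prevent an $O(\log^2|S|)$ piece count, so the final partial segment can be handled with the $O(k_h)$ aligned segment-tree intervals of each heavy path. That is $O(k_h)$ per heavy path, not the $O(k_h\log k_h)$ you state. The result is $O(|S|)$ total space with the same $O(\log|S|)$ pieces and query time. This beats the lemma's bound, though it is immaterial for the paper, where this term is dominated by the $O(n^{1.5})$ ancestry structure.

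Two minor points need fixing.
\begin{itemize}
\item In the sparse-table variant, two overlapping pieces do not form a partition, so a segment is not covered by $O(1)$ stored pieces. You need a greedy decomposition into $O(\log|S|)$ disjoint dyadic pieces. Your overall count already allows this.
\item Beginning and ending at original nodes does not make a stored piece a proper path of $\cH_S$. A heavy-path piece may pass through $a_\sigma$ between two non-adjacent nodes of $\sigma$, and then $I$ of it is undefined. This is harmless: store a null value there. Every piece actually used to decompose a proper query path is a subpath of it, and hence proper.
\end{itemize}
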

\def\PROPERPATHINF{
\begin{proof}[Proof of~\Cref{lem:proper-path-information-data-structure}]
    First, we store the same $O(|S|)$ data structure as in~\Cref{lem:skeleton-data-structure}.
    In particular, we store the tree $\mathcal{T}_S$ (where each cycle $\cH_S$ is transformed into stars with an auxiliary center node), and the LCA and level-ancestor data structures for $\mathcal{T}_S$ which support these queries in $O(1)$ time.
    
    Next, consider any proper path $Q$ between the endpoints $z_0$ and $z_1$ in $\cH_S$.
    We explicitly store $I(Q)$ only if one of the following conditions is met:
    \begin{enumerate}[label=(\roman*)]
        \item $Q$ consists of a single cycle-edge of $\cH_S$.
        \item $z_0$ is a descendant of $z_1$ in $\mathcal{T}_S$, and their distance in $\mathcal{T}_S$ is either $2^j$ or $2^j+1$ for some integer $j \geq 0$.
    \end{enumerate}
    Thus, each $z_0$ can be an endpoint of at most $O(\log |S|)$ distinct proper paths $Q$ for which we store $I(Q)$: indeed, case (i) can only occur twice because $z_0$ can only be a part of one cycle, and case (ii) can only occur $O(\log |S|)$ times because $\cT_S$ is of $O(|S|)$ height.    
    Therefore, in total, we only store $O(|S| \log |S|)$ information of $I(\cdot)$ for different proper paths.

    We now show how a query is answered.
    Let $a$ and $b$ be the endpoints of the proper path $P$ given as input.
    Let us first handle the case when $a$ and $b$ are related by ancestry in $\cT_S$, say $b$ is an ancestor of $a$.
    Afterwards, we will reduce the general case to this case.
    The algorithm is as follows.
    Initialize $i = 1$ and $a_1 = a$.
    While $a_i \neq b$, do the following:
        \begin{itemize}
            \item Let $j$ be the largest integer such that the distance between $a_i$ and $b$ is at least $2^j$.
            \item By a level-ancestor query, find the ancestor $z_i$ of $a_i$ within distance exactly $2^{j_i}$ from $a_i$.
            \item If $z_i$ is an original node from $\cH_S$, then define $a_{i+1} := z_i$.
            
            \item Otherwise, $z_i$ is an auxiliary node of $\cT_S$ corresponding to a cycle in $\cH_S$, and its parent $z'_i$ in $\cT_S$ is an original node from that cycle.
            In this case, defined $a_{i+1} = z'_i$.

            \item Report the subpath $P_i$ of $P$ as the proper path between $a_i$ and $a_{i+1}$, along with its corresponding information $I(P_i)$.
            Note that the latter is explicitly stored by the data structure, as the distance between $a_i$ and $a_{i+1}$ in $\cT_S$ is either $2^j$ or $2^j+1$.

            \item Increment $i \gets i+1$.
        \end{itemize}
    Thus, the $i$-th iteration of the while loop ensures that the distance between $a_{i+1}$ and $b$ in $\cT_S$ is at most half of the distance between $a_i$ and $b$.
    As the starting distance between $a = a_1$ and $b$ is at most $O(|S|)$, there can be only $O(\log |S|)$ iterations.
    As each iteration takes $O(1)$ time, we also get the desired query time.

    Finally, consider the case where $a$ and $b$ are not related by ancestry in $\cT_S$.
    Let $c$ be the LCA of $a,b$, which we find by one LCA query.
    If $c$ is an original node from $\cH_S$, then $c$ appears on the proper path $P$ between $a$ and $b$.
    Thus, we just apply the query algorithm from before $a,c$ and for $b,c$, and concatenate the results.
    The remaining case is when $c$ is an auxiliary node corresponding to a cycle $\sigma$ in $\cH_S$.
    This means that the proper path $P$ goes through exactly one edge of $\sigma$.
    Furthermore, this edge must connect $c_a$ and $c_b$, which are, respectively, the children of $c$ that are ancestors of $a$ and of $b$ in $\mathcal{T}_S$.
    So, we find $c_a$ and $c_b$ by two level-ancestor queries,
    apply the algorithm for before on $a,c_a$ and on $b,c_b$, and concatenate the results along with $I(\{c_a,c_b\})$ which we've explicitly stored.
\end{proof}
}

\paragraph{Numbering for Projection Directionality.}
The following~\Cref{lem:sigma-numbers} is a technical lemma that assigns numbers to vertices, helping us deal with certain ``directionality properties'' of projections in certain corner cases.
We use it in~\Cref{sec:splitting-mu}.
Specifically, our use-case is (roughly) as follows: if edge $e = \{x,y\}$ has its projection $\pi_S (e)$ equal to both $\pi_S (x)$ and $\pi_S (y)$, we can use the numbers assigned to $x$ and $y$ in order to determine the $x$-endpoint and the $y$-endpoint of $\pi_S (e)$.
This lemma is not new: it has appeared before in~\cite{DBLP:conf/soda/BaswanaP22,DBLP:conf/icalp/Bhanja25}, but was stated using carcass terminologies which we bypass in this paper.
Thus, we give an explicit statement and provide a proof that assumes only the carcass preliminaries of~\Cref{sec:carcass-prelim}.

\begin{lemma}\label{lem:sigma-numbers}
    There exists a numbering $\sigma_S : V \to [n]$ such that the following hold.
    Let $P$ be a proper path in $\cH_S$ whose endpoints are $a$ and $b$, where node $a$ has smaller identifier than node $b$ in $\cH_S$.
    Let $z_1, z_2 \in V$ such that $\phi_S (z_1) \neq \phi_S (z_2)$, but they both have $P$ as their projection, i.e., $\pi_S (z_1) = \pi_S (z_2) = P$.
    Then:
    \begin{itemize}
        \item It holds that $\sigma_S (z_1) \neq \sigma_S (z_2)$.
        \item If $\sigma_S (z_1) < \sigma_S (z_2)$, then there exists some $S$-mincut $W$ with $z_1 \in W$, $z_2 \in \overline{W}$ and a canonical cut of $\cH_S$ that keeps node $a$ with $S \cap W$ on one side, and node $b$ with $S \setminus W$ on the other side.
    \end{itemize}
\end{lemma}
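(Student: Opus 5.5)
Fix a stretched proper path $P$ with endpoints $a,b$, where $a$ has the smaller identifier, and let $V_P=\{v\in V:\pi_S(v)=P\}$. The plan is to define a linear preorder on $V_P$ that is compatible with $S$-equivalence, and then set $\sigma_S$ to be a global numbering consistent with it. Orient $P$ from $a$ to $b$. Consider the canonical cuts that contain an edge of $P$; by \Cref{thm:proj-vertex} these are exactly the cuts distinguishing any $v\in V_P$. For each such cut $C$, call the side containing $a$ the \emph{$a$-side}, and let $S_1^C$ be the terminals mapped to it and $S_2^C$ the rest. Since the number of $S_1^C,S_2^C$-mincuts might be large, we use only the farthest one: let $F_C=\Far(S_1^C,S_2^C)$, the union of all $(S_1^C,S_2^C)$-mincuts. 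For $v\in V_P$ define the set $D(v)=\{C : v\in F_C\}$.

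The key structural claim is that for $z_1,z_2\in V_P$, if $z_1\in W$ and $z_2\notin W$ for some $(S_1^C,S_2^C)$-mincut $W$, then $z_2$ is never on the $S_1$-side ``before'' $z_1$. I would derive this by applying submodularity (\Cref{lem:sub-posi-general}) to mincuts for two different canonical cuts on $P$. Concretely, the goal is to show that the family $\{F_C\}$, restricted to $V_P$ and ordered along $P$, is nested, giving a chain. The alternative is to use a single fixed cut: take $C_0$ to be the canonical cut containing the first edge of $P$ (at $a$), and order $V_P$ by inclusion among the $(S_1^{C_0},S_2^{C_0})$-mincuts. Because these mincuts form a lattice closed under $\cap,\cup$, the map $v\mapsto$ the nearest $(S_1^{C_0},S_2^{C_0})$-mincut containing $v$ gives a linear preorder on $V_P$, namely the order in which vertices ``enter'' along a maximal chain of mincuts. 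I would go with this single-cut version.

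Then I define $\sigma_S$. Process the paths $P$ in turn and assign consecutive numbers to $V_P$ in the chain order: a vertex that enters the chain earlier (so it is closer to $S_1^{C_0}$) gets a smaller number. Ties between $S$-equivalent vertices are broken arbitrarily. Non-stretched vertices receive arbitrary remaining numbers. This gives an injection into $[n]$, so the first bullet holds trivially.

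For the second bullet, suppose $\sigma_S(z_1)<\sigma_S(z_2)$ and $\phi_S(z_1)\ne\phi_S(z_2)$. Then some mincut in the chain contains $z_1$ but not $z_2$. Here the main obstacle appears: we need the chain to be strict at that point. If $z_1,z_2$ enter at the same mincut of the chain for $C_0$, they are not yet separated. I would invoke \Cref{thm:distinctness-property-of-flesh} (from the translation proof): since $C_0$ distinguishes both vertices and $\phi_S(z_1)\ne\phi_S(z_2)$, some $(S_1^{C_0},S_2^{C_0})$-mincut separates them. The difficulty is that this mincut might place $z_2$ inside and $z_1$ outside. To settle the direction, I would use $W'=\Near(z_1\cup S_1^{C_0},S_2^{C_0})$ and define the order via inclusion of these nearest cuts, so that $z_1\preceq z_2$ means $W'(z_1)\subseteq W'(z_2)$. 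Submodularity shows these cuts are totally ordered. Indeed, if $W'(z_1)$ and $W'(z_2)$ were incomparable, then $W'(z_1)\cap W'(z_2)$ would be a mincut of the same partition; by minimality it would contain both vertices, contradicting incomparability. The separating mincut then yields strictness. Finally, $W=W'(z_1)$ is an $S$-mincut with $z_1\in W$ and $z_2\notin W$. Its valid partition corresponds to $C_0$, which keeps $a$ with $S\cap W=S_1^{C_0}$ on one side and $b$ with $S\setminus W$ on the other, as required.
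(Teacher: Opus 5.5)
\paragraph{Gap: the nearest cuts are not totally ordered.} You claim that the cuts $W'(z)=\Near(\{z\}\cup S_1^{C_0},S_2^{C_0})$, $z\in V_P$, are totally ordered by inclusion. Earlier you make the same claim in another form, that $v\mapsto$ the nearest mincut containing $v$ gives a linear preorder. This is false, and your justification is circular. It is true that $W'(z_1)\cap W'(z_2)$ is an $(S_1^{C_0},S_2^{C_0})$-mincut. But it contains $z_1$ only if $z_1\in W'(z_2)$, which is exactly the comparability you are trying to prove, so minimality of $W'(z_1)$ cannot be invoked.

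A concrete counterexample: take $S=\{A,B\}$ and the $4$-cycle $A$--$u$--$B$--$v$--$A$.
\begin{itemize}
\item Here $\lambda_S=2$, and $\{A\},\{A,u\},\{A,v\},\{A,u,v\}$ are all $S$-mincuts.
\item The skeleton is a single edge, so $u$ and $v$ are stretched, have the same projection, and are not $S$-equivalent.
\item Yet $W'(u)=\{A,u\}$ and $W'(v)=\{A,v\}$ are incomparable.
\end{itemize}
So ``assign consecutive numbers to $V_P$ in the chain order'' is not well defined as written. Entry times along a \emph{maximal chain} of mincuts, which you also mention, are a different object and do give a linear preorder, but that is not what your final construction uses.

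\paragraph{Repair, and comparison with the paper.} You never actually need a total order. Number $V_P$ by any linear extension of strict inclusion of the $W'(z)$, for instance by nondecreasing $|W'(z)|$. The argument then goes as follows.
\begin{itemize}
\item Each $W'(z)$ is an $S$-mincut, because $C_0$ contains an edge of $\pi_S(z)$ and hence distinguishes $z$.
\item Suppose $\phi_S(z_1)\neq\phi_S(z_2)$ and $\sigma_S(z_1)<\sigma_S(z_2)$. The distinctness property of the flesh gives an $(S_1^{C_0},S_2^{C_0})$-mincut separating them, hence $W'(z_1)\neq W'(z_2)$.
\item If $z_2\in W'(z_1)$, minimality would give $W'(z_2)\subsetneq W'(z_1)$, so $|W'(z_2)|<|W'(z_1)|$, contradicting the numbering.
\item Hence $W=W'(z_1)$ works: $S\cap W=S_1^{C_0}$, and $C_0$ keeps $a$ on that side and $b$ on the other.
\end{itemize}

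With this repair your route is valid, but it differs from the paper's. The paper fixes no canonical cut and does not use the distinctness property. It builds a digraph $H$ on $V$ where $(z,z')$ is a non-edge iff either the projections differ, or some $S$-mincut $W$ has $z'\in W$, $z\notin W$, with a canonical cut keeping $a$ with $S\cap W$ and $b$ with $S\setminus W$. It then takes $\sigma_S$ from a topological order of the SCC condensation. The second bullet is immediate from this definition. The first follows by walking along a $z_1\to z_2$ path in $H$ until it crosses a separating $S$-mincut, whose canonical cut must split $a$ from $b$ by \Cref{thm:proj-vertex}. Your version instead relies on the deeper flesh-distinctness fact so that a single valid partition separates all inequivalent pairs in $V_P$. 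In exchange it gives an explicit numbering, where every witness $W$ comes from one fixed partition.
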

\def\SIGMANUMBERS{
\begin{proof}[Proof of~\Cref{lem:sigma-numbers}]
    Define a directed graph $H$ on the vertices $V$ whose edges are defined as follows: 
    edge $(z,z')$ is \textbf{not} in $E(H)$ iff (at least) one of the following conditions holds:
    \begin{itemize}
        \item[(i)] $\pi_S (z) \neq \pi_S (z')$.
        \item[(ii)] There exists an $S$-mincut $W$ and a canonical cut $C$ of $\cH_\mu$ such that $z' \in W, z \in \overline{W}$ and $C$ keeps the smaller-identifier endpoint of $\pi_S (z)$ with $S \cap W$ on one side, and the larger-identifier endpoint of $\pi_S (z)$ with $S \setminus W$ on the other.
    \end{itemize}

Contract each SCC (strongly connected component) of $H$, find a topological ordering of the resulting DAG, and define $\sigma_S (z)$ as the place of the SCC containing $z$ in the ordering.


Given this definition of the numbering $\sigma_S$, let us prove its properties as stated in the lemma.
We first prove the second item, which is more straightforward.
Suppose $\sigma_S (z_1) < \sigma_S (z_2)$.
Then, the edge $(z_2,z_1)$ cannot exist in $H$, as it would violate the topological ordering.
So, by the definition of $H$, one of the conditions (i) and (ii) must hold, but it cannot be (i) because $\pi_S (z_1) = \pi_S (z_2)$.
Thus, condition (ii) holds (where $z = z_2$ and $z' = z_1$), but this condition is precisely what we need to prove.

We now prove the more involved first item.
Because $\phi_S (z_1) \neq \phi_S (z_2)$, there exists an $S$-mincut $W$ which separates $z_1$ from $z_2$.
Let $C$ be the canonical cut of $\cH_S$ corresponding to the valid partition $S \cap W, S \setminus W$.
As $P = \pi_S (z_1) = \pi_S (z_2)$, it is impossible that $P$ lies completely on one side of $C$; Indeed, if that were to happen, then $z_1,z_2$ would be in the same side of every $S \cap W, S \setminus W$-mincut by~\Cref{thm:proj-vertex}, but $W$ is such a cut keeping them on different sides.
Hence, $C$ keeps the endpoints $a$ and $b$ of $P$ on different sides.
We may assume that $a$ is with $S \cap W$ and $b$ is with $S \setminus W$ (otherwise, replace $W$ by $\overline{W}$).
Without loss of generality, say $z_2 \in W$ and $z_1 \in \overline{W}$ (the case $z_1 \in W$ and $z_2 \in \overline{W}$ is symmetric).

Now, suppose towards contradiction that $\sigma_S (z_1) = \sigma_S (z_2)$, meaning $z_1, z_2$ are in the same SCC.
Take some path $(z_1 = v_0, v_1, \dots, v_k = z_2)$ from $z_1$ to $z_2$ in $H$.
Note that for each edge $(v_{i-1},v_i)$ on this path, condition (i) does not hold, meaning $\pi_S (v_{i-1}) = \pi_S (v_i)$.
Thus, we obtain that that $\pi_S (v_i) = P$ for every $v_i$ on the path.
Now, consider the largest $0 \leq r \leq k-1$ with $v_r \in \overline{W}$ (which is well-defined since $v_0 = z_1 \in \overline{W}$ and $v_k = z_2 \in W$).
Then $v_{r+1} \in W$.
But now, $W$ and $C$ imply that condition (ii) holds for $(v_i, v_{i+1})$, contradicting the fact that this edge belongs to $H$.
\end{proof}
}

\section{The Representatives Framework}\label{sec:rep-framework}

In this section, we give the details of the representative framework, which was overviewed in~\Cref{sec:rep-framework-overview}.
The framework associates $O(1)$ \emph{representative information} with any unordered pair $P,Q$ of non-empty vertex subsets, in such a manner that guarantees the following (which is generalized version of~\Cref{lem:rep-special-case}):

\begin{lemma}\label{thm:blackbox}
    There is an $O(n)$ space data structure with the following property:
    Given $\{\RepInf(P_i,Q_i)\}_{i=1}^\ell$ for $\ell$ pairs of non-empty vertex subsets $P_i, Q_i \subseteq V$, and given a node $\mu$ of $\cD$ such that $P_i \cup Q_i \subseteq \cDdown(\mu)$ for each $i=1,\dots,\ell$, it reports within $O(n + \ell)$ time those vertices in $\cDup(\mu)$ that separate some pair $P_i, Q_i$, namely $\{u \in \cDup(\mu) \mid \text{exists $1 \leq i \leq \ell$ and $u,s$-mincut $U$` keeping $P_i$ and $Q_i$ on different sides}\}$.
\end{lemma}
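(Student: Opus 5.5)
The plan has two parts. First I will define $\RepInf(P,Q)$ and prove a structural characterization of the set $\mathrm{Sep}(P,Q)$ of vertices separating $P,Q$. Then I will turn that characterization into a single $O(n+\ell)$ traversal of the ancestors of $\mu$.

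\textbf{Step 1 (representative nodes).} Let $X$ be the set of nodes $\nu$ of $\cD$ that contain a vertex separating $P,Q$. Every such $\nu$ satisfies $\cDdown(\nu)\cap(P\cup Q)\neq\emptyset$, since $\Far$ contains every mincut. I will take the representative nodes to be the minimal (lowest) nodes $\nu$ such that some mincut to $s$ from a vertex of $\nu$ separates $P,Q$. Any three such nodes are pairwise unrelated by ancestry and their descendant sets share a vertex of $P\cup Q$. To get this common vertex I will first reduce, via sub/posimodularity, to the case where they all contain $P$ (or all contain $Q$). Then \Cref{lem:at-most-two-incomparable-ancestors} gives a contradiction. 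So, per side, there are at most two representatives and $O(1)$ in total.

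\textbf{Step 2 (classifying vertices outside the representatives).} I will prove the converse: if $u$ lies in a \emph{strict} ancestor of a representative $\nu_i$ and $P\cup Q\subseteq\cDdown(u)$, then $u$ separates $P,Q$. Let $W$ be the separating $(w,s)$-mincut for some $w\in\nu_i$. Then $\Far(u,s)\supseteq\Far(w,s)\supseteq W$, and $w\notin\Far(u,s)$ fails, so I apply submodularity to $W$ and $\Far(u,s)$. The union $W\cup\Far(u,s)$ is ruled out because it contains both $P$ and $Q$. Instead I will use posimodularity with $\overline{W}$-style complements, or intersect with $\Far(u,s)\setminus(\cdots)$, to produce a $(u,s)$-mincut with $P$ inside and $Q$ outside. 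This exchange argument, which must work uniformly across the cases of which side $s$ and $P,Q$ fall on, is the step I expect to be the main obstacle.

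In the other direction, a vertex in an ancestor of no representative, or failing $P\cup Q\subseteq\cDdown(u)$, does not separate $P,Q$. The second condition can be relaxed, since a $u$ with only $P$ in $\Far(u,s)$ separates trivially via $\Far$ itself. So the actual rule is: $u$ separates $P,Q$ iff $\Far(u,s)$ contains exactly one of $P,Q$ as a subset, or $u$ is a strict ancestor of a representative, or $u$ lies inside a representative and is handled by Step 3. The ``exclusive'' condition I will encode in $\RepInf$ by storing for each of $P,Q$ the $O(1)$ lowest nodes whose descendants contain it. This uses \Cref{lem:at-most-two-incomparable-ancestors} again on a witness vertex.

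\textbf{Step 3 (inside a representative).} Within $\nu_i$, I will show that there are $O(1)$ mincuts to $s$, $U_1,U_2,\dots$, such that the separating vertices of $\nu_i$ are exactly $\nu_i\cap\bigcup_j U_j$. The $U_j$ will be extremal mincuts obtained by uncrossing all separating mincuts that split $\nu_i$ on the $P$ side or on the $Q$ side. By \Cref{lem:mu-mincuts}, each $\nu_i\cap U_j$ is a side of a valid partition of $\nu_i$, hence a marked canonical cut $C_{ij}$ of $\cH_{\nu_i}$ by \Cref{thm:skeleton}. I store these cuts in $\RepInf$.

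\textbf{Step 4 (data structure and query).} The global structure stores $\cD$, every skeleton $\cH_\nu$ with its terminal mapping, and the node of each vertex; this takes $O(n)$ space because $\cD$ has $O(n)$ edges and $\sum|\nu|=n$. At query time, I first collect in $O(\ell)$ the marked cuts per node, using \Cref{lem:union-of-cuts-in-skeleton} to compute the reported vertices inside each representative node in $O(|\nu|+\ell_\nu)$. I then perform one reverse BFS from $\mu$, restricted to $\cDup(\mu)$. To avoid an $O(n\ell)$ cost, I propagate counter flags upward from all representatives and stored lowest-containing nodes of all $\ell$ pairs simultaneously, marking a node once it is a strict ancestor of some representative or lies in the ``exactly one of $P_i,Q_i$'' region. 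I will guarantee linearity by sourcing the reverse BFS from all marked nodes at once. The total running time is $O(n+\ell)$.
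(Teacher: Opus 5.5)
There is a genuine gap in your Step~2, and it also affects your choice of representatives in Step~1 and the upward propagation in Step~4: the direction is reversed. The claim that every $u$ in a \emph{strict ancestor} of a (lowest) separating node, with $P\cup Q\subseteq\cDdown(u)$, separates $P,Q$ is false.

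Here is a counterexample. Take the path on $s,a,b,c$ with edges $\{s,a\},\{a,b\},\{b,c\}$ of multiplicities $3,2,1$. For each $v\in\{a,b,c\}$ the $(v,s)$-mincut is unique: it is $\{c\}$, $\{b,c\}$, $\{a,b,c\}$ respectively. So $\cD$ is the path $\{s\}\to\{a\}\to\{b\}\to\{c\}$. For $P=\{b\}$, $Q=\{c\}$ the only separating vertex is $c$, so your representative is $\{c\}$. Yet $a$ and $b$ lie in strict ancestors of $\{c\}$ and satisfy $P\cup Q\subseteq\cDdown(\cdot)$, and neither separates $P,Q$. With $\mu=\{b\}$ your query would report $a,b$, while the correct answer is empty.

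This is also why you could not find the exchange argument. If $\cD(u)$ lies above $\cD(w)$, then $W\subseteq\Far(w,s)\subseteq\Far(u,s)$ and $u\notin W$. Uncrossing $W$ with $\Far(u,s)$ then yields nothing, since $W\cap\Far(u,s)=W$ and $W\setminus\Far(u,s)=\emptyset$.

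The correct monotonicity goes \emph{downward}, and you should work with the ordered pair $(P,Q)$ (then $(Q,P)$ separately), which also removes your detour in Step~1. Every node containing a separator of $(P,Q)$ has any fixed $p\in P$ among its descendants. By \Cref{lem:at-most-two-incomparable-ancestors}, the set of such nodes therefore has width at most $2$. Take its at most two \emph{highest} elements as representatives; every separating node is a descendant of one of them.

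For the converse, let $w\in\nu$ separate $(P,Q)$ via $W$, and let $u$ lie in a strict \emph{descendant} of $\nu$ with $P\cup Q\subseteq U:=\Far(u,s)$. Then $w\notin U$, and there are two cases.
\begin{itemize}
\item If $u\in W$: $U\cup W$ is a $(w,s)$-cut, so submodularity (\Cref{lem:sub-posi-general}) makes $U\cap W$ a $(u,s)$-mincut with $P$ inside and $Q$ outside.
\item If $u\notin W$: $W\setminus U$ is a $(w,s)$-cut, so posimodularity makes $U\setminus W$ a $(u,s)$-mincut with $Q$ inside and $P$ outside.
\end{itemize}
Either way $u$ separates the unordered pair, which is what the statement asks for.

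The query then proceeds as follows:
\begin{itemize}
\item run a forward multi-source BFS in $\cD$ from the representatives of all $\ell$ pairs, marking strict descendants;
\item run a reverse BFS from $\mu$ and intersect the two;
\item compute unions of marked canonical cuts only inside those representatives that are ancestors of $\mu$ and are not strict descendants of another representative.
\end{itemize}
Your Step~3 applies to these highest nodes.

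Finally, your ``exactly one of $P,Q$'' machinery is unnecessary. Since $u\in\cDup(\mu)$ and $P\cup Q\subseteq\cDdown(\mu)$, you already have $P\cup Q\subseteq\cDdown(u)$. As stated, that condition is also not sufficient for separation when the other set meets $\Far(u,s)$ only partially.
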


Recall that vertex $u$ is said to separate $P,Q$ if there exists a $u,s$-mincut which keeps $P$ on one side and $Q$ on the other.
We will also need an ordered version of this terminology: we say that $u$ separates $(P,Q)$ if there exists a $(u,s)$-minuct $U$ such that $P \subseteq U$ and $Q \subseteq \overline{U}$.
Thus, $u$ separates $P,Q$ iff $u$ separates $(P,Q)$ or $u$ separates $(Q,P)$.

\paragraph{Representative Nodes.}
In order to prove~\Cref{thm:blackbox}, we first need \emph{to define} $\RepInf(P,Q)$.
In fact, each ordered pair $(P,Q)$ or $(Q,P)$ has its own representative information, and $\RepInf(P,Q)$ consists of the information for both;
we arbitrarily focus on $(P,Q)$.
The first ingredient is the \emph{representative nodes}, defined in the following lemma.

\begin{lemma}\label{lem:rep-nodes}
    We can find at most two \emph{representative nodes} for $(P,Q)$ in $\cD$, such that:
    \begin{itemize}
        \item Each representative node contains some vertex separating $(P,Q)$.
        
        \item
        Every node that contains a vertex separating $(P,Q)$ is a descendant of
        some representative node.
    \end{itemize}
\end{lemma}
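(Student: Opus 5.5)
Let $\mathcal{N}$ be the set of nodes of $\cD$ containing at least one vertex that separates $(P,Q)$. If $\mathcal{N}$ is empty there is nothing to prove, so assume otherwise. The plan is to take the representative nodes to be the \emph{maximal} elements of $\mathcal{N}$ under ancestry, i.e., those nodes of $\mathcal{N}$ with no strict ancestor in $\mathcal{N}$. Since $\cD$ is a DAG, every node of $\mathcal{N}$ is a descendant of some maximal element. Each maximal element lies in $\mathcal{N}$, so it contains a separating vertex. Both bullets of the lemma therefore hold automatically. The only real task is to show that there are at most two maximal elements; these can then be found by a traversal of $\cD$ given the set $\mathcal{N}$.

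To bound the number of maximal elements, I will use \Cref{lem:at-most-two-incomparable-ancestors}. Suppose $\mu_1,\mu_2,\mu_3$ are three distinct maximal elements of $\mathcal{N}$. They are pairwise unrelated by ancestry, since a maximal element cannot be a strict descendant of another node of $\mathcal{N}$. For each $i$, pick a vertex $u_i\in\mu_i$ that separates $(P,Q)$, and let $U_i$ be a $(u_i,s)$-mincut with $P\subseteq U_i$ and $Q\subseteq\overline{U_i}$. Then $U_i\subseteq\Far(u_i,s)=\cDdown(\mu_i)$, so $P\subseteq U_i\subseteq\cDdown(\mu_i)$ for every $i$.

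It follows that $\cDdown(\mu_1)\cap\cDdown(\mu_2)\cap\cDdown(\mu_3)\supseteq P\neq\emptyset$. By \Cref{lem:at-most-two-incomparable-ancestors}, two of the $\mu_i$ are then related by ancestry, which is a contradiction. The source node $\{s\}$ never lies in $\mathcal{N}$, since $s$ cannot be the vertex $u$ of a $(u,s)$-mincut. Hence the degenerate case of that lemma does not arise here, although the lemma covers it anyway.

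I expect no serious obstacle. The key observation is that every cut witnessing that $u$ separates $(P,Q)$ contains $P$ inside $\Far(u,s)$, because the ordered version places $P$ on $u$'s side. This gives the nonempty triple intersection that \Cref{lem:at-most-two-incomparable-ancestors} needs. The one point to check is that $P$ is nonempty, which holds by assumption, and that the ordered orientation is used, so that the same set $P$ lies in all three $U_i$.
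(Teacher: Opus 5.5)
Your proof is correct and follows essentially the same route as the paper: both take the highest (maximal) nodes that contain a separating vertex, and both bound their number by applying \Cref{lem:at-most-two-incomparable-ancestors}. That lemma applies because the ordered cut puts $P$ on $u$'s side, inside $\Far(u,s)=\cDdown(\cD(u))$, so any $p\in P$ lies in every such $\cDdown(\nu)$. Your explicit argument via maximal elements of a finite DAG is just a more carefully spelled-out version of the paper's ``at most two highest ones'' step.
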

\begin{proof}
    It suffices to show that among any three nodes that contain some vertex separating $(P,Q)$, at least two must be related by ancestry:
    indeed, this immediately means that among them we can find at most two ``highest ones'' which are ancestors of all the others.
    By~\Cref{lem:at-most-two-incomparable-ancestors}, in order to prove this, it is enough if we can find a vertex $p$ such that $p \in \cDdown(\nu)$ for every $\nu$ that contains some vertex $u$ separating $(P,Q)$.
    To this end, we can take any arbitrary $p \in P$:
    indeed, as $u$ separates $(P,Q)$, in particular there is a $(u,s)$-mincut $U$ such that $P \subseteq U$, and hence $p \in U \subseteq \Far(u,s) = \cDdown(\nu)$.
\end{proof}
\begin{figure}
  \begin{center}
    \includegraphics[width=\textwidth]{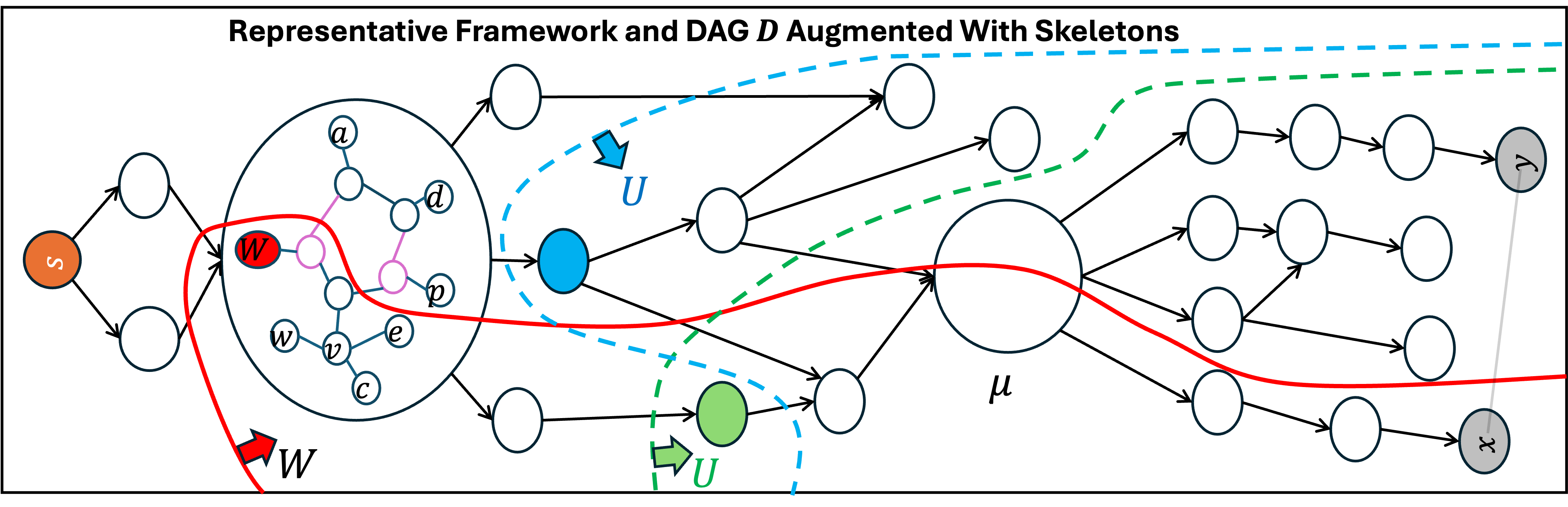}
  \end{center}
  \caption{Let $W$ be a $w,s$-mincut, $U$ with color green, and $U$ with color blue be the farthest mincuts for vertices from nodes green and blue, respectively. Edge $\{x,y\}$ belongs to the graph for which $\mu\in \LCA(x,y)$ in $\cD$. Each node of the DAG contains a Skeleton.} 
  \label{fig : DAG Skeleton RepInf}
\end{figure}
One may hope for the following ``converse to~\Cref{lem:rep-nodes}'': if $\cD(u)$ has some representative node for $(P,Q)$ as an ancestor, then $u$ should separate $(P,Q)$.
While this fails, we next prove an ``almost-converse''; the caveats are in conditions (ii) and (iii), and in the conclusion of separating the unordered $P,Q$.
\begin{lemma}\label{lem:rep-nodes-converse}
    Let $u \in V \setminus \{s\}$ such that the following hold:
    \begin{itemize}
        \item[(i)] Some representative node $\nu$ for $(P,Q)$ is an ancestor of $\cD(u)$.
        \item[(ii)] $\nu \neq \cD(u)$. (So by (i), $\nu$ is a \emph{strict} ancestor of $\cD(u)$.)
        \item[(iii)] $P \cup Q \subseteq \cDdown(u)$.
    \end{itemize}
    Then $u$ separates the unordered pair $P,Q$.
\end{lemma}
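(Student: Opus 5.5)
The plan is to take a witness vertex inside the representative node $\nu$ and combine its mincut with $\Far(u,s)$ by submodularity or posimodularity (\Cref{lem:sub-posi-general}), depending on which side of the witness cut $u$ lies.

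\emph{Setup.} By the first item of \Cref{lem:rep-nodes}, $\nu$ contains a vertex $w$ separating $(P,Q)$. So there is a $(w,s)$-mincut $W$ with $P \subseteq W$ and $Q \subseteq \overline{W}$. Let $F := \Far(u,s) = \cDdown(u)$, which is a $(u,s)$-mincut.
\begin{itemize}
    \item By (i) and (ii), $\cD(u)$ is a strict descendant of $\nu = \cD(w)$. Hence $u \in \cDdown(w) = \Far(w,s)$. This is not needed below.
    \item Since $\cD$ is acyclic and $\nu$ is a strict ancestor of $\cD(u)$, $\nu$ is not a descendant of $\cD(u)$. Hence $w \notin F$. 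This is the fact we actually use.
    \item By (iii), $P \cup Q \subseteq F$.
\end{itemize}

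\emph{Case $u \in W$.} Here $W \cap F$ contains $u$ and not $s$, so it is a $(u,s)$-cut. Also $W \cup F$ contains $w$ and not $s$, so it is a $(w,s)$-cut. Apply the submodularity part of \Cref{lem:sub-posi-general} with $F$ as the $(u,s)$-mincut and $W$ as the $(w,s)$-mincut. It gives that $W \cap F$ is a $(u,s)$-mincut. Since $P \subseteq W \cap F$ and $Q \subseteq \overline{W} \subseteq \overline{W \cap F}$, $u$ separates $(P,Q)$.

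\emph{Case $u \notin W$.} Here $F \setminus W$ contains $u$ and not $s$, so it is a $(u,s)$-cut. Also $W \setminus F$ contains $w$ (because $w \notin F$) and not $s$, so it is a $(w,s)$-cut. The posimodularity part of \Cref{lem:sub-posi-general} then gives that $F \setminus W$ is a $(u,s)$-mincut. We have $Q \subseteq F \cap \overline{W} = F \setminus W$, and $P \subseteq W$ is disjoint from $F \setminus W$. So $u$ separates $(Q,P)$.

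In both cases $u$ separates the unordered pair $P,Q$. This also explains why the lemma cannot promise the ordered conclusion.

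\emph{Main obstacle.} The only delicate point is the fact $w \notin F$, which makes $W \setminus F$ a genuine $(w,s)$-cut in the second case. This is exactly where condition (ii) is used: if $\nu = \cD(u)$, then $w \in F$ and the argument would break. Condition (iii) is what places both $P$ and $Q$ inside $F$, so that intersecting with $F$ or subtracting $W$ loses neither set.
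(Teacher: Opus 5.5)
Your proof is correct and follows the paper's argument essentially line by line: take a witness $w \in \nu$ with a $(w,s)$-mincut $W$ keeping $P$ inside and $Q$ outside, and note that $w \notin \Far(u,s)$ by strictness. Then split on whether $u \in W$, using submodularity to show $W \cap \Far(u,s)$ is a $(u,s)$-mincut, or posimodularity to show $\Far(u,s) \setminus W$ is one. Your remark that condition (ii) is only needed in the posimodularity case is also accurate.
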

\begin{proof} We refer to \Cref{fig : DAG Skeleton RepInf} for an illustration of the proof.
    By~\Cref{lem:rep-nodes}, the representative node $\nu$ contains some vertex $w$ that separates $(P,Q)$.
    Let $W$ be a $(w,s)$-mincut with $P \subseteq W$, $Q \subseteq \overline{W}$.
    Denote $U := \cDdown(u) = \Far(u,s)$. 
    As $\nu$ is a strict ancestor of $\cD(u)$ by (i) and (ii), we get that $w \notin U$.
    Also, $P, Q \subseteq U$ by (iii).
    We split into cases:

    \begin{itemize}
        
        \item
        If $u \in W$, then $U \cap W$ is a $(u,s)$-cut and $U \cup W$ is a $(w,s)$-cut. So by submodularity (\Cref{lem:sub-posi-general}), $U \cap W$ is a $(u,s)$-mincut, where $P \subseteq U \cap W$ and $Q \subseteq \overline{U \cap W}$.
        

        \item If $u \notin W$, then $U \setminus W$ is a $(u,s)$-cut and $W \setminus U$ is a $(w,s)$-cut.
        So by posimodularity (\Cref{lem:sub-posi-general}), $U \setminus W$ is a $(u,s)$-mincut, where $Q \subseteq U \setminus W$ and $P \subseteq \overline{U \setminus W}$.
        
    \end{itemize}
    So in the first case $u$ separates $(P,Q)$, and in the second it separates $(Q,P)$.
\end{proof}

\paragraph{Representative Cuts.}
Let us now focus on characterizing which vertices \emph{inside} the representative nodes separate $(P,Q)$; these are not treated by~\Cref{lem:rep-nodes-converse} because of condition (ii).
While a representative node $\nu$ must contain \emph{some} vertex separating $(P,Q)$, it may also contain vertices that do not separate $(P,Q)$, or even that do not separate $P,Q$.
The second ingredient of the representative information consists of canonical cuts of the skeleton $\cH_\nu$, that precisely characterize which vertices inside $\nu$ separate $(P,Q)$.
These are called the \emph{representative} cuts, defined in the following lemma:

\begin{lemma}\label{lem:rep-cuts}
    Let $\nu$ be a representative node for $(P,Q)$.
    There exists at most two \emph{representative cuts} of the skeleton $\cH_\nu$ with the following properties:
    \begin{itemize}
        \item Each representative cut is either a canonical cut in the skeleton $\cH_\nu$, or a degenerate cut with one side containing all of $\cH_\nu$.
        Further, the representative cut has a \emph{marked side}.
        
        \item For every $u \in \nu$, the following holds: $u$ separates $(P,Q)$ if and only if $u$ is mapped to the marked side of some representative cut in $\cH_\nu$.
    \end{itemize}    
\end{lemma}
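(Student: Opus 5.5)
Here is the plan. Let $\nu$ be a representative node for $(P,Q)$, and let $X\subseteq\nu$ be the set of vertices of $\nu$ that separate $(P,Q)$. I want to show that $X$ is a union of at most two marked sides of canonical cuts of $\cH_\nu$, or of degenerate cuts. I will fix one $(w,s)$-mincut $W_w$ for each $w\in X$ with $P\subseteq W_w$ and $Q\subseteq\overline{W_w}$. I will also use a fixed vertex $p\in P$; since $\nu$ is an ancestor, $p\in\cDdown(\nu)$ lies in every such $W_w\subseteq\Far(w,s)=\cDdown(\nu)$.

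The first step is the dichotomy per cut. If $W_w\supseteq\nu$, then by submodularity with $\cDdown(\nu)$, which is a $(w',s)$-mincut for every $w'\in\nu$, each $w'\in\nu$ has $W_w$ itself as a $(w',s)$-mincut containing $P$ and avoiding $Q$. So every vertex of $\nu$ separates $(P,Q)$, and a single degenerate representative cut marked on all of $\cH_\nu$ suffices.

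Otherwise each $W_w$ splits $\nu$. By the bridge \Cref{lem:mu-mincuts}, $W_w\cap\cDdown(\nu)=W_w$ is a $\nu$-mincut, so $\nu\cap W_w,\nu\setminus W_w$ is a valid partition. It therefore corresponds to a canonical cut $C_w$ of $\cH_\nu$, marked on the side containing $\nu\cap W_w$. Conversely, every $w'\in\nu\cap W_w$ has $W_w$ as a $(w',s)$-mincut, because $c(W_w)=\lambda_\nu=c(\cDdown(\nu))$ by \Cref{lem:mu-mincuts}, and so $w'$ separates $(P,Q)$. Hence $X=\bigcup_{w\in X}(\nu\cap W_w)$, and the remaining task is to reduce this union to at most two cuts.

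For the reduction I will use uncrossing. Take $w_1,w_2\in X$ with $w_2\notin W_{w_1}$ and $w_1\notin W_{w_2}$. The set $W_{w_1}\cup W_{w_2}$ still contains $P$ and avoids $Q$. By submodularity its capacity is at most $\lambda_\nu$ whenever $W_{w_1}\cap W_{w_2}$, which contains $p$, has capacity at least $\lambda_\nu$. I expect to argue that $W_{w_1}\cap W_{w_2}$ is a $\nu$-cut or contains $\nu$-vertices, so that its capacity is at least $\lambda_\nu$. Then $W_{w_1}\cup W_{w_2}$ is a $\nu$-mincut and a $(w_1,s)$-mincut containing both. Repeating this merges all the $W_w$ into a single cut, provided each intersection is a valid $\nu$-cut.

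The obstacle is the case where the intersection contains no vertex of $\nu$. There the capacity bound can fail, and this is exactly why two cuts may be needed. To handle it, I will use \Cref{lem:three-mincuts-to-source} with $p$ in the common intersection: among any three pairwise non-containing cuts $W_{w_i}$ whose $\nu$-parts are incomparable, two must satisfy $w_i\in W_{w_j}$. That allows their union, or the larger of the two, to be taken. Hence the maximal elements, under the containment induced by merging, number at most two, which gives at most two representative cuts.
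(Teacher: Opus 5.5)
Your proposal is correct and is essentially the paper's argument. Both proofs use \Cref{lem:mu-mincuts} to turn each $(w,s)$-mincut splitting $\nu$ into a marked canonical cut, the equality of mincut values inside $\nu$ to show that every vertex of $\nu$ on the inner side separates $(P,Q)$, and \Cref{lem:three-mincuts-to-source} with a common vertex $p \in P$ together with submodularity to merge two cuts whenever one cut's vertex lies in the other; the paper packages your ``maximal merged cuts'' as the maximal cuts $\Far(u,s \mid P,Q)$, picking $u_1$ arbitrarily and $u_2 \notin \Far(u_1,s\mid P,Q)$, and derives a contradiction for any third separating vertex outside their union.
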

\begin{proof}
    For a vertex $u$ separating $(P,Q)$, let $\Far(u,s \mid P,Q)$ denote the unique $(u,s)$-minuct which keeps $P$ inside (on the side of $u$), keeps $Q$ outside (on the side of $s$), and contains every other $u,s$-mincut with these properties.
    $\Far(u,s \mid P,Q)$ is well defined due to submodularity (it is the union of all $(u,s)$-mincuts keeping $P$ inside and $Q$ outside).
    The crux of the proof is showing there exists at most two vertices $u_1, u_2 \in \nu$ such that, for any $u \in \nu$, it holds that $u$ separates $(P,Q)$ if and only if $u \in \Far(u_i, s \mid P,Q)$ for some $i \in \{1,2\}$.
    
    Before we show this, let us explain how this yields the required representative cuts by using~\Cref{lem:mu-mincuts}.
    Denote $U_i := \Far(u_i, s \mid P,Q)$.
    If $\nu$ is not contained in $U_i$, then~\Cref{lem:mu-mincuts} ensures that $\nu \cap U_i, \nu \cap \overline{U_i}$ is a valid partition of $\nu$.
    Hence, by~\Cref{thm:skeleton} we can take the representative cut as the canonical cut $C_i$ of the skeleton $\cH_\nu$ corresponding to this valid partition, with its marked side being the side of $\nu \cap U_i$.
    In case $\nu$ is contained in $U_i$, we can just take $C_i$ as the degenerate cut whose marked side is all of $\nu$.
    Thus, for any $u \in \nu$, it holds that $u \in U_i$ iff $u$ is mapped to the marked side of $C_i$.

    We now show the existence of $u_1,u_2$ with the required properties.
    We choose them by the following procedure.
    First, we just take $u_1$ as some arbitrary vertex from $\nu$ which separates $(P,Q)$.
    Next, we choose any $u_2 \in \nu \setminus \Far(u_1, s \mid P,Q)$ which separates $(P,Q)$ (and if there is no such $u_2$, we take just $u_1$ and $u_2$ is undefined).
    Denote $U_i = \Far(u_i, s \mid P,Q)$ for $i \in \{1,2\}$.
    So, to conclude the proof, we need to show that for any $u_3 \in \nu$, it holds that $u_3$ separates $(P,Q)$ iff $u_3 \in U_1 \cup U_2$.

    ($\Leftarrow$)
    Suppose $u_3 \in U_i$ for some $i \in \{1,2\}$.
    Note that $\Far(u_3, s) = \cDdown(\nu) = \Far(u_i, s)$, so the mincut values of $u_3$ and $u_i$ to $s$ are equal.
    Thus, $U_i$ itself is a $(u_3,s)$-mincut which keeps $P$ inside and $Q$ outside.

    ($\Rightarrow$)
    Seeking contradiction, say $u_3$ separates $(P,Q)$ but $u_3 \notin U_1 \cup U_2$.
    Denote $U_3 := \Far(u_3, s \mid P,Q)$.
    By our process for choosing $u_1$ and $u_2$, we get that for any two distinct indices $i,j \in \{1,2,3\}$, either $u_i \notin U_j$ or $u_j \notin U_i$.
    We next derive the contradiction by showing there must be two different indices $i,j$ such that $u_i \in U_j$ and $u_j \in U_i$. 
    Since $\emptyset \neq P \subseteq U_1 \cap U_2 \cap U_3$, by~\Cref{lem:three-mincuts-to-source} there must be two different indices $i,j \in \{1,2,3\}$ such that $u_i \in U_j$, so it remains to show that $u_j \in U_i$.
    Observe that $U_i \cap U_j$ is a $(u_i,s)$-cut, and $U_i \cup U_j$ is a $(u_j,s)$-cut.
    So, by submodularity (\Cref{lem:sub-posi-general}), $U_i \cup U_j$ is a $(u_j,s)$-cut.
    But, as $\Far(u_i,s) = \cDdown(\nu) = \Far(u_j,s)$, $u_i$ and $u_j$ have the same mincut value to $s$.
    Thus, $U_i \cup U_j$ must also be a $(u_i,s)$-mincut.
    Note that $P \subseteq U_i \cup U_j$ and $Q \subseteq \overline{U_i \cup U_j}$, so we get $U_i \cup U_j \subseteq \Far(u_i, s \mid P,Q) = U_i$,
    meaning that $U_j \subseteq U_i$, hence $u_j \in U_i$, and so we have reached the desired contradiction.
\end{proof}

Crucially, each (non-degenerate) representative cut is a canonical cut in its skeleton, which is represented just by its one or two skeleton edges with markings on their endpoints from the marked side (for example, the cut defined by two purple cycle-edges in the Skeleton in \Cref{fig : DAG Skeleton RepInf}).
%

\paragraph{Putting It All Together.}
We can now finally define $\RepInf(P,Q)$ and prove~\Cref{thm:blackbox}.

\begin{definition}
    For two non-empty sets $P,Q \subseteq V$, the representative information $\RepInf(P,Q)$ consists of the representative nodes
    and cuts for $(P,Q)$ and for $(Q,P)$, given by~\Cref{lem:rep-nodes} and~\Cref{lem:rep-cuts}.
    If there are no vertices that separate $P,Q$, then we just define $\RepInf(P,Q)$ to be a null value indicating this situation.
\end{definition}

\begin{proof}[Proof (of~\Cref{thm:blackbox})]
    Let $\nu_1, \dots, \nu_r$ be all the representative nodes that are stored in some $\RepInf(P_i, Q_i)$.
    By~\Cref{lem:rep-nodes}, every vertex that separates some pair $P_i, Q_i$ must lie in a descendant of some $\nu_j$.
    To identify these, we execute multi-source BFS (breadth first search) from $\nu_1, \dots, \nu_r$ in $\cD$, which takes $O(n)$ time.
    During the BFS, we also determine which nodes are \emph{strict} descendants of at least one $\nu_j$:
    when we scan an edge $(\omega, \nu)$, we mark that $\nu$ is such a strict descendant.
    This is because the BFS has already found some path from some $\nu_j$ to $\omega$, so $\nu_j$ is an ancestor of $\nu$, and $\nu_j \neq \nu$ because $\cD$ has no cycles.
    (It could be that $\nu$ is itself one of the sources, and we still mark it.)
    Next, we run a reverse-BFS from $\mu$ to find all its ancestors, and determine which of them are also strict descendants of some $\nu_j$.
    By~\Cref{lem:rep-nodes-converse}, every vertex in one of these nodes separates some $P_i, Q_i$.
    
    The remaining vertices to consider are those that lie \emph{inside} the representative nodes $\nu_1, \dots, \nu_r$, or more precisely, inside nodes $\nu_j$ such that $\nu_j$ is an ancestor of $\mu$ but not a strict descendant of any other $\nu_{k}$.
    Suppose all of $\nu_1, \dots, \nu_r$ fulfill these conditions (otherwise, we remove those that do not).
    Let us first focus on a specific $\nu_j$.
    Note that $\nu_j$ could be a representative node for many different pairs $P_i, Q_i$, hence there could be many representative cuts corresponding to $\nu_j$ originating from different $\RepInf(P_i,Q_i)$; denote these cuts by $C_1, \dots, C_{\ell_j}$.
    By~\Cref{lem:rep-cuts}, the vertices in $\nu_j$ which separate \emph{some pair} $P_i, Q_i$ are exactly those vertices mapped to a marked side of at least one of $C_1, \dots, C_{\ell_j}$ in the skeleton $\cH_{\nu_j}$.
    Thus, to treat $\nu_j$, we just need to compute the union of the marked sides of $C_1, \dots, C_{\ell_j}$, which takes $O(|\nu_j| + \ell_j)$ time by~\Cref{lem:union-of-cuts-in-skeleton}.
    Thus, the overall time spent to treat vertices inside 
    representative nodes is
    $\sum_{i=j}^r O(|\nu_j| + \ell_j)$.
    As the nodes of $\cD$ partition $V$, we have $\sum_j |\nu_j| = O(n)$.
    Also, each of the $\ell$ pairs $P_i, Q_i$ gives rise to only $O(1)$ representative cuts over all the representative nodes, we have that $\sum_j \ell_j = O(\ell)$.
    Thus, the total running time is $O(n + \ell)$.
\end{proof}



\section{Anchors and Query Translation}\label{sec:translation}

In this section, we formally prove the useful properties of the anchors used for the query translation process, overviewed in~\Cref{sec:translation-overview}.
We first restate the definition of pivots and anchors:

\anchors*

The following lemma formalizes the essential equivalence between the endpoints $x,y$ of the failing edge, and their anchors $z_x, z_y$ with respect to an LCA $\mu$ of $\cD(x),\cD(y)$.

\begin{figure}
  \begin{center}
    \includegraphics[width=0.6\textwidth]{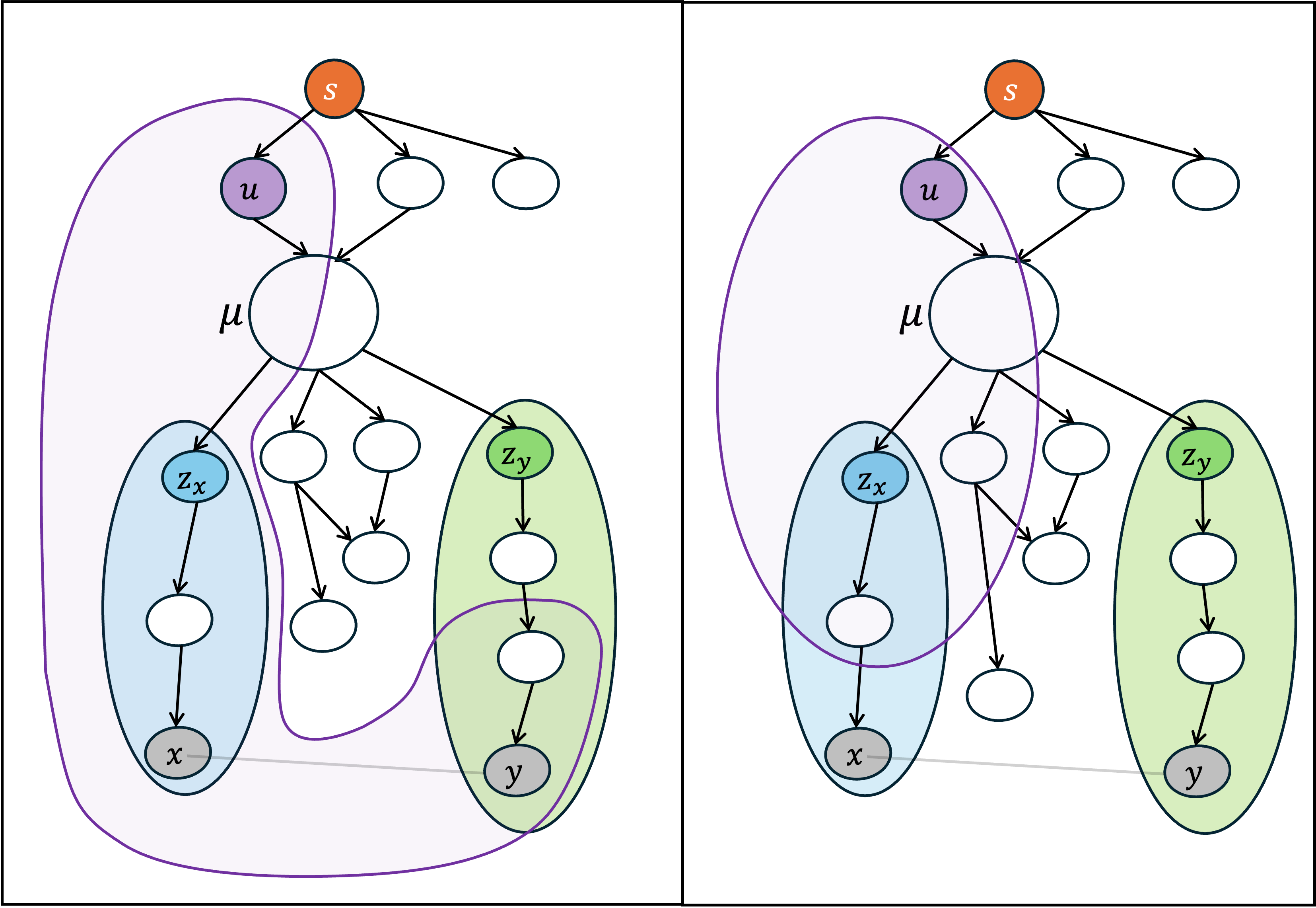}
  \end{center}
  \caption{The figures shows two possible cases -- a $u,s$-mincut separating $z_x,z_y$ that contains both $x,y$ or contains none of $x$ and $y$.
  The proof of Item 2 in~\Cref{lem:translation-DAG} shows that using sub/posimodularity with the farthest $z_x,s$-mincut and $z_y,s$-mincut, one can find another $u,s$-mincut separating $x,y$ having the same intersection with $\mu$.} 
  \label{fig : DAG translation}
\end{figure}

\begin{lemma}\label{lem:translation-DAG}
    Let $e = \{x,y\} \in E$, and let node $\mu$ be an LCA of $\cD(x),\cD(y)$.
    Let $z_x$ and $z_y$ be the anchors of $x$ and of $y$ w.r.t.\ $\mu$.
    Then for every $u \in \cDup(\mu)$, the following hold:
    \begin{enumerate}
        \item If $U$ is a $u,s$-mincut with $x \in U$ and $y \notin U$, then $z_x \in U$ and $z_y \notin U$.
        \item If $U$ is a $u,s$-mincut with $z_x \in U$ and $z_y \notin U$, then there exists another $u,s$-mincut $U'$, such that the following hold:
        (i) $x \in U'$ and $y \notin U'$,
        (ii) $U' \cap \mu = U \cap \mu$, and
        (iii) $u \in U'$ iff $u \in U$.
    \end{enumerate}
\end{lemma}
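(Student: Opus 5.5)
The plan is to compare $U$ with the farthest mincuts of the anchors and apply~\Cref{lem:sub-posi-general}. Throughout, $U$ denotes the side containing $u$, so $s \notin U$. When $x \notin \mu$, let $\nu_x$ be the child of $\mu$ that defines $z_x$, and put $X := \cDdown(\nu_x) = \Far(z_x,s)$, so $x \in X$. When $x \in \mu$, we have $z_x = x$ and simply skip every step involving $X$. Define $\nu_y$ and $Y$ symmetrically. Before the main argument, I will record some ancestry facts that follow only from the structure of $\cD$.

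\begin{itemize}
    \item[(a)] $u \notin X \cup Y$. The node $\cD(u)$ is an ancestor of $\mu$, which is a strict ancestor of $\nu_x$; since $\cD$ is acyclic, $\cD(u)$ is not a descendant of $\nu_x$.
    \item[(b)] $y \notin X$ and $x \notin Y$. Otherwise $\nu_x$ would be a common ancestor of $\cD(x),\cD(y)$ that is a child of $\mu$, contradicting that $\mu$ is an LCA.
    \item[(c)] $\mu \cap X = \mu \cap Y = \emptyset$. This holds because $\mu$ is a strict ancestor of $\nu_x$ and of $\nu_y$.
    \item[(d)] $z_y \notin X$ and $z_x \notin Y$. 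If $z_y = y$, this is (b). Otherwise $z_y \in X$ would make $\nu_y$ a descendant of $\nu_x$, giving $y \in \cDdown(\nu_y) \subseteq X$, again contradicting (b).
\end{itemize}

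For Item 1, I argue by contradiction in two parts.

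First suppose $z_x \notin U$ (so $x \notin \mu$). Then $U \setminus X$ contains $u$ by (a), hence is a $(u,s)$-cut. Also $X \setminus U$ contains $z_x$, hence is a $(z_x,s)$-cut. Posimodularity (\Cref{lem:sub-posi-general}) then says no edge joins $U \cap X$ and $\overline{U \cup X}$. But $x \in U \cap X$, and $y \notin U \cup X$ by (b). So the edge $e$ would cross between them, a contradiction.

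Next suppose $z_y \in U$. Then $U \cap Y$ contains $z_y$, hence is a $(z_y,s)$-cut, and $U \cup Y$ contains $u$, hence is a $(u,s)$-cut. Submodularity says no edge joins $U \setminus Y$ and $Y \setminus U$. But $x \in U \setminus Y$ and $y \in Y \setminus U$, so $e$ again gives a contradiction.

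For Item 2, I build $U'$ in two steps.

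\begin{enumerate}
    \item Set $U_1 := U \cup X$. Here $U \cap X$ contains $z_x$, so it is a $(z_x,s)$-cut, and $U \cup X$ contains $u$, so it is a $(u,s)$-cut. Submodularity then makes $U_1$ a $(u,s)$-mincut.
    \item Set $U' := U_1 \setminus Y$. By (d) and the assumption $z_y \notin U$, we get $z_y \notin U_1$. So $Y \setminus U_1$ is a $(z_y,s)$-cut, while $U_1 \setminus Y$ contains $u$ by (a). Posimodularity then makes $U'$ a $(u,s)$-mincut.
\end{enumerate}

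It remains to check properties (i)--(iii) for $U'$.

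\begin{itemize}
    \item For (i), $x \in X \subseteq U_1$ (or $x = z_x \in U$ when $x \in \mu$), and $x \notin Y$ by (b), so $x \in U'$. Also $y \in Y$ (or $y = z_y$ with $z_y \notin U_1$), so $y \notin U'$.
    \item For (ii), $U' \cap \mu = U \cap \mu$ follows from (c).
    \item For (iii), membership of $u$ is unchanged by (a).
\end{itemize}

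The main obstacle is not the sub/posimodular calculations but setting up the ancestry facts (a)--(d) correctly, especially (b) and (d). These are where the LCA property of $\mu$ is used, and where the degenerate case of an endpoint lying in $\mu$ itself needs separate care.
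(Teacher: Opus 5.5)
Your proof is correct and is essentially the paper's proof. You use the same sets $\cDdown(z_x),\cDdown(z_y)$ (your $X,Y$), the same ancestry facts derived from the LCA property, posimodularity and then submodularity for the two halves of Item~1, and the same two-step construction $U_1=U\cup X$, $U'=U_1\setminus Y$ for Item~2. One small point: the statement allows $U$ to contain $s$ rather than $u$ (this is why (iii) is stated, and later applications need that case), but every step of yours goes through unchanged if you only use $|U\cap\{u,s\}|=1$ together with $u,s\notin X\cup Y$, which is exactly how the paper phrases it.
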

\begin{proof}
The crux is exploring the relationship of the given $U$ with $Z_x := \cDdown(z_x)$ and $Z_y := \cDdown(z_y)$, which are (the farthest) $(z_x,s)$-mincut and $(z_y,s)$-mincut, 
by using sub/posimodularity.
We first highlight some key properties of $Z_x$ (and symmetrically, of $Z_y$).
Suppose $z_x \neq x$.
Then by~\Cref{def: anchors}, $\cD(z_x)$ is a child of $\mu$ and an ancestor of $\cD(x)$.
Because $\mu$ is an LCA, this means that $\cD(z_x)$ is not an ancestor of $\cD(y)$, and hence it also cannot be an ancestor of $\cD(z_y)$ (as $\cD(z_y)$ is an ancestor of $\cD(y)$ by~\Cref{def: anchors}).
Also, because it is assumed that $\cD(u)$ is an ancestor of $\mu$, $\cD(u)$ cannot be a descendant of $\cD(z_x)$. 
These conclusions can be summarized as follows:
\begin{align}
    z_x \neq x & \quad \implies \quad x \in Z_x \quad \text{and} \quad u,y,z_y \notin Z_x \quad \text{and} \quad \mu \cap Z_x = \emptyset, \label{eq:z_x} \\ 
    z_y \neq y & \quad \implies \quad y \in Z_y \quad \text{and} \quad u,x,z_x \notin Z_y \quad \text{and} \quad \mu \cap Z_y = \emptyset. \label{eq:z_y}
\end{align}
We are now ready to prove the two items of the lemma.

    \emph{Item 1.}
    Let $U$ be a $u,s$-mincut with $x \in U$, $y \notin U$.
    We show that $z_x \in U$ and that $z_y \notin U$ by similar arguments, using posimodularity and submodularity, respectively.

    \begin{itemize}
        \item
        Seeking contradiction, suppose $z_x \notin U$. 
        Then in particular $z_x \neq x$, so~\Cref{eq:z_x} applies.
        Note that $U \setminus Z_x$ is a $u,s$-cut because $|U \cap \{u,s\}| = 1$ and $u,s \notin Z_x$.
        Also, $Z_x \setminus U$ is a $z_x,s$-cut.
        Hence, by posimodularity (\Cref{lem:sub-posi-general}), there cannot be an edge between $U \cap Z_x$ and $\overline{U \cup Z_x}$.
        But $x \in U \cap Z_x$ and $y \in \overline{U \cup Z_x}$, so $e$ is such an edge --- contradiction.

        \item
        Seeking contradiction, suppose $z_y \in U$.
        Then in particular $z_y \neq y$, so~\Cref{eq:z_y} applies.
        Note that $U \cup Z_y$ is a $u,s$-cut because $|U \cap \{u,s\}| = 1$ and $u,s \notin Z_y$.
        Also, $U \cap Z_y$ is a $z_y,s$-cut.
        Hence, by submodularity (\Cref{lem:sub-posi-general}), there cannot be an edge between $U \setminus Z_y$ and $Z_y \setminus U$.
        But $x \in U \setminus Z_y$ and $y \in Z_y \setminus U$, so $e$ is such an edge --- contradiction.
    \end{itemize}

    \emph{Item 2.}
    Let $U$ be a $u,s$-minuct with $z_x \in U$, $z_y \notin \tilde{U}$.
    We construct the desired $U'$ in two steps,
    starting with $U_0 := U$.
    In the first step, we find an intermediate cut $U_1$, ensuring that $x \in U_1$.
    In the second, we find the final cut $U_2 = U'$, which additionally ensures that $y \notin U_2$.
    See~\Cref{fig : DAG translation}.

    \begin{itemize}
        \item
        First Step:
        If $z_x = x$ take $U_1 = U_0$,
        otherwise $U_1 = U_0 \cup Z_x$.
        We claim that $U_1$ is a $u,s$-mincut:
        \begin{itemize}
            \item 
            If $z_x = x$, this is trivial.
            \item 
            If $z_x \neq x$ then~\Cref{eq:z_x} applies.
            Note that $U_0 \cup Z_x$ is a $u,s$-cut because $|U_0 \cap \{u,s\} | = 1$ and $u,s \notin Z_x$.
            Also, $U_0 \cap Z_x$ is a $z_x,s$-cut.
            So the claim follows by submodularity (\Cref{lem:sub-posi-general}).
        \end{itemize}
        Observe that
        (a) $x \in U_1$, $z_y \notin U_1$,
        (b) $\mu \cap U_1 = \mu \cap U_0$, and
        (c) $u \in U_1$ iff $u \in U_0$.
        Indeed, this is trivial if $z_x = x$, and otherwise this is by~\Cref{eq:z_x}.
    
        \item
        Second Step:
        If $z_x = x$ take $U_2 = U_1$, otherwise $U_2 = U_1 \setminus Z_y$.
        We claim that $U_2$ is a $u,s$-mincut:
        \begin{itemize}
            \item If $z_y = y$, this is clear by the previous step.
            \item If $z_y \neq y$, then~\Cref{eq:z_y} applies.
            Note that $U_1 \setminus Z_y$ is a $u,s$-cut because $|U_1 \cap \{u,s\}|=1$ and $u,s \notin Z_y$.
            Also, $Z_y \setminus U_1$ is a $z_y,s$-cut (as $z_y \notin U_1$ by the first step).
            So the claim follows by posimodularity (\Cref{lem:sub-posi-general}).
        \end{itemize}
        Finally, we observe that (i), (ii) and (iii) for $U_2$ hold, because their counterpart properties (a), (b) and (c) hold for $U_1$.
        Indeed, this is trivial if $z_y = y$, and otherwise this follow immediately by~\Cref{eq:z_y}.
    \end{itemize}
\end{proof}
\begin{figure}
  \begin{center}
    \includegraphics[width=\textwidth]{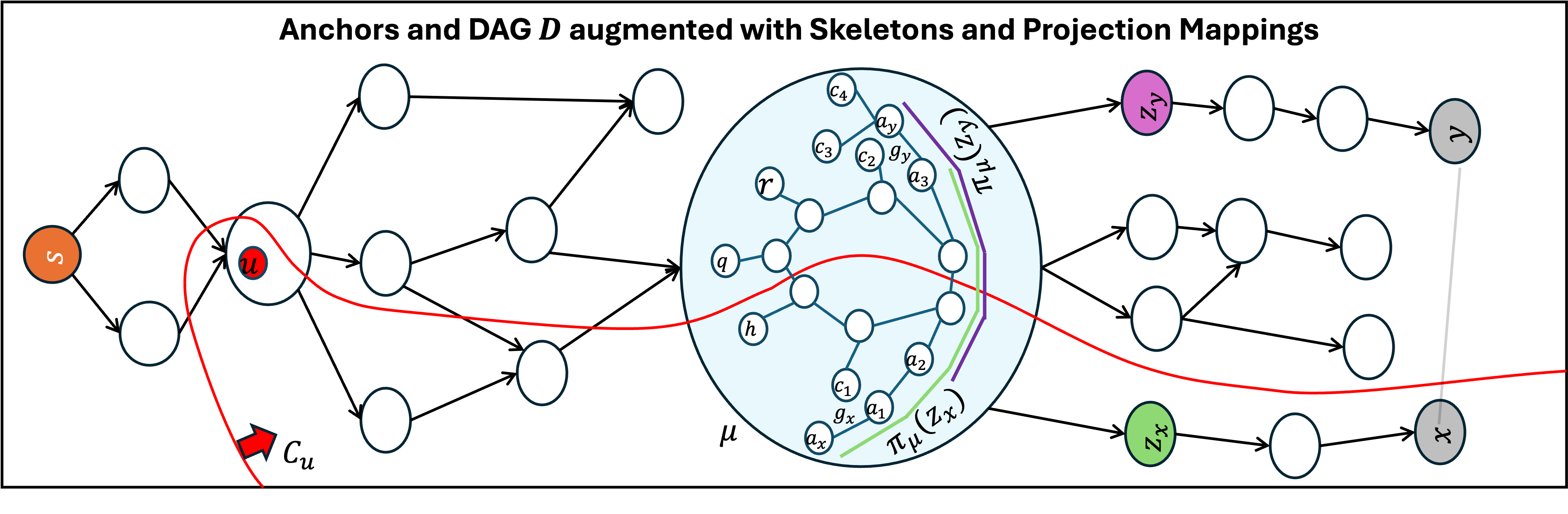}
  \end{center}
  \caption{Vertices $z_x$ and $z_y$ are the anchors of $x$ and $y$ respectively for $\mu\in \LCA(x,y)$. The projection of $\pi_\mu(z_x)$ (resp., $\pi_\mu(z_y)$) is represented by the green (resp., purple) polyline on the skeleton inside node $\mu$.
  \Cref{cor:projection-of-edge-skeleton-translation} shows that they form a prefix and suffix of the edge projection $\pi_\mu (e)$, where $e = \{x,y\}$.} 
  \label{fig : DAG anchors Projection mapping}
\end{figure}

Finally, as explained in~\Cref{sec:translation-overview}, the projections of the anchors in the skeleton of the LCA $\mu$ play a crucial role, as they enable us to find the projection of the edge $e$.
This is obtained by~\Cref{cor:projection-of-edge-skeleton-translation}, which is derived from the equivalence of anchors in~\Cref{lem:translation-DAG} and our carcass translation lemma (\Cref{lem:skeleton-translation}); see~\Cref{fig : DAG anchors Projection mapping}.

\begin{corollary} \label{cor:projection-of-edge-skeleton-translation}
    Let $e = \{x,y\} \in E$, and let node $\mu$ be an LCA of $\cD(x),\cD(y)$.
    Let $z_x$ and $z_y$ be the anchors of $x$ and of $y$ w.r.t.\ $\mu$.
    Suppose that the following conditions hold:
    \begin{itemize}
        \item[(i)] $c(\cDdown(\mu)) = \lambda_\mu$. 
        \item[(ii)] $\phi_\mu (z_x) \neq \phi_\mu (z_y)$. That is, $z_x,z_y$ are not $\mu$-equivalent. 
    \end{itemize}
    Then the projection path $\pi_\mu(e)$ is defined, and it has $\pi_\mu (z_x)$ as a prefix and $\pi_\mu (z_y)$ as a suffix, where $\pi_\mu (e)$ is considered in the direction from its $x$-endpoint to its $y$-endpoint.
\end{corollary}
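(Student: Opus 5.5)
The plan is to apply the Translation Lemma (\Cref{lem:skeleton-translation}) with terminal set $S=\mu$, edge $e=\{x,y\}$, and $x' = z_x$, $y' = z_y$. Property (P2) is exactly hypothesis (ii), so the conclusion of \Cref{lem:skeleton-translation} is precisely what the corollary claims. The real work is verifying (P1): for every valid partition $S_1,S_2$ of $\mu$, some $S_1,S_2$-mincut keeps $x$ with $S_1$ and $y$ with $S_2$ if and only if some $S_1,S_2$-mincut keeps $z_x$ with $S_1$ and $z_y$ with $S_2$.

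To transfer \Cref{lem:translation-DAG}, which speaks about $u,s$-mincuts for $u\in\cDup(\mu)$, I would realize $\mu$-mincuts as $u,s$-mincuts with $u\in\mu$. First, every $\mu$-mincut $W$ can be taken inside $\cDdown(\mu)$ with $s\notin W$. Replace $W$ by its complement if needed so that $s\notin W$. Pick $w\in\mu\cap W$; then $\cDdown(\mu)=\Far(w,s)$. Submodularity (\Cref{lem:sub-posi-general}) applied to $W$, viewed as a $\mu$-mincut, and $\cDdown(\mu)$, viewed as a $(w,s)$-mincut, shows that $W\cap\cDdown(\mu)$ is a $\mu$-mincut with the same trace on $\mu$. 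Here one must check that $W\cap\cDdown(\mu)$ is a $\mu$-cut and $W\cup\cDdown(\mu)$ is a $(w,s)$-cut, which is immediate. Moreover $x,y,z_x,z_y\in\cDdown(\mu)$, so their sides are unchanged.

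Second, by hypothesis (i), $c(\cDdown(\mu))=\lambda_\mu$ equals the $w,s$-mincut value. Hence any $\mu$-mincut $W\subseteq\cDdown(\mu)$ with $s\notin W$ is a $(w,s)$-mincut for every $w\in W\cap\mu$. Conversely, any $(w,s)$-mincut $U$ with $w\in\mu$ that splits $\mu$ satisfies $c(U)=\lambda_\mu$, so it is itself a $\mu$-mincut. This also follows from \Cref{lem:mu-mincuts}, which shows $U\cap\cDdown(\mu)$ is a $\mu$-mincut with the same $\mu$-trace.

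Now fix a valid partition $S_1,S_2$, oriented so that $S_1$ is the side not containing $s$; by symmetry of the unordered statement, handle both orientations the same way. For the forward direction, take an $S_1,S_2$-mincut $W$ with $x\in W$ and $y\notin W$, normalized as above (after complementing, the roles of $x,y$ swap and we use the symmetric version). Choose $u\in S_1$. Then $W$ is a $(u,s)$-mincut with $u\in\mu\subseteq\cDup(\mu)$, so Item~1 of \Cref{lem:translation-DAG} gives $z_x\in W$ and $z_y\notin W$. For the converse, Item~2 yields a $u,s$-mincut $U'$ with $x\in U'$, $y\notin U'$, the same trace on $\mu$ (namely $S_1$), and $u\in U'$. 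Since $U'$ is a $(u,s)$-mincut splitting $\mu$ with trace $S_1$, it is a $\mu$-mincut realizing the partition $S_1,S_2$, as required.

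I expect the main obstacle to be orientation bookkeeping when $s$ lies on the $S_1$-side, that is, when the cut keeping $x$ with $S_1$ contains $s$. Complementing swaps which of $x,y$ lies inside, so I would apply \Cref{lem:translation-DAG} with the roles of $x$ and $y$ exchanged. That lemma is symmetric in $x,y$ because the anchor definition and the LCA property are symmetric, and it is stated for a $u,s$-mincut in either orientation.
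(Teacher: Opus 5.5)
Your proposal is correct and takes essentially the paper's route: apply \Cref{lem:skeleton-translation} with $S=\mu$, $x'=z_x$, $y'=z_y$, and note that (P2) is hypothesis (ii). For (P1), hypothesis (i) makes every $\mu$-mincut realizing $S_1,S_2$ a $u,s$-mincut for any $u\in\mu$ on the side away from $s$, and Items~1 and~2 of \Cref{lem:translation-DAG} then transfer the separation in both directions. Two of your steps can be dropped: intersecting $W$ with $\cDdown(\mu)$ is unnecessary, because (i) alone already makes such a cut a $u,s$-mincut, and the complementing and orientation bookkeeping is unnecessary, because \Cref{lem:translation-DAG} is already stated for $u,s$-mincuts in either orientation.
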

\begin{proof}
    This follows by~\Cref{lem:skeleton-translation} with $S = \mu$, $x' = z_x$, $y' = z_y$,
    once we show that~\ref{prop:P1} and~\ref{prop:P2} hold.
    \ref{prop:P2} is just (ii).
    For \ref{prop:P1}, let $S_1,S_2$ be a valid partition of $\mu$.
    We should show:
    there is an $(S_1,S_2)$-mincut $U$ with $z_x \in U$, $z_y \notin U$ iff there is an $(S_1,S_2)$-mincut $U'$ with $x \in U'$, $y \notin U'$.

    First, suppose $U'$ exists.
    Take any $u \in \mu$ lying in the opposite side to $s$ of the cut $U'$.
    Then $U'$ is a $u,s$-cut, and because $\cDdown(\mu) = \Far(u,s)$, by (i) it is in fact a $u,s$-mincut.
    Thus, by~\Cref{lem:translation-DAG}(1), $z_x \in U'$ and $z_y \notin U'$, so we just take $U = U'$.
    %
    Now, suppose $U$ exists.
    Then $U$ must be a $u,s$-mincut for some $u \in \mu$ by a similar argument as in previous direction.
    Thus, by~\Cref{lem:translation-DAG}(2), there is a $u,s$-mincut $U'$ with $x \in U'$, $y \notin U'$, and $U' \cap \mu = U \cap \mu$.
    The last condition means that the $u,s$-mincut $U'$ is also $(S_1, S_2)$-cut, but by (i) this means $U'$ is $(S_1,S_2)$-mincut, as needed.
\end{proof}


\section{Mincuts Splitting the LCA}\label{sec:splitting-the-LCA}

This section provides the data structure for Query~\ref{query-Q1}, restated below; See the overview in~\Cref{sec:general-case-overview}.

\begin{enumerate}[label=\textbf{(Q1)}]
    \item Given $e = \{x,y\} \in E$ and an LCA $\mu$ of $\cD(x),\cD(y)$,
    report all $u \in \cDup(\mu)$ such that some $u,s$-mincut separates $x,y$ and splits $\mu$.
\end{enumerate}
Note that by the equivalence of  $x,y$ to their anchors $z_x,z_y$ w.r.t.\ $\mu$ from~\Cref{lem:translation-DAG}, the condition in Query~\ref{query-Q1} is equivalent to ``some $u,s$-mincut separates $z_x,z_y$ and splits $\mu$''; this new condition would be the one that we will use.
Our goal in this section is to show:
\begin{lemma}\label{lem:q1-data-structure}
    There is a data structure for Query~\ref{query-Q1} with $O(n)$ space and $O(n)$ query time.
\end{lemma}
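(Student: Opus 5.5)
By \Cref{lem:translation-DAG}, Query~\ref{query-Q1} is equivalent to reporting every $u \in \cDup(\mu)$ that has a $u,s$-mincut separating $z_x,z_y$ and splitting $\mu$. I will build the data structure around this translated query.

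\textbf{Stored data.} For every node $\mu$ of $\cD$ I store:
\begin{itemize}
\item the skeleton $\cH_\mu$;
\item the bit indicating whether $c(\cDdown(\mu))=\lambda_\mu$;
\item $\RepInf(g)=\RepInf(\mu_0(g),\mu_1(g))$ for every skeleton edge $g$.
\end{itemize}
For every ``potential anchor'' $z$ of $\mu$ I store the following. The potential anchors are the vertices of $\mu$ and the pivots $z_\nu$ of children $\nu$ of $\mu$, so there are $O(|\mu|+\deg(\mu))$ of them and $O(n)$ overall.
\begin{itemize}
\item the projection $\pi_\mu(z)$, given by its endpoints, and $\phi_\mu(z)$;
\item the number $\sigma_\mu(z)$ from \Cref{lem:sigma-numbers};
\item the associated terminal sets $\mu_0(z),\mu_1(z)$, matched to the endpoints of $\pi_\mu(z)$ via \Cref{def:associated-terminals};
\item the two items $\RepInf(\mu_0(z)\cup\{z\},\mu_1(z))$ and $\RepInf(\mu_0(z),\mu_1(z)\cup\{z\})$.
\end{itemize}
All of this is $O(1)$ words per skeleton edge or potential anchor, so the total space is $O(n)$.

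\textbf{Answering a query.} Given $e$ and $\mu$, I proceed as follows.
\begin{enumerate}
\item Find the anchors $z_x,z_y$ in $O(n)$ time, by BFS locating a child of $\mu$ that is an ancestor of $\cD(x)$, and likewise for $y$.
\item If $c(\cDdown(\mu))\ne\lambda_\mu$, \Cref{lem:mu-mincuts} shows that no mincut to $s$ splits $\mu$, so I report nothing. I also report nothing if $\phi_\mu(z_x)=\phi_\mu(z_y)$, since by \Cref{lem:mu-mincuts} any splitting mincut induces a $\mu$-mincut, which cannot separate $\mu$-equivalent vertices.
\item Otherwise, \Cref{cor:projection-of-edge-skeleton-translation} shows that $\pi_\mu(e)$ is the unique proper path that has $\pi_\mu(z_x)$ as a prefix and $\pi_\mu(z_y)$ as a suffix. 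I compute it in $O(|\mu|)$ time by walking in the cactus.
\item This walk can leave the orientation ambiguous in the degenerate case $\pi_\mu(z_x)=\pi_\mu(z_y)=\pi_\mu(e)$. There I resolve which endpoint is the $z_x$-endpoint by comparing $\sigma_\mu(z_x)$ with $\sigma_\mu(z_y)$ (\Cref{lem:sigma-numbers}).
\item Knowing the orientation, I determine which of $\mu_0(z_x),\mu_1(z_x)$ matches the $x$-endpoint, and similarly for $z_y$. I then invoke \Cref{thm:blackbox} on the corresponding single $\RepInf$ to obtain $A_\mu(x)$ and $A_\mu(y)$.
\item I invoke \Cref{thm:blackbox} once more, on $\{\RepInf(g)\}_{g\in\pi_\mu(e)}$. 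To apply the case rule below I need the middle, overlap and wing classes separately, so I run this as three separate calls, one per class. This takes $O(n+|\pi_\mu(e)|)=O(n)$ time.
\item Finally I combine the results by the four-case rule of \Cref{sec:general-case-overview}: a vertex separating a middle edge is reported; one separating an overlap edge is reported iff it lies in $A_\mu(x)\cap A_\mu(y)$; one separating a wing edge is reported iff it lies in $A_\mu(x)\cup A_\mu(y)$; all others are not reported.
\end{enumerate}

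\textbf{Correctness.} The main obstacle is proving that this four-case rule is exact. Fix $u\in\cDup(\mu)$ and a $u,s$-mincut $U$ splitting $\mu$. By \Cref{lem:mu-mincuts}, $W=U\cap\cDdown(\mu)$ is a $\mu$-mincut with valid partition $\mu\cap U,\mu\cap\overline U$; let $C$ be its canonical cut. Then $U$ separates $z_x,z_y$ iff $W$ does, which forces $C$ to hit $\pi_\mu(e)$ by \Cref{thm:proj-edge}. This gives the necessity direction.

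For sufficiency I argue by the position of the edge of $\pi_\mu(e)$ that $C$ hits.
\begin{itemize}
\item \textbf{Middle edge.} $C$ keeps both $\pi_\mu(z_x)$ and $\pi_\mu(z_y)$ entirely on opposite sides. The last item of \Cref{thm:proj-vertex} then forces every such $\mu$-mincut, hence $U$ itself, to separate $z_x$ from $z_y$.
\item \textbf{Wing edge of $\pi_\mu(z_x)$ only.} $z_y$ is forced to the correct side. Separation then holds iff some such $U$ places $z_x$ with its matched terminal set, which is exactly membership in $A_\mu(x)$.
\item \textbf{Overlap edge.} Both memberships are needed. I must show they can be realized simultaneously by one cut; I plan to do this with an uncrossing argument, taking unions and intersections of the two witnessing cuts via \Cref{lem:sub-posi-general} together with \Cref{lem:containment-of-associated-terminals}.
\end{itemize}

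The delicate points I expect are twofold. First, I must handle the situation where $u$ separates edges from several classes with different witnesses; I will choose $U$ as the farthest cut consistent with the constraints, as in \Cref{lem:rep-cuts}. Second, I must verify that the witnesses produced by $\RepInf$ indeed split $\mu$ whenever $u$ is counted.
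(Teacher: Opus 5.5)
Your data structure and query algorithm coincide with the paper's. Merging prefix and suffix into one ``wing'' class is harmless, since Step~4 treats them identically. Do keep $\mu_0(z),\mu_1(z)$ implicit through the stored endpoints of $\pi_\mu(z)$ rather than as explicit sets, or the space bound fails. Your middle and wing arguments are essentially Lemma~4a and \Cref{lem:4c}. In the necessity step, pass from $z_x,z_y$ back to $x,y$ via \Cref{lem:translation-DAG}(2) before invoking \Cref{thm:proj-edge}, as \Cref{lem:finiding-the-separated-edge} does.

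\textbf{The gap is the overlap case, \Cref{lem:4b}.} You plan to realize both memberships by one cut, by uncrossing the witness $U_1$ for $u\in A_\mu(x)$ with the witness $U_2$ for $u\in A_\mu(y)$. This cannot work. $U_1$ constrains only $z_x$ and $U_2$ only $z_y$. If neither separates $z_x$ from $z_y$, then $z_x,z_y$ have equal membership in each, hence in every union, intersection or difference, so no uncrossing gives a separating cut.

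This does occur. Take $V=\{s,a,b,x,y\}$ with multiplicities $2$ on $ax$, $1$ on $ay$, $2$ on $as$, $1$ on $xy$, $1$ on $xb$, $2$ on $yb$, $3$ on $bs$. Then $\mu=\{a,b\}$ has children $\{x\},\{y\}$ in $\cD$, $\lambda_\mu=c(\cDdown(\mu))=5$, $e=\{x,y\}$, and $\pi_\mu(x)=\pi_\mu(y)=\pi_\mu(e)$ is the single skeleton edge. For $u=a$, the $(a,s)$-mincuts splitting $\mu$ are $\{a\}\subset\{a,x\}\subset\{a,x,y\}$. Here $\{a,x,y\}$ witnesses $u\in A_\mu(x)$ and $\{a\}$ witnesses $u\in A_\mu(y)$, yet only $\{a,x\}$ separates $x$ from $y$. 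The edge $e$ itself must enter the argument. Choosing farthest witnesses does not supply this, and checking that witnesses split $\mu$ is automatic, so neither of your two ``delicate points'' addresses the real difficulty.

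The missing ingredient is the paper's auxiliary cut $W$. Fix an overlap edge $g$ and a canonical cut through it. Since $g\in\pi_\mu(e)$, some $\mu$-mincut $W$ of that valid partition has $e$ contributing, with $x\in W$, $y\notin W$ by unidirectionality. Since $c(W)=\lambda_\mu=c(\cDdown(\mu))$, $W$ is also a $w,s$-mincut for some $w\in\mu$. So \Cref{lem:translation-DAG}(1) gives $z_x\in W$, $z_y\notin W$. With \Cref{obs:terminal-partition} (where $\mu_i(z_x),\mu_j(z_y)$ are the sets matched to the $x$- and $y$-endpoints), we get $\{z_x\}\cup\mu_i(z_x)\cup\mu_{1-j}(z_y)\subseteq W$ and $\{z_y\}\cup\mu_{1-i}(z_x)\cup\mu_j(z_y)\subseteq\overline W$. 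One then splits on where $u,s$ lie relative to $W$:
\begin{itemize}
\item If $W$ separates them, $W$ is itself a $u,s$-mincut, because $c(\Far(u,s))\le c(W)=c(\Far(w,s))\le c(\Far(u,s))$, using $w\in\cDdown(u)$.
\item If $u,s\in W$, orient $U_1$ so that $\{z_x\}\cup\mu_i(z_x)\subseteq U_1$. Then $W\cap U_1$ is the desired cut, by \Cref{lem:sub-posi-general} applied to the $u,s$-cut $W\cap U_1$ and the $\mu$-cut $W\cup U_1$.
\item If $u,s\notin W$, argue symmetrically with $\overline W$ and $U_2$.
\end{itemize}
So each case uses at most one membership; no simultaneous realization is needed.

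The converse, which you only assert (``both memberships are needed''), also requires $W$. The cut from \Cref{lem:finiding-the-separated-edge} may cross a wing edge, and then it witnesses only one membership. The paper obtains $u\in A_\mu(x)$ according to the same three cases: from $W$ itself, from $W\cap U$ with $U$ that cut, or from $W\cup U'$ with $U'$ a witness for an overlap edge that $u$ separates. The case for $A_\mu(y)$ is symmetric. The last case is exactly where the hypothesis that $u$ separates an overlap edge is used.
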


Our first step is to formally define the $A_\mu (\cdot)$ sets, which are crucial to our query algorithm, as mentioned in the overview (\Cref{sec:general-case-overview}).
\begin{definition}\label{def:A-sets}
    Let $\mu$ be a node in $\cD$, and let $z \in V$ be a stretched vertex w.r.t.\ $\mu$.
    Let $a_0 (z)$ and $a_1 (z)$ be the endpoints of the projection path $\pi_\mu (z)$ in $\cH_\mu$, where $a_0$ has smaller identifier than $a_1$.
    Let $g_0 (z), g_1 (z)$ be the edges of $\pi_\mu (z)$ incident to $a_0, a_1$ respectively.
    Let $\mu_0 (z) := \mu(a_0 (z), g_0 (z))$ and $\mu_1 (z) := \mu(a_1 (z), g_1 (z))$ as defined in~\Cref{def:associated-terminals}.
    We then define:
    \begin{align*}
        A_\mu^{(0)} (z) &:= \{ u \in \cDup(\mu) \mid \text{$u$ separates $\mu_0 (z) \cup \{z\}, \mu_1 (z)$} \}, \\
        A_\mu^{(1)} (z) &:= \{ u \in \cDup(\mu) \mid \text{$u$ separates $\mu_0 (z), \mu_1 (z) \cup \{z\}$} \}.
    \end{align*}
\end{definition}

\subsection{Construction}

We now formally describe the information stored by our data structure for Query~\ref{query-Q1}.

First, we store the $O(n)$ global data structure for the representative framework of~\Cref{thm:blackbox}, which consists of the DAG $\cD$ and the skeletons $\cH_\mu$ for each node $\mu$ of $\cD$.
We augment each skeleton edge $g = \{a,b\}$ in $\cH_\mu$ with its $O(1)$ representative information $\RepInf(g) := \RepInf(\mu(a,g),\mu(b,g))$ (see~\Cref{def:associated-terminals} for the definition of $\mu(a,g)$, and $\mu(b,g)$).

Next, for each node $\mu$ in $\cD(\mu)$, we store the values $\lambda_\mu$ and $c(\cDdown(\mu))$, and the following $O(1)$ information for every $z \in \{z_\nu \mid \text{$\nu$ child of $\mu$}\} \cup \mu$, consumes a total of $O(|\mu| + \operatorname{outdeg}_\cD (\mu))$ space for $\mu$:
\begin{itemize}
    \item The $\mu$-equivalence class number $\phi_\mu (z)$, 
    and the $\sigma_\mu (z)$ number of~\Cref{lem:sigma-numbers}.
    \item The endpoints $a_0 (z)$ and $a_1 (z)$ of the projection $\pi_\mu (z)$ (see~\Cref{def:A-sets}).
    Note that these endpoints are distinct if $z$ is stretched, and equal otherwise.
    \item 
    The representative information for $A_\mu^{(0)}(z)$ and to $A_\mu^{(1)}(z)$, namely $\RepInf(\mu_0(z) \cup \{z\}, \mu_1 (z))$ and $\RepInf(\mu_0(z), \mu_1(z) \cup \{z\})$.
    These are defined only if $z$ is stretched (see~\Cref{def:A-sets}).
\end{itemize}

As the nodes of $\cD$ partition the vertices in $V$, and because $\sum_\mu \operatorname{outdeg}_\cD (\mu)$ counts each of the $O(n)$ edges of $\cD$ exactly once, the total space occupied by the data structure is $O(n)$.

\begin{figure}
  \begin{center}
    \includegraphics[width=0.8\textwidth]{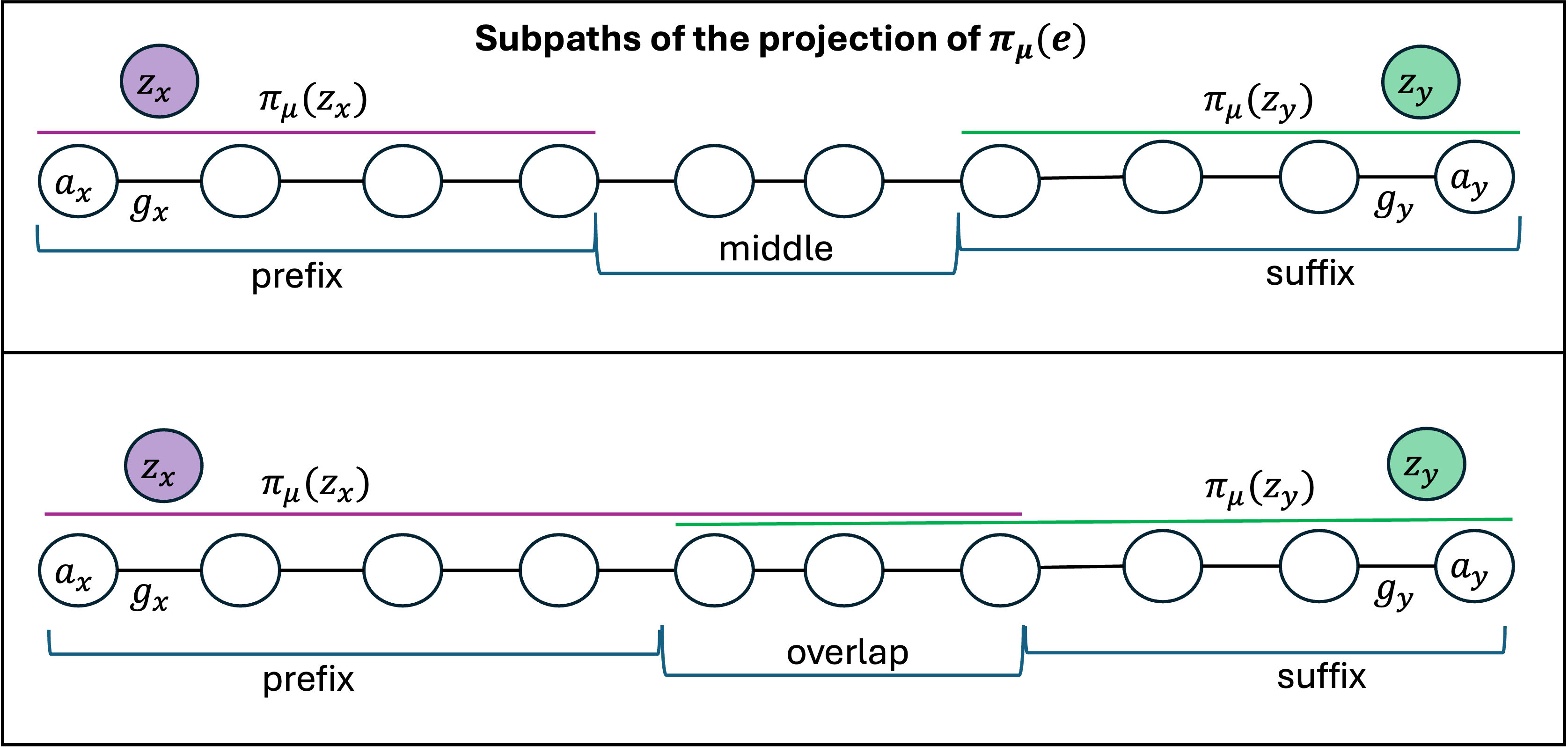}
  \end{center}
  \caption{Partitioning the edges of $\pi_\mu(e)$ into four subpaths for answering queries.} 
  \label{fig : subpaths of projection of an edge}
\end{figure}

\subsection{Query Algorithm}\label{sec:query-alg-Q1}\label{sec:splitting-mu}
We now give the algorithm to answer Query~\ref{query-Q1}, given edge $e = \{x,y\} \in E$ and LCA $\mu$ of $\cD(x),\cD(y)$.




\begin{description}
    \item[Step 0:]
    Find the anchors $z_x,z_y$ of $x,y$ w.r.t.\ $\mu$, which takes $O(n)$ time by searching $\cD$.
    Note that by their definition (\Cref{def: anchors}), $z_x,z_y \in \{z_\nu \mid \text{$\nu$ child of $\mu$}\} \cup \mu$ .

    \item[Step 1:]
    If $\lambda_\mu \neq c(\cDdown(\mu))$ or $\phi_\mu (z_x) = \phi_\mu (z_y)$:
    Halt without reporting any vertex.

    \item[Step 2:]
    Use the stored endpoints of $\pi_\mu (z_x)$ and $\pi_\mu (z_y)$ as the unique proper paths between their endpoints (or, if $z_x$ and/or $z_y$ are non-stretched, the projections are nodes) in $\cH_\mu$.
    Next, compute $\pi_\mu (e)$ as the unique proper path that starts with $\pi_\mu (z_x)$ and ends with $\pi_\mu (z_y)$; this path indeed exists and is equal to $\pi_\mu (e)$ by~\Cref{cor:projection-of-edge-skeleton-translation}, since we did not halt in Step 1.
    These computations on $\cH_\mu$ take $O(|\mu|) \leq O(n)$ time, and let us to determine the edges in each of the following subpaths of $\pi_\mu (e)$ (see~\Cref{fig : subpaths of projection of an edge}):
    \begin{enumerate}[label=(\roman*)]
        \item 
        middle part $\pi_\mu (e) \setminus (\pi_\mu (z_x) \cup \pi_\mu (z_y))$,
        \item 
        overlap $\pi_\mu (z_x) \cap \pi_\mu (z_y)$,
        \item 
        prefix $\pi_\mu (z_x) \setminus \pi_\mu (z_y)$,
        \item 
        suffix $\pi_\mu (z_y) \setminus \pi_\mu (z_x)$.
    \end{enumerate}
    
    For each $P$ among (i)-(iv), find the set $\{ u \in \cDup(\mu) \mid \text{$u$ separates some edge $g$ of $P$}\}$ by applying the representative framework from~\Cref{thm:blackbox} with $\{\RepInf(g)\}_{g \in P}$.
    As $P$ has at most $O(|\mu|) \leq O(n)$ edges, this takes  $O(n)$ time by~\Cref{thm:blackbox}.
    
    \item[Step 3:]
    Determine the $x$-endpoint $a_x$ and the $y$-endpoint $a_y$ of $\pi_\mu (e)$, and hence also the edges $g_x$ and $g_y$ of $\pi_\mu(e)$ adjacent to $a_x$ and to $a_y$ respectively (where possibly $g_x = g_y$), as follows.

    \begin{itemize}
        \item If $\pi_\mu (z_x) \neq \pi_\mu (y)$, then either the $x$-endpoint lies uniquely on the prefix (iii), or the $y$-endpoint lies uniquely on the suffix (iv), so we are done.

        \item If $\pi_\mu (z_x) = \pi_\mu (y)$, then compare $\sigma_\mu (z_x)$ and $\sigma_\mu (z_y)$.
        If $\sigma_\mu (z_x) < \sigma_\mu (z_y)$, then the $x$-endpoint of $\pi_\mu (e)$ is the one with smaller identifier; otherwise it is the one with larger identifier.
        The correctness of this step is proved in~\Cref{lem:identifying-x-endpoint}.
    \end{itemize}

    If $z_x$ is stretched, let $i \in \{0,1\}$ be such that $a_x = a_i (z_x)$, and use $\RepInf(\mu_i (z_x) \cup \{x\}, \mu_{1-i} (z_x))$ with the representative framework of~\Cref{thm:blackbox} to compute $A_\mu (z_x) := A_\mu^{(i)} (z_x)$ within $O(n)$ time.
    If $z_x$ is non-stretched, let $A_\mu (z_x) := \emptyset$.
    Similarly, if $z_y$ is stretched, let $j \in \{0,1\}$ be such that $a_y = a_j (z_y)$, and compute $A_\mu (z_y) := A_\mu^{(j)}(z_y)$ is a similar fashion, and if $z_y$ is non-stretched then $A_\mu (z_y) := \emptyset$.
    This takes $O(n)$ time by~\Cref{thm:blackbox}.
        
    \item[Step 4:]
    Find all vertices $u \in \cDup(\mu)$ within $O(n)$ time by searching $\cD$.
    Then, for each $u \in \cDup(\mu)$, decide within $O(1)$ time whether to report $u$ as part of the output for Query~\ref{query-Q1} as follows:
    \begin{enumerate}[label=(\alph*)]
        \item If $u$ separates an edge of the middle part (i): Report $u$.

        \item Else, if $u$ separates an edge of the overlap (ii): Report $u$ iff $u \in A_\mu (z_x) \cap A_\mu (z_y)$.

        \item Else, if $u$ separates an edge of the prefix (iii) or the suffix (iv): Report $u$ iff $u \in A_\mu (z_x) \cup A_\mu (z_y)$.

        \item Else: Do not report $u$.
    \end{enumerate}   
\end{description}
Thus, our data structure answers Query~\ref{query-Q1} within $O(n)$ time.

\subsection{Correctness}
We first need to prove that Step 1 is correct:
\begin{lemma}
    If $\lambda_\mu \neq c(\cDdown(\mu))$ or $\phi_\mu (z_x) = \phi_\mu (z_y)$, then for every vertex $u$, there is no $u,s$-mincut that splits $\mu$ and separates $z_x,z_y$.
\end{lemma}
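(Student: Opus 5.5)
I will argue by contrapositive: suppose some vertex $u$ has a $u,s$-mincut $U$ that splits $\mu$ and separates $z_x,z_y$. I will show both $\lambda_\mu = c(\cDdown(\mu))$ and $\phi_\mu(z_x)\neq\phi_\mu(z_y)$. The first conclusion is immediate from the bridge \Cref{lem:mu-mincuts}. Since $U$ is a $u,s$-mincut splitting the node $\mu$, that lemma gives that $W:=U\cap\cDdown(\mu)$ is a $\mu$-mincut. It also gives that its capacity satisfies $\lambda_\mu = c(\cDdown(\mu))$. Note that \Cref{lem:mu-mincuts} does not require $u\in\cDup(\mu)$, so this handles every vertex $u$ as the statement demands.

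For the second conclusion, I will use the same $\mu$-mincut $W$ and show it keeps $z_x$ and $z_y$ on different sides. Then $z_x,z_y$ are not $\mu$-equivalent, by the definition of $\phi_\mu$. By \Cref{def: anchors}, each of $z_x,z_y$ lies either in $\mu$ or in a child of $\mu$, so $z_x,z_y\in\cDdown(\mu)$. Hence intersecting $U$ with $\cDdown(\mu)$ does not change the membership of $z_x$ or $z_y$. Concretely, if $z_x\in U$ and $z_y\notin U$, then $z_x\in W$ and $z_y\in\cDdown(\mu)\setminus W\subseteq\overline{W}$; the case $z_y\in U$, $z_x\notin U$ is symmetric. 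Thus $W$ is a $\mu$-mincut separating $z_x$ from $z_y$, so $\phi_\mu(z_x)\neq\phi_\mu(z_y)$.

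The only point needing care is the reading of ``separates $z_x,z_y$''. It should mean that the two anchors lie on different sides of $U$, in either order; the complement $\overline{U}$ is also fine, since $\mu$-mincuts are unordered. It is also worth checking that nothing requires $s\notin\cDdown(\mu)$. If $\mu=\{s\}$, then $\mu$ cannot be split, so the hypothesis is vacuous in that case. Hence no real obstacle arises: the proof is a direct two-line application of \Cref{lem:mu-mincuts} together with the containment $z_x,z_y\in\cDdown(\mu)$. If one prefers the original endpoints instead, one can alternatively route through \Cref{lem:translation-DAG}, but that is unnecessary here.
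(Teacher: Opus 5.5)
Your proof is correct and follows essentially the same argument as the paper: apply \Cref{lem:mu-mincuts} to get that $W = U\cap\cDdown(\mu)$ is a $\mu$-mincut with $\lambda_\mu = c(\cDdown(\mu))$, then use $z_x,z_y\in\cDdown(\mu)$ to conclude that $W$ still separates them. The one detail the paper states explicitly and you leave implicit is the orientation: \Cref{lem:mu-mincuts} is stated for $(u,s)$-mincuts, so you should first replace $U$ by $\overline{U}$ if necessary to ensure $u\in U$ and $s\notin U$.
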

\begin{proof}
    Suppose there is such a $u,s$-mincut $U$. Without loss of generality, say $u \in U$.
    Then, by~\Cref{lem:mu-mincuts}, $W := U \cap \cDdown(\mu)$ is a $\mu$-mincut, and we have that $\lambda_\mu = c(\cDdown(\mu))$.
    We also see that $\phi_\mu (z_x) \neq \phi_\mu (z_y)$, i.e. that $z_x, z_y$ are separated by some $\mu$-mincut:
    Indeed, as $z_x,z_y$ both belong to $\cDdown(\mu)$ and are separated by $U$, they are also separated by $W$.
\end{proof}
So, for the rest of the correctness proof (i.e., in all of the following lemmas), we will implicitly assume that the query algorithm did not stop in Step 1, i.e., that $\lambda_\mu = c(\cDdown(\mu))$ and $\phi_\mu (z_x) \neq \phi_\mu (z_y)$.
Thus, \Cref{cor:projection-of-edge-skeleton-translation} is assumed to apply: $\pi_\mu(e)$ is defined, and has $\pi_\mu (z_x)$ as a prefix and $\pi_\mu (z_y)$ as a suffix in the direction from the $x$-endpoint $a_x$ to its $y$-endpoint $a_y$.
Next, we need to prove that Step 3 correctly identifies the $x$-endpoint and the $y$-endpoint.
This is clear in case $\pi_\mu (z_x) \neq \pi_\mu (z_y)$.
For the other case, this is due to the properties of the $\sigma_\mu (\cdot)$ numbers from~\Cref{lem:sigma-numbers}:

\begin{lemma}\label{lem:identifying-x-endpoint}
    Suppose $\pi_\mu (z_x) = \pi_\mu (z_y)$.
    Then, the $x$-endpoint of $\pi_\mu (e)$ is the one with smaller identifier if and only if $\sigma_\mu (z_x) < \sigma_\mu (z_y)$.
\end{lemma}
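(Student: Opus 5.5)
Write $P := \pi_\mu(z_x) = \pi_\mu(z_y)$ with endpoints $a$ (smaller identifier) and $b$. By \Cref{cor:projection-of-edge-skeleton-translation}, $\pi_\mu(e)$ has $P$ as both a prefix and a suffix, so $\pi_\mu(e) = P$ and $\{a_x, a_y\} = \{a,b\}$. We are also in the situation of Step 1 not halting, so $\phi_\mu(z_x) \neq \phi_\mu(z_y)$. First, if $P$ were a single node, then $z_x, z_y$ would be non-stretched with the same projection node. Then every canonical cut keeps that node on one side, and by the last item of \Cref{thm:proj-vertex} every $\mu$-mincut keeps $z_x,z_y$ together, contradicting $\phi_\mu(z_x)\neq\phi_\mu(z_y)$. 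Hence $a\neq b$, and \Cref{lem:sigma-numbers} applies to $z_1,z_2\in\{z_x,z_y\}$. In particular $\sigma_\mu(z_x)\neq\sigma_\mu(z_y)$, so it suffices to prove one implication together with its mirror image.

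\emph{Case $\sigma_\mu(z_x)<\sigma_\mu(z_y)$.} By \Cref{lem:sigma-numbers} (with $z_1=z_x$, $z_2=z_y$) there is a $\mu$-mincut $W$ with $z_x\in W$, $z_y\notin W$, and a canonical cut $C$ keeping $a$ with $S_1:=\mu\cap W$ and $b$ with $S_2:=\mu\setminus W$. So $S_1,S_2$ is the valid partition corresponding to $C$, and $W$ is an $(S_1,S_2)$-mincut separating $z_x$ from $z_y$ in that orientation. By property \ref{prop:P1}, which was verified in the proof of \Cref{cor:projection-of-edge-skeleton-translation} using \Cref{lem:translation-DAG}, there exists an $(S_1,S_2)$-mincut $U'$ with $x\in U'$ and $y\notin U'$; in particular $e$ contributes to $U'$.

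Suppose, for contradiction, that $a_x=b$ and $a_y=a$. Then $C$ keeps the $x$-endpoint with $S_2$ and the $y$-endpoint with $S_1$. Applying the unidirectionality item of \Cref{thm:proj-edge} with the roles of the two parts swapped, every $S_1,S_2$-mincut to which $e$ contributes keeps $x$ with $S_2$ and $y$ with $S_1$. This contradicts $U'$. Hence $a_x=a$, the smaller-identifier endpoint.

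\emph{Case $\sigma_\mu(z_y)<\sigma_\mu(z_x)$.} The symmetric argument (with $z_1=z_y$, $z_2=z_x$) yields a canonical cut keeping $a$ with the part containing $z_y$. The same reasoning then forces $a_y=a$, so $a_x=b$.

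The main obstacle is the orientation bookkeeping: correctly lifting the anchor-level separation given by $W$ to an edge-level $(S_1,S_2)$-mincut through \ref{prop:P1}, and applying unidirectionality in the right direction. A secondary point is ruling out the degenerate single-node case, so that \Cref{lem:sigma-numbers} genuinely applies.
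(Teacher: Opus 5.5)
Your proof is correct and follows essentially the same route as the paper. You take $W$ and the canonical cut from \Cref{lem:sigma-numbers}, lift the separation of $z_x,z_y$ to an $(S_1,S_2)$-mincut separating $x$ from $y$, and conclude by unidirectionality, with the other direction by symmetry. For the lifting you cite \ref{prop:P1}, whereas the paper re-derives it directly via \Cref{lem:translation-DAG} (the same argument). Your explicit checks that $\pi_\mu(e)=P$ and that $P$ is not a single node are left implicit in the paper, and you handle them correctly.
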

\begin{proof}
    ($\Leftarrow$) By~\Cref{lem:sigma-numbers}, there is a $\mu$-mincut $W$ with $z_x \in W$, $z_y \notin \overline{W}$ and a canonical cut $C$ of $\cH_\mu$ which keeps the endpoint of smaller identifier with $\mu \cap W$ and the one with larger identifier with $\mu \cap \overline{W}$.
    Choose any vertex $w \in \mu$ from the side without $s$ of the cut $W$.
    Then $c(W) = \lambda_\mu = c(\cDdown(\mu)) = c(\Far(w,s))$, so $W$ is a $w,s$-mincut.
    By~\Cref{lem:translation-DAG}, there exists some $w,s$-mincut $W'$ such that $x \in W'$, $y \in \overline{W'}$ and $W' \cap \mu = W \cap \mu$.
    Thus, $W'$ is a $\mu$-mincut inducing the same valid partition as $W$, and $e$ contributes to $W'$ with $x \in W'$ and $y \in \overline{W'}$.
    By the unidirectionality of the edge projection $\pi_\mu (e)$ (\Cref{thm:proj-edge}), this means that the smaller-identifier endpoint, which $C$ keeps with $W \cap \mu = W' \cap \mu$, must be the $x$-endpoint.

    ($\Rightarrow$) Suppose that $\sigma_\mu (z_x) \geq \sigma_\mu (y)$.
    Note that equality is impossible by~\Cref{lem:sigma-numbers}.
    So, by the exact same proof as the previous direction with $x,y$ swapped, the $y$-endpoint must have the smaller identifier.
\end{proof}

We now arrive at the main part of the correctness analysis: proving that Step 4 is correct.
From now on, we fix $u \in \cDup(\mu)$, and our goal is to show that our query algorithm determines whether to report $u$, namely, $u$ is reported precisely when there exists a $u,s$-mincut which separates $z_x,z_y$ and splits $\mu$.

Before treating each substep separately, we show a ``helper lemma'', useful for several of them.

\begin{lemma}\label{lem:finiding-the-separated-edge}
    Suppose there exists a $u,s$-mincut which splits $\mu$ and separates $z_x,z_y$.
    Then, there exists a $u,s$-mincut $U$ and an edge $g = \{a,b\}$ of $\pi_\mu (e)$ such that the following hold:
    \begin{itemize}
        \item Node $a$ (resp., $b$) is the endpoint of $g$ closer to the $x$-endpoint $a_x$ (resp., the $y$-endpoint $a_y$) of $\pi_\mu (e)$.

        \item The cut $U$ keeps $\{z_x\} \cup \mu(a,g)$ on one side and $\{z_y\} \cup \mu(b,g)$ on the other.
    \end{itemize}
\end{lemma}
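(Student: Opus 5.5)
Let $U_0$ be a $u,s$-mincut that splits $\mu$ and separates $z_x,z_y$. Replacing $U_0$ by $\overline{U_0}$ if necessary (it is still a $u,s$-mincut), assume $z_x \in U_0$ and $z_y \notin U_0$. The plan is to obtain $g$ from the edge projection $\pi_\mu(e)$ via the bridge lemma, and to use $U := U_0$ itself (or, if needed, the modified cut of \Cref{lem:translation-DAG}(2), which leaves $U_0\cap\mu$ unchanged).

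\textbf{Step 1: a valid partition distinguishing $e$.} The vertex $s$ is outside $\cDdown(\mu)$. If $u \in U_0$, set $W := U_0 \cap \cDdown(\mu)$; otherwise set $W := \overline{U_0} \cap \cDdown(\mu)$. In both cases \Cref{lem:mu-mincuts} shows that $W$ is a $\mu$-mincut. Its valid partition is $S_1 := \mu\cap U_0$, $S_2 := \mu\setminus U_0$.

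By \Cref{lem:translation-DAG}(2), there is a $u,s$-mincut $U'$ with $x\in U'$, $y\notin U'$ and $U'\cap\mu = U_0\cap\mu$. Intersecting $U'$ (or its complement) with $\cDdown(\mu)$ again gives, by \Cref{lem:mu-mincuts}, an $S_1,S_2$-mincut. The edge $e$ contributes to it, with $x$ on the $S_1$ side. So $S_1,S_2$ distinguishes $e$.

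By \Cref{thm:proj-edge}, the canonical cut $C$ corresponding to $S_1,S_2$ contains an edge $g=\{a,b\}$ of $\pi_\mu(e)$. Orient $g$ so that $a$ is the endpoint closer to $a_x$.

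\textbf{Step 2: which side is which.} I will argue that $C$ keeps $a_x$ on the $S_1$ side and $a_y$ on the $S_2$ side. Suppose instead that $a_x$ is with $S_2$ and $a_y$ is with $S_1$. Then unidirectionality in \Cref{thm:proj-edge}, applied to the $S_2,S_1$-mincut induced by $U'$ to which $e$ contributes, forces $x$ to lie with $S_2$. This contradicts $x \in U'$, which puts $x$ with $S_1$.

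\textbf{Step 3: the associated terminals.} By \Cref{def:associated-terminals}, a canonical cut containing $g$ separates $\mu(a,g)$ from $\mu(b,g)$. Step 2 says the side of $C$ containing $a_x$ is the $S_1$ side. The remaining point is that $\mu(a,g)$ also lies on the side of $a$, and hence on the $a_x$ side.
\begin{itemize}
\item For a tree-edge $g$, $C=\{g\}$, and this is immediate.
\item For a cycle-edge $g$, $C$ consists of $g$ together with a second edge of the same cycle. Since $\pi_\mu(e)$ is proper, it uses only $g$ from that cycle. By the nesting in \Cref{lem:containment-of-associated-terminals}, the path from $a$ back to $a_x$ leaves the cycle through $a$ or through the neighbour $a'$ of $a$. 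Both of these lie on the $a$ side of $C$, and that side contains the terminals $\mu(a,g)$.
\end{itemize}

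\textbf{Conclusion and main obstacle.} Combining the steps, $\mu(a,g)\subseteq S_1\subseteq U_0$ and $\mu(b,g)\subseteq S_2\subseteq\overline{U_0}$. Together with $z_x\in U_0$ and $z_y\notin U_0$, taking $U:=U_0$ gives the statement.

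I expect the cycle-edge bookkeeping in Step 3 to be the main obstacle, especially when $C$'s second cycle edge is adjacent to $a$. If that case resists, I would use freedom in choosing $g$. Among the edges of $\pi_\mu(e)$ lying in $C$, at most one comes from any cycle (since the path is proper). I would pick $g$ to make $a'$ fall on the correct side, or fall back on \Cref{lem:associated-terminals}, which characterises exactly which canonical cuts separate the associated sets of a path.
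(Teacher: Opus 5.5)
Your proof is correct and follows the paper's argument. Both use \Cref{lem:translation-DAG}(2) to get a $u,s$-mincut with the same trace on $\mu$ that puts $x$ with $z_x$ and $y$ with $z_y$, then \Cref{lem:mu-mincuts} for the $\mu$-mincut, \Cref{thm:proj-edge} for $g\in C\cap\pi_\mu(e)$, and unidirectionality for the orientation. The paper returns the modified cut as $U$, while you return $U_0$; this is equally valid since both have the same intersection with $\mu$. Your Step 3 makes explicit a point the paper leaves implicit, and the case you fear cannot occur: the second edge of $C$ is non-adjacent to $g$, and properness of $\pi_\mu(e)$ alone (\Cref{lem:containment-of-associated-terminals} is not needed) places $a_x$, $a$, $a'$, and hence $\mu(a,g)$, on one side of $C$.
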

\begin{proof}
    By~\Cref{lem:translation-DAG}, there must also be a $(u,s)$-mincut $U$ which splits $\mu$ and keeps $\{x,z_x\}$ and $\{y,z_y\}$ on different sides.
    By~\Cref{lem:mu-mincuts}, $W := U \cap \cDdown(\mu)$ is a $\mu$-mincut, and it also separates $\{x,z_x\}$ from $\{y,z_y\}$ as $x,z_x,y,z_y \in \cDdown(\mu)$.
    Let $C$ be the canonical cut of $\cH_\mu$ corresponding to the valid partition $\mu \cap W, \mu \cap \overline{W}$.
    Since the latter distinguishes $e = \{x,y\}$, $C$ must have an $g$ from $\pi_\mu (e)$ by~\Cref{thm:proj-edge}.
    Let $a$ and $b$ be the endpoint of $g$ closer to $a_x$ or to $a_y$ on $\pi_\mu (e)$, respectively.
    By the unidirectionality property of $\pi_\mu (e)$ in~\Cref{thm:proj-edge}, $W$ must keep $\mu(a,g)$ with $x$ (and $z_x$), and $\mu(b,g)$ with $y$ (and $z_y$).
    So, $W$ separates $\{z_x\} \cup \mu(a,g)$ from $\{z_y\} \cup \mu(a,g)$.
    But these are both subsets of $\cDdown(\mu)$, hence this is also true for $U$.
\end{proof}

Our proofs also repeatedly use the following observations, which follow immediately from~\Cref{lem:containment-of-associated-terminals} and the definitions of $\mu_i (z_x), \mu_{1-i}(z_x), \mu_j (z_y), \mu_{1-j}(z_y)$.
\begin{observation} \label{obs:terminal-partition}
The following hold (see~\Cref{fig : terminals containments}):
\begin{itemize}
    \item If $\mu_0 (z_x), \mu_1 (z_x)$ are defined (i.e., $z_x$ is stretched), then $\mu_i (z_x) = \mu(a_x,g_x)$ and $\mu_{1-i} (z_x) \supseteq \mu(a_y, g_y)$.

    \item If $\mu_0 (z_y), \mu_1 (z_y)$ are defined (i.e., $z_y$ is stretched), then $\mu_j (z_y) = \mu(a_y, g_y)$ and $\mu_{1-j} (z_y) \supseteq \mu(a_x, g_x)$.

    \item Let $g = \{a,b\}$ be an edge in $\pi_\mu (e)$, where $a$ is closer to the $x$-endpoint $a_x$ and $b$ is closer to the $y$-endpoint $a_y$ on $\pi_\mu (e)$.
    Then $\mu(a,g) \supseteq \mu(a_x, g_x) = \mu_i (z_x)$ and $\mu(b,g) \supseteq \mu(a_y, g_y) = \mu_j (z_y)$ (where the last equalities hold whenever $\mu_i (z_x)$ and $\mu_j (z_y)$ are defined).
    Furthermore:
    \begin{itemize}
        \item If $g$ is in $\pi_\mu (z_x)$, then $\mu(g,a) \supseteq \mu_i (z_x)$ and $\mu(g,b) \supseteq \mu_{1-i}(z_x)$.

        \item If $g$ is in $\pi_\mu (z_y)$, then $\mu(g,b) \supseteq \mu_j (z_y)$ and $\mu(a,g) \supseteq \mu_{j-1}(z_y)$.
    \end{itemize}
\end{itemize}
\end{observation}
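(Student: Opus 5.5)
The plan is to derive each item directly from the chain of containments in \Cref{lem:containment-of-associated-terminals}, applied to suitable proper paths: $\pi_\mu(e)$ itself, and $\pi_\mu(z_x)$, $\pi_\mu(z_y)$, which are subpaths of it. By \Cref{cor:projection-of-edge-skeleton-translation}, $\pi_\mu(z_x)$ is a prefix of $\pi_\mu(e)$ and $\pi_\mu(z_y)$ is a suffix, oriented from $a_x$ to $a_y$. Write $\pi_\mu(e)=(a_x=c_0,c_1,\dots,c_k=a_y)$ with edges $h_t=\{c_{t-1},c_t\}$, so that $h_1=g_x$ and $h_k=g_y$. Applying \Cref{lem:containment-of-associated-terminals} to this path gives $\mu(c_{t-1},h_t)\supseteq\mu(a_x,g_x)$ and $\mu(c_t,h_t)\supseteq\mu(a_y,g_y)$ for every $t$. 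This is exactly the first part of the third item: for $g=h_t$ we have $a=c_{t-1}$ and $b=c_t$.

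For the first item, suppose $z_x$ is stretched. Then $\pi_\mu(z_x)=(c_0,\dots,c_r)$ for some $r\ge1$, and its endpoints are $a_x=c_0$ and $c_r$. The index $i$ was chosen in Step 3 so that $a_i(z_x)=a_x$. By \Cref{def:A-sets}, $g_i(z_x)$ is the edge of $\pi_\mu(z_x)$ incident to $a_x$, namely $h_1=g_x$. Hence $\mu_i(z_x)=\mu(a_x,g_x)$.

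Likewise $\mu_{1-i}(z_x)=\mu(c_r,h_r)$. Applying the lemma to $\pi_\mu(e)$ at index $r$ gives $\mu(c_r,h_r)\supseteq\mu(c_k,h_k)=\mu(a_y,g_y)$, as required. The second item is the mirror argument on the suffix $\pi_\mu(z_y)$, using the second containment chain.

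For the ``furthermore'' part, take $g=h_t$ lying in $\pi_\mu(z_x)$, so $t\le r$. Apply \Cref{lem:containment-of-associated-terminals} to the proper path $\pi_\mu(z_x)=(c_0,\dots,c_r)$; it is proper because it is a subpath of a proper path. This gives $\mu(c_{t-1},h_t)\supseteq\mu(c_0,h_1)=\mu_i(z_x)$ and $\mu(c_t,h_t)\supseteq\mu(c_r,h_r)=\mu_{1-i}(z_x)$. The case $g$ in $\pi_\mu(z_y)$ is symmetric: reverse the path and apply the lemma to the suffix, reading the statement's $\mu_{j-1}$ as $\mu_{1-j}$ and its $\mu(g,\cdot)$ as $\mu(\cdot,g)$.

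The only delicate point is the identification of the endpoint labels. We must check that the edge $g_i(z_x)$ of \Cref{def:A-sets} coincides with the first edge $g_x$ of $\pi_\mu(e)$. This requires the prefix property from \Cref{cor:projection-of-edge-skeleton-translation} together with the correctness of the choice of $a_x$ in Step 3, which \Cref{lem:identifying-x-endpoint} provides when the two projections coincide.

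There is also a degenerate case: $\pi_\mu(z_x)$ may be a single edge, so that $g_0=g_1$. There the associated sets are still well defined, because \Cref{def:associated-terminals} is stated per endpoint. With these checks in place, everything reduces to the monotonicity in \Cref{lem:containment-of-associated-terminals}.
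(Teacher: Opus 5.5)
Your proposal is correct and is essentially the paper's argument: the paper only says the observation follows immediately from \Cref{lem:containment-of-associated-terminals} and the definitions of the $\mu_i(\cdot)$ sets, and you make explicit how that lemma applies to $\pi_\mu(e)$ and to its prefix $\pi_\mu(z_x)$ and suffix $\pi_\mu(z_y)$ from \Cref{cor:projection-of-edge-skeleton-translation}, reading the typos $\mu(g,a)\to\mu(a,g)$ and $\mu_{j-1}\to\mu_{1-j}$ as intended. One small remark: $i$ and $j$ are defined through the true endpoints $a_x,a_y$, so the observation needs only the prefix/suffix property and not the algorithmic correctness of Step 3 (\Cref{lem:identifying-x-endpoint}).
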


\begin{figure}
  \begin{center}
    \includegraphics[width=\textwidth]{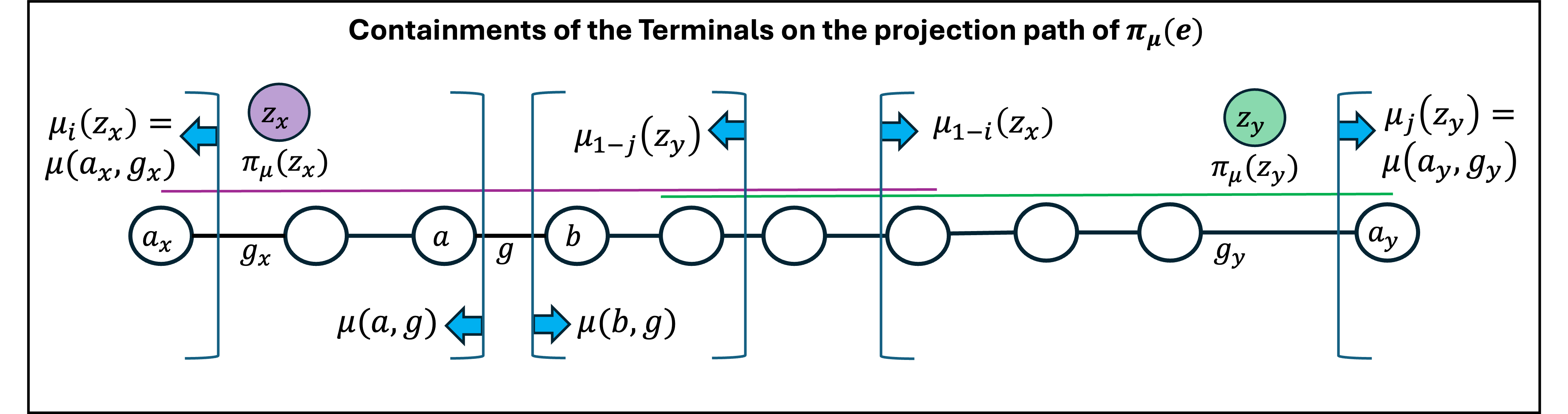}
  \end{center}
  \caption{Depicting \Cref{obs:terminal-partition}.} 
  \label{fig : terminals containments}
\end{figure}

Finally, the following four lemmas correspond to the correctness of each of the four substeps 4a-4b.

\begin{lemma}[4a]
    Suppose $u$ separates an edge of the middle part $\pi_\mu (e) \setminus (\pi_\mu (z_x) \cup \pi_\mu (z_y))$.
    Then, there exists a $u,s$-mincut which splits $\mu$ and separates $z_x,z_y$.
\end{lemma}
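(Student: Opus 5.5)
Let $g=\{a,b\}$ be an edge of the middle part that $u$ separates, with $a$ closer to the $x$-endpoint $a_x$ and $b$ closer to the $y$-endpoint $a_y$ of $\pi_\mu(e)$. By definition of separating $g$ and of $\RepInf(g)$, there is a $u,s$-mincut $U$ keeping $\mu(a,g)$ on one side and $\mu(b,g)$ on the other. Since both sets are nonempty subsets of $\mu$, $U$ splits $\mu$. It remains to show that $U$ separates $z_x,z_y$; we are free to replace $U$ by another $u,s$-mincut with the same trace on $\mu$.

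The plan is to pass to the $\mu$-level and use carcass projections. By~\Cref{lem:mu-mincuts}, $W:=U\cap\cDdown(\mu)$ (or $\overline{U}\cap\cDdown(\mu)$, depending on which side $u$ lies) is a $\mu$-mincut inducing the valid partition $\mu\cap U,\mu\cap\overline U$. Let $C$ be the corresponding canonical cut of $\cH_\mu$ (\Cref{thm:skeleton}). Since this partition separates $\mu(a,g)$ from $\mu(b,g)$, \Cref{lem:associated-terminals} applied to the one-edge proper path $g$ gives that $C$ contains $g$.

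Because $g$ lies in the middle part, $g\notin\pi_\mu(z_x)$, and $\pi_\mu(z_x)$ is a prefix of $\pi_\mu(e)$ from $a_x$ by~\Cref{cor:projection-of-edge-skeleton-translation}. Hence $\pi_\mu(z_x)$ lies within the subpath from $a_x$ to $a$. Symmetrically, $\pi_\mu(z_y)$ lies within the subpath from $b$ to $a_y$. Since $\pi_\mu(e)$ is proper and $C$ contains $g$, I claim that $C$ contains no other edge of $\pi_\mu(e)$. A canonical cut is either a tree-edge or two edges of one cycle, and a proper path shares at most one edge with any cycle. Thus $C$ keeps the whole subpath $a_x\ldots a$, and hence all of $\pi_\mu(z_x)$, on one side, and all of $\pi_\mu(z_y)$ on the other. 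This is the step I expect to need the most care: one must check that the cut's second cycle-edge cannot lie on the path, and handle the degenerate cases where $\pi_\mu(z_x)$ is a single node.

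Now the last item of~\Cref{thm:proj-vertex} applies. Every $\mu$-mincut with this valid partition keeps $z_x$ with the side of $\pi_\mu(z_x)$ and $z_y$ with the side of $\pi_\mu(z_y)$, so in particular $W$ separates $z_x$ from $z_y$. As $z_x,z_y\in\cDdown(\mu)$ and $W$ is the restriction of $U$ (or its complement) to $\cDdown(\mu)$, $U$ itself separates $z_x,z_y$. Therefore $U$ is the required $u,s$-mincut: it splits $\mu$ and separates $z_x,z_y$, and by~\Cref{lem:translation-DAG} this is equivalent to separating $x,y$.
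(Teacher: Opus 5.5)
Your proposal is correct and follows the paper's proof essentially step for step: take the $(u,s)$-mincut $U$ witnessing that $u$ separates $g$, pass to the $\mu$-mincut $U \cap \cDdown(\mu)$ via \Cref{lem:mu-mincuts}, use \Cref{lem:associated-terminals} to get a canonical cut $C \ni g$, and conclude with the last item of \Cref{thm:proj-vertex}. The one addition is that you spell out why $C$ keeps $\pi_\mu(z_x)$ and $\pi_\mu(z_y)$ on opposite sides: $C$ meets the proper path $\pi_\mu(e)$ only in $g$. The paper simply asserts this step, and your justification of it is sound.
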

\begin{proof}
    By the assumption, there is an edge $g = \{a,b\}$ in the middle part $\pi_\mu (e) \setminus (\pi_\mu (z_x) \cup \pi_\mu (z_y))$ and a $(u,s)$-mincut $U$ keeping $\mu(a,g)$ on one side and $\mu(b,g)$ on the other.
    So $U$ splits $\mu$, and we next prove it also separates $z_x,z_y$.
    By~\Cref{lem:mu-mincuts}, $W := U \cap \cDdown(\mu)$ is a $\mu$-mincut, so $\mu \cap W, \mu \cap \overline{W}$ is a valid partition keeping $\mu(a,g)$ and $\mu(b,g)$ on different sides.
    Thus, by~\Cref{lem:associated-terminals}, the canonical cut $C$ corresponding to this valid partition contains the edge $g$, so $C$ keeps all $\pi_\mu (z_x)$ on one side and all of $\pi_\mu (z_y)$ on the other side.
    Hence, by the properties of vertex projections in \Cref{thm:proj-vertex}, every $\mu \cap W, \mu \cap \overline{W}$-mincut must separate $z_x,z_y$.
    In particular, $W$ itself must separate $z_x, z_y$, and because $z_x, z_y \in \cDdown(\mu)$, so does $U$.
\end{proof}

\begin{lemma}[4b]\label{lem:4b}
    Suppose $u$ separates an edge of the overlap $\pi_\mu (z_x) \cap \pi_\mu (z_y)$.
    Then, it holds that $u \in A_\mu (z_x) \cap A_\mu (z_y)$ if and only if there exists a $u,s$-mincut which splits $\mu$ and separates $z_x,z_y$.
\end{lemma}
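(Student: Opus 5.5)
The proof has two directions. In both I would work with the canonical cuts of $\cH_\mu$ induced, via \Cref{lem:mu-mincuts}, by $u,s$-mincuts that split $\mu$. Since $g$ lies in the overlap, both $z_x$ and $z_y$ are stretched, so $A_\mu (z_x)=A^{(i)}_\mu(z_x)$ and $A_\mu (z_y)=A^{(j)}_\mu(z_y)$ are the nontrivial sets of \Cref{def:A-sets}. Throughout, write $\pi_\mu(e)$ as prefix, then overlap, then suffix, in the direction from $a_x$ to $a_y$ (\Cref{cor:projection-of-edge-skeleton-translation}).

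\emph{($\Rightarrow$).} Let $u\in A_\mu(z_x)\cap A_\mu(z_y)$, witnessed by a $u,s$-mincut $U_1$ and a $u,s$-mincut $U_2$.
\begin{itemize}
\item $U_1$ separates $\mu_i(z_x)\cup\{z_x\}$ from $\mu_{1-i}(z_x)$.
\item $U_2$ separates $\mu_j(z_y)\cup\{z_y\}$ from $\mu_{1-j}(z_y)$.
\end{itemize}
Each $U_k$ splits $\mu$. Take $U_1$ and its induced valid partition, with canonical cut $C_1$. By \Cref{obs:terminal-partition}, this partition separates $\mu(a_x,g_x)$ from $\mu(a_y,g_y)$. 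By \Cref{lem:associated-terminals} applied to $\pi_\mu(z_x)$, $C_1$ contains an edge of $\pi_\mu(z_x)$, and $z_x$ lies on the $a_x$ side.
\begin{itemize}
\item If that edge lies in the prefix $\pi_\mu(z_x)\setminus\pi_\mu(z_y)$, then $\pi_\mu(z_y)$ lies entirely on the $a_y$ side of $C_1$. By the last item of \Cref{thm:proj-vertex}, $U_1\cap\cDdown(\mu)$ keeps $z_y$ on the $a_y$ side, hence opposite $z_x$, and we are done.
\item The symmetric argument handles $U_2$ when its edge lies in the suffix.
\item Otherwise both canonical cuts cut the overlap, and a priori $U_1$ may keep $z_y$ with $z_x$ while $U_2$ may keep $z_x$ with $z_y$.
\end{itemize}
In this last case I would uncross $U_1$ and $U_2$. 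Orient $U_1$ so that $z_x\in U_1$ and $U_2$ so that $z_y\in U_2$. Then $U_1$ contains $\mu_i(z_x)$ and avoids $\mu_{1-i}(z_x)\supseteq\mu_j(z_y)$, while $U_2$ contains $\mu_j(z_y)$ and avoids $\mu_{1-j}(z_y)\supseteq\mu_i(z_x)$. Hence $U_1\setminus U_2\ni$ some terminal of $\mu_i(z_x)$, and $U_2\setminus U_1\ni$ some terminal of $\mu_j(z_y)$. Next I would case on which sides $u$ and $s$ lie with respect to $U_1$ and $U_2$.
\begin{itemize}
\item If $u,s$ are on the same relative sides of both cuts, use submodularity (\Cref{lem:sub-posi-general}) to pass to $U_1\cap\overline{U_2}$ or its complement.
\item If they are on crossed sides, use posimodularity to pass to $U_1\setminus U_2$ or $U_2\setminus U_1$.
\end{itemize}
In each case the target is a $u,s$-mincut that contains $z_x$ and not $z_y$ (or vice versa) while still containing terminals on both sides, so that it still splits $\mu$. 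When $z_y\in U_1$ and $z_x\in U_2$, the relevant set is $U_1\setminus U_2$, which contains $z_x$ but not $z_y$, up to checking where $u$ and $s$ fall.

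\emph{($\Leftarrow$).} Suppose some $u,s$-mincut splits $\mu$ and separates $z_x,z_y$. By \Cref{lem:finiding-the-separated-edge}, there is a $u,s$-mincut $U$ and an edge $g'=\{a,b\}$ of $\pi_\mu(e)$ such that $U$ separates $\{z_x\}\cup\mu(a,g')$ from $\{z_y\}\cup\mu(b,g')$. I want to show $u\in A_\mu(z_x)$; $A_\mu(z_y)$ is symmetric.
\begin{itemize}
\item By \Cref{obs:terminal-partition}, $\mu(a,g')\supseteq\mu_i(z_x)$. So the only remaining requirement is that $U$ keeps all of $\mu_{1-i}(z_x)$ on the other side.
\item If $g'$ lies in $\pi_\mu(z_x)$, this follows from the ``furthermore'' item of \Cref{obs:terminal-partition}, since then $\mu(b,g')\supseteq \mu_{1-i}(z_x)$.
\item If $g'$ lies in the suffix, this can fail for $U$ itself. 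Here I would use the hypothesis of the lemma: some $u,s$-mincut $U_g$ separates $\mu(a,g)$ from $\mu(b,g)$ for the overlap edge $g$, and $g$ lies in $\pi_\mu(z_x)$.
\end{itemize}
Uncross $U$ and $U_g$ by sub/posimodularity, oriented so that both contain the $a_x$-side terminals. The resulting $u,s$-mincut should contain $\mu_i(z_x)\cup\{z_x\}$ and avoid $\mu(b,g)\supseteq\mu_{1-i}(z_x)$.

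I expect the main obstacle to be the two uncrossing steps. The side bookkeeping is delicate: which of $u$ and $s$ lies in which cut decides whether submodularity or posimodularity applies. The combined cut must also remain a genuine $u,s$-cut and keep the needed terminal sets on the needed sides. I would organize this by first reducing, via \Cref{lem:mu-mincuts}, to statements about $\mu$-mincuts $W=U\cap\cDdown(\mu)$. For these I would use the nested structure of the associated terminal sets (\Cref{lem:containment-of-associated-terminals}) together with $\lambda_\mu=c(\cDdown(\mu))$. I would then lift back to $u,s$-mincuts by taking the union or intersection with $U$ outside $\cDdown(\mu)$, as in the proof of \Cref{lem:mu-mincuts}.
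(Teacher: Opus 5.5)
Your proof has a genuine gap in exactly the sub-cases that matter. In both directions you try to uncross the two cuts you already hold, and in the hard sub-case this is impossible.

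\textbf{The obstruction in ($\Rightarrow$).} With your orientation $z_x\in U_1$, $z_y\in U_2$, the remaining case is $z_y\in U_1$ and $z_x\in U_2$, i.e.\ $z_x,z_y\in U_1\cap U_2$. Then $z_x$ and $z_y$ have identical membership in $U_1$, in $U_2$, and in $\cDdown(\mu)$. So every set that sub/posimodularity can produce from them contains both or neither. This covers $U_1\cap U_2$, $U_1\cup U_2$, $U_1\setminus U_2$, $U_2\setminus U_1$, their complements, and their intersections with $\cDdown(\mu)$. In particular, your claim that $U_1\setminus U_2$ contains $z_x$ contradicts $z_x\in U_2$. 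Your closing plan of passing to $\mu$-mincuts $U\cap\cDdown(\mu)$ does not escape this, since it preserves the membership patterns. Note also that your ($\Rightarrow$) argument never really uses the hypothesis that $u$ separates an overlap edge.

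\textbf{The same obstruction in ($\Leftarrow$).} Suppose $g'$ lies in the suffix and the overlap witness has $z_x\notin U_g$. Then $z_x\in U\setminus U_g$. But so is every terminal of $\mu_{1-i}(z_x)\cap\mu(a,g')$, because $\mu_{1-i}(z_x)\subseteq\mu(b,g)\subseteq\overline{U_g}$ and $\mu(a,g')\subseteq U$. This set need not be empty. Hence no combination of $U$ and $U_g$ keeps $z_x$ with $\mu_i(z_x)$ while excluding $\mu_{1-i}(z_x)$.

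\textbf{The missing ingredient.} You need a third cut that separates $z_x$ from $z_y$ in the correct orientation relative to the terminals. The paper builds it from an overlap edge $g=\{a,b\}$ before splitting into directions:
\begin{itemize}
\item Take a canonical cut containing $g$. Since $g\in\pi_\mu(e)$, its valid partition distinguishes $e$, so some $\mu$-mincut $W$ has $e$ contributing.
\item Unidirectionality (\Cref{thm:proj-edge}) puts $x$ with $\mu(a,g)$ and $y$ with $\mu(b,g)$.
\item Because $c(W)=\lambda_\mu=c(\cDdown(\mu))$, $W$ is a $w,s$-mincut for any $w\in\mu$ on the side of $W$ away from $s$. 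So \Cref{lem:translation-DAG}(1) yields $z_x\in W$ and $z_y\notin W$.
\item Altogether, $\{z_x\}\cup\mu_i(z_x)\cup\mu_{1-j}(z_y)\subseteq W$ and $\{z_y\}\cup\mu_{1-i}(z_x)\cup\mu_j(z_y)\subseteq\overline{W}$.
\end{itemize}

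\textbf{How $W$ is used.} Both directions then split on where $u,s$ lie relative to $W$:
\begin{itemize}
\item If $W$ separates $u,s$, then $W$ itself is a $u,s$-mincut. Indeed, $\Far(u,s)\supseteq\mu$ is a $w,s$-cut, so $c(W)\le c(\Far(u,s))$.
\item If $u,s\in W$, intersect $W$ with the $A_\mu(z_x)$-witness in ($\Rightarrow$), or with the cut from \Cref{lem:finiding-the-separated-edge} in ($\Leftarrow$).
\item If $u,s\notin W$, intersect $\overline{W}$ with the $A_\mu(z_y)$-witness in ($\Rightarrow$). In ($\Leftarrow$), take $U_g\cup W$; this is where the assumption that $u$ separates an overlap edge is genuinely needed.
\end{itemize}
In the last two cases, submodularity (\Cref{lem:sub-posi-general}) is applied to a $u,s$-mincut against the $\mu$-mincut $W$, with the other resulting set being a $\mu$-cut. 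It is not applied to two $u,s$-mincuts against each other, which is what your proposal attempts.
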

\begin{proof}
    We will show the existence of $W$ which is both a $\mu$-mincut and a $w,s$-mincut for some $w \in \mu$, with
    \[
    \{z_x\} \cup \mu_i (z_x) \cup \mu_{1-j}(z_y) \subseteq W \quad \text{and} \quad \{z_y\} \cup \mu_{1-i}(z_x) \cup \mu_j (z_y) \subseteq \overline{W}.
    \]
    We defer this argument to the end of the proof, and start by showing how $W$ is used to prove both directions.
    See~\Cref{fig : forward direction} for an illustration of the proof.
\begin{figure}
  \begin{center}
    \includegraphics[width=.9\textwidth]{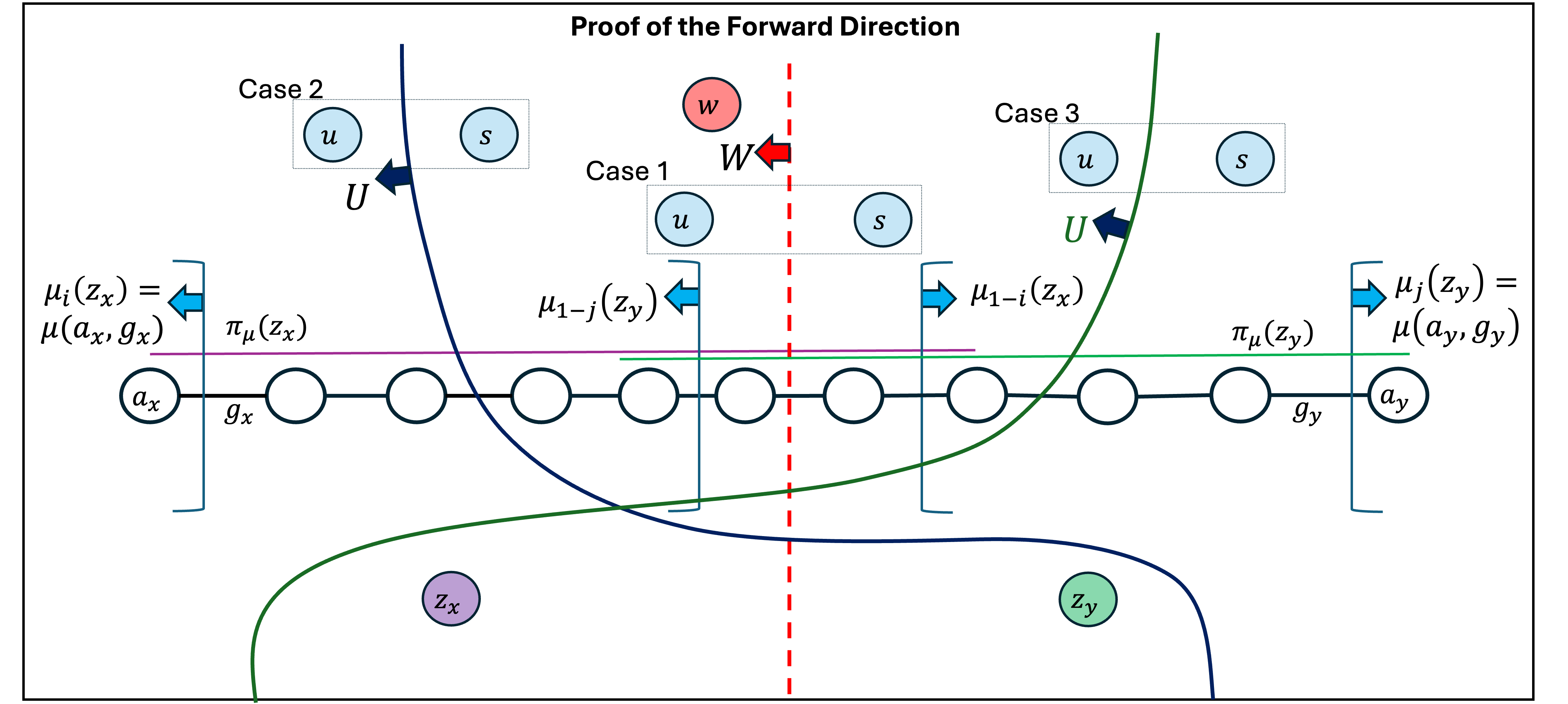}
    \includegraphics[width=.9\textwidth]{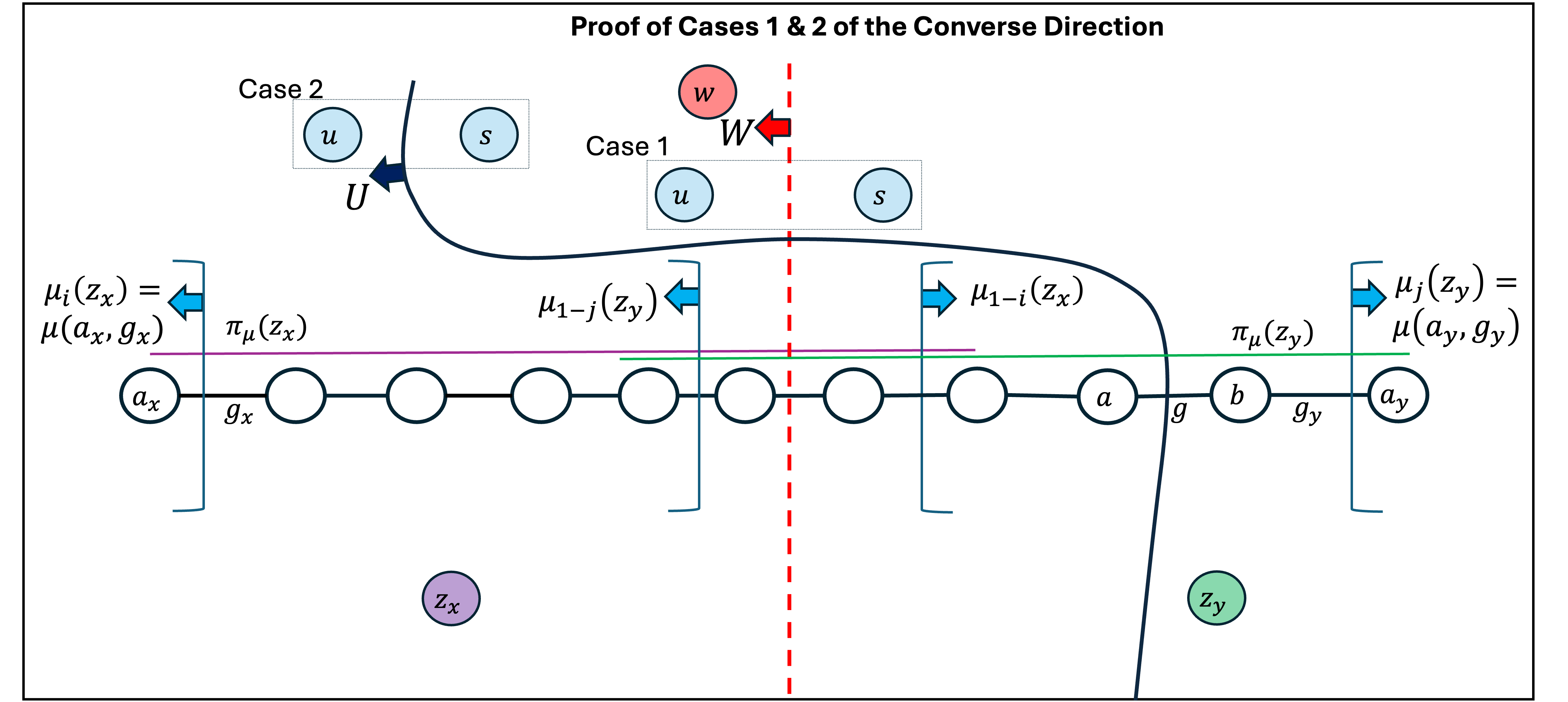}

    \includegraphics[width=.9\textwidth]{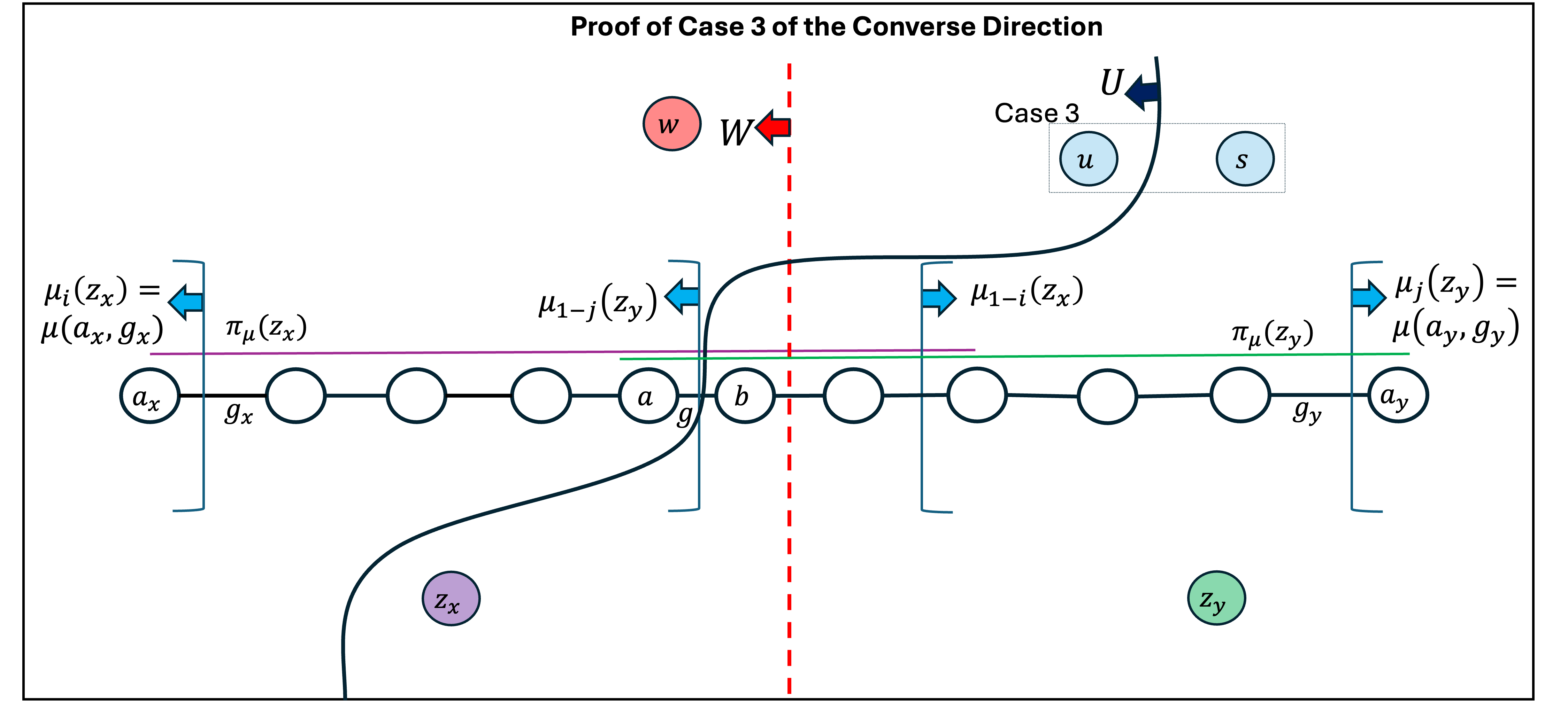}

  \end{center}
  \caption{
  Illustrating the proof of~\Cref{lem:4b}.
  } 
  \label{fig : forward direction}
\end{figure}

    ($\Rightarrow$) 
    We split into cases:
    \begin{itemize}
        \item  
        Case 1: $W$ separates $u,s$.
        Then $W$ is a $u,s$-cut, so $c(W) \geq c(\Far(u,s))$.
        But, because $u \in \cDup(\mu)$, we have that $w \in \mu\subseteq \cDdown(u) = \Far(u,s)$, meaning that $\Far(u,s)$ is a $w,s$-cut, and hence $c(W) \leq c(\Far(u,s))$.
        We conclude that $c(W) = c(\Far(u,s))$, so $W$ itself is a $u,s$-mincut, and it separates $z_x,z_y$ and splits $\mu$ as required.
         
        \item 
        Case 2: $u,s \in W$.
        Since $u \in A_\mu (z_x)$, there exists a $u,s$-mincut $U$ with $\{z_x\} \cup \mu_i (z_x) \subseteq U$ and $\mu_{1-i}(z_x) \subseteq \overline{U}$.
        Note that $U \cap W$ is a $u,s$-cut, since $|U \cap \{u,s\}| = 1$ and $u,s \in W$.
        Also $U \cup W$ is a $\mu$-cut, as it separates $\mu_i (z_x)$ from $\mu_{1-i}(z_x)$.
        Thus, by submodularity (\Cref{lem:sub-posi-general}), $U \cap W$ is a $u,s$-mincut, and we have $\{z_x\} \cup \mu_i (z_x) \subseteq U \cap W$ and $\{z_y\} \cup \mu_{1-i}(z_x) \subseteq \overline{U \cap W}$,
        namely, it separates $z_x,z_y$ and splits $\mu$.

        \item 

        Case 3: $u,s \in \overline{W}$.
        Symmetric to Case 2, using $u \in A_\mu (z_y)$ and considering $\overline{W}$ instead of $W$.
    \end{itemize}


    ($\Leftarrow$) 
    We show that $u \in A_\mu (z_x)$, and the proof for $u \in A_\mu (z_y)$ is symmetric.
    We split to cases:

    \begin{itemize}
        \item Case 1: $W$ separates $u,s$.
        Then as in Case 1 of the previous direction, $W$ is a $u,s$-mincut, and it separates $\{z_x\} \cup \mu_i (z_x)$ from $\mu_{1-i} (z_x)$, hence $u \in A_\mu (z_x)$.

        \item Case 2: $u,s \in W$.
        In this case, consider the $u,s$-mincut $U$ and the edge $g = \{a,b\}$ guaranteed by~\Cref{lem:finiding-the-separated-edge}, so in particular $\{z_x\} \cup \mu_i (z_x) \subseteq \{z_x\} \cup  \mu(a,g) \subseteq U$.
        We have that $\{z_x\} \cup \mu_i (z_x) \subseteq U \cap W$ and $\mu_{1-i} (z_x) \subseteq \overline{W \cap U}$. 
        Thus, in order to prove that $u \in A_\mu (z_x)$,
        it suffices to show that $U \cap W$ is a $u,s$-mincut.
        Note that it is a $u,s$-cut because $|U \cap \{u,s\}| = 1$ and $u,s \in W$.
        Also, note that $U \cup W$ is a $\mu$-cut since $\mu(a_x, g_x) = \mu_i (z_x) \subseteq U \cup W$ and $\mu(a_y,g_y) \subseteq \mu_{1-i}(z_i) \cap \mu(b,g) \subseteq \overline{U \cup W}$.
        Hence, by submodularity (\Cref{lem:sub-posi-general}), we get that $U \cap W$ is indeed a $u,s$-mincut.

        \item Case 3: $u,s \notin W$.
        In this case, let $g = \{a,b\}$ be any edge in the overlap that $u$ separates, which exists by the assumption of the lemma, such that $a$ is closer to $a_x$ and $b$ is closer to $a_y$ on $\pi_\mu (e)$.
        Then there exist a $u,s$-mincut $U$ with $\mu_i (z_x) \subseteq \mu(a,g) \subseteq U$ and $\mu_{1-i} (z_x) \subseteq \mu(b,g) \subseteq \overline{U}$.
        %
        We have that $\{z_x\} \cup \mu_i (z_x) \subseteq U \cup W$ and $\mu_{1-i} (z_x) \subseteq \overline{U \cup W}$.
        Thus, to prove that $u \in A_\mu (z_x)$,
        it suffices to show that $U \cup W$ is a $u,s$-mincut.
        Note that it is a $u,s$-cut because $|U \cap \{u,s\}| = 1$ and $u,s \notin W$.
        Also, note that $U \cap W$ is a $\mu$-cut since $\mu_i (z_x) \subseteq U \cap W$ and $\mu_{1-i} (z_x) \subseteq \overline{U \cap W}$.
        Hence, by submodularity (\Cref{lem:sub-posi-general}), we get that $U \cup W$ is indeed a $u,s$-mincut.
    \end{itemize}
    (Existence of $W$)
    Take any edge $g = \{a,b\}$ in the overlap, where $a$ is closer to the $x$-endpoint $a_x$ and $b$ is closer to the $y$-endpint $a_y$ on $\pi_\mu (e)$, so $\mu_i (z_x) \cup \mu_{1-j} (z_y) \subseteq \mu(a,g)$ and $\mu_{1-i}(z_x) \cup \mu_j (z_y) \subseteq \mu(b,g)$.
    Take any canonical cut $C$ of $\cH_\mu$ that contains $g$, and let $\mu_1,\mu_2$ be its corresponding valid partition, where $\mu_1$ are the terminals mapped to the side of $a$ (and the $x$-endpoint $a_x$), and $\mu_2$ are those mapped to the side of $b$ (and the $y$-endpoint $a_y$), so $\mu(a,g) \subseteq \mu_1$ and $\mu(b,g) \subseteq \mu_2$.
    Because $g$ is an edge of $\pi_\mu (e)$,~\Cref{thm:proj-edge} ensures that $\mu_1,\mu_2$ distinguishes $e$.
    Therefore, there must exists a $\mu$-mincut $W$ with $\mu_1 \subseteq W$ and $\mu_2 \subseteq W$ where $e$ is contributing;
    Furthermore, by the unidirectionality property in~\Cref{thm:proj-edge}, it must be that $x \in W$ and $y \in \overline{W}$.
    Now, take any $w \in \mu$ which is from the side without $s$ of $W$ (that is, if $s \in \overline{W}$ take $w \in \mu_1$, and if $s \in W$ take $w \in \mu_2$).
    Then, as $c(W) = \lambda_\mu = c(\cDdown(\mu))$ and $\cDdown(\mu) = \Far(w,s)$, we have that $W$ is a $w,s$-mincut.
    Finally, \Cref{lem:translation-DAG} now yields that $z_x \in W$ and $z_y \in \overline{W}$.
    This concludes the proof.
\end{proof}

\begin{lemma}[4c]\label{lem:4c}
    Suppose $u$ does not separate any edge of the middle/overlap, but $u$ separates the prefix $\pi_\mu (z_x) \setminus \pi_\mu (z_y)$ or the suffix $\pi_\mu (z_y) \setminus \pi_\mu (z_x)$.
    Then $u \in A_\mu (z_x) \cup A_\mu (z_y)$ if and only if there exists a $u,s$-mincut which splits $\mu$ and separates $z_x,z_y$
\end{lemma}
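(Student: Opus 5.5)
We will prove the two directions separately, mirroring the structure of \Cref{lem:4b}, but now using only one of the two $A$-sets at a time. Throughout we use the notation of \Cref{obs:terminal-partition}, and the standing assumptions $\lambda_\mu = c(\cDdown(\mu))$ and $\phi_\mu(z_x)\neq\phi_\mu(z_y)$.

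\textbf{($\Leftarrow$).} By symmetry, suppose $u \in A_\mu(z_x)$; this requires $z_x$ to be stretched. Then there is a $u,s$-mincut $U$ keeping $\{z_x\}\cup\mu_i(z_x)$ on one side and $\mu_{1-i}(z_x)$ on the other. In particular, $U$ splits $\mu$, so it remains to show that $z_y$ lies on the side of $\mu_{1-i}(z_x)$.

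\begin{itemize}
\item If $z_y$ is non-stretched, or more generally if $\pi_\mu(z_y)$ lies entirely on the $\mu_{1-i}(z_x)$-side of the canonical cut $C$ induced by $W := U\cap\cDdown(\mu)$ (a $\mu$-mincut by \Cref{lem:mu-mincuts}), we conclude by the last item of \Cref{thm:proj-vertex}.
\item The delicate case is when $C$ cuts through $\pi_\mu(z_y)$. By \Cref{lem:associated-terminals}, $C$ contains some edge of $\pi_\mu(z_x)$. If $C$ contained an edge of the overlap or of the middle part, then, by \Cref{lem:associated-terminals} applied to that single edge together with \Cref{obs:terminal-partition}, $u$ would separate that edge, contradicting the hypothesis. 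So $C$ meets $\pi_\mu(e)$ only in the prefix.
\end{itemize}

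A cut of a cactus crosses a proper path at most once (two edges of a cycle cannot both lie on a proper path). Hence $C$ keeps the whole suffix, and thus $\pi_\mu(z_y)$, on the side away from $a_x$. By \Cref{obs:terminal-partition}, that is the side of $\mu_{1-i}(z_x)$. Then \Cref{thm:proj-vertex} puts $z_y$ outside $W$ together with $\mu_{1-i}(z_x)$, while $z_x\in W$, as needed.

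\textbf{($\Rightarrow$).} Suppose some $u,s$-mincut splits $\mu$ and separates $z_x,z_y$. Take $U$ and the edge $g=\{a,b\}$ from \Cref{lem:finiding-the-separated-edge}. Since $u$ separates $g$ (via $U$), the hypothesis forces $g$ into the prefix or the suffix; say the prefix, so $g\in\pi_\mu(z_x)$ and $z_x$ is stretched. Then, by \Cref{obs:terminal-partition}, $\mu(a,g)\supseteq\mu_i(z_x)$ and $\mu(b,g)\supseteq\mu_{1-i}(z_x)$. Therefore $U$ keeps $\{z_x\}\cup\mu_i(z_x)$ on one side and $\mu_{1-i}(z_x)$ on the other, which gives exactly $u\in A^{(i)}_\mu(z_x)=A_\mu(z_x)$. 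The suffix case symmetrically gives $u\in A_\mu(z_y)$.

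\textbf{Main obstacle.} The step I expect to be hardest is the geometric claim in ($\Leftarrow$): that a canonical cut containing an edge of the prefix but no edge of the middle or overlap leaves all of $\pi_\mu(z_y)$ on the far side. This needs care when the prefix and the suffix meet at a cycle. I would first verify it via the monotone containments of \Cref{lem:containment-of-associated-terminals} along $\pi_\mu(e)$. If that fails, I would instead argue directly with \Cref{lem:associated-terminals} applied to the subpath $\pi_\mu(z_y)$.
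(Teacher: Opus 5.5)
Your proposal is correct and is essentially the paper's proof. For one direction you take $U$ and $g$ from \Cref{lem:finiding-the-separated-edge}, rule out middle/overlap edges by hypothesis, and read off membership in $A_\mu(z_x)$ or $A_\mu(z_y)$ via \Cref{obs:terminal-partition}. For the other you pass to the $\mu$-mincut $W=U\cap\cDdown(\mu)$, obtain an edge of $\pi_\mu(z_x)$ in its canonical cut via \Cref{lem:associated-terminals}, exclude the overlap, and finish with \Cref{thm:proj-vertex}, exactly as the paper does.

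Two small remarks. Your direction labels are swapped relative to the statement: ``$u\in A_\mu(z_x)\cup A_\mu(z_y)$ implies such a cut exists'' is the ($\Rightarrow$) direction. Also, the case split in that direction is superfluous, since your ``delicate case'' argument applies uniformly. That argument says a canonical cut shares at most one edge with the proper path $\pi_\mu(e)$, so everything past the prefix edge, including all of $\pi_\mu(z_y)$, lies on the side of $\mu_{1-i}(z_x)$. It spells out a step the paper only asserts.
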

\begin{proof}
    ($\Rightarrow$) 
    Say $u \in A_\mu (z_x)$, as $u \in A_\mu (z_y)$ is symmetric.
    Then, there exists a $(u,s)$-mincut $U$ keeping $\mu_i (z_x) \cup \{z_x\}$ and $\mu_{1-i} (z_x)$ on different sides.
    By~\Cref{lem:mu-mincuts}, we have that $W := U \cap \cDdown(\mu)$ is a $\mu$-mincut, 
    which also separates $\mu_i (z_x) \cup \{z_x\}$ from $\mu_{i-1} (z_x)$.
    Let $C$ be the canonical cut of $\cH_\mu$ which corresponds to the valid partition $\mu \cap W, \mu \cap \overline{W}$.
    Recall that $\mu_i (z_x)$ and $\mu_{i-1} (z_x)$ are the terminal sets associated with the proper path $\pi_\mu (e)$, so by~\Cref{lem:associated-terminals}, $C$ must have an edge $g = \{a,b\}$ from $\pi_\mu (z_x)$, where $a$ is closer to the endpoint $a_x = a_i (z_x)$ of $\pi_\mu (z_x)$, and $b$ is closer its other endpoint $a_{1-i}(z_x)$.
    %
    Thus, $W$ keeps $\mu(a,g)$ on the side of $\{z_x\} \cup \mu_i (z_x)$, and $\mu(b,g)$ on the side of $\mu_{i-1} (z_x)$.
    As these are subsets of $\cDdown(\mu)$, this is also true for $U$, so $u$ separates $g$.
    Now, since $u$ does not separate edges of the overlap, $g$ must be in the prefix $\pi_\mu (z_x) \setminus \pi_\mu (z_y)$.
    Thus, $C$ keeps all of $\pi_\mu (z_y)$ on the side of $\mu (b,g)$.
    This means that $W$ must keep $z_y$ with $\mu (b,g)$ by~\Cref{thm:proj-vertex}.
    Hence, $W$ separates $\{z_x\} \cup \mu(a,g)$ from $\{z_y\} \cup \mu(b,g)$.
    Because these are subsets of $\cDdown(\mu)$, this is also true for $U$, so $U$ is a $u,s$-mincut which splits $\mu$ and separates $x,y$.
        
    ($\Leftarrow$) 
    Let $U$ and $g = \{a,b\}$ be $u,s$-mincut and the of $\pi_\mu (e)$ guaranteed by~\Cref{lem:finiding-the-separated-edge}.
    As $u$ does not separate any edge in the middle part, $g$ must be either in $\pi_\mu (z_x)$ or in $\pi_\mu (y)$.
    If $g$ is in $\pi_\mu (x)$, then we have $\mu_i (z_x) \subseteq \mu(a,g)$ and $\mu_{1-i} (z_x) \subseteq \mu(b,g)$, hence $U$ separates $\{z_x\} \cup \mu_i (z_x)$ from $\mu_{1-i}(z_x)$, and thus $u \in A_\mu (z_x)$.
    The case where $g$ is in $\pi_\mu (y)$ is symmetric, leading to the conclusion that $u \in A_\mu (z_y)$.
    %
\end{proof}

\begin{lemma}[4d]
    Suppose $u$ does not separate $\pi_\mu (e)$.
    Then no $u,s$-mincut splits $\mu$ and separates $z_x,z_y$.
\end{lemma}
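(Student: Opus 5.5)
We argue by contraposition, using the helper \Cref{lem:finiding-the-separated-edge}. Suppose some $u,s$-mincut splits $\mu$ and separates $z_x,z_y$. We will show that $u$ then separates some edge $g$ of $\pi_\mu(e)$, which contradicts the hypothesis that $u$ does not separate $\pi_\mu(e)$, i.e., that $u$ separates none of its edges. Throughout, as in the preceding lemmas, we assume the algorithm did not halt in Step 1, so that $\lambda_\mu = c(\cDdown(\mu))$ and $\phi_\mu(z_x)\neq\phi_\mu(z_y)$. By \Cref{cor:projection-of-edge-skeleton-translation}, $\pi_\mu(e)$ is therefore defined, with $\pi_\mu(z_x)$ as a prefix and $\pi_\mu(z_y)$ as a suffix.

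The first step applies \Cref{lem:finiding-the-separated-edge}. Its hypothesis is exactly the existence of such a mincut. It yields a $u,s$-mincut $U$ and an edge $g=\{a,b\}$ of $\pi_\mu(e)$, with $a$ closer to $a_x$, such that $U$ keeps $\{z_x\}\cup\mu(a,g)$ on one side and $\{z_y\}\cup\mu(b,g)$ on the other. In particular, $U$ keeps $\mu(a,g)$ and $\mu(b,g)$ on different sides. Recall that $\RepInf(g)=\RepInf(\mu(a,g),\mu(b,g))$, so ``$u$ separates $g$'' means precisely that some $u,s$-mincut keeps $\mu(a,g)$ and $\mu(b,g)$ on different sides. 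Hence $u$ separates $g$. Since $g$ lies on $\pi_\mu(e)$, this contradicts the assumption of the lemma.

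There is one small point to check: the associated sets $\mu(a,g)$ and $\mu(b,g)$ must both be non-empty, so that the notion of separating them is meaningful. This holds by \Cref{def:associated-terminals} together with the skeleton properties of \Cref{thm:skeleton}: every side of a canonical cut contains mapped terminals, since valid partitions have non-empty parts. I do not expect a real obstacle here. All the substantive work, namely the translation to anchors, the bridge \Cref{lem:mu-mincuts}, and unidirectionality, is already packaged inside \Cref{lem:finiding-the-separated-edge}, so this substep is essentially a direct corollary of it.
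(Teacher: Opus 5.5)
Your proposal is correct and matches the paper's proof, which simply states that the lemma is immediate from \Cref{lem:finiding-the-separated-edge}. Your contrapositive argument and the remark that $\mu(a,g)$ and $\mu(b,g)$ are non-empty (each side of a canonical cut contains mapped terminals) just make that one-line deduction explicit.
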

\begin{proof}
    Immediate by~\Cref{lem:finiding-the-separated-edge}.
    %
\end{proof}


This concludes the proof of~\Cref{lem:q1-data-structure}.

\section{Mincuts Not Splitting the LCA}\label{sec:not-splitting-mu}


This section provides the data structure for Query~\ref{query-Q2}, restated below.

\begin{enumerate}[label=\textbf{(Q2)}]
    \item Given $e = \{x,y\} \in E$ and an LCA $\mu$ of $\cD(x),\cD(y)$,
    report all $u \in \cDup(\mu)$ such that some $u,s$-mincut separates $x,y$ and keeps $\mu$ on one side.
\end{enumerate}
Again, as was with Query~\Cref{query-Q1}, 
by~\Cref{lem:translation-DAG} the condition in Query~\ref{query-Q2} is equivalent to ``some $u,s$-mincut separates $z_x,z_y$ and keeps $\mu$ on one side'', where $z_x,z_y$ are the anchors of $x,y$ w.r.t.\ $\mu$. 
So, our goal in this section is to show:

\begin{lemma}\label{lem:q2-data-structure}
    There is a data structure for Query~\ref{query-Q2} with $O(n)$ space and $O(n)$ query time.
\end{lemma}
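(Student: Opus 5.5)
The plan is to characterize Query~\ref{query-Q2} entirely through the DAG structure plus the representatives framework of~\Cref{thm:blackbox}, with no skeleton projections for $\mu$ needed. Fix $u \in \cDup(\mu)$ and a $(u,s)$-mincut $U$ that keeps $\mu$ on one side and separates $z_x,z_y$. Since the anchors are either in $\mu$ or are pivots of children of $\mu$, at least one anchor is outside $\mu$; say $z_x = z_\nu$ for a child $\nu$ of $\mu$. The first structural step is to show that we may also assume, via sub/posimodularity with $Z_x = \cDdown(z_x)$ and $Z_y = \cDdown(z_y)$ (the same manipulations as in~\Cref{lem:translation-DAG}, item~2), that $U$ contains all of $Z_x$ and is disjoint from $Z_y$ (or vice versa), while still keeping $\mu$ on one side and preserving whether $u \in U$. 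Thus the condition becomes ``$u$ separates $Z_x, \{z_y\}\cup\mu$'' or ``$u$ separates $\{z_x\}\cup\mu, Z_y$'', or, when both anchors are outside $\mu$, ``$u$ separates $(Z_x, Z_y\cup\mu)$'' or ``$u$ separates $(Z_x\cup\mu, Z_y)$'', and symmetric versions. The key point is that each such pair depends only on $\mu$ and on one child edge $(\mu,\nu)$ of $\cD$, or, when $z_y \in \mu$, on a pair (child, vertex of $\mu$).

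To obtain $O(n)$ space, I would decouple the two anchors. For the side containing $\mu$, the extra anchor inside that side is automatic once $\mu$ is there. Hence it suffices to store, for every edge $(\mu,\nu)$ of $\cD$, the representative information $\RepInf(\{z_\nu\}, \mu)$, that is, for sets $P=\{z_\nu\}$ and $Q=\mu$, with orientation recorded. Then I would argue a combination rule: $u$ separates $z_x,z_y$ with $\mu$ on one side if and only if $u$ separates $(z_x,\mu)$-type pairs and $(\mu, z_y)$-type pairs compatibly. This would be proved by taking the two witnessing mincuts $U_1$ (with $z_x$ in and $\mu$ out) and $U_2$ (with $\mu$ out and $z_y$ in the opposite orientation), and combining them with submodularity/posimodularity ($U_1 \setminus Z_y$ and the like). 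Here I would use that $Z_x\cap Z_y$ contains no vertex of $\mu$, and that neither contains $u$, as in~\eqref{eq:z_x}--\eqref{eq:z_y}. When $z_y \in \mu$ the condition collapses to $u$ separating $\{z_x\},\mu$, which is a single stored $\RepInf$. Each query thus calls~\Cref{thm:blackbox} with $O(1)$ representative pairs, taking $O(n)$ time, and then intersects or unites the resulting $O(1)$ vertex sets in $O(n)$ time.

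The main obstacle I expect is the combination step when both anchors lie in distinct children $\nu_x,\nu_y$. There, $u$ could separate $(z_x,\mu)$ via one mincut and $(z_y,\mu)$ via another, yet no single $u,s$-mincut might put $z_x$ and $z_y$ on opposite sides with $\mu$ on one of them. I would first try to show that the correct condition is ``$u$ separates $(z_x, \mu\cup\{z_y\})$-like pairs'', and that this is equivalent to separating $(Z_x,\mu)$ via the farthest cuts. The justification would be that $Z_y$ is disjoint from $Z_x$ whenever $\nu_y$ is not an ancestor of $\nu_x$, which holds by the LCA property. Then $U \setminus Z_y$ stays a $u,s$-mincut by posimodularity, exactly as in the proof of~\Cref{lem:translation-DAG}. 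If $Z_x\cap Z_y\neq\emptyset$ causes trouble, I would fall back to~\Cref{lem:three-mincuts-to-source} to control the overlap.
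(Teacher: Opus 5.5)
Your degenerate cases are fine: $x,y\in\mu$ gives an empty answer, and $z_y\in\mu$ collapses to ``$u$ separates $\mu,z_x$''. The general case, however, has a genuine gap. The step ``the extra anchor inside that side is automatic once $\mu$ is there'' is false. Worse, the data you propose to store cannot determine the answer under any combination rule. Take $V=\{s,u,a,b\}$ with multiplicity $2$ on $\{u,s\}$ and $1$ on each of $\{u,a\},\{u,b\},\{a,b\},\{a,s\},\{b,s\}$. Then $\{u\}$ is the only child of $\{s\}$ in $\cD$ and the parent of both $\{a\}$ and $\{b\}$. The $u,s$-mincuts are $\{u\}$ and $\{u,a,b\}$. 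For $e=\{a,b\}$ we get $\mu=\{u\}$, $z_x=a$, $z_y=b$. Here $u$ separates $\mu$ from $a$ and from $b$ (via $\{u\}$), yet $u$ must not be reported. Now raise the multiplicities of $\{u,a\}$ and $\{b,s\}$ to $2$. The DAG, the trivial skeletons and every $\RepInf(\{z_\nu\},\mu)$ stay the same, but $\{u,a\}$ becomes a $u,s$-mincut, so $u$ must now be reported. Your suggested repair does not help either. Two incomparable children $\nu_x,\nu_y$ of $\mu$ can have common descendants in a DAG; the LCA property only says $\nu_x$ is not an ancestor of $\cD(y)$. So $Z_x\cap Z_y=\emptyset$ is unavailable, and for the same reason you cannot in general make $U$ contain all of $Z_x$ while avoiding $Z_y$.

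The missing idea is that the side on which $\mu$ lies is forced independently of $u$, and proving this uses the failed edge $e$ itself, not just the anchors. Suppose a $u,s$-mincut $U\supseteq\mu\cup\{z_x\}$ avoids $z_y$. Then $W:=U\cap\cDdown(\mu)$ is a $\mu,s$-mincut, by submodularity with $\cDdown(\mu)$. By \Cref{lem:translation-DAG} you may take $x\in W$ and $y\notin W$. Now suppose another $\mu,s$-mincut $W'$ had $\mu\cup\{y\}\subseteq W'$ and $x\notin W'$. Then $e$ would run between $W\setminus W'$ and $W'\setminus W$, which \Cref{lem:sub-posi-general} forbids because $W\cap W'$ and $W\cup W'$ are both $\mu,s$-cuts. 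So only one orientation can ever occur, over all $u$. That orientation, and the case where no $\mu,s$-mincut separates $z_x,z_y$ (empty answer), is read off by comparing one stored number per anchor. The paper uses $\tau_\mu(\cdot)$ from \Cref{lem:tau-numbers}: a topological order of the SCC-condensation of the digraph in which $(z,z')$ is absent iff some $\mu,s$-mincut contains $\mu\cup\{z'\}$ but not $z$. Once the orientation is known, say $\mu$ goes with $z_x$, your ``automatic'' step becomes true. Given a $u,s$-mincut $U$ with $\mu\subseteq U$ and $z_y\notin U$, and a $\mu,s$-mincut $W\supseteq\mu\cup\{z_x\}$ with $z_y\notin W$, the set $U\cup W$ is a $u,s$-mincut by submodularity (use $u\in\cDup(\mu)$ to check that it is a $u,s$-cut). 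Hence the answer is exactly the set of $u\in\cDup(\mu)$ separating $\mu,z_y$, obtained from a single stored $\RepInf$ via \Cref{thm:blackbox}.
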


We will need the following tool, which is similar to~\Cref{lem:sigma-numbers}.

\begin{lemma}\label{lem:tau-numbers}
    Let $\mu$ be a node in $\cD$.
    Then we can assign each vertex $z \in V$ a number $\tau_\mu (z) \in [n]$, such that the following hold for every $z_1, z_2 \in V$:
    \begin{itemize}
        \item If $\tau_\mu (z_1) = \tau_\mu (z_2)$, then there is no $\mu,s$-mincut separating $z_1,z_2$.
        
        \item If $\tau_\mu (z_1) < \tau_\mu (z_2)$, then there exists a $\mu,s$-mincut $W$ with $\{z_1\} \cup \mu \in W$, $z_2 \in \overline{W}$.
    \end{itemize}
\end{lemma}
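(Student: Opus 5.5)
We will mimic the construction in the proof of~\Cref{lem:sigma-numbers}: build an auxiliary directed graph, contract its strongly connected components, and read the numbers off a topological order. Here a $\mu,s$-mincut means a minimum cut separating $\mu$ from $s$. If $s \in \mu$ the statement is vacuous, so we may assume $s\notin\mu$.

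\textbf{The auxiliary graph.} Define a digraph $H_\mu$ on $V$ in which the pair $(z,z')$ is \emph{not} an edge iff there exists a $(\mu,s)$-mincut $W$, i.e.\ $\mu \subseteq W$ and $s \notin W$, with $z \in W$ and $z' \notin W$. Contract the SCCs of $H_\mu$, fix a topological order of the resulting DAG, and let $\tau_\mu(z)$ be the position of the SCC of $z$ in this order. The second item of the lemma is then immediate. If $\tau_\mu(z_1)<\tau_\mu(z_2)$, then the edge $(z_1,z_2)$ cannot lie in $H_\mu$: it would go from a later component to an earlier one, violating the order. So some $(\mu,s)$-mincut $W$ has $z_1\in W$ and $z_2\notin W$, as required. (The target ordering convention only fixes which of the two directions is the ``forbidden'' one in the definition of $H_\mu$; we choose it so that this step comes out right.)

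\textbf{The first item.} Suppose $\tau_\mu(z_1)=\tau_\mu(z_2)$, so $z_1,z_2$ lie in the same SCC. Assume for contradiction that some $\mu,s$-mincut separates them. Orient it as a $(\mu,s)$-mincut $W$ (complementing if necessary), and say $z_1\in W$ and $z_2\notin W$. Take a directed path $z_1=v_0,\dots,v_k=z_2$ in $H_\mu$. Let $r$ be the last index with $v_r\in W$; it is well defined since $v_0\in W$, and $r<k$ since $v_k\notin W$. Then $v_{r+1}\notin W$, so $W$ witnesses that $(v_r,v_{r+1})$ is not an edge of $H_\mu$, a contradiction. The symmetric case $z_2\in W$, $z_1\notin W$ is handled by using a path from $z_2$ to $z_1$ instead, which exists because both vertices lie in the same SCC.

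\textbf{Main obstacle.} The only delicate point is the bookkeeping. One must fix the orientation convention for $\mu,s$-mincuts, namely that $\mu$ lies on the side not containing $s$, so that the membership condition $\{z_1\}\cup\mu\subseteq W$ in the second item matches the edge-deletion rule. One must also make sure the numbers lie in $[n]$. This holds because there are at most $n$ SCCs. No submodularity is needed: the argument is purely combinatorial, exactly as in~\Cref{lem:sigma-numbers}, with the projection condition (i) dropped.
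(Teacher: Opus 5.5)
This is essentially the paper's proof: the same auxiliary digraph, SCC contraction, topological numbering, and the ``last index on the $W$-side'' argument for the first item.

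There is one orientation slip. Under the usual convention that edges in a topological order go from earlier to later components, an edge $(z_1,z_2)$ with $\tau_\mu(z_1)<\tau_\mu(z_2)$ is perfectly consistent with the order, so your claim that it ``would go from a later component to an earlier one'' fails as written. With your definition of $H_\mu$ you must either number the SCCs in reverse topological order, or do what the paper does: delete $(z,z')$ iff some $(\mu,s)$-mincut contains $\mu\cup\{z'\}$ and excludes $z$, and then argue that $(z_2,z_1)$ is the forbidden edge.
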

\begin{proof}
    Define a directed graph $H$ on the vertices $V$ where edge $(z,z')$ is \textbf{not} in $E(H)$ iff there is a $\mu,s$-mincut $W$ with $\mu \cup \{z'\} \subseteq W$, $z \in \overline{W}$.
    Contract each SCC (strongly connected component) of $H$, and find a topological ordering for the resulting DAG:
    $\tau_\mu (z)$ is the place of the SCC containing $z$ in this ordering.

    The second item is now easy to prove: if $\tau_\mu (z_1) < \tau_\mu (z_2)$, then the edge $(z_2,z_1)$ cannot exist in $H$ as it would violate the topological ordering.

    For the first item, say $\tau_\mu (z_1) = \tau_\mu (z_2)$.
    Seeking contradiction, suppose there is some $\mu,s$-mincut $W$ separating $z_1,z_2$.
    Without loss of generality, say $\mu \cup \{z_2\} \in W$ and $z_1 \in \overline{W}$ (the case $z_1 \in W$ and $z_2 \in \overline{W}$ is symmetric).
    Take some path $(z_1 = v_0, v_1, \dots, v_k = z_2)$ from $z_1$ to $z_2$ in $H$, which exists as $z_1,z_2$ belong to the same SCC.
    Consider the largest $0 \leq r \leq k-1$ with $v_r \in \overline{W}$ (which is well-defined since $v_0 = z_1 \in \overline{W}$ and $v_k = z_2 \in W$).
    Then $v_{r+1} \in W$, but this contradicts the existence of edge $(v_r, v_{r+1})$ in $H$.
    %
    %
    %
    %
\end{proof}

\subsection{Construction}
We now formally describe the information stored by our data structure for Query~\ref{query-Q2}.
First, we store the $O(n)$ space data structure of the representative framework from~\Cref{thm:blackbox}.
Next, for each node $\mu$ of $\cD$, and each $z \in \{z_\nu \mid \text{$\nu$ child of $\mu$}\} \cup \mu$, we store $\RepInf(\mu, z)$, and the number $\tau_\mu (z)$ from~\Cref{lem:tau-numbers}, which consumes a total of $O(|\mu| + \operatorname{outdeg}_\cD(\mu))$ space for $\mu$.
Because the nodes of $\cD$ partition the vertices $V$, and because $\sum_\mu \operatorname{outdeg}_\cD (\mu)$ counts each of the $O(n)$ edges of $\cD$ exactly once, the total space is $O(n)$.

\subsection{Query Algorithm}
We now give the algorithm to answer Query~\ref{query-Q2}, given $e = \{x,y\} \in E$ and LCA $\mu$ of $\cD(x),\cD(y)$.

\begin{description}
    \item[Step 0:] Find the anchors $z_x,z_y$ of $x,y$ w.r.t.\ $\mu$, which takes $O(n)$ time by searching $\cD$.
    Note thta $z_x,z_y \in \{z_\nu \mid \text{$\nu$ child of $\mu$}\}$ by~\Cref{def: anchors}.
    
    \item[Step 1:] 
    Use $\RepInf(\mu, z_x)$ in~\Cref{thm:blackbox} to determine the set $B(z_x) = \{u \in \cDup(\mu) \mid \text{$u$ separates $\mu,z_x$}\}$.
    Similarly, use $\RepInf(\mu,z_y)$ to compute $B(z_y) = \{u \in \cDup(\mu) \mid \text{$u$ separates $\mu,z_y$}\}$.
    This takes $O(n)$ time by~\Cref{thm:blackbox}.

    \item[Step 2:]
    Answer the query as follows.
    \begin{enumerate}[label=(\alph*)]
        \item If $\tau_\mu (z_x) = \tau_\mu (z_y)$: Return $\emptyset$.
        \item If $\tau_\mu (z_x) < \tau_\mu (z_y)$: Return $B(z_y)$.
        \item If $\tau_\mu (z_x) > \tau_\mu (z_y)$: Return $B(z_x)$.
    \end{enumerate}
\end{description}
Thus, our data structure answers Query~\ref{query-Q2} within $O(n)$ time.

\subsection{Correctness}

First, we prove the following ``helper'' lemma:
\begin{lemma}\label{lem:mu-not-split-completeness}
    Suppose that for some vertex $u \in \cDup(\mu)$, there is a $u,s$-mincut $U$ with $\{z_x\} \cup \mu \subseteq U$ and $z_y \in \overline{U}$.
    Then $\tau_\mu (z_x) < \tau_\mu (z_y)$.
    (Symmetrically, if for some vertex $u \in \cDup(\mu)$ there is a $u,s$-mincut $U$ with $\{z_y\} \cup \mu \subseteq U$ and $z_y \in \overline{U}$, then $\tau_\mu (z_x) > \tau_\mu (z_y)$.)
\end{lemma}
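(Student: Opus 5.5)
The natural route is to produce, from the given $U$, a $\mu,s$-mincut $W$ with $\{z_x\} \cup \mu \subseteq W$ and $z_y \in \overline{W}$, and then apply \Cref{lem:tau-numbers}. Its first item rules out $\tau_\mu(z_x) = \tau_\mu(z_y)$. If instead $\tau_\mu(z_y) < \tau_\mu(z_x)$, its second item gives a $\mu,s$-mincut $W'$ with $\{z_y\} \cup \mu \subseteq W'$ and $z_x \in \overline{W'}$. I will derive a contradiction from the pair $W, W'$.

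\textbf{Step 1: $U \cap \cDdown(\mu)$ is a $\mu,s$-mincut.} Set $W := U \cap \cDdown(\mu)$ and take any $w \in \mu$. Since $u \in \cDup(\mu)$, we have $\cDdown(\mu) \subseteq \cDdown(u) = \Far(u,s)$, and $\cDdown(\mu) = \Far(w,s)$. Because $w \in U$, the set $U \cap \cDdown(\mu)$ is a $(w,s)$-cut and $U \cup \cDdown(\mu)$ is a $(u,s)$-cut. Submodularity (\Cref{lem:sub-posi-general}) then gives $c(W) = c(\cDdown(\mu))$.

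The value of a $\mu,s$-mincut is $c(\cDdown(\mu))$. Indeed, every $(\mu,s)$-cut is a $(w,s)$-cut, so it has capacity at least $c(\Far(w,s)) = c(\cDdown(\mu))$, and $\cDdown(\mu)$ itself attains this value. Hence $W$ is a $\mu,s$-mincut.

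Since $z_x, z_y \in \cDdown(\mu)$ by \Cref{def: anchors}, we get $\{z_x\} \cup \mu \subseteq W$ and $z_y \notin W$. The same argument shows that $\mu,s$-mincuts are exactly the $w,s$-mincuts containing $\mu$.

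\textbf{Step 2: ruling out the reverse order.} Suppose $W'$ exists as above. Both $W \cap W'$ and $W \cup W'$ are $(\mu,s)$-cuts, so by submodularity $W \cap W'$ is a $\mu,s$-mincut. This by itself gives no contradiction, because $z_x$ and $z_y$ both lie outside $W \cap W'$. So I will use the structure of the anchors.

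If $z_x \neq x$, then \eqref{eq:z_x} gives $\mu \cap Z_x = \emptyset$ and $z_y \notin Z_x$, where $Z_x = \cDdown(z_x)$. I will apply posimodularity to $W'$ (a $(w,s)$-mincut) and $Z_x$ (a $(z_x,s)$-mincut). The set $W' \setminus Z_x$ contains $\mu$ and excludes $s$, and $Z_x \setminus W'$ contains $z_x$. Hence $W' \setminus Z_x$ is a $\mu,s$-mincut, and there is no edge between $W' \cap Z_x$ and $\overline{W' \cup Z_x}$. The symmetric manipulation with $Z_y$ applies to $W$.

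The goal is to use the edge $e = \{x,y\}$ to get a contradiction, as in \Cref{lem:translation-DAG}(1). Applying \Cref{lem:translation-DAG}(1) to $W$ and $W'$, viewed as $w,s$-mincuts, should force $x \in W$, $y \notin W$ and $y \in W'$, $x \notin W'$. Here I would first use \Cref{lem:translation-DAG}(2) to modify the cuts so that they contain the right endpoint of $e$ while keeping their intersection with $\mu$ unchanged. The crossing pair then gives $x \in W \setminus W'$ and $y \in W' \setminus W$. Submodularity on $W, W'$ forbids any edge between $W \setminus W'$ and $W' \setminus W$, yet $e$ is such an edge. This is the contradiction.

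\textbf{Main obstacle.} The delicate step is making \Cref{lem:translation-DAG}(2) yield cuts that still contain all of $\mu$, that is, cuts that remain $\mu,s$-mincuts. Property (ii), $U' \cap \mu = U \cap \mu$, is exactly what guarantees this. I expect the bulk of the checking to be there.
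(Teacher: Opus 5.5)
Your proposal is correct and is essentially the paper's argument: intersect $U$ with $\cDdown(\mu)$ and use submodularity to obtain a $\mu,s$-mincut $W$, rule out $\tau_\mu(z_x)=\tau_\mu(z_y)$ by the first item of \Cref{lem:tau-numbers}, and in the reverse case use \Cref{lem:translation-DAG}(2) to make $W$ and $W'$ cross on $e=\{x,y\}$, which contradicts submodularity. Property (ii) of that lemma keeps $\mu$ inside both cuts, and the paper simply applies \Cref{lem:translation-DAG} to $U$ before intersecting; your posimodularity detour with $Z_x$ and the appeal to item (1), which points the wrong way, are unnecessary. One small fix: $U$ is an unordered $u,s$-mincut and may contain $s$, in which case $U \cup \cDdown(\mu)$ is a $u,s$-cut only because $u \notin \mu$ forces $u \notin \cDdown(\mu)$ ($\cD(u)$ is then a strict ancestor of $\mu$), which the paper checks explicitly.
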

\begin{proof}
    By~\Cref{lem:translation-DAG}, we may also assume that $x \in U$ and $y \in \overline{U}$.
    We observe that $U \cup \cDdown(\mu)$ is a $u,s$-cut:
    \begin{itemize}
        \item If $u \in \mu$, then as $\mu \subseteq U$ we have $u \in U$, and $s \notin U$.
        Thus, $u \in U \cup \cDdown(\mu)$ and $s \in \overline{U \cup \cDdown(\mu)}$.
        \item If $u \notin \mu$, then as $u \in \cDup(\mu)$, it must be that $u \notin \cDdown(\mu)$.
        Thus, $|U \cap \{u,s\}| = 1$ and $u,s \notin \cDdown(\mu)$.
    \end{itemize}
    Also, $U \cap \cDdown(\mu)$ is a $\mu,s$-cut as $\mu \subseteq U$ and $s \notin \cDdown(\mu)$.
    Thus, by submodularity (\Cref{lem:sub-posi-general}), we obtain that $W := U \cap \cDdown(\mu)$ is a $\mu,s$-mincut.
    Note that $\mu \cup \{x, z_x\} \subseteq W$ and $\{y, z_y\} \subseteq \overline{W}$.
    So, by the first item in~\Cref{lem:tau-numbers}, we have $\tau_\mu (z_1) \neq \tau_\mu (z_2)$.
    Now, assume towards contradiction that $\tau_\mu (z_x) > \tau_\mu (z_y)$.
    Then by~\Cref{lem:tau-numbers}, there is a $\mu,s$-mincut $W'$ with $\{z_y\} \cup \mu \subseteq W'$ and $z_x \in \overline{W'}$.
    By~\Cref{lem:translation-DAG}, we may also assume that $x \in W'$ and $y \in \overline{W'}$.
    Note that both $W \cap W'$ and $W \cup W'$ are $\mu,s$-cuts, so by submodularity (\Cref{lem:sub-posi-general}) there cannot be any edge between $W \setminus W'$ and $W' \setminus W$.
    But $x \in W \setminus W'$ and $y \in W' \setminus W$, so $e$ is such an edge --- contradiction.
    We therefore conclude that $\tau_\mu (z_x) < \tau_\mu (z_y)$.
\end{proof}

We now prove the correctness of the substeps of Step 2.
As 2b and 2c are completely symmetric, we only consider 2b here.
For this, we fix any $u \in \cDup(\mu)$, and show that $u$ is reported in the output of our algorithm precisely when there is some $u,s$-mincut which separates $z_x,z_y$ and keeps $\mu$ on one side.

\begin{lemma}[2a]
    If $\tau_\mu (z_x) = \tau_\mu (z_y)$, there is no $u,s$-mincut which separates $z_x,z_y$ and does not split $\mu$.
\end{lemma}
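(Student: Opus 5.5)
We argue by contradiction, reducing to the helper \Cref{lem:mu-not-split-completeness}. Suppose $\tau_\mu (z_x) = \tau_\mu (z_y)$, yet some $u,s$-mincut $U$ separates $z_x,z_y$ and keeps $\mu$ entirely on one side. Here $u \in \cDup(\mu)$, as in all correctness statements of this section. We may replace $U$ by $\overline{U}$ if needed, so that $\mu \subseteq U$. This keeps $U$ a $u,s$-mincut, since the notion is unordered; the side containing $u$ may now be either side. We then split on which anchor lies with $\mu$.

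\textbf{Case} $z_x \in U$, $z_y \in \overline{U}$. Here $U$ is a $u,s$-mincut with $\{z_x\} \cup \mu \subseteq U$ and $z_y \in \overline{U}$. This is exactly the hypothesis of \Cref{lem:mu-not-split-completeness}, which yields $\tau_\mu (z_x) < \tau_\mu (z_y)$, contradicting equality.

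\textbf{Case} $z_y \in U$, $z_x \in \overline{U}$. The symmetric statement of \Cref{lem:mu-not-split-completeness} gives $\tau_\mu (z_x) > \tau_\mu (z_y)$, again a contradiction. Hence no such $U$ exists, which proves the claim.

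The one point needing care is that the helper lemma is invoked with $\mu$ on the side of $U$ labeled as ``the cut''. Complementing $U$ is harmless here. The helper's proof works directly with a $u,s$-mincut containing $\mu$ and handles both possibilities for where $u$ lies (the cases $u \in \mu$ and $u \notin \cDdown(\mu)$). It also needs $s \notin U$, which I must check after complementing. This holds: $s$ lies on the side opposite $u$, and if $u \in \mu \subseteq U$ then $s \notin U$.

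The remaining situation is $u \notin U$ and $s \in U \supseteq \mu$. Then I would instead complement so that $s \notin U$, which places $\mu$ on the $s$ side. Since $u \in \cDup(\mu)$, $\mu \subseteq \cDdown(u) = \Far(u,s)$. I would argue that $U \cup \cDdown(\mu)$ still gives the needed $\mu,s$-mincut, or reduce this case directly using the $\tau$ property: take $W$ as the $\overline{\cdot}$ of a suitable submodular combination of $U$ with $\cDdown(\mu)$. This is the main obstacle. I expect it resolves by applying posimodularity with $\cDdown(\mu)$, yielding a $\mu,s$-mincut that separates $z_x,z_y$ and directly contradicts the first item of \Cref{lem:tau-numbers}.
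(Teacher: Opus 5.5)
Your two-case argument is exactly the paper's proof, and it is already complete: complement $U$ so that $\mu\subseteq U$, then apply \Cref{lem:mu-not-split-completeness} or its symmetric version according to which anchor lies with $\mu$. The ``main obstacle'' in your last paragraph is illusory, because the helper lemma is stated for unordered $u,s$-mincuts and its proof only uses $|U\cap\{u,s\}|=1$ together with $u,s\notin\cDdown(\mu)$ (when $u\notin\mu$), never $s\notin U$, so the case $s\in U\supseteq\mu$ is covered directly; your fallback would also work (there $U\cap\cDdown(\mu)$ is a $\mu,s$-mincut containing $z_x$ but not $z_y$, contradicting the first item of \Cref{lem:tau-numbers}), but it is unnecessary.
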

\begin{proof}
    Immediate by~\Cref{lem:mu-not-split-completeness}, as such a $u,s$-mincut implies either that $\tau_\mu (z_x) < \tau_\mu (z_y)$ or that $\tau_\mu (z_x) > \tau_\mu (z_y)$, depending on whether it keeps $\mu$ with $z_x$ or with $z_y$.
\end{proof}

\begin{lemma}[2b]\label{lem:2b}
    Suppose $\tau_\mu (z_x) < \tau_\mu (z_y)$.
    Then it holds that $u \in B(z_y)$ if and only if there exists a $u,s$-mincut which separates $z_x,z_y$ and does not split $\mu$.
\end{lemma}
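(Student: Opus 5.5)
The plan is to prove the two directions separately. Throughout, we normalize every $u,s$-mincut $U$ so that $u \in U$ and $s \notin U$, by complementing if needed. Two facts will be used repeatedly.

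First, a \emph{value fact}. Let $w \in \mu$. Any $\mu,s$-cut is a $w,s$-cut, and $\cDdown(\mu)=\Far(w,s)$ is itself a $\mu,s$-cut. Hence the $\mu,s$-mincut value equals $c(\cDdown(\mu))$.

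Second, a consequence for $u \in \cDup(\mu)$. Here $\Far(u,s)=\cDdown(u) \supseteq \mu$ is a $w,s$-cut, so $c(\cDdown(\mu)) \le \lambda(u,s)$. Therefore any $\mu,s$-mincut $W$ with $u \in W$ is a $u,s$-cut of capacity $\le \lambda(u,s)$, and so $W$ is itself a $u,s$-mincut.

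The third ingredient is the $\mu,s$-mincut $W$ supplied by~\Cref{lem:tau-numbers}, since $\tau_\mu(z_x)<\tau_\mu(z_y)$. It satisfies $\mu \cup \{z_x\} \subseteq W$ and $z_y \notin W$. By~\Cref{lem:translation-DAG}, which applies because $W$ is a $w,s$-mincut for any $w\in\mu$, we may also assume $x \in W$ and $y \notin W$.

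\textbf{($\Rightarrow$).} Let $U$ be a $(u,s)$-mincut keeping $\mu$ on one side and $z_y$ on the other. We want a $u,s$-mincut that additionally puts $z_x$ on $\mu$'s side.

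If $u \in W$, then by the facts above $W$ itself is a $u,s$-mincut. It does not split $\mu$ and it separates $z_x$ from $z_y$, so we are done. Assume therefore $u \notin W$.

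If $\mu \subseteq U$ and $z_y \notin U$, apply submodularity (\Cref{lem:sub-posi-general}). The set $U \cup W$ is a $(u,s)$-cut and $U\cap W \supseteq \mu$ is a $\mu,s$-cut. Hence $U \cup W$ is a $u,s$-mincut containing $\mu \cup\{z_x\}$ and not $z_y$.

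If instead $\mu \subseteq \overline U$ and $z_y \in U$, apply posimodularity. The set $U\setminus W$ is a $(u,s)$-cut because $u\notin W$, and $W \setminus U \supseteq \mu$ is a $\mu,s$-cut. Hence $U\setminus W$ is a $u,s$-mincut with $z_y$ inside and $\mu\cup\{z_x\}$ outside.

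\textbf{($\Leftarrow$).} Let $U$ be a $(u,s)$-mincut separating $z_x,z_y$ with $\mu$ on one side. In the two cases where $\mu$ lies opposite $z_y$, the cut $U$ itself witnesses $u\in B(z_y)$. Two cases remain.

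In the first, $\mu\cup\{z_y\}\subseteq U$ and $z_x\notin U$. The symmetric part of~\Cref{lem:mu-not-split-completeness} then gives $\tau_\mu(z_x)>\tau_\mu(z_y)$, a contradiction.

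In the second, $z_x\in U$ and $\mu\cup\{z_y\}\subseteq \overline U$. By~\Cref{lem:translation-DAG} we may assume $x\in U$ and $y\notin U$. If $u\in W$, then as before $W$ is a $u,s$-mincut separating $\mu$ from $z_y$, so $u\in B(z_y)$. Otherwise $u \notin W$, so $U\setminus W$ is a $(u,s)$-cut and $W\setminus U\supseteq\mu$ is a $\mu,s$-cut. Posimodularity then forbids edges between $U\cap W$ and $\overline{U\cup W}$. But $x \in U\cap W$ and $y\in\overline{U\cup W}$, so $e$ is such an edge, which is a contradiction.

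The main obstacle is this last case, where $\mu$ sits on the $s$-side together with $z_y$. It is the only place where the edge $e$ itself must be used, through the anchor translation applied simultaneously to $U$ and to $W$. It is also where the degenerate situation $u\in W$ has to be dispatched separately, via the value fact.
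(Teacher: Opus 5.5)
Your proof is correct and is essentially the paper's argument. In ($\Rightarrow$) you combine the witness $U$ with the $\mu,s$-mincut $W$ from \Cref{lem:tau-numbers}; your posimodularity cut $U\setminus W$ is just the complement of the cut $\overline{U}\cup W$ that the paper gets by submodularity, since it normalizes so that $\mu\subseteq U$. In ($\Leftarrow$) you rule out $\mu$ lying with $z_y$ via \Cref{lem:mu-not-split-completeness}, as the paper does. Your separate direct treatment of the case $\mu\cup\{z_y\}\subseteq\overline{U}$ is valid but unnecessary: \Cref{lem:mu-not-split-completeness} is stated for unordered $u,s$-mincuts, so it applies directly to $\overline{U}$. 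Likewise, the subcase $u\in W$ there can never occur, because $W\subseteq\cDdown(\mu)$ while $u\in\cDup(\mu)\setminus\mu$.
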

\begin{proof}
    ($\Rightarrow$) 
    Suppose $u \in B(z_y)$, so by definition of $B(z_y)$ there is a $u,s$-mincut with $\mu \subseteq U$ and $z_y \in \overline{U}$.
    As $\tau_\mu (z_x) < \tau_\mu (z_y)$, by~\Cref{lem:tau-numbers} there is a $\mu,s$-mincut $W$ with $\{z_x\} \cup \mu \in W$, $z_y \in \overline{W}$.
    Note that $U \cap W$ is a $\mu,s$-cut.
    Also, we claim that $U \cup W$ is a $u,s$-cut as follows:
    \begin{itemize}
        \item If $u \in U$ and $s \in \overline{U}$, then as we also have $s \in \overline{W}$ the claim holds.

        \item Otherwise, $s \in U$ and $u \in \overline{U}$, and we should show that $u \in \overline{W}$.
        Seeking contradiction, suppose $u \in W$.
        As $W \subseteq \Far(\mu,s) = \cDdown(\mu)$, this means that $\cD(u)$ is a descendant of $\mu$ in $\cD$. But $\cD(u)$ is also an ancestor of $\mu$, so this means that $\cD(u)=\mu$, i.e., $u \in \mu \subseteq U$ --- contradiction as $u \in \overline{U}$.
    \end{itemize}
    Thus, by submodularity (\Cref{lem:sub-posi-general}), $U \cup W$ is a $u,s$-mincut, with $\{z_x\} \cup \mu \subseteq U \cup W$ and $z_y \in \overline{U \cup W}$.
    
    ($\Leftarrow$) 
    Let $U$ be a $u,s$-mincut which separates $z_x,z_y$ and does not split $\mu$;
    then $U$ must keep $\mu$ on the opposite side from $z_y$, as otherwise we would get $\tau_\mu (z_x) > \tau_\mu (z_y)$ by~\Cref{lem:mu-not-split-completeness}, in contradiction to the assumption of the lemma.
    Thus, the $u,s$-mincut $U$ shows that $u \in B(z_y)$.
\end{proof}

This concludes the proof of~\Cref{lem:q2-data-structure}.



\section{Data Structures for Fast Queries: Output-Sensitive and One-Destination}\label{sec:faster-queries}

In this section, we show how, by allowing space of $O(n^{1.5})$, we can support faster edge-failure queries for single-source mincuts as stated in~\Cref{thm:faster-queries}: an output-sensitive query algorithm that reports all vertices whose mincut to $s$ in time which is proportional to their number up to an $\log^3 n$ factor, and a one-destination query algorithm that, given a single destination vertex of interest as part of the query, determines whether its mincut to $s$ changes upon the failure in $O(\log n)$ time.

\subsection{Ancestry and LCA in Farthest Mincut DAG}\label{sec:DAG-data-structure}

The most important component of~\Cref{thm:faster-queries} is a data structure for the farthest mincut $\cD$ that occupies $O(n^{1.5})$ and supports fast ancestry and LCA queries, which is given in~\Cref{lem:ancestry-lca-data-structure} below.
In fact, this is the \emph{only} component of our data structure with $O(n^{1.5})$ space, where all other components only need $O(n \log n)$ space.
Improving the space of this ancestry/LCA structure beyond $O(n^{1.5})$ while maintaining polylogarithmic query time would directly yield an improved space bound for the entire data structure, and doing so appears to be a challenging and interesting problem.

We mention that the $O(n^{1.5})$ space bound is by itself non-trivial, and might seem surprising in light of the \emph{Reachability Conjecture}~\cite{DBLP:journals/siamcomp/Patrascu11, DBLP:conf/wads/GoldsteinKLP17}.
This conjecture states that any data structure supporting reachability (i.e., ancestry) queries for DAGs with $n$ vertices and $m$ edges must either spend $\tilde{\Omega}(m)$ query time or consume $\tilde{O}(n^2)$ space \emph{in the worst case}.
However, the farthest mincut DAG $\cD$ is not worst-case; it is very well-structured due to~\Cref{lem:at-most-two-incomparable-ancestors}, which is the key to obtaining the data structure.



\begin{lemma}\label{lem:ancestry-lca-data-structure}
    There is an $O(n^{1.5})$ space data structure for the farthest mincut DAG $\cD$ that supports the following queries:
    \begin{itemize}
        \item \textbf{(Ancestry)} Given two nodes $\alpha$ and $\beta$, determine if $\alpha$ is ancestor of $\beta$.
        The query takes $O(1)$ time.

        \item
        \textbf{(LCA)}
        Given nodes $\alpha, \beta$, return the (at most two) LCAs of $\alpha,\beta$.
        Furthermore, for each such LCA $\mu$, if $\alpha \neq \mu$ (resp., if $\beta \neq \mu$), also return a child $\nu$ of $\mu$ such that $\nu$ is an ancestor of $\alpha$ (resp., of $\beta$).
        The query takes $O(\log n)$ time.
    \end{itemize}
\end{lemma}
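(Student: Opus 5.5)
The plan is to exploit the width bound of \Cref{lem:at-most-two-incomparable-ancestors}. For any node $\beta$, every three ancestors of $\beta$ have $\beta$ as a common descendant, so at least two of them are related by ancestry. Hence the ancestor poset of $\beta$ has width at most $2$. By Dilworth's theorem (or greedily, since each node has at most two parents), it is covered by at most two chains, each totally ordered by ancestry. The first step is to fix a global decomposition of the nodes of $\cD$ into chains $K_1,K_2,\dots$ (vertex-disjoint, each totally ordered by ancestry). We choose it so that every node stores its chain identifier and its index in the chain, with indices increasing downward. The basic monotonicity fact we will use is the following: if $\alpha \in K_i$ is an ancestor of $\beta$, then so is every earlier element of $K_i$. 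Hence $\{\alpha \in K_i : \alpha \text{ ancestor of } \beta\}$ is a prefix of $K_i$, described by a single number $h_i(\beta)$.

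For ancestry queries we then use a $\sqrt n$ threshold. Call a chain \emph{long} if it has at least $\sqrt n$ nodes; there are at most $\sqrt n$ long chains. For each node $\beta$ and each long chain $K_i$ we store $h_i(\beta)$ in an array, which costs $O(n\cdot\sqrt n)$ space overall. A query with $\alpha$ in a long chain is then one comparison of $\alpha$'s index with $h_i(\beta)$.

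For short chains the plan is different. By the width-$2$ property, the ancestors of $\beta$ that lie in short chains are themselves covered by two chains of $\cD$. We argue that they can be charged so that $\beta$ only needs to remember $O(\sqrt n)$ of them explicitly: namely those whose chain-prefix does not continue into a long chain below them. These are stored in a hash set per $\beta$, giving $O(n^{1.5})$ space and $O(1)$ expected (or, with perfect hashing, worst-case) time. Making this charging argument precise is where I expect the main obstacle to lie. If it fails, the fallback is to choose the chain decomposition greedily by repeatedly extracting longest paths. Then a node $\alpha$ in a short chain has a short path to a long chain, and ancestry through $\alpha$ can be tested with $O(1)$ precomputed pointers to the nearest long-chain descendant plus an explicit list of its $O(\sqrt n)$ descendants before that point.

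Given $O(1)$ ancestry, the LCA query is handled by binary search. Common ancestors of $\alpha,\beta$ form a down-closed set inside the ancestor poset of $\alpha$, which is a union of at most two chains. We recover (implicitly) two chains covering the ancestors of $\alpha$ by following the stored $h_i$ values and chain pointers. Along each chain, being an ancestor of $\beta$ is a prefix property, so a binary search with $O(\log n)$ ancestry queries finds the lowest common ancestor on that chain. This yields at most two candidates; we discard a candidate that is a strict ancestor of the other, and by \Cref{lem:at-most-two-incomparable-ancestors} the survivors are exactly the LCAs. For the extra output, the element immediately after the LCA $\mu$ on the relevant chain of $\alpha$'s ancestors is an ancestor of $\alpha$ below $\mu$. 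To guarantee that it is a \emph{child} of $\mu$, we store with each chain element a pointer to its successor and arrange the chains to follow DAG edges. Alternatively, we take the child of $\mu$ (at most $\operatorname{outdeg}$ options, but only two are relevant by the width bound) that is an ancestor of $\alpha$ via one more ancestry query per candidate, precomputing the two relevant children when needed.
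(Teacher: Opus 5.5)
There is a genuine gap, and it sits in the nodes outside the long chains. Your long-chain component is fine and matches the paper. The charging rule for the rest is never justified, though, and the fallback fails as stated:
\begin{itemize}
\item a node in a short chain need not reach any long chain at all;
\item a short downward path does not bound the number of its descendants, since out-degrees in $\cD$ are unbounded;
\item $O(1)$ pointers to long-chain descendants cannot certify ancestry, because $\alpha$ may enter several long chains at incomparable points.
\end{itemize}

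You were one step away. Your observation that the ancestors of $\beta$ in the short part are covered by two chains gives an $O(\sqrt n)$ bound once the short part has height $<\sqrt n$. That is exactly what greedy extraction guarantees: repeatedly remove a longest chain of the transitive closure on the remaining nodes while it has $\ge\sqrt n$ nodes. It must be chains of the transitive closure, not longest paths of the induced subgraph. An arbitrary chain decomposition gives no such guarantee either.

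The remainder then splits into its fewer than $\sqrt n$ levels, each an antichain. By \Cref{lem:at-most-two-incomparable-ancestors}, each antichain contains at most two ancestors of $\beta$. So $\beta$ stores its last ancestor on each long chain and its at most two ancestors in each antichain. This set $S(\beta)$ has size $O(\sqrt n)$ and answers ancestry in $O(1)$ by indexing on the chain or antichain containing $\alpha$, with no hashing. This chain--antichain decomposition is precisely the paper's construction.

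The LCA step has a separate gap. Binary search needs $O(1)$ random access into the two chains covering the ancestors of $\alpha$. Those chains can have $\Theta(n)$ nodes drawn from many global chains. Following $h_i$ values and successor pointers gives no such access, and storing the chains explicitly costs $\Theta(n^2)$ overall.

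The missing observation is that every LCA lies in $S(\alpha)\cup S(\beta)$. An LCA in an antichain is in $S(\alpha)$ by definition. An LCA on a long chain that is the last ancestor there of neither $\alpha$ nor $\beta$ would have a chain successor that is a common ancestor strictly below it, a contradiction. Since $S(\alpha)$ and $S(\beta)$ have width at most $2$ and size $O(\sqrt n)$, you can precompute a two-chain cover of each as arrays. You then binary search the four resulting chains.

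The child output has the same problem. Scanning the children of $\mu$ costs $\operatorname{outdeg}(\mu)$, and precomputing relevant children for arbitrary pairs $(\mu,\alpha)$ is quadratic. Instead, store with each $\mu\in S(\nu)$ a child of $\mu$ that is an ancestor of $\nu$. Also store with each chain element a child that is an ancestor of its chain successor. If $\mu\notin S(\alpha)$, then $\mu$ lies strictly before $\alpha$'s last ancestor on its long chain, so the second pointer works. This also removes any need for chains to follow DAG edges, which greedy extraction in the transitive closure does not ensure.
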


\begin{proof}
    The key property of $\cD$ that yields this data structure is~\Cref{lem:at-most-two-incomparable-ancestors}, saying that among any three ancestors of a given node in $\cD$, at least two must be related by ancestry.
    To utilize it, we will exploit \emph{chain-antichain decompositions}, and \emph{Dilworth's Theorem}:
    \begin{itemize}
        \item A chain (resp., antichain) is a set of nodes where every pair is related (resp., unrelated) by ancestry.
        I.e., chains and antichains are respectively paths and independent sets in the transitive closure $TC(\cD)$.

        \item A \emph{chain-antichain} decomposition is a partition of the nodes in $\cD$ into a collection of sets, where each set is either a chain or an antichain.
        For every $1 \leq \ell \leq n$, there always exists such a decomposition with $O(\ell)$ chains and $O(n/\ell)$ antichains~\cite{GrandoniW20J}.

        \item The \emph{width} of a set of nodes is the maximal size of an antichain within it.
        \emph{Dilworth's Theorem}~\cite{Dilworth50} states that if a set has width $k$ then it has a \emph{$k$-chain cover}, that is, it can be partitioned into $k$ chains.
    \end{itemize}

    We start with the simple intuition underlying our strategy for ancestry queries.
    Fix a node $\nu$ of $\cD$, and let us consider how its ancestors behave w.r.t.\ chains and anti-chains.
    Obviously, the ancestors of $\nu$ within any arbitrary chain $C$ form a prefix of $C$, which makes chains rather easy to deal with.
    Our key property (\Cref{lem:at-most-two-incomparable-ancestors}) renders antichains easy as well: an arbitrary antichain $A$ can only contain two ancestors of $\nu$, since $\nu$ cannot have three ancestors which are pairwise-unrelated.
    
    In light of this, to support ancestry queries, we decompose $\cD$ into $O(\sqrt{n})$ chains $C_1, C_2, \dots$ and $O(\sqrt{n})$ antichains $A_1, A_2, \dots$.
    For each node $\nu$, we store its last ancestor from each chain $C_i$, and its at most two ancestors from each antichain $A_i$; denote this set of $O(\sqrt{n})$ stored ancestors by $S(\nu)$.
    Now, to answer a query asking whether $\alpha$ is an ancestor $\beta$, we consider two cases. If $\alpha$ is from an antichain $A_i$, we just need to check whether $\beta$ explicitly stored it in $S(\beta)$ as one of its ancestors from $A_i$.
    If $\alpha$ is from a chain $C_i$, we just need to check if it appears before the last ancestor of $\beta$ on $C_i$, which was explicitly stored by $\beta$.
    Thus, we can answer ancestry queries within $O(1)$ time.

    Let us now discuss LCA queries.
    Our first observation is that the LCAs of $\alpha,\beta$ must lie in $S(\alpha) \cup S(\beta)$.
    Indeed, if an LCA $\mu$ is from an antichain $A_i$, this is immediate.
    If $\mu$ is from a chain $C_i$, then it must be the last ancestor of either $\alpha$ or $\beta$ on $C_i$: otherwise, the node $\mu'$ following $\mu$ on $C_i$ would be a common ancestor of $\alpha,\beta$ which is a strict descendant of $\mu$, a contradiction.
    
    The next crucial observation is that, due to the key property of $\cD$ (\Cref{lem:at-most-two-incomparable-ancestors}), each of $S(\alpha)$ and $S(\beta)$ has width at most $2$ (as each of them consists of only ancestors of a given node, and thus cannot contain three pairwise-unrelated nodes).
    Therefore, we can apply Dilworth's Theorem (during preprocessing) and decompose 
    each of them into two chains.
    To answer the LCA query between $\alpha$ and $\beta$, we consider each of these four chains, and find the last common ancestor of $\alpha,\beta$ on each by an $O(\log n)$ time binary search.
    Note that the common ancestors form a prefix in each chain, and we can check if a node is a common ancestor by two ancestry queries of $O(1)$ time, so binary search is indeed applicable.
    The LCAs of $\alpha,\beta$  (which are at most two by~\Cref{lem:at-most-two-incomparable-ancestors}) must be among these four candidate nodes; we detect them by $O(1)$ ancestry queries between the candidates, to filter out those that are strict ancestors of others.

    Finally, let us explain how we find children of the LCAs that are ancestors of $\alpha$ and of $\beta$.
    To this end, we augment our data structure as follows: Whenever node $\nu$ stores some ancestor $\mu$ in $S(\nu)$, it also stores a child of $\mu$ which is an ancestor of $\nu$.
    Additionally, whenever some node $\mu$ from a chain $C_i$ is stored (in some $S(\nu$)), we also store a child of $\mu$ which is an ancestor of the node $\mu'$ following $\mu$ on $C_i$.
    Now, consider an LCA $\mu$ of $\alpha,\beta$, and say $\alpha \neq \mu$ and we want to find a child of $\mu$ that is an ancestor of $\alpha$.
    If $\mu \in S(\alpha)$, then $\alpha$ has explicitly stored such a child.
    Otherwise, $\mu$ must appear before the last ancestor of $\alpha$ on some chain $C_i$; in this case, the child we have stored for $\mu$ w.r.t.\ $C_i$, which is an ancestor of the node $\mu'$ following $\mu$ on $C_i$, must also be an ancestor of $\alpha$.
\end{proof}

\begin{remark}
    As the main cause for the $O(n^{1.5})$ space bound is the decomposition into $O(\sqrt{n})$ chains and antichains, which works for every DAG, one might hope for a better decomposition utilizing the special properties of $\cD$, namely~\Cref{lem:at-most-two-incomparable-ancestors}.
    However, this is not possible.
    Indeed, take $\cD$ as a tree of $\sqrt{n}$ paths connected only at the root $s$, each of length $\sqrt{n}$.
    Note that \Cref{lem:at-most-two-incomparable-ancestors} holds for $\cD$: That is, the set of ancestors of any given node has width at most $2$ (in fact $1$).
    As the longest chain and the largest antichain in $\cD$ are both of size $\sqrt{n}$, it is impossible to partition its $n$ nodes into $o(\sqrt{n})$ chains and antichains.
\end{remark}

\subsection{Extensions of The Representatives Framework}

We now present two extensions of the representative framework using the ancestry data structure for $\cD$ of~\Cref{lem:ancestry-lca-data-structure}.
The first extension is useful for the One-Destinations queries:

\begin{lemma}\label{lem:rep-framework-one-vertex}
    There is an $O(n^{1.5})$ space data structure such that, given $\RepInf(P,Q)$ for a pair of non-empty $P,Q \subseteq V$, and given $u$ such that $P \cup Q \subseteq \cDdown(u)$, reports if $u$ separates $P,Q$ in $O(1)$ time.
\end{lemma}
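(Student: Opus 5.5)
The plan is to redo the proof of~\Cref{thm:blackbox} for a single vertex $u$, replacing the BFS computations by $O(1)$-time ancestry queries of~\Cref{lem:ancestry-lca-data-structure}, and replacing the union-of-marked-sides computation by $O(1)$-time cut-side queries of~\Cref{lem:skeleton-data-structure}. The data structure stores the $O(n^{1.5})$ ancestry structure for $\cD$, and, for every node $\nu$, the $O(|\nu|)$ skeleton structure of~\Cref{lem:skeleton-data-structure} for $\cH_\nu$, together with $\pi_\nu(w)$ (a node) for every terminal $w \in \nu$ and a pointer from each vertex to $\cD(w)$. The total space is $O(n^{1.5}) + O(n)$.

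\textbf{Query.} Given $\RepInf(P,Q)$ and $u$, if $\RepInf(P,Q)$ is null we answer ``no''. Otherwise we go over the $O(1)$ representative nodes $\nu$ stored for $(P,Q)$ and for $(Q,P)$, and handle each as follows.

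First, if $\nu \neq \cD(u)$ and $\nu$ is an ancestor of $\cD(u)$ (one ancestry query), we answer ``yes''. This is justified by~\Cref{lem:rep-nodes-converse}: hypothesis (iii), $P \cup Q \subseteq \cDdown(u)$, is given in the query.

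Second, if $\nu = \cD(u)$, we answer ``yes'' iff $u$ is mapped to the marked side of one of the at most two representative cuts of $\nu$, by~\Cref{lem:rep-cuts}. A degenerate cut always contains $u$. For a canonical cut $C$, we ask a cut-side query with the node $\pi_\nu(u)$. This returns which endpoints of the edges of $C$ lie on $u$'s side, and we compare them with the stored markings in $O(1)$ time.

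If no representative node triggers ``yes'', we answer ``no''. This is justified by~\Cref{lem:rep-nodes}: any vertex separating $(P,Q)$ (or $(Q,P)$) lies in a descendant of some representative node. So if $u$ separates $P,Q$, some representative $\nu$ is an ancestor of $\cD(u)$, and one of the two cases above applies. Moreover, in the case $\nu = \cD(u)$, the ordered-pair criterion of~\Cref{lem:rep-cuts} is exact. The total time is $O(1)$.

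\textbf{Main subtlety.} The delicate point is the case $\nu = \cD(u)$ when $\nu$ is also a strict descendant of another representative node $\nu'$. In that case the first test for $\nu'$ already answers ``yes'', which is correct by~\Cref{lem:rep-nodes-converse}, so no conflict arises. Likewise, the conclusion ``unordered separation'' from~\Cref{lem:rep-nodes-converse} is exactly what the query asks for. I expect everything else to be routine, given that the cut-side query of~\Cref{lem:skeleton-data-structure} takes $O(1)$ time.
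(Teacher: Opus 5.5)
Your proposal is correct and follows essentially the same route as the paper's proof. It uses the same storage: the $O(n^{1.5})$ ancestry structure for $\cD$ plus the linear-size skeleton structures of \Cref{lem:skeleton-data-structure}. It resolves the strict-ancestor and non-ancestor cases the same way, with $O(1)$ ancestry queries against the representative nodes via \Cref{lem:rep-nodes-converse} and \Cref{lem:rep-nodes}. It handles the case $u$ inside a representative node with $O(1)$ cut-side queries against the representative cuts (\Cref{lem:rep-cuts}). Your explicit OR over the representative nodes of both $(P,Q)$ and $(Q,P)$ is just a slightly more careful rendering of the paper's remark that this last case only arises for representative nodes with no strict representative ancestor.
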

\begin{proof}
    The data structure stores the DAG $\cD$ and its ancestry/LCA data structure from~\Cref{lem:ancestry-lca-data-structure}, which takes $O(n^{1.5})$ space.
    Additionally, for every node $\nu$ in $\cD$, it stores the skeleton $\cH_\nu$ and its data structure of~\Cref{lem:skeleton-data-structure}, which consumes $O(|\nu|)$ space for each $\nu$, and hence only $O(n)$ space overall.
    
    Let $\nu_1, \dots, \nu_r$ be the $O(1)$ representative nodes for $P,Q$ found in $\RepInf(P,Q)$.
    For each $\nu_i$, we check if it is an ancestor of $\cD(u)$, which takes $O(1)$ time using the ancestry data structure of~\Cref{lem:ancestry-lca-data-structure}.
    If no $\nu_i$ is an ancestor, then $u$ does not separate $P,Q$ by~\Cref{lem:rep-nodes}.
    If some $\nu_i$ is a \emph{strict} ancestor, then $u$ separate $P,Q$ by~\Cref{lem:rep-nodes-converse}.
    It remains to consider the case where $u \in \nu_i$ (and $\nu_i$ has no strict ancestors among $\nu_1, \dots, \nu_r$).
    In this case, let $C_1, \dots, C_\ell$ be the $O(1)$ representative cuts for $P,Q$ of the node $\nu_i$ (with specified marked sides).
    Then $u$ separates $P,Q$ iff $\pi_\nu (u)$ is on the marked side of some $C_i$, which we can check within $O(1)$ time using the data structure for $\cH_{\nu_i}$ of~\Cref{lem:skeleton-data-structure}.
\end{proof}

The second extension is useful for the output-sensitive query algorithm, and it a bit more involved.
Informally, the setting is the following.
We have some node of interest $\nu$ in $\cD$, and we are given $\RepInf(P_i,Q_i)$ for $k$ pairs $P_i,Q_i \subseteq V$.
For each pair, we also have a \emph{demand}: ``separate'' or ``do not separate''.
We wish to find the set $Y$ consisting of every vertex $v$ inside $\nu$ that fulfills the demands: i.e., $v$ separates every pair $P_i,Q_i$ whose demand is ``separate'', and does not separate any pair $P_i,Q_i$ whose demand is ``do not separate''.


\begin{lemma}\label{lem:blackbox-reporting-all}
The $O(n^{1.5})$ space data structure of~\Cref{lem:rep-framework-one-vertex} can also support the following queries:
    Given
    \begin{itemize}
        \item $\RepInf(P_i, Q_i)$ for non-empty $P_i, Q_i \subseteq V$, $i = 1, \dots, k$,
        \item integer $\ell$, $0 \leq \ell \leq k$, and
        \item node $\nu$ of $\cD$ s.t.\ $P_i \cup Q_i \subseteq \cDdown(\nu)$, for every $1 \leq i \leq k$,
    \end{itemize}
    return the set $Y$ of all vertices $v \in \nu$ such that $v$ separates $P_i, Q_i$ whenever $1 \leq i \leq \ell$, but $v$ does not separate $P_i,Q_i$ whenever $\ell + 1 \leq i \leq k$.
    The time to answer the query is $O(|Y| + k^2)$. 
\end{lemma}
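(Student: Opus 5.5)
First I would classify each pair $i$ by how $\nu$ relates to its representative nodes, using only $O(1)$ ancestry queries per pair via \Cref{lem:ancestry-lca-data-structure}. Fix $i$ and let $\nu_1,\nu_2$ be the representative nodes stored in $\RepInf(P_i,Q_i)$, for both orientations $(P_i,Q_i)$ and $(Q_i,P_i)$. Every $v\in\nu$ satisfies $P_i\cup Q_i\subseteq\cDdown(\nu)=\cDdown(v)$, so condition (iii) of \Cref{lem:rep-nodes-converse} always holds. Hence:
\begin{itemize}
\item If some representative node is a strict ancestor of $\nu$, every $v\in\nu$ separates $P_i,Q_i$ by \Cref{lem:rep-nodes-converse}.
\item If no representative node is an ancestor of $\nu$, no $v\in\nu$ separates $P_i,Q_i$ by \Cref{lem:rep-nodes}.
\item Otherwise $\nu$ is itself a representative node and none is a strict ancestor. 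Then, by \Cref{lem:rep-cuts} applied to each orientation, the $v\in\nu$ separating $P_i,Q_i$ are exactly those mapped to the union of the marked sides of at most four representative cuts in $\cH_\nu$.
\end{itemize}
This costs $O(k)$ time. If any pair $i\le\ell$ falls in the second case, or any pair $i>\ell$ falls in the first, I return $Y=\emptyset$. Pairs in the trivially satisfied cases are dropped. Degenerate representative cuts, whose marked side is all of $\nu$, are handled the same way: they make a ``separate'' demand trivial and a ``do not separate'' demand unsatisfiable.

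For the remaining pairs, the demand is a constraint on skeleton nodes. For $i\le\ell$, $\pi_\nu(v)$ must lie in the union of the marked sides of the cuts of pair $i$. For $i>\ell$, it must lie in the intersection of the complementary (unmarked) sides. For the ``do not separate'' pairs, I complement each cut's marking; this is just flipping which endpoints are marked. All these cuts can then be fed as one list into the ``terminals in intersection'' query of \Cref{lem:skeleton-data-structure}.

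The unions from ``separate'' pairs are the main obstacle, since the intersection query only handles intersections. I would first try to reduce each union to a disjunction of $O(1)$ cases. The region is an intersection over $i$ of unions of at most four marked sides, so distributing gives up to $4^{\ell}$ terms, which is too many. Instead I would use the cactus structure. In $\cH_\nu$, the complement of a union of marked sides is an intersection of unmarked sides, which is a connected region. So the union of the marked sides of pair $i$ equals the complement of a connected region $R_i$ bounded by at most eight cut edges.

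Then $Y$ consists of the terminals lying in $\bigcap_{i>\ell}(\text{unmarked sides})$ but outside every $R_i$ with $i\le\ell$. To compute this, I take $F$ to be all edges of all relevant cuts, with $|F|=O(k)$. For each of the $O(k)$ endpoints of edges in $F$, I test every constraint using $O(k)$ cut-side queries, for $O(k^2)$ time total. I then run a BFS in $\cH_\nu-F$ from the endpoints that satisfy all constraints, exactly as in the proof of ``terminals in intersection''. Every constraint region is a union of components of $\cH_\nu-F$, and each component touches some endpoint of an edge in $F$ unless it is all of $\cH_\nu$. So a component satisfies all demands iff its boundary endpoints do, and constraint satisfaction is constant on each component.

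Finally, the same counting argument used for ``terminals in intersection'' bounds the BFS: the nodes scanned number $O(|Y|+|F|)$. The total time is therefore $O(|Y|+k^2)$, and the terminals collected in the scanned components form $Y$.
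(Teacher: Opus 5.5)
Your proof is correct, and in its key step it takes a different, more careful route than the paper.

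\textbf{Classification phase (same as the paper).} You use $O(1)$ ancestry queries per pair, drop trivially satisfied demands, and return $Y=\emptyset$ as soon as some demand is unsatisfiable. Your handling of degenerate cuts, and of the orientation for which $\nu$ is not a representative node, is detail the paper leaves implicit.

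\textbf{Where the paper's reduction goes wrong.} The paper sets $A_{i,j}$ to be the marked side of $C_{i,j}$ for $i\le\ell$ and the unmarked side for $i>\ell$. It then answers with a single ``terminals in intersection'' query on $\bigcap_i\bigcap_j A_{i,j}$. This is right for $i>\ell$. For $i\le\ell$, however, \Cref{lem:rep-cuts} says only that $v$ separates $P_i,Q_i$ iff $v$ lies on the marked side of \emph{some} $C_{i,j}$. So the inner intersection should be a union. Whenever a separate demand has two or more representative cuts, the paper's reduction can miss vertices of $Y$. In the paper's own application $\ell=2$, so this could be repaired by distributing the two disjunctions into $O(1)$ intersection queries, but that approach blows up exponentially in $\ell$.

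\textbf{What your route does instead.} You reopen the proof of the intersection query rather than calling it as a black box. Every cut edge lies in $F$, so every cut side, and hence every Boolean combination of sides, is a union of components of $\cH_\nu-F$. You evaluate the full predicate at the $O(k)$ endpoints of $F$ in $O(k^2)$ time using cut-side queries: unions of marked sides for separate demands, intersections of unmarked sides for the others. You then BFS from the satisfying endpoints. The bound $|V(H_i)|\le 3(|S_i|+f_i)$ holds per component and does not depend on the predicate. This gives $O(|Y|+k^2)$ for arbitrary $\ell$, which is exactly what the lemma claims.

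\textbf{Two small remarks.}
\begin{itemize}
\item Your aside that an intersection of unmarked sides is a connected region is false in a cactus. Two canonical cuts on the same cycle whose arcs overlap at both ends give a disconnected intersection. Nothing in your algorithm uses this claim, so just drop it.
\item If $F=\emptyset$ after elimination, the BFS has no sources. In that case output all of $\nu$, which costs $O(|\nu|)=O(|Y|)$.
\end{itemize}
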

\begin{proof}
See the start of the proof of~\Cref{lem:rep-framework-one-vertex} for the definition of the data structure.

We now explain how to answer the query.
First, for every $i =1, \dots, k$, we check if some representative node of $P_i,Q_i$ is an ancestor of $\nu$ in $\cD$ using the data structure of~\Cref{lem:ancestry-lca-data-structure}.
This takes $O(k)$ time, as there are $O(1)$ representative nodes for each $P_i,Q_i$, and each ancestry query takes $O(1)$ time.
We use this information as follows:
\begin{itemize}
    \item If some representative node of $P_i,Q_i$ is a \emph{strict} ancestor 
    of $\nu$, then every vertex in $\nu$ separates $P_i,Q_i$ by~\Cref{lem:rep-nodes-converse}.
    So, if $1 \leq i \leq \ell$ we can eliminate $P_i,Q_i$ from the query without affecting the result, and if $\ell+1 \leq i \leq k$ we can immediately report that $Y = \emptyset$.

    \item If no representative node of $P_i,Q_i$ is an ancestor of $\nu$, then no vertex in $\nu$ separates $P_i,Q_i$ by~\Cref{lem:rep-nodes}.
    So, if $1 \leq i \leq \ell$ we can immediately report that $Y = \emptyset$, and if $\ell+1 \leq i \leq \ell$ we can eliminate $P_i,Q_i$ from the query without affecting the result.
\end{itemize}
Hence, from now on, we may assume the following: for every $1 \leq i \leq k$, the vertices separating $P_i,Q_i$ in $\nu$ are those mapped to the marked side of some representative cut of $P_i,Q_i$ in the skeleton $\cH_\nu$.
Let $C_{i,1}, \dots ,C_{i,r_i}$ be the $O(1)$ representative cuts of $P_i,Q_i$.
We denote by $A_{i,j}$ the set of skeleton nodes in the marked (resp., unmarked) side of $C_{i,j}$ when $1 \leq i \leq \ell$ (resp., when $\ell+1 \leq i \leq k$).
Thus, $Y$ are precisely the vertices mapped to some skeleton node in $\bigcap_{i=1}^k \bigcap_{j=1}^{r_i} A_{i,j}$.
These can be reported in $O(|Y| + (r_1 + \cdots + r_k)^2) = O(|Y| + k^2)$ by the data structure for $\cH_\nu$ of~\Cref{lem:skeleton-data-structure}.
\end{proof}

\subsection{Output-Sensitive and One-Destination Query Algorithms: Proof of~\Cref{thm:faster-queries}}

We now provide the proof for the failure variant of~\Cref{thm:faster-queries}; the proof for the simpler insertion variant appears in~\Cref{sec:edge-insertion}.
We first focus on \Cref{item:faster-queries-reporting-all} of~\Cref{thm:faster-queries}.
That is, we show a data structure of $O(n^{1.5})$ space that, given a query of failed edge $e = \{x,y\} \in E$, reports $X = \{u \in V \setminus \{s\} \mid \text{$u$ separates $x,y$}\}$ within $O(|X| \log^3 n)$ time.

The data structure construction augments the previous $O(n)$ data structures for Query~\ref{query-Q1} (\Cref{lem:q1-data-structure}) and Query~\ref{query-Q2} (\Cref{lem:q2-data-structure}), by adding also the $O(n^{1.5})$ space structures for ancestry/LCA queries in $\cD$ from~\Cref{lem:ancestry-lca-data-structure} and for the extensions of the representative framework from~\Cref{lem:rep-framework-one-vertex,lem:blackbox-reporting-all}.
We will also use additional $O(|\mu| \log |\mu|)$ space for each node $\mu$ of $\cD$, and hence only $O(n \log n)$ additional space in total, as will be explained shortly (in the discussion after~\Cref{lem:separating-subpaths}).



\paragraph{Treating Exclusive Ancestors.}
We start by executing a multi-source BFS from $\cD(x),\cD(y)$ \emph{in the reverse of $\cD$}, that keeps exploring the edges of a reached node only if it is not a common ancestor of $\cD(x),\cD(y)$, which is checked within $O(1)$ using~\Cref{lem:ancestry-lca-data-structure}.
The ``halting nodes'' where this condition doesn't hold are precisely the LCAs of $\cD(x),\cD(y)$, which are at most two by~\Cref{lem:at-most-two-incomparable-ancestors}.
Except for the LCAs, the BFS visits precisely those ``exclusive ancestors'' from~\Cref{sec:DAG-overview}, i.e., ancestors of exactly one of $\cD(x),\cD(y)$).
As explained there, all vertices in these nodes belong to $X$.
Because each node in $\cD$ has at most two parents, the number of edges that this BFS-in-reverse touches is at most twice the number of nodes it reaches.
Therefore, this procedure reports all those vertices of $X$ found in exclusive ancestors within $O(X)$ time.
All other vertices in $X$ are found in ancestors of the LCAs, as explained in~\Cref{sec:DAG-overview}, and finding these becomes our main goal.

So, from now on we focus on an LCA $\mu$ of $\cD(x),\cD(y)$.
We can also ensure that the BFS procedure above finds two children of $\mu$ which are ancestors of $\cD(x)$ and of $\cD(y)$, and hence allows us to determine the anchors $z_x,z_y$ of $x,y$ w.r.t.\ $\mu$ by~\Cref{def: anchors}.
Given the equivalence of the anchors to $x,y$ from~\Cref{lem:translation-DAG}, our problem now reduces to the following task:
Given $e = \{x,y\} \in E$, an LCA $\mu$ of $\cD(x),\cD(y)$, and the anchors $z_x,z_y$ of $x,y$ w.r.t.\ $\mu$, we wish to report the set $X^\mu : = \{u \in \cDup(\mu) \mid \text{$u$ separates $z_x,z_y$}\}$ in $\tilde{O}(|X^{\mu}|)$ time.
Since this is our only focus, and in order to avoid clutter, from now on we will slightly abuse notation and write $X$ instead of $X^\mu$.

\paragraph{$\RepInf(\cdot)$ for Proper Paths.}

We will need some additional notions regarding proper paths in the skeleton $\cH_\mu$.
Let $P$ be such a proper path with endpoint nodes $a$ and $b$.
Vertex $u$ is said to \emph{separate $P$} if $u$ separates the associated terminal sets $\mu(a,P),\mu(b,P)$ of $P$ (which have been defined in~\Cref{def:associated-terminals}).
That is, there exists a $u,s$-mincut which keeps $\mu(a,P)$ on one side, and $\mu(b,P)$ on the other side.
(Note that a single skeleton edge is trivially also a proper path.)
We define \emph{the representative information of $P$} as a shorthand to the representative information of its associated terminal sets, and write this as $\RepInf(P) := \RepInf(\mu(a,P), \mu(b,P))$.
We have the following easy consequence of~\Cref{lem:associated-terminals} and~\Cref{lem:mu-mincuts}:
\begin{lemma}\label{lem:separating-subpaths}
    Let $P$ be a proper path in $\cH_\mu$.
    Then, for each vertex $u \in V \setminus \{s\}$, it holds that 
    $u$ separates $P$ 
    iff 
    $u$ separates some edge $g$ of $P$.
\end{lemma}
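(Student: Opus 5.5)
The plan is to reduce both directions to statements about valid partitions of $\mu$, using \Cref{lem:mu-mincuts} to pass from $u,s$-mincuts to $\mu$-mincuts and \Cref{lem:associated-terminals} to pass from valid partitions to canonical cuts of $\cH_\mu$. Write $P=(a_0,\dots,a_k)$ with edges $g_i=\{a_{i-1},a_i\}$, so that $\mu(a_0,P)=\mu(a_0,g_1)$ and $\mu(a_k,P)=\mu(a_k,g_k)$.

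($\Leftarrow$) Suppose $u$ separates some edge $g_i$. Then there is a $u,s$-mincut $U$ keeping $\mu(a_{i-1},g_i)$ on one side and $\mu(a_i,g_i)$ on the other. By \Cref{lem:containment-of-associated-terminals}, $\mu(a_0,g_1)\subseteq\mu(a_{i-1},g_i)$ and $\mu(a_k,g_k)\subseteq\mu(a_i,g_i)$. Hence the same $U$ keeps $\mu(a_0,P)$ and $\mu(a_k,P)$ on different sides, so $u$ separates $P$. This direction is immediate and does not need $\mu$-mincuts at all.

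($\Rightarrow$) Suppose $U$ is a $u,s$-mincut keeping $\mu(a_0,P)$ and $\mu(a_k,P)$ on different sides. These sets are nonempty subsets of $\mu$, so $U$ splits $\mu$. By \Cref{lem:mu-mincuts}, $W:=U\cap\cDdown(\mu)$ is a $\mu$-mincut, and $\mu\cap U,\ \mu\cap\overline U$ is a valid partition of $\mu$ inducing the same split of $\mu$ as $U$. This valid partition separates the associated terminal sets of $P$, so by \Cref{lem:associated-terminals} its canonical cut $C$ in $\cH_\mu$ (via \Cref{thm:skeleton}) contains some edge $g_i$ of $P$.

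It remains to show that $U$ itself separates $\mu(a_{i-1},g_i)$ from $\mu(a_i,g_i)$. Apply \Cref{lem:associated-terminals} again, this time to the one-edge proper path $g_i$. A canonical cut containing $g_i$ corresponds to a valid partition that keeps $\mu(a_{i-1},g_i)$ and $\mu(a_i,g_i)$ in different parts. Since the parts are exactly $\mu\cap U$ and $\mu\cap\overline U$, the cut $U$ keeps these two sets on different sides, so $u$ separates $g_i$.

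The only subtle step is this last one: the definition of the associated terminals of a single edge must match the sides of every canonical cut containing it, including a cycle edge paired with a non-adjacent cycle edge. This is exactly what \Cref{lem:associated-terminals} provides for the path of length one, so no extra case analysis is needed.
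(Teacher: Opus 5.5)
Your proof is correct and follows the paper's argument. In the nontrivial direction you pass from $U$ to the valid partition $\mu\cap U,\ \mu\cap\overline{U}$ via \Cref{lem:mu-mincuts}, then apply \Cref{lem:associated-terminals} twice (once to $P$, once to the single edge $g_i$); the only difference is that you handle the easy direction directly on the same cut $U$ using \Cref{lem:containment-of-associated-terminals}, whereas the paper routes it through the $\mu$-mincut $U\cap\cDdown(\mu)$ too, so your version is a harmless and slightly cleaner shortcut.
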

\begin{proof}
    First, consider any $\mu$-mincut $W$.
    By~\Cref{lem:associated-terminals}, $W$ separates the associated terminal sets of $P$ 
    iff the valid partition $\mu \cap W, \mu \cap \overline{W}$ corresponds to a canonical cut of $\cH_S$ that has an edge $g$ from $P$.
    But by~\Cref{lem:associated-terminals}, this happens iff $W$ separates the associated terminal sets of some edge $g$ of $P$.
    Now, let $U$ be any $(u,s)$-mincut.
    By~\Cref{lem:mu-mincuts}, $U$ separates the associated terminal sets of $P$ iff $U \cap \cDdown(\mu)$ is a $\mu$-mincut that separates these sets.
    By the previous discussion, this happens iff $U \cap \cDdown(\mu)$ is a $\mu$-mincut that separates the associated terminal sets of some edge $g$ of $P$, which, by~\Cref{lem:mu-mincuts}, happens iff $U$ separates them.
\end{proof}

We store the $O(|\mu| \log |\mu|)$ data structure of~\Cref{lem:proper-path-information-data-structure} for the skeleton $\cH_\mu$, where the information associated with a proper path is $\RepInf(P)$.
So, when queries with the two endpoints of a proper path, this data structure returns a partition of the path into $O(\log |\mu|)$ subpaths (specified by their endpoints), along with $\RepInf(\cdot)$ for each of these subpaths.
Additionally, we will store the $O(|\mu|)$ data structure of~\Cref{lem:skeleton-data-structure}, which lets us determine the order of nodes on a proper path.
So, over all nodes, this only adds $O(n \log n)$ additional storage, which is negligible as we already use $O(n^{1.5})$ storage.

\paragraph{Restriction to Specific Node.}
Let us focus on a specific node $\nu$ which is an ancestor of $\mu$, and suppose we just wish to find $X_\nu := X \cap \nu$, i.e., all those vertices in $\nu$ that separate $z_x,z_y$, in $\tilde{O}(X_\nu)$ time.
Thus, $\nu$ consists of all those vertices $u \in \nu$ that would have been reported by one the query algorithms for Query~\ref{query-Q1} from~\Cref{sec:splitting-mu}, or for Query~\ref{query-Q2} from~\Cref{sec:not-splitting-mu}.
Denote by $X_\nu^{(1)}$ and $X_\nu^{(2)}$ the respective sets of vertices in $X$, so $X_\nu = X_\nu^{(1)} \cup X_\nu^{(2)}$.
Thus, it suffices to report each $X_\nu^{(i)}$ in time $\tilde{O}(|X_\nu^{(i)}|)$.
From now on, we focus on the more complicated Query~\ref{query-Q1}, as the procedure for~\ref{query-Q2} is very similar (and much simpler).
Let us consider the steps of the query algorithm, and the way these are modified to obtain our goal.

First, Step 0 has already been executed, as we have found $z_x,z_y$ before.
Next, if Step 1 fails, i.e. $\lambda_\mu \neq c(R(\mu))$ or $\phi_\mu (z_x) = \phi_\mu (z_y)$, we know $X_\nu^{(1)} = \emptyset$, and we are done.

We ``simulate'' Step 2 by the following procedure.
We have stored the endpoints of the projection paths $\pi_\mu (z_x)$ and $\pi_\mu (z_y)$.
By a query to the data structure of~\Cref{lem:skeleton-data-structure} for $\cH_\mu$, we get their order on $\pi_\mu (e)$.
This allows us to determine the endpoints of the middle/overlap/prefix/suffix subpaths of $\pi_\mu (e)$.
For each such subpath $P$, we use the data structure from~\Cref{lem:proper-path-information-data-structure} which we have stored for $\cH_\mu$ (with $I(\cdot) = \RepInf(\cdot)$) to get a partition of $P$ to $O(\log n)$ paths $P_1,P_2,\dots$, along with $\RepInf(P_i)$ for each $i$.
We call $P_1,P_2, \dots$ the \emph{parts} of $P$.
We do not yet use their $\RepInf(\cdot)$-s, but ``save'' them for later usage.
So, this modified Step 2 takes us only $O(\log n)$ time.

Step 3 can be executed in exactly the same manner as before, except we do not compute the sets $A_\mu (z_x)$ and $A_\mu (z_y)$, but rather just ``save'' their corresponding representative information, i.e., $\RepInf(A_\mu (z_x)) := \RepInf(\mu_i (z_x) \cup \{x\}, \mu_{1-i} (z_x))$ for $A_\mu (z_x)$ and $\RepInf(A_\mu (z_y)) := \RepInf(\mu_j (z_x) \cup \{x\}, \mu_{1-j} (z_x))$ for the next step, which is the most crucial.

We now arrive at the key Step 4.
To demonstrate the essence, suppose that we just want to report all those vertices from $X^{(1)}_\nu$ which would have been reported in Substep 4c; similar arguments apply to each of the other substeps.
Thus, our goal is to report all those vertices $u \in \nu$ such that:
\begin{enumerate}
    \item $u$ belongs to either $A_\mu (z_x)$ or $A_\mu (z_y)$,

    \item $u$ separates some edge in the prefix or in the suffix, and

    \item $u$ does not separate any edge in the middle/overlap.
\end{enumerate}

To this end, iterate over all possible choices of $A \in \{A_\mu (z_x),
A_\mu (z_y)\}$ and part $P_i$ of either the prefix or the suffix.
In each such iteration, we apply~\Cref{lem:blackbox-reporting-all} to find all vertices in $\nu$ that (1) belong to $A$, (2) separate $P_i$ (or equivalently, separate some edge in $P_i$, by~\Cref{lem:separating-subpaths}), and (3) do not separate any of the $O(\log n)$ parts of the middle/overlap.
Namely, the input for~\Cref{lem:blackbox-reporting-all} is the list that starts with $\RepInf(A)$, $\RepInf(P_i)$, followed by $\RepInf(\cdot)$ of each part of the prefix/suffix (and $\ell = 2$).

These are a total of $O(\log n)$ calls to~\Cref{lem:blackbox-reporting-all}, where each call is applied on $O(\log n)$ $\RepInf(\cdot)$'s, thus taking $O(\text{\# vertices returned} + \log^2 n)$ time.
The union of the vertices returned by each of these calls is precisely the set of vertices from $\nu$ that would have been reported in Substep 4c.
Thus, the time it takes us to report this union is proportional to its size, up to a multiplicative $\log n$ factor, plus an additive $\log^3 n$ term.

As previously mentioned, a similar analysis applies to every other substep of Step 4 for Query~\ref{query-Q1}, and a simpler one applies to Query~\ref{query-Q2}.
We thus get the following:

\begin{corollary}\label{cor:report-separating-inside-a-node}
    There is an $O(n^{1.5})$ space data structure for the following queries:
    Given $e = \{x,y\} \in E$, an LCA $\mu$ of $\cD(x),\cD(y)$, the anchors $z_x,z_y$ of $x,y$ w.r.t.\ $\mu$,
    and given a node $\nu$ which is an ancestor of $\mu$, report the set $X_\nu = \{u \in \nu \mid \text{$u$ separates $z_x,z_y$}\}$.
    The query takes $O(|X_\nu| \log n + \log^3 n)$ time.
\end{corollary}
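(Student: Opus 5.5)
The plan is to assemble the corollary from the per-substep simulation sketched just before it. The data structure is the union of the $O(n)$ structures of \Cref{lem:q1-data-structure} and \Cref{lem:q2-data-structure}, the ancestry/LCA structure of \Cref{lem:ancestry-lca-data-structure}, and the representative-framework extensions of \Cref{lem:rep-framework-one-vertex,lem:blackbox-reporting-all}. These take $O(n^{1.5})$ space. On top of this, each $\cH_\mu$ gets the structures of \Cref{lem:skeleton-data-structure} and \Cref{lem:proper-path-information-data-structure}, the latter with $I(P)=\RepInf(P)$, for a total of $O(n\log n)$ extra space. By \Cref{lem:translation-DAG}, together with the split of $X_\nu$ into mincuts that split $\mu$ and those that keep $\mu$ on one side, we have $X_\nu = X_\nu^{(1)}\cup X_\nu^{(2)}$. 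It therefore suffices to report each part within the stated time; duplicates appear only $O(1)$ times each and can be removed with a marker array.

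For $X_\nu^{(1)}$, first check the Step~1 condition in $O(1)$ time from the stored values. Next, order the stored endpoints of $\pi_\mu(z_x)$ and $\pi_\mu(z_y)$ with one order-on-proper-path query, and use \Cref{lem:identifying-x-endpoint} to fix $a_x$ and $a_y$. This determines the endpoints of the middle, overlap, prefix and suffix subpaths. Each subpath is then cut into $O(\log n)$ parts with their $\RepInf$ values via \Cref{lem:proper-path-information-data-structure}. By \Cref{lem:separating-subpaths}, ``$u$ separates some edge of a subpath'' is equivalent to ``$u$ separates some part''.

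Each substep of Step~4 becomes a disjunction of $O(\log n)$ conjunctions, and each conjunction is one call to \Cref{lem:blackbox-reporting-all} with $\ell\le 3$ positive demands and $O(\log n)$ negative demands:
\begin{itemize}
\item For 4a, one call per middle part.
\item For 4b, one call per overlap part, with positive demands $A_\mu(z_x)$ and $A_\mu(z_y)$ and negative demands all middle parts.
\item For 4c, one call per choice of $A\in\{A_\mu(z_x),A_\mu(z_y)\}$ and prefix or suffix part, with negative demands all middle and overlap parts.
\end{itemize}
The required side condition $P_i\cup Q_i\subseteq\cDdown(\nu)$ holds. All associated terminal sets lie in $\mu$, and $z_x,z_y\in\cDdown(\mu)\subseteq\cDdown(\nu)$ because $\nu$ is an ancestor of $\mu$. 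Correctness then follows from Lemmas 4a--4d, restricted to $u\in\nu$.

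The cost is $O(\log n)$ calls, each taking $O(|Y|+\log^2 n)$ time. Every reported vertex is in $X_\nu$ and appears in at most $O(\log n)$ outputs, so the total is $O(|X_\nu|\log n+\log^3 n)$.

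For $X_\nu^{(2)}$, compare $\tau_\mu(z_x)$ and $\tau_\mu(z_y)$. Then, following \Cref{lem:2b}, make a single call to \Cref{lem:blackbox-reporting-all} with positive demand $\RepInf(\mu,z_y)$ (or $\RepInf(\mu,z_x)$). This takes $O(|X_\nu^{(2)}|+1)$ time.

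The main obstacle is the bookkeeping in Step~4. The negative demands must exactly express ``does not separate any edge of the middle (or overlap)'' through the part decomposition, which holds because by \Cref{lem:separating-subpaths} separating a part is the same as separating one of its edges. The positive demands in 4c also have to be split per part, so that each output vertex costs only $O(\log n)$ duplications. Degenerate cases, such as an empty subpath or a non-stretched $z_x$ with $A_\mu(z_x)=\emptyset$, are handled by skipping the corresponding calls.
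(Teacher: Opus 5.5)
Your proposal is correct and follows the paper's route. You augment the $O(n)$ Query (Q1)/(Q2) structures with the ancestry structure and the skeleton and proper-path-information structures. You then simulate Steps 0--3 in $O(\log n)$ time by saving the $\RepInf(\cdot)$ of the $O(\log n)$ parts and of the $A_\mu$-sets, and express each substep of Step 4 as $O(\log n)$ calls to the demand-based reporting lemma, each with $O(\log n)$ demands, giving $O(|X_\nu|\log n+\log^3 n)$. You also spell out the 4a/4b and Query (Q2) cases that the paper only calls ``similar,'' and in 4c you correctly use the middle/overlap parts as the negative demands, where the paper's text mistakenly says prefix/suffix parts.
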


\paragraph{Finding The Relevant Nodes.}
We are now ready to explain how we find the set $X$ consisting of every $u \in \cDup(\mu)$ which separates $z_x,z_y$ in time $\tilde{O}(X)$.

Call a node $\nu$ which is an ancestor of $\mu$ \emph{relevant} if it contains a vertex from $X$, i.e., if $X_\nu = X \cap \nu \neq \emptyset$.
If we somehow knew all the relevant nodes, we could have just applied the query of~\Cref{cor:report-separating-inside-a-node} on each of them, and return the union of found vertices, which is precisely $X$, and this would take $\tilde{O}(X)$ time.
So, our goal is to find these relevant nodes without spending much time looking into irrelevant ones.

To this end, we execute the following BFS-like procedure, which works by \emph{visiting} nodes $\nu$ of $\cD$, where in such a visit reports the set $X_\nu$ and marks $\nu$ as visited.
We initialize a queue $\mathcal{Q}$ to contain $\mu$, and the following iterations while $\mathcal{Q} \neq \emptyset$:
\begin{itemize}
    \item Dequeue node $\nu$ from the front of $\mathcal{Q}$.
    \item If $\nu$ is not marked as visit it, visit $\nu$ as follows:
    \begin{itemize}
        \item Find and report $U_\nu$ using~\Cref{cor:report-separating-inside-a-node}.
        \item Mark $\nu$ as visited.
        \item If $\nu$ is relevant (i.e., $U_\nu \neq \emptyset$), enqueue its (at most two) parents.
    \end{itemize}
\end{itemize}

To prove correctness, we should show that this procedure indeed visits all the relevant nodes.
Let $\nu$ be a relevant node, and let $(\nu = \nu_0, \nu_1, \dots, \nu_k = \mu)$ be a path from $\nu$ to $\mu$ in $\cD$.
Using the insights behind the representative framework from~\Cref{sec:rep-framework}, we can show that every $\nu_i$ on this path must be relevant.
Indeed, because $\nu = \nu_0$ is useful, i.e., contains a vertex which separates $z_x,z_y$, by~\Cref{lem:rep-nodes} it must be a descendant of some representative node $\eta$ for $(z_x,z_y)$ or for $(z_y,z_x)$.
Thus, every other $\nu_i$ is a \emph{strict} descendant of $\eta$, and as $\mu$ is a descendant of $\nu_i$, we have have $z_x,z_y \in \cDdown(\mu) \subseteq \cDdown(\nu_i)$.
Therefore, by~\Cref{lem:rep-nodes}, every vertex in $\nu_i$ must separate $z_x,z_y$, and in particular $\nu_i$ is relevant.
Now, as the procedure starts by visiting $\mu = \nu_k$, it must find that it is relevant and enqueue $\nu_{k-1}$, and so it must also visit $\nu_{k-1}$, find that it is relevant and enqueue $\nu_{k-2}$, and so on until $\nu_0 = \nu$ is visited. 

Finally, as each visit to a relevant node causes at most two enqueue operations, and visits to irrelevant nodes do not cause any enqueues, the number of visited node is bounded by at most twice the number of relevant nodes, which is at most $X$.
A visit to node $\nu$ takes $O(|X_\nu| \log n + \log^3 n)$ time.
Hence, the total running time is $O(\sum_\nu |X_\nu| \log n + |X| \log^3 n) = O(|X| \log n + |X| \log^3 n) = O(|X| \log^3 n)$.

This concludes the proof of (the failure variant) of~\Cref{thm:faster-queries}(1).




\paragraph{One Destination Queries.}
We now discuss \Cref{item:faster-queries-one-vertex} of~\Cref{thm:faster-queries}, where a query consists of failed edge $e = \{x,y\} \in E$ and destination $u \in V \setminus \{s\}$, and we should determine whether the $u,s$-mincut value decreases after the failure of $e$, or equivalently, whether $u$ separates $x,y$.

The data structure is the same as for \Cref{item:faster-queries-reporting-all}.

To answer the query, we first determine the ancestry relations between $\cD(u)$ and $\c(x),\cD(y)$, which takes $O(1)$ time using~\Cref{lem:ancestry-lca-data-structure}.
As discussed in~\Cref{sec:DAG-overview}, the only case where this does not suffice is when $\cD(u)$ is an ancestor of both.
Next, using~\Cref{lem:ancestry-lca-data-structure}, we find the (at most two) LCAs of $\cD(x),\cD(y)$ within $O(\log n)$ time.
For each LCA $\mu$, we also get two children of it which are ancestors of $\cD(x)$ and of $\cD(y)$, and hence we know the anchors of $x,y$ w.r.t.\ $\mu$ (by~\Cref{def: anchors}).
By two more ancestry queries, we find one LCA $\mu$ s.t.\ $u \in \cDup(\mu)$, and let $z_x,z_y$ be the anchors of $x,y$ w.r.t.\ $\mu$.
At this point, our goal becomes to determine whether $u$ would have been reported by the query algorithms of Query~\ref{query-Q1} or~\ref{query-Q2}; we focus on the first one, as the latter is similar but much simpler.
We execute the modified steps, from Step 0 up to Step 3, exactly as we did for~\Cref{item:faster-queries-reporting-all}.
So, when we arrive at Step 4, we have the following:
\begin{itemize}
    \item For each subpath $P$ of $\pi_\mu (e)$ among the middle/overlap/prefix/suffix, we have a partition of $P$ into $O(\log n)$ parts $P_1,P_2, \cdots$, along with $\RepInf(P_i)$ for each part.
    \item $\RepInf(A_\mu (z_x)) := \RepInf(\mu_i (z_x) \cup \{z_x\}, \mu_{1-i} (z_x))$ and $\RepInf(A_\mu (z_y)) := \RepInf(\mu_j (z_y) \cup \{z_y\}, \mu_{1-j} (z_y))$.
\end{itemize}
Thus, by $O(\log n)$ calls to the representatives framework structure of~\Cref{lem:rep-framework-one-vertex} (each taking $O(1)$ time), we determine for each $P \in \{\text{middle, overlap, prefix, suffix}\}$ whether $u$ separates $P$ (or equivalently, by~\Cref{lem:separating-subpaths}, separates an edge of $P$), and also determine whether $u \in A_\mu (z_x)$ and whether $u \in A_\mu (z_y)$.
So we can now decide in $O(1)$ time whether to report $u$, by exactly the same logic of Step 4 from~\Cref{sec:splitting-mu}.

This concludes the proof of~\Cref{item:faster-queries-one-vertex} of~\Cref{thm:faster-queries}.

\subsection{Output-Sensitive Queries for All-Pairs Mincuts: Proof of~\Cref{thm:all-pairs-reporting-all}}

We now prove~\Cref{thm:all-pairs-reporting-all}, and show a data structure with $O(n^{2.5})$ space that, given a query of failed edge $e = \{x,y\}$, reports the set $X$ of all vertex pairs $u,w$ such that the $u,w$-mincut value changes after the failure of $e$, within $O(|X| \log^3 n)$ time.
%
%
%
We store, for every $s \in V$ the single-source data structures of~\Cref{thm:faster-queries} 
corresponding to this source $s$, so the space is clearly $O(n^{2.5})$.
The key insight is:

\begin{lemma}\label{lem:reporting-all-pairs}
    Let $e = \{x,y\} \in E$, and let $u,w \in V$.
    If the failure of $e$ causes the $u,w$-mincut value to decrease, then it also causes either the $u,x$-mincut value or the $u,y$-mincut value to decrease.
\end{lemma}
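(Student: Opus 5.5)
We argue directly with a single mincut and the fact that the failure of an edge decreases the $a,b$-mincut value exactly when that edge contributes to some $a,b$-mincut. Since the failure of $e$ decreases the $u,w$-mincut value, there is a $u,w$-mincut $C$ to which $e$ contributes. Without loss of generality $u \in C$ and $w \notin C$, and since $e$ crosses $C$, exactly one of $x,y$ lies in $C$. So one of $x,y$ lies on the side of $u$ and the other on the side of $w$. Say $x \in C$ and $y \notin C$ (the other case is symmetric with the roles of $x$ and $y$ swapped). Then $C$ is a $u,y$-cut with $e$ contributing.

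The goal is then to show that $C$ is in fact a $u,y$-mincut, i.e., $\lambda(u,y) = c(C) = \lambda(u,w)$. One inequality is immediate: $\lambda(u,y) \le c(C) = \lambda(u,w)$. For the reverse inequality, the naive triangle bound $\lambda(u,w) \ge \min(\lambda(u,y),\lambda(y,w))$ is not enough by itself, so I would use an uncrossing argument. Take any $u,y$-mincut $D$ with $u \in D$ and $y \notin D$, and compare it with $C$ using \Cref{lem:sub-posi-general}.

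\begin{itemize}
\item If $w \notin D$, then $D$ is a $u,w$-cut, so $c(D) \ge \lambda(u,w) = c(C)$. Hence $\lambda(u,y) \ge c(C)$, and $C$ is a $u,y$-mincut with $e$ contributing, as desired.
\item If $w \in D$, then $D$ is a $(w,y)$-cut and $C$ is a $(u,w)$-mincut. The sets $C \cap D$ (containing $u$ but neither $w$ nor $y$) and $C \cup D$ (containing $u$ and $w$ but not $y$) give, by the submodularity item of \Cref{lem:sub-posi-general} applied to the $u,w$-mincut $C$ and the $u,y$-mincut $D$, that $C \cap D$ is a $u,w$-mincut and $C \cup D$ is a $u,y$-mincut. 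Moreover, no edge goes between $C \setminus D$ and $D \setminus C$.
\end{itemize}

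The second case is the main obstacle, and I would handle it via where $x$ lies. If $x \notin D$, then $x \in C \setminus D$ and $y \notin C \cup D$, so $e$ contributes to the $u,y$-mincut $C \cup D$ and we are done. If $x \in D$, then $x \in C \cap D$, which is a $u,w$-mincut with $y$ outside it, so $e$ contributes to that $u,w$-mincut as well. In this subcase I would instead try to switch to the pair $u,x$. Use the posimodularity item of \Cref{lem:sub-posi-general} on $D \setminus C$ (containing $w$, not $u$, not $y$) and $C \setminus D$ (containing $u$, not $w$, not $y$), together with the no-edge conclusion between $C\setminus D$ and $D\setminus C$, to produce a $u,x$-mincut or $u,y$-mincut crossed by $e$.

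If this case analysis gets tangled, the fallback is a cleaner route. Every $u,w$-mincut crossed by $e$ has $e$ with one endpoint on each side. Among all such cuts, pick $C$ minimal (e.g., $\Near$-type with respect to $u$), and argue by minimality that $C$ is already a $u,y$-mincut, since otherwise $C \cap D$ strictly shrinks it while still being crossed by $e$. Iterating this shrinking terminates, and it yields that $C$ is a $u,y$-mincut with $e$ contributing, which is what the lemma needs.
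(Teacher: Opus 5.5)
The argument does not close as written, even though you set up the right uncrossing. The subcase $w\in D$, $x\in D$ is left open: the posimodularity ``switch to the pair $u,x$'' is never carried out. Your fallback also cannot work, because it rests on your opening goal that $C$ itself is a $u,y$-mincut, i.e.\ $\lambda(u,y)=\lambda(u,w)$, and this is false in general.

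For a counterexample, take vertices $u,x,y,w$ with $10$ parallel edges $\{u,x\}$, $3$ parallel edges $\{x,w\}$, and single edges $e=\{x,y\}$ and $\{y,w\}$.
\begin{itemize}
\item Then $\lambda(u,w)=4$, attained by $\{u,x\}$ and $\{u,x,y\}$. Only $C=\{u,x\}$ is crossed by $e$, so it is the minimal cut your fallback would pick.
\item But $\lambda(u,y)=2$, attained only by $D=\{u,x,w\}$. So $C$ is not a $u,y$-mincut.
\item Moreover $C\cap D=C$, so no shrinking occurs. This $D$ falls exactly into your unresolved subcase.
\end{itemize}

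The missing step is a one-liner. If $x\in D$, then since $y\notin D$, the edge $e$ already crosses the $u,y$-mincut $D$. Equivalently, $x\in C\subseteq C\cup D$ and $y\notin C\cup D$, so $e$ crosses $C\cup D$ wherever $x$ lies relative to $D$.

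No case split on $w$ is needed either. $C\cap D$ is always a $u,w$-cut, because $w\notin C$. $C\cup D$ is always a $u,y$-cut, because $y\notin C$ and $y\notin D$. So \Cref{lem:sub-posi-general} directly makes $C\cup D$ a $u,y$-mincut containing $x$ but not $y$. This is the paper's entire proof.

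It also shows that the mincut that drops is always the one between $u$ and the endpoint of $e$ on $w$'s side of $C$. You never need the pair $u,x$.
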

\begin{proof}
    Let $A$ be a $(u,w)$-mincut with $e$ contributing, which exists by the assumption of the lemma.
    Without loss of generality, say $x \in A$ and $y \in \overline{A}$ (otherwise, swap $x$ and $y$ in the following argument).
    Let $B$ be some $(u,y)$-mincut ($u \neq y$ since $u \in A$ and $y \notin A$).
    Observe that $A \cap B$ is a $(u,w)$-cut, and $A \cup B$ is a $(u,y)$-cut.
    Hence, by submodularity (\Cref{lem:sub-posi-general}), $A \cup B$ is a $(u,y)$-mincut.
    Also, $x \in A \cup B$ and $y \in \overline{A \cup B}$, so $e$ contributes to $A \cup B$.
    Therefore, the failure of $e$ causes the $u,y$-mincut value to decrease.
\end{proof}

This leads to the following query algorithm:
    First, using the data structures for $x$ and for $y$ as the sources, find all vertices $u$ such that either the $u,x$-mincut value or the $u,y$-mincut value decreases after the failure of $e$.
    Let $U$ be the set of all such vertices $u$ that were found.
    Then, for every $u \in U$, use the data structure for $u$ as the source to find all vertices $w$ such that the $u,w$-mincut value decreases after the failure of $e$.
    Let $W_u$ be the set of all such vertices $w$ that were found.
    Report $X$ as the set of pairs $u,w$ such that $w \in W_u$.

The correctness follows immediately from~\Cref{lem:reporting-all-pairs}.
As for the running time:
The first step takes us $O(|U| \log^3 n)$ time, and the second $O(\sum_{u \in U} |W_u|\log^3 n)$ time.
Note that, by definition of $U$, for each $u \in U$, $W_u$ contains at least one of $x$ or $y$, hence $|U| \leq \sum_{u \in U} |W_u|$.
Also, this last sum counts every pair from $X$ at most twice.
Thus, the running time of the query algorithm is $O(|X| \log^3 n)$.

\phantomsection
\addcontentsline{toc}{section}{References}
{\small
\bibliographystyle{alphaurl}
\bibliography{references}

@inproceedings{DBLP:conf/soda/AbboudKT22,
  author       = {Amir Abboud and
                  Robert Krauthgamer and
                  Ohad Trabelsi},
  editor       = {Joseph (Seffi) Naor and
                  Niv Buchbinder},
  title        = {Friendly Cut Sparsifiers and Faster Gomory-Hu Trees},
  booktitle    = {Proceedings of the 2022 {ACM-SIAM} Symposium on Discrete Algorithms,
                  {SODA} 2022, Virtual Conference / Alexandria, VA, USA, January 9 -
                  12, 2022},
  pages        = {3630--3649},
  publisher    = {{SIAM}},
  year         = {2022},
  url          = {https://doi.org/10.1137/1.9781611977073.143},
  doi          = {10.1137/1.9781611977073.143},
  bibsource    = {dblp computer science bibliography, https://dblp.org}
}

@inproceedings{DBLP:conf/soda/HeHS24,
  author       = {Zhongtian He and
                  Shang{-}En Huang and
                  Thatchaphol Saranurak},
  editor       = {David P. Woodruff},
  title        = {Cactus Representations in Polylogarithmic Max-flow via Maximal Isolating
                  Mincuts},
  booktitle    = {Proceedings of the 2024 {ACM-SIAM} Symposium on Discrete Algorithms,
                  {SODA} 2024, Alexandria, VA, USA, January 7-10, 2024},
  pages        = {1465--1502},
  publisher    = {{SIAM}},
  year         = {2024},
  url          = {https://doi.org/10.1137/1.9781611977912.60},
  doi          = {10.1137/1.9781611977912.60},
  bibsource    = {dblp computer science bibliography, https://dblp.org}
}

@article{DBLP:journals/siamcomp/DinitzV00,
  author    = {Yefim Dinitz and
               Alek Vainshtein},
  title     = {The General Structure of Edge-Connectivity of a Vertex Subset in a
               Graph and its Incremental Maintenance. Odd Case},
  journal   = {{SIAM} J. Comput.},
  volume    = {30},
  number    = {3},
  pages     = {753--808},
  year      = {2000},
  url       = {https://doi.org/10.1137/S0097539797330045},
  doi       = {10.1137/S0097539797330045},
  bibsource = {dblp computer science bibliography, https://dblp.org}
}

@inproceedings{DBLP:conf/soda/DinitzV95,
  author       = {Yefim Dinitz and
                  Alek Vainshtein},
  editor       = {Kenneth L. Clarkson},
  title        = {Locally Orientable Graphs, Cell Structures, and a New Algorithm for
                  the Incremental Maintenance of Connectivity Carcasses},
  booktitle    = {Proceedings of the Sixth Annual {ACM-SIAM} Symposium on Discrete Algorithms,
                  22-24 January 1995. San Francisco, California, {USA}},
  pages        = {302--311},
  publisher    = {{ACM/SIAM}},
  year         = {1995},
  url          = {http://dl.acm.org/citation.cfm?id=313651.313711},
  bibsource    = {dblp computer science bibliography, https://dblp.org}
}

@inproceedings{DBLP:conf/stoc/DinitzV94,
  author       = {Yefim Dinitz and
                  Alek Vainshtein},
  editor       = {Frank Thomson Leighton and
                  Michael T. Goodrich},
  title        = {The connectivity carcass of a vertex subset in a graph and its incremental
                  maintenance},
  booktitle    = {Proceedings of the Twenty-Sixth Annual {ACM} Symposium on Theory of
                  Computing, 23-25 May 1994, Montr{\'{e}}al, Qu{\'{e}}bec,
                  Canada},
  pages        = {716--725},
  publisher    = {{ACM}},
  year         = {1994},
  url          = {https://doi.org/10.1145/195058.195442},
  doi          = {10.1145/195058.195442},
  bibsource    = {dblp computer science bibliography, https://dblp.org}
}

@article{DBLP:journals/jacm/KawarabayashiT19,
  author    = {Ken{-}ichi Kawarabayashi and
               Mikkel Thorup},
  title     = {Deterministic Edge Connectivity in Near-Linear Time},
  journal   = {J. {ACM}},
  volume    = {66},
  number    = {1},
  pages     = {4:1--4:50},
  year      = {2019},
  url       = {https://doi.org/10.1145/3274663},
  doi       = {10.1145/3274663},
  bibsource = {dblp computer science bibliography, https://dblp.org}
}

@inproceedings{DBLP:conf/soda/BaswanaP22,
  author    = {Surender Baswana and
               Abhyuday Pandey},
  editor    = {Joseph (Seffi) Naor and
               Niv Buchbinder},
  title     = {Sensitivity Oracles for All-Pairs Mincuts},
  booktitle = {Proceedings of the 2022 {ACM-SIAM} Symposium on Discrete Algorithms,
               {SODA} 2022, Virtual Conference / Alexandria, VA, USA, January 9 -
               12, 2022},
  pages     = {581--609},
  publisher = {{SIAM}},
  year      = {2022},
  url       = {https://doi.org/10.1137/1.9781611977073.27},
  doi       = {10.1137/1.9781611977073.27},
  bibsource = {dblp computer science bibliography, https://dblp.org}
}

@article{DBLP:journals/siamcomp/Patrascu11,
  author    = {Mihai Patrascu},
  title     = {Unifying the Landscape of Cell-Probe Lower Bounds},
  journal   = {{SIAM} J. Comput.},
  volume    = {40},
  number    = {3},
  pages     = {827--847},
  year      = {2011},
  url       = {https://doi.org/10.1137/09075336X},
  doi       = {10.1137/09075336X},
  bibsource = {dblp computer science bibliography, https://dblp.org}
}

@inproceedings{DBLP:conf/wads/GoldsteinKLP17,
  author    = {Isaac Goldstein and
               Tsvi Kopelowitz and
               Moshe Lewenstein and
               Ely Porat},
  editor    = {Faith Ellen and
               Antonina Kolokolova and
               J{\"{o}}rg{-}R{\"{u}}diger Sack},
  title     = {Conditional Lower Bounds for Space/Time Tradeoffs},
  booktitle = {Algorithms and Data Structures - 15th International Symposium, {WADS}
               2017, St. John's, NL, Canada, July 31 - August 2, 2017, Proceedings},
  series    = {Lecture Notes in Computer Science},
  volume    = {10389},
  pages     = {421--436},
  publisher = {Springer},
  year      = {2017},
  url       = {https://doi.org/10.1007/978-3-319-62127-2\_36},
  doi       = {10.1007/978-3-319-62127-2\_36},
  bibsource = {dblp computer science bibliography, https://dblp.org}
}

@article{DBLP:journals/mp/PicardQ80,
  author    = {Jean{-}Claude Picard and
               Maurice Queyranne},
  title     = {On the structure of all minimum cuts in a network and applications},
  journal   = {In Rayward-Smith V.J. (eds) Combinatorial Optimization II. Mathematical Programming Studies},
  volume    = {13},
  number    = {1},
  pages     = {8--16},
  year      = {1980},
  url       = {https://doi.org/10.1007/BFb0120902},
  doi       = {10.1007/BFb0120902},
  bibsource = {dblp computer science bibliography, https://dblp.org}
}

@article{DBLP:journals/talg/BaswanaBP23,
  author       = {Surender Baswana and
                  Koustav Bhanja and
                  Abhyuday Pandey},
  title        = {Minimum+1 (\emph{s, t})-cuts and Dual-edge Sensitivity Oracle},
  journal      = {{ACM} Trans. Algorithms},
  volume       = {19},
  number       = {4},
  pages        = {38:1--38:41},
  year         = {2023},
  url          = {https://doi.org/10.1145/3623271},
  doi          = {10.1145/3623271},
  bibsource    = {dblp computer science bibliography, https://dblp.org}
}

@string{JACM = {J.~ACM}}

@string{SICOMP = {SIAM J.~Comput.}}

@string{FOCS = {Proc.~FOCS}}

@string{STOC = {Proc.~STOC}}

@string{ICALP = {Proc.~ICALP}}

@string{ALG = {Algorithmica}}

@string{COMB = {Combinatorica}}

@string{SODA = {Proc.~SODA}}

@string{LNCS = {LNCS Series}}

@string{ISAAC = {Proc.~ISAAC}}

@string{TCS = {Theoretical Computer Science}}

@inproceedings{BenczurK96,
  author    = {A.~A. Bencz{\'u}r and D.~R. Karger},
  title     = {Approximating $s$-$t$ Minimum Cuts in $\tilde{O}(n^2)$ Time},
  booktitle = {Proceedings 28th ACM Symposium on Theory of Computing (STOC)},
  year      = {1996},
  pages     = {47--55},
}

@inproceedings{BF-C00,
  author =       {M.~A. Bender and M. Farach-Colton},
  title =        {The {LCA} problem revisited},
  booktitle =    {Proceedings 4th Latin American Symp. on Theoretical
                  Informatics (LATIN), LNCS Vol. 1776},
  pages =        {88--94},
  year =         {2000},
}

@article{BenderF04,
	author = {M.~A. Bender and M.~Farach-Colton},
	title = {The level ancestor problem simplified},
	journal = TCS,
	volume = {321},
	number = {1},
	pages = {5--12},
	year = {2004},
}

@article{Dilworth50,
	author = {R.~P. Dilworth},
	title = {A Decomposition Theorem for Partially Ordered Sets},
	journal = {The Annals of Mathematics},
	volume = {51},
	number = {1},
	year = {1950},
	pages = {161--166},
}

@article{DinicKL76,
	author = {E.~A. Dinic and A.~V. Karzanov and M.~V. Lomonosov},
	title = {On the structure of the system of minimum edge cuts in a graph},
	journal = {Studies in Discrete Optimization},
	editor = {A.~A. Fridman},
	publisher = {Nauka, Moscow},
	pages = {290--306},
	note = {(in Russian)},
	year = {1976},
}

@inproceedings{DBLP:conf/focs/AbboudKLPGSYY25,
  author       = {Amir Abboud and
                  Rasmus Kyng and
                  Jason Li and
                  Debmalya Panigrahi and
                  Maximilian Probst Gutenberg and
                  Thatchaphol Saranurak and
                  Weixuan Yuan and
                  Wuwei Yuan},
  title        = {Deterministic Almost-Linear-Time Gomory-Hu Trees},
  booktitle    = {66th {IEEE} Annual Symposium on Foundations of Computer Science, {FOCS}
                  2025, Sydney, Australia, December 14-17, 2025},
  pages        = {659--666},
  publisher    = {{IEEE}},
  year         = {2025},
  url          = {https://doi.org/10.1109/FOCS63196.2025.00035},
  doi          = {10.1109/FOCS63196.2025.00035},
  bibsource    = {dblp computer science bibliography, https://dblp.org}
}

@inproceedings{DBLP:conf/soda/JinST24,
  author       = {Wenyu Jin and
                  Xiaorui Sun and
                  Mikkel Thorup},
  editor       = {David P. Woodruff},
  title        = {Fully Dynamic Min-Cut of Superconstant Size in Subpolynomial Time},
  booktitle    = {Proceedings of the 2024 {ACM-SIAM} Symposium on Discrete Algorithms,
                  {SODA} 2024, Alexandria, VA, USA, January 7-10, 2024},
  pages        = {2999--3026},
  publisher    = {{SIAM}},
  year         = {2024},
  url          = {https://doi.org/10.1137/1.9781611977912.107},
  doi          = {10.1137/1.9781611977912.107},
  bibsource    = {dblp computer science bibliography, https://dblp.org}
}

@inproceedings{DBLP:conf/icalp/GoranciH23,
  author       = {Gramoz Goranci and
                  Monika Henzinger},
  editor       = {Kousha Etessami and
                  Uriel Feige and
                  Gabriele Puppis},
  title        = {Efficient Data Structures for Incremental Exact and Approximate Maximum
                  Flow},
  booktitle    = {50th International Colloquium on Automata, Languages, and Programming,
                  {ICALP} 2023, Paderborn, Germany, July 10-14, 2023},
  series       = {LIPIcs},
  pages        = {69:1--69:14},
  publisher    = {Schloss Dagstuhl - Leibniz-Zentrum f{\"{u}}r Informatik},
  year         = {2023},
  url          = {https://doi.org/10.4230/LIPIcs.ICALP.2023.69},
  doi          = {10.4230/LIPICS.ICALP.2023.69},
  bibsource    = {dblp computer science bibliography, https://dblp.org}
}

@inproceedings{DBLP:conf/soda/GoranciHNSTW23,
  author       = {Gramoz Goranci and
                  Monika Henzinger and
                  Danupon Nanongkai and
                  Thatchaphol Saranurak and
                  Mikkel Thorup and
                  Christian Wulff{-}Nilsen},
  editor       = {Nikhil Bansal and
                  Viswanath Nagarajan},
  title        = {Fully Dynamic Exact Edge Connectivity in Sublinear Time},
  booktitle    = {Proceedings of the 2023 {ACM-SIAM} Symposium on Discrete Algorithms,
                  {SODA} 2023, Florence, Italy, January 22-25, 2023},
  pages        = {70--86},
  publisher    = {{SIAM}},
  year         = {2023},
  url          = {https://doi.org/10.1137/1.9781611977554.ch3},
  doi          = {10.1137/1.9781611977554.CH3},
  bibsource    = {dblp computer science bibliography, https://dblp.org}
}

@inproceedings{DBLP:conf/icalp/GoranciHRS25,
  author       = {Gramoz Goranci and
                  Monika Henzinger and
                  Harald R{\"{a}}cke and
                  A. R. Sricharan},
  editor       = {Keren Censor{-}Hillel and
                  Fabrizio Grandoni and
                  Jo{\"{e}}l Ouaknine and
                  Gabriele Puppis},
  title        = {Incremental Approximate Maximum Flow via Residual Graph Sparsification},
  booktitle    = {52nd International Colloquium on Automata, Languages, and Programming,
                  {ICALP} 2025, Aarhus, Denmark, July 8-11, 2025},
  series       = {LIPIcs},
  pages        = {91:1--91:20},
  publisher    = {Schloss Dagstuhl - Leibniz-Zentrum f{\"{u}}r Informatik},
  year         = {2025},
  url          = {https://doi.org/10.4230/LIPIcs.ICALP.2025.91},
  doi          = {10.4230/LIPICS.ICALP.2025.91},
  bibsource    = {dblp computer science bibliography, https://dblp.org}
}

@article{DBLP:conf/stoc/YKKrauthgamer26,
  author       = {Yotam Kenneth{-}Mordoch and
                  Robert Krauthgamer},
  title        = {Faster All-Pairs Minimum Cut: Bypassing Exact Max-Flow},
  journal      = {to appear in 58th Annual ACM Symposium on Theory of Computing (STOC)},
  volume       = {},
  year         = {2026},
  url          = {https://doi.org/10.48550/arXiv.2511.10036},
  bibsource    = {dblp computer science bibliography, https://dblp.org}
}

@inproceedings{DBLP:conf/focs/Abboud0PS23,
  author       = {Amir Abboud and
                  Jason Li and
                  Debmalya Panigrahi and
                  Thatchaphol Saranurak},
  title        = {All-Pairs Max-Flow is no Harder than Single-Pair Max-Flow: Gomory-Hu
                  Trees in Almost-Linear Time},
  booktitle    = {64th {IEEE} Annual Symposium on Foundations of Computer Science, {FOCS}
                  2023, Santa Cruz, CA, USA, November 6-9, 2023},
  pages        = {2204--2212},
  publisher    = {{IEEE}},
  year         = {2023},
  url          = {https://doi.org/10.1109/FOCS57990.2023.00137},
  doi          = {10.1109/FOCS57990.2023.00137},
  bibsource    = {dblp computer science bibliography, https://dblp.org}
}

@inproceedings{DBLP:conf/isaac/Bhanja24,
  author       = {Koustav Bhanja},
  editor       = {Juli{\'{a}}n Mestre and
                  Anthony Wirth},
  title        = {Optimal Sensitivity Oracle for Steiner Mincut},
  booktitle    = {35th International Symposium on Algorithms and Computation, {ISAAC}
                  2024, Sydney, Australia, December 8-11, 2024},
  series       = {LIPIcs},
  volume       = {322},
  pages        = {10:1--10:18},
  publisher    = {Schloss Dagstuhl - Leibniz-Zentrum f{\"{u}}r Informatik},
  year         = {2024},
  url          = {https://doi.org/10.4230/LIPIcs.ISAAC.2024.10},
  doi          = {10.4230/LIPICS.ISAAC.2024.10},
  bibsource    = {dblp computer science bibliography, https://dblp.org}
}

@inproceedings{DBLP:conf/icalp/BaswanaB24,
  author       = {Surender Baswana and
                  Koustav Bhanja},
  editor       = {Karl Bringmann and
                  Martin Grohe and
                  Gabriele Puppis and
                  Ola Svensson},
  title        = {Vital Edges for (s, t)-Mincut: Efficient Algorithms, Compact Structures,
                  {\&} Optimal Sensitivity Oracles},
  booktitle    = {51st International Colloquium on Automata, Languages, and Programming,
                  {ICALP} 2024, Tallinn, Estonia, July 8-12, 2024},
  series       = {LIPIcs},
  volume       = {297},
  pages        = {17:1--17:20},
  publisher    = {Schloss Dagstuhl - Leibniz-Zentrum f{\"{u}}r Informatik},
  year         = {2024},
  url          = {https://doi.org/10.4230/LIPIcs.ICALP.2024.17},
  doi          = {10.4230/LIPICS.ICALP.2024.17},
  bibsource    = {dblp computer science bibliography, https://dblp.org}
}

@inproceedings{DBLP:conf/innovations/AhiCPPS26,
  author       = {Mridul Ahi and
                  Keerti Choudhary and
                  Shlok Pande and
                  Pushpraj and
                  Lakshay Saggi},
  editor       = {Shubhangi Saraf},
  title        = {Maximum-Flow and Minimum-Cut Sensitivity Oracles for Directed Graphs},
  booktitle    = {17th Innovations in Theoretical Computer Science Conference, {ITCS}
                  2026, Bocconi University, Milan, Italy, January 27-30, 2026},
  series       = {LIPIcs},
  volume       = {362},
  pages        = {5:1--5:24},
  publisher    = {Schloss Dagstuhl - Leibniz-Zentrum f{\"{u}}r Informatik},
  year         = {2026},
  url          = {https://doi.org/10.4230/LIPIcs.ITCS.2026.5},
  doi          = {10.4230/LIPICS.ITCS.2026.5},
  bibsource    = {dblp computer science bibliography, https://dblp.org}
}

@inproceedings{DBLP:conf/esa/BaswanaBR25,
  author       = {Surender Baswana and
                  Koustav Bhanja and
                  Anupam Roy},
  editor       = {Anne Benoit and
                  Haim Kaplan and
                  Sebastian Wild and
                  Grzegorz Herman},
  title        = {Faster Algorithm for Second (s, t)-Mincut and Breaking Quadratic Barrier
                  for Dual Edge Sensitivity for (s, t)-Mincut},
  booktitle    = {33rd Annual European Symposium on Algorithms, {ESA} 2025, September
                  15-17, 2025, Warsaw, Poland},
  series       = {LIPIcs},
  volume       = {351},
  pages        = {68:1--68:19},
  publisher    = {Schloss Dagstuhl - Leibniz-Zentrum f{\"{u}}r Informatik},
  year         = {2025},
  url          = {https://doi.org/10.4230/LIPIcs.ESA.2025.68},
  doi          = {10.4230/LIPICS.ESA.2025.68},
  bibsource    = {dblp computer science bibliography, https://dblp.org}
}

@inproceedings{DBLP:conf/icalp/Bhanja25,
  author       = {Koustav Bhanja},
  editor       = {Keren Censor{-}Hillel and
                  Fabrizio Grandoni and
                  Jo{\"{e}}l Ouaknine and
                  Gabriele Puppis},
  title        = {Minimum+1 Steiner Cut and Dual Edge Sensitivity Oracle: Bridging Gap
                  between Global and (s, t)-cut},
  booktitle    = {52nd International Colloquium on Automata, Languages, and Programming,
                  {ICALP} 2025, July 8-11, 2025, Aarhus, Denmark},
  series       = {LIPIcs},
  volume       = {334},
  pages        = {27:1--27:20},
  publisher    = {Schloss Dagstuhl - Leibniz-Zentrum f{\"{u}}r Informatik},
  year         = {2025},
  url          = {https://doi.org/10.4230/LIPIcs.ICALP.2025.27},
  doi          = {10.4230/LIPICS.ICALP.2025.27},
  bibsource    = {dblp computer science bibliography, https://dblp.org}
}

@inproceedings{FungHHP11,
  author    = {W.~S. Fung and R. Hariharan and N.~J.~A. Harvey and D. Panigrahi},
  title     = {A general framework for graph sparsification},
  booktitle = {Proceedings 43rd ACM Symposium on Theory of Computing (STOC)},
  year      = {2011},
  pages     = {71--80},
}

@article{GrandoniW20J,
  author    = {Fabrizio Grandoni and
               Virginia Vassilevska Williams},
  title     = {Faster Replacement Paths and Distance Sensitivity Oracles},
  journal   = {{ACM} Trans. Algorithms},
  volume    = {16},
  number    = {1},
  pages     = {15:1--15:25},
  year      = {2020},
  url       = {https://doi.org/10.1145/3365835},
  doi       = {10.1145/3365835},
  bibsource = {dblp computer science bibliography, https://dblp.org}
}

@article{GomoryH61,
	author = {R.~E. Gomory and T.~C. Hu},
	year = {1961},
	title = {Multi-terminal network flows},
	journal = {Journal of the Society for Industrial and Applied Mathematics},
	volume = {9},
}

@article{HT84,
  author  =  "D.~Harel and R.~E.~Tarjan",
  title =    "Fast algorithms for finding nearest common ancestors",
  journal =  SICOMP,
  volume =   "13",
  number =   "2",
  year =     "1984",
  pages =    "338--355",
}

@article{HenzingerK99,
 author = {Henzinger, M. and King, V.},
 title = {Randomized fully dynamic graph algorithms with polylogarithmic time per operation},
 journal = JACM,
 volume = {46},
 number = {4},
 year = {1999},
 pages = {502--516},
 }

@article{NagamochiI92,
  author    = {Hiroshi Nagamochi and Toshihide Ibaraki},
  title     = {A Linear-Time Algorithm for Finding a Sparse $k$-Connected Spanning Subgraph of a $k$-Connected Graph},
  journal   = ALG,
  volume    = {7},
  number    = {5{\&}6},
  year      = {1992},
  pages     = {583--596},
}

@article{Thorup07,
  author    = {M. Thorup},
  title     = {Fully-Dynamic Min-Cut},
  journal   = COMB,
  volume    = {27},
  number    = {1},
  year      = {2007},
  pages     = {91--127},
}

@book{lovasz1979combinatorial,
  author       = {L{\'{a}}szl{\'{o}} Lov{\'{a}}sz},
  title        = {Combinatorial problems and exercises {(2.} ed.)},
  publisher    = {North-Holland},
  year         = {1993},
  isbn         = {978-0-444-81504-0},
}

@article{BaswanaGK22,
  author       = {Surender Baswana and
                  Shiv Kumar Gupta and
                  Till Knollmann},
  title        = {Mincut Sensitivity Data Structures for the Insertion of an Edge},
  journal      = {Algorithmica},
  volume       = {84},
  number       = {9},
  pages        = {2702--2734},
  year         = {2022},
  url          = {https://doi.org/10.1007/s00453-022-00978-0},
  doi          = {10.1007/S00453-022-00978-0},
  bibsource    = {dblp computer science bibliography, https://dblp.org}
}

@inproceedings{BaswanaP25,
  author       = {Surender Baswana and
                  Abhyuday Pandey},
  editor       = {Ioana Oriana Bercea and
                  Rasmus Pagh},
  title        = {The connectivity carcass of a vertex subset in a graph: both odd and
                  even case},
  booktitle    = {2025 Symposium on Simplicity in Algorithms, {SOSA} 2025, New Orleans,
                  LA, USA, January 13-15, 2025},
  pages        = {385--422},
  publisher    = {{SIAM}},
  year         = {2025},
  url          = {https://doi.org/10.1137/1.9781611978315.30},
  doi          = {10.1137/1.9781611978315.30},
}

@inproceedings{BenderF00,
  author       = {Michael A. Bender and
                  Martin Farach{-}Colton},
  title        = {The {LCA} Problem Revisited},
  booktitle    = {{LATIN} 2000: Theoretical Informatics, 4th Latin American Symposium,
                  Punta del Este, Uruguay, April 10-14, 2000, Proceedings},
  series       = {Lecture Notes in Computer Science},
  volume       = {1776},
  pages        = {88--94},
  publisher    = {Springer},
  year         = {2000},
  doi          = {10.1007/10719839\_9},
}

@inproceedings{DBLP:conf/focs/AbboudK0PST22,
  author       = {Amir Abboud and
                  Robert Krauthgamer and
                  Jason Li and
                  Debmalya Panigrahi and
                  Thatchaphol Saranurak and
                  Ohad Trabelsi},
  title        = {Breaking the Cubic Barrier for All-Pairs Max-Flow: Gomory-Hu Tree
                  in Nearly Quadratic Time},
  booktitle    = {63rd {IEEE} Annual Symposium on Foundations of Computer Science, {FOCS}
                  2022, Denver, CO, USA, October 31 - November 3, 2022},
  pages        = {884--895},
  publisher    = {{IEEE}},
  year         = {2022},
  url          = {https://doi.org/10.1109/FOCS54457.2022.00088},
  doi          = {10.1109/FOCS54457.2022.00088},
  bibsource    = {dblp computer science bibliography, https://dblp.org}
}
}

\appendix

\section{Examples for Key Structures}

\subsection{Connectivity Carcass}\label{sec:carcass-example}

\begin{figure}[h]
  \begin{center}
    \includegraphics[width=\textwidth]{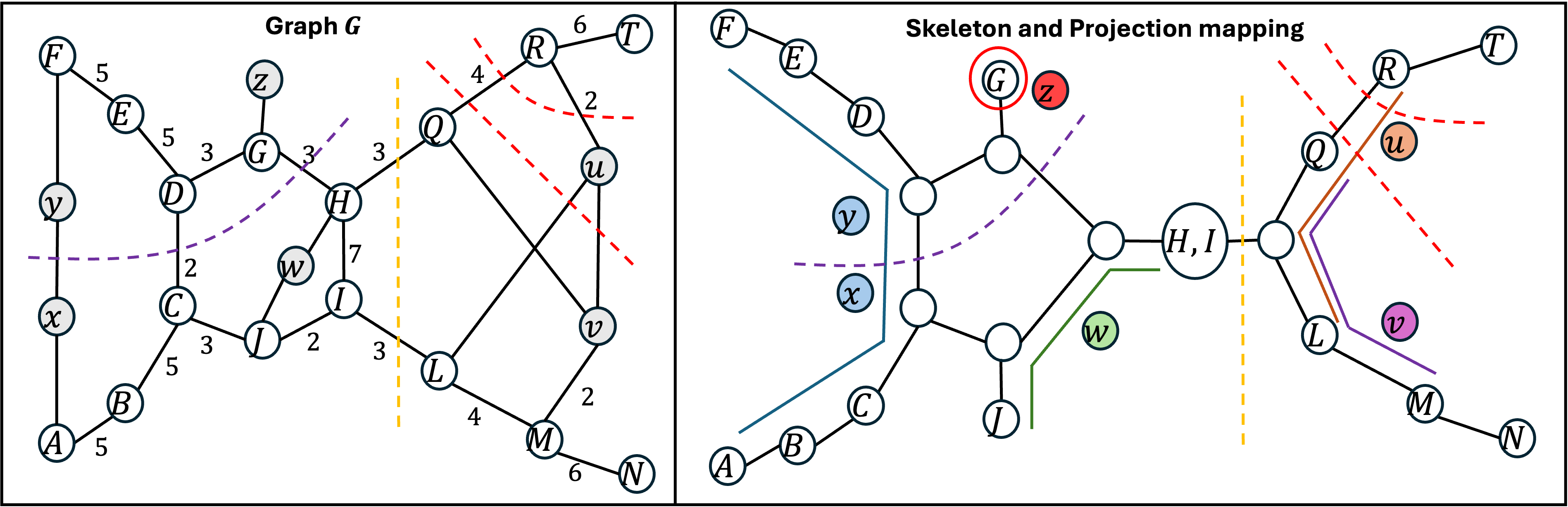}
  \end{center}
  \caption{The Connectivity Carcass}
  \label{fig : skeleton and projection mapping}
\end{figure}

\Cref{fig : skeleton and projection mapping} gives an example of the connectivity carcass, formally introduced in~\Cref{sec:carcass-prelim}.

\begin{itemize}
\item
Numbers next to edges in $G$ represent their multiplicities (no number means multiplicity $1$).
The terminals in $S$ and non-terminals in $V \setminus S$ are denoted by capital and lowercase letters, respectively.
The capacity of $S$-mincuts is $\lambda_S = 6$.

\item
Each terminal vertex is mapped to a node in $\cH_S$, but the mapping is neither injective (two terminals can go in the same node, as terminals $H,I$ in this example) nor surjective (there are empty nodes).

\item
The dashed colored lines show several canonical cuts of $\cH_S$.
Canonical cuts bijectively correspond to valid partitions of $S$, but such a valid partition can be induced by many different $S$-mincuts.
The two red $S$-mincut in $G$ both give rise to the same valid partition $\{R,T\}, S \setminus \{R,T\}$, and hence to the same canonical cut in $\cH_S$ (the tree-edge between the nodes with $Q$ and with $R$).
These two cuts keep the non-terminal $u$ on different sides, so this valid partition (and its canonical cut) \emph{distinguishes} $u$.

\item Projections of several non-terminal vertices are depicted on the right.
Vertex $z$ is non-stretched (i.e., not distinguished by any canonical cut), so $\pi_S (z)$ is a tree node with $G$ of $\cH_S$.
Vertices $u,v,w,x,y$ are stretched, so their projections are proper paths.
The canonical cuts that distinguish a stretched vertex are precisely those that keep the endpoints of the projection path on different sides.

\item The projection of an edge (that contributes to some $S$-mincut) is the unique proper path whose prefix and suffix are the projections of its endpoints, so we easily see $\pi_S (\{u,v\})$ and $\pi_S (\{x,y\})$.

    \begin{itemize}
    \item
    For $\{u,v\}$, the prefix and suffix $\pi_S (u), \pi_S (v)$ are not the same, so we can easily determine the $u$-endpoint and the $v$-endpoint of $\pi_S (\{u,v\})$ as the nodes containing $R$ and $M$ respectively.

    \item
    For $\{x,y\}$, the $x$-endpoint and $y$-endpoints are the node with $A$ and $F$, respectively.
    This can be verified through the unidirectionality property: indeed, $\{x,y\}$ contributes to the purple $S$-mincut, and this cut keeps $x$ on the same side of $A$, and $y$ on the same side as $F$.
    \end{itemize}
Finally, consider again the valid partition $\{R,T\}, S\setminus \{R,T\}$.
Since its corresponding canonical cut is an edge from $\pi_S (\{u,v\})$, it \emph{distinguishes $e$}, which means that \emph{some} $\{R,T\}, S\setminus \{R,T\}$-mincut has $\{u,v\}$ as a contributing edge.
However, there could be $\{R,T\}, S\setminus \{R,T\}$-mincuts where $\{u,v\}$ is not contributing. This is demonstrated by the two red cuts: $\{u,v\}$ contributes to one but not the other.
\end{itemize}

\subsection{Nearest Mincut Tree and Farthest Mincut DAG}\label{sec:tree-DAG-example}

\begin{figure}[h]
  \begin{center}
    \includegraphics[width=\textwidth]{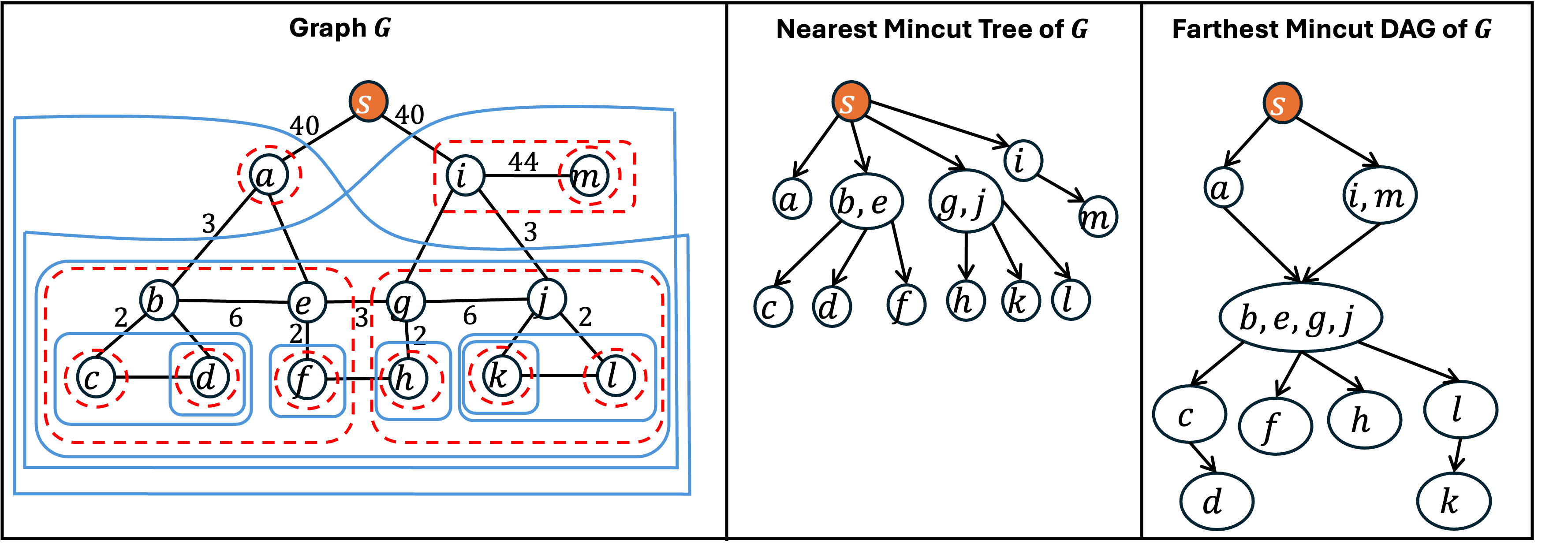}
  \end{center}
  \caption{The Nearest Mincut Tree and Farthest Mincut DAG.}
  \label{fig : tree and dag}
\end{figure}

\Cref{fig : tree and dag} gives an example of nearest mincut tree $\cT$ and the farthest mincut DAG $\cD$, which are presented in~\Cref{sec:DAG-overview}.

\begin{itemize}
\item
Numbers next to edges in $G$ represent their multiplicities (no number means multiplicity $1$).
The source $s$ is marked in orange.

\item
The minimal region defined by a dashed-red curve (resp., solid blue curve) which contains vertex $u$ is $\Near(u,s)$ (resp., $\Far(u,s)$).
The red-bounded regions corresponding to nearest mincuts are laminar: every two are either disjoint, or one contains the other.
The blue-bounded regions corresponding to farthest mincuts are not laminar.

\item The nodes of the nearest mincut tree $\cT$ (resp., farthest mincut DAG $\cD$) partition $V$, where the root node contains only $s$.
For every vertex $u$, its nearest mincut $\Near(u,s)$ (resp., farthest mincut $\Far(u,s)$) consists of all the vertices found in descendants of the node $\cT(u)$ (resp., $\cD(u)$) that contains $u$.

\end{itemize}

\section{Proofs for Preliminaries}\label{sec:missing-proofs}
\SUBANDPOSI
\THREEMINCUTSTOSOURCE

\section{Proofs for Carcass Tools}\label{sec:missing-proofs-carcass}
\CONTAINMENTPROPERPATH
\ASSOCIATEDTERMINALS
\SKELETONTRANSLATION
\UNIONOFCUTSINSKELETON
\SKELETONDATASTRUCTURE
\PROPERPATHINF
\SIGMANUMBERS

\section{Lower Bounds} 
\label{sec:lowerbound}
In this section, we present two information-theoretic lower bounds on space for our data structure in \Cref{thm:main}. Our first lower bound is in undirected graphs for fixed $s,t$-mincut, and the second lower bound is in directed graphs for single-source mincuts.
\paragraph{Undirected graphs with fixed source-sink mincut}
\begin{theorem}
    Let $D$ be any data structure for undirected, unweighted, and simple graphs with a designated source and sink vertices $s,t$, respectively. Given a query with an edge $e$, $D$ can determine whether the value of $s,t$-mincut is reduced after the failure of edge $e$. Then, $D$ must occupy $\Omega(n)$ bits of space in the worst-case, irrespective of the query time. 
\end{theorem}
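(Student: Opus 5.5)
We will use a standard encoding argument: exhibit a family $\mathcal{G}$ of $2^{\Omega(n)}$ simple undirected graphs on $n$ vertices sharing the same $s$ and $t$, such that for any two distinct $G_1,G_2\in\mathcal{G}$ there is a query edge $e$ present in both on which the correct answers differ. Then $D$ must produce distinct memory states on distinct members of $\mathcal{G}$, since otherwise it returns the same answer to that query on both. Hence $D$ has at least $\log_2|\mathcal{G}| = \Omega(n)$ bits in the worst case, regardless of query time.

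\textbf{Construction.} Set $k=\lfloor (n-2)/2\rfloor$ and use vertices $s,t$ together with $a_1,\dots,a_k,b_1,\dots,b_k$. For each $i$ include the \emph{gadget path} $s - a_i - b_i - t$, so the edges $\{s,a_i\},\{a_i,b_i\},\{b_i,t\}$ are always present. Each bit $\beta_i\in\{0,1\}$ selects one of two gadgets:
\begin{itemize}
    \item $\beta_i=0$: the gadget is the bare path. It is a bridge between $s$ and $t$, so every $s,t$-mincut uses exactly one of its edges, and deleting $\{s,a_i\}$ lowers $\lambda_{s,t}$.
    \item $\beta_i=1$: additionally add the edges $\{s,b_i\}$ and $\{a_i,t\}$. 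The gadget now carries two edge-disjoint paths, $s-a_i-t$ and $s-b_i-t$. Its local cut value is $2$, and the cut $\{s\}$ versus $\{a_i,b_i,t\}$ is one such minimum local cut, containing $\{s,a_i\}$.
\end{itemize}
Gadgets meet only at $s$ and $t$, so $\lambda_{s,t}=\sum_i \lambda_i$ with $\lambda_i\in\{1,2\}$, and every $s,t$-mincut restricts to a local mincut in each gadget. Consequently, the failure of $\{s,a_i\}$ reduces $\lambda_{s,t}$ in both cases. This means the query $\{s,a_i\}$ does not distinguish the two gadgets, and the construction must be refined.

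\textbf{Choosing a distinguishing edge.} This is the main obstacle: we need an edge that is present in both gadget variants, lies on some local mincut in one variant, and lies on no local mincut in the other. Fix it by redesigning the $\beta_i=1$ gadget. Keep the path $s-a_i-b_i-t$, and add a second disjoint path $s-c_i-t$ through a fresh vertex $c_i$. Also add the edge $\{a_i,c_i\}$, and add the same edge $\{a_i,c_i\}$ in the $\beta_i=0$ gadget, where $c_i$ is attached only to $a_i$ and $s$.
\begin{itemize}
    \item $\beta_i=0$: $c_i$ is adjacent only to $s$ and $a_i$. The local mincut value is $1$, attained by the cuts $\{a_i,b_i\}$ and $\{b_i,t\}$ (and $\{s,a_i\}$ is not a mincut edge since $a_i$ also reaches $s$ via $c_i$). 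The edge $\{a_i,c_i\}$ lies only inside the source side of every local mincut, so deleting it never changes $\lambda$.
    \item $\beta_i=1$: the local mincut value is $2$. The cut $\{s,c_i,a_i\}$ versus $\{b_i,t\}$ has edges $\{a_i,b_i\}$ and $\{c_i,t\}$, value $2$, and avoids $\{a_i,c_i\}$. The cut $\{s,a_i\}$ versus $\{c_i,b_i,t\}$ has edges $\{s,c_i\}$, $\{a_i,c_i\}$, $\{a_i,b_i\}$, value $3$. The cut $\{s,c_i\}$ versus the rest is better: its edges are $\{s,a_i\}$, $\{c_i,t\}$, $\{c_i,a_i\}$, value $3$.
\end{itemize}
Neither of the $\beta_i=1$ cuts that contain $\{a_i,c_i\}$ is minimum, so this gadget still fails to separate the two cases by the query $\{a_i,c_i\}$. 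Therefore, before finalizing, I would search small gadgets (at most $4$ internal vertices, so $O(1)$ vertices per bit) for a pair $H_0,H_1$ and an edge $f$ with two properties: $f$ lies on some local $s,t$-mincut of $H_1$ but on no local mincut of $H_0$, and both gadgets are simple. A pair that works is this: $H_1$ is the path $s-a-t$, whose edge $f=\{s,a\}$ is a bridge, and $H_0$ is the same path plus the parallel route $s-b-t$ together with the edge $\{a,b\}$, which leaves $f$ with no local mincut. I would verify this pair by a direct enumeration of the cuts separating $s$ from $t$ in $H_0$.

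\textbf{Completing the proof.} With such a gadget pair fixed, set $\mathcal{G}$ to be the $2^{k}$ graphs obtained by gluing $k=\Theta(n)$ gadgets at $s,t$, and verify that every graph is simple. Decomposition of $\lambda_{s,t}$ into gadget sums then shows that the query $f_i$ answers ``decrease'' iff $\beta_i=1$. The encoding argument above then yields the $\Omega(n)$-bit bound.
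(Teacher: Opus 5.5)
\paragraph{Gap: your final gadget pair fails.} Your encoding framework matches the paper's and is sound. Gadgets are glued only at $s$ and $t$, so $\lambda_{s,t}=\sum_i\lambda_i$. An edge of gadget $i$ lies on a global $s,t$-mincut iff it lies on a local one. Each bit gets one query edge that is present in every graph of the family. The problem is the gadget pair you finally commit to. In $H_0$ (edges $\{s,a\},\{a,t\},\{s,b\},\{b,t\},\{a,b\}$) the local $s,t$-mincut value is $2$. The trivial cut $\{s\}$ versus $\{a,b,t\}$ has capacity $\deg_{H_0}(s)=2$, so it is a local mincut, and it contains $f=\{s,a\}$. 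Hence deleting $f$ lowers $\lambda_{s,t}$ in both $H_0$ and $H_1$, and the queries $f_i$ carry no information about the bits. More generally, if the local degree of $s$ equals the local mincut value, every edge at $s$ is a mincut edge. That is exactly why your first attempt and this one fail, and the enumeration you deferred would have exposed it.

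\paragraph{The missing idea and two fixes.} In the ``0'' gadget, the non-$s$ endpoint of $f$ must be strictly better connected to $s$ than to $t$, so that every local mincut keeps both endpoints of $f$ on the source side. The paper does this in the most direct way. The gadget is $s-x_i$ and $y_i-t$, plus the middle edge $\{x_i,y_i\}$ iff $B[i]=1$. With the middle edge, $s-x_i-y_i-t$ is a path each of whose edges is a local mincut. Without it, $x_i$ is a leaf at $s$, so no local mincut (of value $0$) cuts $\{s,x_i\}$. If you want every gadget to connect $s$ to $t$, the observation from your second attempt suffices. Take the path $s-a-t$ with an extra vertex $c$ adjacent to $s$, and add the edge $\{a,c\}$ iff the bit is $0$. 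When $\{a,c\}$ is absent, $\{s,a\}$ lies on the local mincut $\{s,c\}$ versus $\{a,t\}$ (value $1$). When it is present, the unique local mincut is $\{s,a,c\}$ versus $\{t\}$. Either fix, with $k=\Theta(n)$ gadgets of the same size, completes your argument.
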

\begin{proof}
    Let $B\in \{0,1\}^n$. Consider the following undirected, unweighted, and simple graph $H=(V_H,E_H)$ on $|V_H|=2n+2$ vertices. The vertices in $V_H$ are $x_1,x_2,\cdots,x_n, y_1,y_2,\cdots,y_n$, a designated source vertex $s$, and a designated sink vertex $t$. The edge set $E_H=E_M \cup \{\{s,x_i\},\{y_i,t\}~|~i\in [n]\}$, where $E_M=\{\{x_i,y_i\}~|~B[i]=1\}$. It is easy to observe that $|E_M|$ is the maximum number of edge-disjoint paths between $s$ and $t$. Each such path is of the form $\langle s, x_i, y_i, t \rangle$ if $\{x_i,y_i\}\in E_m$. So, the value of $s,t$-mincut is $|E_M|$. Therefore, in any maximum $s,t$-flow, every edge in $E_M$ is fully saturated. By strong duality of maximum flow and mincut, every edge in $E_M$ contributes to the $s,t$-mincut $C=\{s\}\cup \{x_1,x_2,\cdots, x_n\}$. Furthermore, observe that each $x_i$ has degree $2$ with exactly one edge contributing to $C$ iff $\{x_i,y_i\}\in E_M$. If we take $x_i$ out of $C$, then $C\setminus \{x_i\}$ is of the same capacity as the $s,t$-mincut iff $\{x_i,y_i\}\in E_M$. 
    So, edge $\{x_i,y_i\}$ exists iff edge $\{s,x_i\}$ contributes to an $s,t$-mincut. Therefore, the following fact is immediate from the construction of $H$.
    \vspace{-\topsep}
        \begin{center}
            \textit{For any $i\in[n]$, $B[i]=1$ if and only if $s,t$-mincut value reduces after the failure of edge $\{s,x_i\}$.}
        \end{center}
    \vspace{-\topsep}
    It follows from the above fact that any data structure $D$ that can determine whether the value of $s,t$-mincut reduces after failure of any edge given as a query, we can determine using $D$ the value of $B[i]$ for every $i\in [n]$. Therefore, data structure $D$ must occupy at least $\Omega(n)$ bits of space in the worst case.
\end{proof}
\paragraph{Directed graphs with single-source mincuts}
\begin{theorem} \label{thm:directed-graph-lower-bound}
    Let $D$ be any data structure for directed, unweighted, and simple graphs with a designated source vertex $s$ that, given any edge $e$ and a vertex $u$, can determine whether the capacity of $(s,u)$-mincut is reduced after failure of edge $e$. Then, $D$ must occupy $\Omega(n^2)$ bits of space in the worst-case, irrespective of the query time. 
\end{theorem}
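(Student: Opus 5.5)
We prove Theorem~\ref{thm:directed-graph-lower-bound} by an encoding argument: we embed an arbitrary bit matrix $B \in \{0,1\}^{k \times k}$, with $k = \Theta(n)$, into a directed simple graph on $O(n)$ vertices with source $s$. The construction must let every bit $B[i][j]$ be recovered from one query $(e,u)$ that asks whether the $(s,u)$-mincut value drops when $e$ fails. Since the map $B \mapsto$ (graph, and hence answers) is injective and there are $2^{k^2}$ matrices, the data structure needs $\Omega(k^2) = \Omega(n^2)$ bits.

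\textbf{Construction.} Take vertices $s$, $a_1,\dots,a_k$ and $b_1,\dots,b_k$. Add edges $(s,a_i)$ for every $i$, and add edge $(a_i,b_j)$ exactly when $B[i][j]=1$. For each $b_j$, the $(s,b_j)$-maxflow is the number of in-neighbours $a_i$ of $b_j$: each such $a_i$ contributes a distinct path $s\to a_i\to b_j$, and the cut $\{s\}\cup\{a_i\}$ shows this is also an upper bound. Call this number $d_j$.

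\textbf{Query for a bit.} To read $B[i][j]$, query the edge $e=(s,a_i)$ with destination $u=b_j$. If $B[i][j]=1$, then $a_i$ is an in-neighbour of $b_j$, so $(s,a_i)$ lies on one of the $d_j$ edge-disjoint paths. Every max flow then saturates it, the edge lies in a mincut, and its failure lowers the value to $d_j-1$. If $B[i][j]=0$, then $a_i$ cannot reach $b_j$ at all, so deleting $(s,a_i)$ leaves all $d_j$ paths intact and the value stays $d_j$.

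\textbf{Potential obstacle.} A query $e$ might be required to be an edge of $G$, but $(s,a_i)$ is always present, so this is fine. If $B[i][j]=0$ for all $j$, the edge $(s,a_i)$ is simply dangling, which is harmless. The graph is simple, has $2k+1$ vertices, and $k=\lfloor (n-1)/2\rfloor$, so the answers to the $k^2$ queries determine $B$. Hence $D$ uses $\Omega(n^2)$ bits regardless of query time.

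The only step that needs care is the exact mincut value computation. It is routine here because every $s$–$b_j$ path has length exactly two through distinct $a_i$'s.
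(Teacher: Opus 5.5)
Your proposal is correct and is essentially the paper's own proof: the same encoding of $B$ via edges $s\to a_i\to b_j$, with the bit $B[i][j]$ read off by failing $(s,a_i)$ and querying destination $b_j$. One small slip is in the upper bound: the cut $\{s\}\cup\{a_1,\dots,a_k\}$ has capacity equal to the total number of ones in $B$, not $d_j$, so you should instead use the cut $V\setminus\{b_j\}$ (equivalently, the in-degree of $b_j$), as the paper does.
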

\begin{proof}
    Let $B$ be an $n\times n$ binary matrix. Consider the following directed, unweighted, and simple graph $H=(V_H,E_H)$ on $|V_H|=2n+1$ vertices. The vertices in $V_H$ are $x_1,x_2,\cdots,x_n, y_1,y_2,\cdots,y_n$, and a designated source vertex $s$. The edge set $E_H=E_M \cup \{(s,x_i)~|~i\in [n]\}$, where $E_M=\{(x_i,y_j)~|~B[i,j]=1,~ i,j\in [n]\}$. Observe that for each $j\in [n]$, the number of edge-disjoint paths from $s$ to $y_j$ is the same as the indegree of $y_j$. Each such path is of the form $\langle s, x_i, y_j \rangle$, for some $i\in [n]$, if $(x_i,y_j)\in E_M$. So, the $(s,y_j)$-mincut value is the indegree of $y_j$. It follows that, in any maximum $(s,y_j)$-flow, every edge incident on $y_j$ is fully saturated. By the strong duality of maximum flow and mincut, $V_H\setminus \{y_j\}$ is a $(s,y_j)$-mincut, where each incoming edge of $y_j$ contributes. Therefore, the following fact is immediate from the construction of $H$.
    \vspace{-\topsep}
        \begin{center}
            \textit{For any $i,j\in[n]$, $B[i,j]=1$ if and only if $(s,y_j)$-mincut value reduces after the failure of edge $\{s,x_i\}$.}
        \end{center}
    \vspace{-\topsep}
    Let $D$ be any data structure that, given any vertex $u$ and an edge $e$ as a query, can determine whether the value of $(s,u)$-mincut is reduced after the failure of edge $e$. By using the above equivalence, we can use $D$ for graph $H$ to determine the value of $B[i,j]$ for every $i,j\in [n]$. Therefore, data structure $D$ must occupy $\Omega(n^2)$ bits of space in the worst case.
\end{proof}
    The following is an immediate corollary of \Cref{thm:directed-graph-lower-bound}.
    \begin{corollary}
        Let $D$ be any data structure for directed, unweighted, and simple graphs with a designated source vertex $s$ that, given any edge $e$, can report each vertex $u$ such that the  $(s,u)$-mincut value is reduced after failure of edge $e$. Then, $D$ must occupy $\Omega(n^2)$ bits of space in the worst-case, irrespective of the query time. 
    \end{corollary}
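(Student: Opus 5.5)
The claim is the corollary at the very end: in directed graphs, any structure that reports, for a query edge $e$, all vertices $u$ whose $(s,u)$-mincut value drops must use $\Omega(n^2)$ bits. I will reduce it to \Cref{thm:directed-graph-lower-bound}, using the same hard instance family.

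Let $D$ be a structure answering the reporting query. I will build a structure $D'$ for the one-destination query of \Cref{thm:directed-graph-lower-bound} with no additional stored information. Given a query $(e,u)$, $D'$ runs $D$ on $e$, obtains the reported set $R_e$, and answers ``yes'' iff $u\in R_e$.

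This is correct because $R_e$ is, by definition, exactly the set of vertices whose $(s,u)$-mincut value decreases after the failure of $e$. Since $D'$ stores only $D$, its space equals that of $D$. Query time is irrelevant, because both bounds hold irrespective of query time.

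By \Cref{thm:directed-graph-lower-bound}, $D'$ needs $\Omega(n^2)$ bits in the worst case, so $D$ does as well. To make the encoding explicit, take the graph $H$ built from an arbitrary $n\times n$ binary matrix $B$. Querying $D$ on each edge $(s,x_i)$ recovers row $i$ of $B$, since $B[i,j]=1$ iff $y_j\in R_{(s,x_i)}$. Hence $D$ determines all $2^{n^2}$ matrices uniquely, which forces $n^2$ bits.

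The only point needing care is that the query edge $(s,x_i)$ is always present in $H$, whatever $B$ is. Since $H$ has $2n+1$ vertices, the bound is still $\Omega(N^2)$ in terms of the vertex count $N$. There is no real obstacle here; the argument is a direct simulation.
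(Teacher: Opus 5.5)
Your proposal is correct and is exactly the argument the paper intends: the paper states this corollary as immediate from \Cref{thm:directed-graph-lower-bound}, and the implicit reasoning is your simulation, where a one-destination query $(e,u)$ is answered by checking whether $u$ lies in the set reported for $e$. Your explicit decoding of row $i$ of $B$ from the query on the always-present edge $(s,x_i)$ simply unrolls that same reduction on the paper's hard instance.
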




\section{Output-Sensitive and One-Destination Queries for Edge Insertion}\label{sec:edge-insertion}

In this section, we prove the insertion variant of~\Cref{thm:faster-queries}.
The data structure (for both items of~\Cref{thm:faster-queries}) stores the nearest mincut tree $\cT$ and the farthest mincut DAG $\cD$ along with data structures that support constant-time ancestry queries for both. 
For $\cT$, we can use any of-the-shelf $O(n)$ space ancestry data structure (e.g.~\cite{BF-C00}). 
For $\cD$, we use the $O(n^{1.5})$ space data structure from~\Cref{lem:ancestry-lca-data-structure}.
Thus, the total space is $O(n^{1.5})$.

We start with the simple query algorithm for~\Cref{item:faster-queries-one-vertex}:
given an inserted edge $e = \{x,y\}$ and vertex $u \in V \setminus \{s\}$, we wish to determine whether the $u,s$-mincut value increases after $e$ is inserted, within $O(1)$ time.
As discussed in~\Cref{sec:DAG-overview}, this happens iff $u \in \cTup(x) \setminus \cDup(y) \cup \cTup(y) \setminus \cDup(x)$.
To detect this situation, we just need to determine the ancestry relations between $\cT(u)$ and $\cT(x),\cT(y)$, and between $\cD(u)$ and $\cD(x),\cD(y)$, which we can do within $O(1)$ time using the ancestry data structures.

We now focus on the query algorithm for~\Cref{item:faster-queries-reporting-all}:
given an inserted edge $e = \{x,y\}$, we wish to report the set $X$ of all vertices $u \in V \setminus \{s\}$ such that the value of $u,s$-mincut increases after $e$ is inserted, within $O(|X|)$ time.
For this, we use ideas introduced by~\cite{BaswanaGK22}.
We first observe the following:

\begin{observation}\label{obs:T-vs-D}
1. The partition of $V$ by the nodes of $\cT$ refines its partition by the nodes of $\cD$.

2. If $\nu,\mu$ are two nodes of $\cT$ such that $\nu$ is an ancestor of $\mu$ in $\cT$, then the node containing $\nu$ in $\cD$ is an ancestor of the node containing $\mu$ in $\cD$.
\end{observation}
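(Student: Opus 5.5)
We need to prove Observation~\ref{obs:T-vs-D}: (1) the partition of $V$ by nodes of $\cT$ refines the partition by nodes of $\cD$; (2) ancestry in $\cT$ is preserved when passing to the containing nodes in $\cD$. The plan is to translate both statements into statements about nearest and farthest mincuts. We will use the characterizations $w \in \cTdown(u) \Leftrightarrow w \in \Near(u,s)$ and $w \in \cDdown(u) \Leftrightarrow w \in \Far(u,s)$. The key tool is a single lemma:
\[
w \in \Near(u,s) \implies \Far(w,s) \subseteq \Far(u,s),
\]
for all $u,w \in V \setminus \{s\}$.

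To prove the lemma, set $A = \Far(u,s)$ and $B = \Far(w,s)$. Since $w \in \Near(u,s) \subseteq A$, the set $A \cap B$ is a $(w,s)$-cut and $A \cup B$ is a $(u,s)$-cut. By submodularity (\Cref{lem:sub-posi-general}), $A \cup B$ is therefore a $(u,s)$-mincut. Maximality of $\Far(u,s)$ gives $A \cup B \subseteq A$, hence $B \subseteq A$.

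For item 1, suppose $u,w$ lie in the same node of $\cT$, with $u,w \neq s$; the root node $\{s\}$ is trivial. The node of $\cT$ is a strongly ancestral class, so $\cTdown(u) = \cTdown(w)$, which means $w \in \Near(u,s)$ and $u \in \Near(w,s)$. Applying the lemma in both directions gives $\Far(u,s) = \Far(w,s)$, that is, $\cDdown(u) = \cDdown(w)$. This yields that $\cD(w)$ is a descendant of $\cD(u)$ and vice versa. Since $\cD$ is acyclic, $\cD(u) = \cD(w)$.

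For item 2, take $u \in \nu$ and $w \in \mu$ with $\nu$ an ancestor of $\mu$ in $\cT$. If $\nu$ is the root, then its $\cD$-node is $\{s\}$, which is an ancestor of everything. Otherwise $w \in \cTdown(u) = \Near(u,s)$. The lemma gives $w \in \Far(w,s) \subseteq \Far(u,s) = \cDdown(u)$, so $\cD(w)$ is a descendant of $\cD(u)$. By item 1, these are the $\cD$-nodes containing $\mu$ and $\nu$ respectively, so the claim holds independently of the choice of representatives.

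There is no real obstacle in this argument. The only delicate points are the two justifications used above: that two vertices lie in the same $\cD$-node exactly when their farthest mincuts coincide, which relies on the acyclicity of $\cD$, and the separate treatment of the source node $\{s\}$.
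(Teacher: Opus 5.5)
Your proof is correct and follows the same route as the paper. You translate ancestry in $\cT$ and $\cD$ into $\Near$/$\Far$ membership, place $\cD(w)$ below $\cD(u)$ via $\Near(u,s) \subseteq \Far(u,s) = \cDdown(u)$, and use acyclicity of $\cD$ for item 1. The only difference is your submodularity lemma, which is redundant. The inclusion $w \in \Near(u,s) \subseteq \Far(u,s) = \cDdown(u)$ that you already use inside its proof directly says that $\cD(w)$ is a descendant of $\cD(u)$ (and $\Far(w,s) \subseteq \Far(u,s)$ then follows by transitivity in $\cD$). That is exactly how the paper argues both items.
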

\begin{proof}
    1. Suppose $u,w$ are two vertices from the same node of $\cT$.
    This means that $u \in \Near(w,s) \subseteq \Far(w,s)$ and $w \in \Near(u,s) \subseteq \Far(u,s)$, and hence $u \in \cDup(w) \cap \cDdown(w) = \cD(w)$.

    2. Take any $w \in \mu$ and $u \in \nu$, then $w \in \cTdown(u) = \Near(u,s) \subseteq \Far(u,s) = \cDdown(u)$, and thus $\cD(w)$ is a descendant of $\cD(u)$.
\end{proof}

Recall that, by the previous discussion, $X = \cTup(x) \setminus \cDup(y) \cup \cTup(y) \setminus \cDup(x)$.
So, by~\Cref{obs:T-vs-D}, we conclude that for any node $\nu$ of $\cT$, either $\nu \cap X$ is all of $\nu$, or it is $\emptyset$; we call $\nu$ relevant if the former holds.
Also,~\Cref{obs:T-vs-D} implies that $X_1 = \cTup(x) \setminus \cDup(y)$ and $X_2 = \cTup(y) \setminus \cDup(x)$ are disjoint:
indeed, if $v \in X_1$ then $\cT(v)$ is an ancestor of $\cT(x)$, so $\cD(v)$ is an ancestor of $\cD(x)$, and thus $v \notin X_2$.
We now arrive at the key insight of~\cite{BaswanaGK22}:

\begin{lemma}[\cite{BaswanaGK22}]
    The relevant nodes of $\cT$ form at most two contiguous upwards paths in $\cT$, and each such path starts either at $\cT(x)$ or at $\cT(y)$.
\end{lemma}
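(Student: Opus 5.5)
We prove that $X_1 = \cTup(x) \setminus \cDup(y)$ forms a contiguous upward path in $\cT$ starting at $\cT(x)$, or is empty. By symmetry the same holds for $X_2$ and $\cT(y)$, and together these give the lemma. By~\Cref{obs:T-vs-D}(1), each node of $\cT$ lies inside a single node of $\cD$. So membership in $X_1$ is constant on nodes of $\cT$, and we may regard $X_1$ as a set of nodes of $\cT$.

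Since $\cT$ is a tree, the ancestors of $\cT(x)$ form the unique path $\cT(x) = \nu_0, \nu_1, \dots, \nu_k = \{s\}$ from $\cT(x)$ up to the root. A node $\nu_i$ is relevant for $X_1$ iff $\cD(\nu_i)$, the node of $\cD$ containing $\nu_i$, is \emph{not} an ancestor of $\cD(y)$. The key step is monotonicity: if $\nu_i$ is not relevant, then neither is any $\nu_j$ with $j > i$.

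To show this, suppose $\cD(\nu_i)$ is an ancestor of $\cD(y)$. Because $\nu_j$ is an ancestor of $\nu_i$ in $\cT$, \Cref{obs:T-vs-D}(2) gives that $\cD(\nu_j)$ is an ancestor of $\cD(\nu_i)$. By transitivity, $\cD(\nu_j)$ is then an ancestor of $\cD(y)$, so $\nu_j$ is not relevant. It follows that the relevant nodes on this path are exactly a prefix $\nu_0, \dots, \nu_{r}$ for some $r$, possibly the empty prefix. Note that the root $\{s\}$ is never relevant, since $\{s\}$ is an ancestor of every node of $\cD$. Hence the relevant nodes for $X_1$ form a contiguous upward path starting at $\cT(x)$.

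Applying the symmetric argument to $X_2$, with the roles of $x$ and $y$ swapped, shows that its relevant nodes form a contiguous upward path starting at $\cT(y)$. Since $X = X_1 \cup X_2$, the relevant nodes of $\cT$ form at most two such paths. The argument involves no real obstacle beyond checking that~\Cref{obs:T-vs-D}(2) applies in the stated direction; $X_1 \cap X_2 = \emptyset$ was shown above but is not needed here.
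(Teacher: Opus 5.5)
Your proof is correct and is essentially the paper's argument. Both rest on \Cref{obs:T-vs-D}(2) together with transitivity of ancestry in $\cD$. The paper shows that a relevant ancestor $\nu$ of $\cT(x)$ forces every node between $\nu$ and $\cT(x)$ to be relevant, while you prove the contrapositive (an irrelevant node forces all higher nodes to be irrelevant), so the relevant nodes on each root path form a prefix. Your explicit remark that membership in $X_1$ is constant on nodes of $\cT$, via \Cref{obs:T-vs-D}(1), is left implicit in the paper.
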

\begin{proof}
    Fix a relevant node $\nu$, and say without loss of generality that $\nu \subseteq X_1$.
    Then $\nu$ is an ancestor of $\cT(x)$, and we should show that every node $\mu$ on the path between them is also contained in $X_1$.
    Because $\mu$ is an ancestor of $\cT(x)$, we just need to show that no $w \in \mu$ belongs to $\cDup(y)$.
    Seeking contradiction, suppose there is such $w$.
    Take any $u \in \nu$,
    then because $\nu = \cT(u)$ is an ancestor of $\mu = \cT(w)$, \Cref{obs:T-vs-D} yields that 
    $\cD(u)$ is an ancestor of $\cD(w)$, and hence also of $\cD(y)$, but then $u \in \nu \cap \cDup(y) \subseteq X_1 \cap \cDup(y) = \emptyset$, a contradiction.
\end{proof}

We thus get the following query algorithm: 
Find all relevant nodes in $\cT$ by walking upwards, once from $\cT(x)$ and once from $\cT(y)$;
whenever a node is encountered in the walk, check if it is relevant within $O(1)$ time by using $O(1)$ ancestry queries in $\cT$ and $\cD$, and halt if an irrelevant node is reached.
This procedure visits at most two irrelevant nodes and all the useful nodes, which are at most $O(|X|)$; then, we report the set $X$ of all vertices inside the relevant nodes, which again takes $O(|X|)$ time.


\end{document}